\newcolumntype{A}{ >{$} r <{$} @{\,} >{${}} l <{$} }
\newtheorem{thm}{Theorem}
\newtheorem{lemma}[thm]{Lemma}
\newtheorem{definition}{Definition}
\newtheorem{observation}{Observation}
\newtheorem{prop}[thm]{Proposition}
\newtheorem{crllr}[thm]{Corollary}
\newcommand{\groupA}{\mathcal{A}}
\newcommand{\groupG}{\mathcal{G}}
\newcommand{\groupH}{\mathcal{H}}
\newcommand{\groupK}{\mathcal{K}}
\newcommand{\groupP}{\mathcal{P}}
\newcommand\gru{U}
\newcommand\grsu{SU}
\newcommand\gro{O}
\newcommand\grso{SO}
\newcommand\grsp{Sp}
\newcommand\grgl{GL}
\newcommand\grudiag{U_\mathrm{diag}}
\newcommand\grspdiag{Sp_\mathrm{diag}}
\newcommand\grschur{\mathrm{Schur}}
\newcommand\grcs{\mathrm{CS}}
\newcommand{\mfg}{\mathfrak{g}}
\newcommand{\mfk}{\mathfrak{k}}
\newcommand{\mfp}{\mathfrak{p}}
\newcommand{\mfa}{\mathfrak{a}}
\newcommand{\mfh}{\mathfrak{h}}
\newcommand{\mfs}{\mathfrak{s}} % Only use in Sec introducing Cartan involutions
\newcommand{\mfm}{\mathfrak{m}} % Only use in Sec introducing Cartan involutions
\newcommand{\mfu}{\mathfrak{u}}
\newcommand{\mfsu}{\mathfrak{su}}
\newcommand{\mfso}{\mathfrak{so}}
\newcommand{\mfsp}{\mathfrak{sp}}
\newcommand{\mfgl}{\mathfrak{gl}}
\newcommand{\basisB}{\mathcal{B}}
\newcommand{\paulix}{X}
\newcommand{\pauliy}{Y}
\newcommand{\pauliz}{Z}
\newcommand{\graph}[1]{\mathcal{#1}} % Same as the font for groups
\newcommand{\ad}{\operatorname{ad}}
\newcommand{\Ad}{\operatorname{Ad}}
\renewcommand{\det}[1]{\mathrm{det}(#1)}
\renewcommand{\real}{\operatorname{Re}}
\newcommand{\imag}{\operatorname{Im}}
\newcommand{\diag}{\operatorname{diag}}
\renewcommand{\rank}{\operatorname{rk}}
\newcommand{\spanR}{\operatorname{span}_{\mbr}}
\newcommand{\spaniR}{\operatorname{span}_{i\mbr}}
\newcommand{\spanC}{\operatorname{span}_{\mbc}}
\newcommand{\fsl}[1]{\ensuremath{\mathrlap{\not{\phantom{#1}}}#1}}
\newcommand*{\id}{\mathchoice
  {\openone}% \displaystyle
  {\openone}% \textstyle
  {\scalebox{.7}{\openone}} % \scriptstyle
  {\scalebox{.5}{\openone}} % \scriptscriptstyle
}
\newcommand{\kak}{\ifmmode \mathrm{KAK}\else KAK \fi}
\newcommand\mbn{\mathbb{N}}
\newcommand\mbr{\mathbb{R}}
\newcommand\mbc{\mathbb{C}}
\newcommand\Ztwo{\mathbb{Z}_2}
\newlength{\swapradius}
\newcommand{\swapangle}{70}
\newcommand{\swapsymbol}{%      
    \,
    \begin{tikzpicture}%
        \draw (0,0.05) arc (-90+\swapangle:-90:\swapradius); \draw (0,0.05) arc (90+\swapangle:90:\swapradius);%
        \draw (0,0.05) arc (90-\swapangle:90:\swapradius); \draw (0,0.05) arc (-90-\swapangle:-90:\swapradius);%
    \end{tikzpicture}%
}%
\newcommand{\smallswap}{\!\!\scalebox{0.65}{\swapsymbol}}
\def\shrug{\texttt{\raisebox{0.75em}{\char`\_}\char`\\\char`\_\kern-0.5ex(\kern-0.25ex\raisebox{0.25ex}{\rotatebox{45}{\raisebox{-.75ex}"\kern-1.5ex\rotatebox{-90})}}\kern-0.5ex)\kern-0.5ex\char`\_/\raisebox{0.75em}{\char`\_}}}
\newcommand{\rowrule}{\rule[.5ex]{1.25em}{0.3pt}}
\newlength{\cellsize}
\definecolor{tabgrey}{HTML}{DDDDDD}
\definecolor{lightgrey}{HTML}{666666}
\definecolor{orange1}{HTML}{ffa82a}
\definecolor{orange2}{HTML}{ff7b03}
\definecolor{purple1}{HTML}{9411a7}
\definecolor{purple2}{HTML}{b47ed1}
\definecolor{green1}{HTML}{8ad75b}
\definecolor{salmon1}{HTML}{f27675}
\definecolor{green2}{HTML}{44aa00}
\definecolor{blue1}{HTML}{2fb6e6}
\colorlet{tabpurple}{purple1!25}
\colorlet{tabgreen}{green1!25}
\newcommand{\hamcolor}[1]{{\color{green2}{#1}}}
\newcommand{\rc}[1]{\rowcolor{#1}[\tabcolsep]}
\newcommand{\mr}[1]{\multirow{-2}{*}{#1}}
\DeclareAcronym{CD}{short=CD, long=Cartan decomposition}
\DeclareAcronym{CSA}{short=CSA, long=Cartan subalgebra}
\DeclareAcronym{CSG}{short=CSG, long=Cartan subgroup}
\DeclareAcronym{CSD}{short=CSD, long=cosine-sine decomposition}
\DeclareAcronym{QSD}{short=QSD, long=quantum Shannon decomposition}
\DeclareAcronym{KGD}{short=KGD, long=Khaneja-Glaser decomposition}
\DeclareAcronym{EVD}{short=EVD, long=eigenvalue decomposition}
\DeclareAcronym{SVD}{short=SVD, long=singular value decomposition}
\DeclareAcronym{DLA}{short=DLA, long=dynamical Lie algebra}
\renewcommand\onecolumngrid{% <<<<<<
    \do@columngrid{one}{\@ne}%
    \def\set@footnotewidth{\onecolumngrid}% <<<<<<<<<<<<<<<<
    \def\footnoterule{\kern-6pt\hrule width 1.5in\kern6pt}%
}
\renewcommand\twocolumngrid{% <<<<<<
    \def\footnoterule{% restore rule
    \dimen@\skip\footins\divide\dimen@\thr@@
    \kern-\dimen@\hrule width.5in\kern\dimen@}
    \do@columngrid{mlt}{\tw@}
}%
\newcommand*{\belowrulesepcolor}[1]{%
  \noalign{%
    \kern-\belowrulesep
    \begingroup
      \color{#1}%
      \hrule height\belowrulesep
    \endgroup
  }%
}
\newcommand*{\aboverulesepcolor}[1]{%
  \noalign{%
    \begingroup
      \color{#1}%
      \hrule height\aboverulesep
    \endgroup
    \kern-\aboverulesep
  }%
}
\begin{document}

\title{Recursive Cartan decompositions for unitary synthesis}
\author{David Wierichs}
\affiliation{Xanadu, Toronto, ON, M5G 2C8, Canada}
\author{Maxwell West}
\affiliation{School of Physics, University of Melbourne, Parkville, VIC 3010, Australia}
\affiliation{Theoretical Division, Los Alamos National Laboratory, Los Alamos, NM 87545, USA}
\author{Roy~T. Forestano}
\affiliation{Institute for Fundamental Theory, Department of Physics, University of Florida, Gainesville, FL 32653, USA}
\affiliation{Theoretical Division, Los Alamos National Laboratory, Los Alamos, NM 87545, USA}
\author{M. Cerezo}
\affiliation{Information Sciences, Los Alamos National Laboratory, Los Alamos, NM 87545, USA}
\author{Nathan Killoran}
\affiliation{Xanadu, Toronto, ON, M5G 2C8, Canada}

\begin{abstract}
Recursive \acp{CD} provide a way to exactly factorize quantum circuits into smaller components, making them a central tool for unitary synthesis.
Here we present a detailed overview of recursive \acp{CD}, elucidating their mathematical structure, demonstrating their algorithmic utility, and implementing them numerically at large scales.
We adapt, extend, and unify existing mathematical frameworks for recursive \acp{CD}, allowing us to gain new insights and streamline the construction of new circuit decompositions. 
Based on this, we show that several leading synthesis techniques from the literature—the Quantum Shannon, Block-ZXZ, and Khaneja-Glaser decompositions—implement the same recursive \ac{CD}. We also present new recursive \acp{CD} based on the orthogonal and symplectic groups, and derive parameter-optimal decompositions.
Furthermore, we aggregate numerical tools for \acp{CD} from the literature, put them into a common context, and complete them to allow for numerical implementations of all possible classical \acp{CD} in canonical form.
As an application, we efficiently compile fast-forwardable Hamiltonian time evolution to fixed-depth circuits, compiling the transverse-field XY model on $10^3$ qubits into $2\times10^6$ gates in $22$ seconds on a laptop.
\end{abstract}

\maketitle

\section{Introduction}

Quantum compilation~\cite{venturelli2018compiling,khatri2019quantum,sharma2019noise} plays a critical role in the quantum software stack, performing the essential task of converting a user's code into machine code. Spanning multiple stages, quantum compilation routines are needed to break down programs into constituent building blocks, reduce resource requirements in circuits, lower overheads and runtime, and map to quantum hardware. Lowering quantum resources is important not only in the present era where noise limits the depth of circuits, but also in the approaching era of fault-tolerance, as the first generations of fault-tolerant hardware will be constrained by limited qubit counts and high operational overheads. Efficient compilation could potentially open up such resource-limited devices for more applications.

While designing more resource-efficient circuits could be done manually by algorithm experts, this process is slow, not scalable, and possibly sub-optimal for performance. It is better to have a suite of highly optimized passes and subroutines which automate the compilation process across multiple layers of abstraction, freeing up research capacity to focus on developing the quantum algorithms or applications. The more we can unite different compilation routines under similar structure, the easier the compilation process will be to automate. 

A core quantum compilation task is exactly decomposing an arbitrary unitary into universal building blocks. For this, a number of approaches have been proposed~\cite{tucci1999rudimentary, mottonen2004quantum, bergholm2005quantum, nakajima2005new, mottonen2006decompositions, iten2016quantum, khaneja2001cartan, vatan2004realization, bullock2004note, mansky2023near, 
shende2005synthesis, mottonen2006decompositions, drury2008constructive,
de2016block,de2018unified, fuhr2018note, krol2024beyond,
barenco1995elementary, cybenko2001reducing, aho2003compiling, vartiainen2004efficient, jiang2018quantum, rakyta2022approaching, arrazola2022universal}. 
Interestingly, many exact synthesis routines in the literature are based on the same mathematical structure, known as \acp{CD}~\cite{cartan1926sur}. 
A \ac{CD} is a method for breaking up a matrix $M$ into the form $M=KAK'$, where $K^{(\prime)}$ and $A$ come from two complementary matrix subgroups.
Quantum computing practitioners are likely familiar with well-known special cases: the \ac{EVD}, the \ac{SVD}, and the Bloch-Messiah decomposition are all examples of \acp{CD}~\cite{edelman2023fifty}. 

\Acp{CD} are closely related to a mathematical object called \emph{symmetric spaces}. The possible symmetric spaces—and hence the possible \acp{CD}—were completely classified 100 years ago into roughly twenty families~\cite{cartan1926sur}. Of these, ten are relevant for unitary matrices. Even with the families fixed, there remain sufficient degrees of freedom within each family to fine-tune the decompositions, e.g.,~to reduce gate counts or cast gates into local forms. Moreover, \acp{CD} have special symmetry properties, linking commutation relations to $\Ztwo$ symmetries, which can be further leveraged to optimize circuits. 
Finally, the classification also makes it evident how \acp{CD} can be chained recursively, providing recipes to construct multi-stage matrix decompositions. These properties make \acp{CD} a compelling framework for compilation.

Composing \acp{CD} recursively allows for $N$-qubit circuits to be progressively decomposed down to one-and two-qubit gates.
The number of such possible recursive \acp{CD} is exponentially large; however, only a handful of these are truly represented in the literature. Prominent examples include the \ac{CSD}~\cite{tucci1999rudimentary, mottonen2004quantum, bergholm2005quantum, nakajima2005new, mottonen2006decompositions, iten2016quantum}, the \ac{QSD}~\cite{shende2005synthesis, mottonen2006decompositions, drury2008constructive}, the \ac{KGD}~\cite{khaneja2001cartan, vatan2004realization, bullock2004note, mansky2023near}, and—though not previously recognized as such—the Block-ZXZ decomposition~\cite{de2016block,de2018unified, fuhr2018note, krol2024beyond}.
Genuinely new insights and approaches have been slow to come, as more recent works often tweak earlier methods, providing modest optimizations or adding new numerical implementations. 
On the software side, numerical linear algebra methods for computing \acp{CD} have been presented in several works~\cite{bullock2004canonical, nakajima2005new, sutton2009computing, fuhr2018note, hackl2021bosonic}, but these have generally been limited in scale, flexibility, or ease-of-use. Recently, some papers have proposed variational approaches~\cite{kokcu2022fixed} and lax dynamics~\cite{chu2024lax} to finding \acp{CD}, though these can be brittle and require heavy enough classical computing resources to limit their applicability to toy model scales. Finally, though \acp{CD} are a powerful tool in linear algebra, their mathematical underpinnings—while not innately complicated—can often appear obtuse and intimidating, impacting their wider adoption by the field.

In this work, we outline a unified framework for recursive \acp{CD} of unitary gates. We present a comprehensive overview of the topic, bringing together disparate threads from the quantum computing literature in one place. To make the subject more approachable, we update, extend, unify, and simplify many past works. Along the way, we remedy several mistakes, misconceptions, and omissions from the literature. We demonstrate the generality of the \ac{CD} perspective by presenting the \ac{QSD}, the \ac{KGD}, and the optimized Block-ZXZ decomposition as the same recursive sequence, differing only in the representations of certain subalgebras. Using our framework, we outline three new unitary synthesis techniques which, to our knowledge, have not appeared before. The first two are based on the orthogonal and symplectic subgroups of the unitary group while the third achieves a parameter-optimal decomposition, i.e.,~it uses one parameter per degree of freedom in the decomposed unitary. Finally, we demonstrate the benefits of our software-aligned approach by numerically decomposing the time-evolution unitary of a fast-forwardable Hamiltonian for arbitrary time values, compiling a 1000-qubit circuit into $10^6$ gates in 22 seconds.

This work is structured as follows. In Sec.~\ref{sec:groups}, we review basic facts about the \ac{CD} of Lie groups. We move to the related decompositions of Lie algebras in Sec.~\ref{sec:algebras}, presenting several mathematical results about building recursions. We take a more practical turn in Sec.~\ref{sec:numerical_decomps}, where we present a unified numerical approach to compute any classical compact \ac{CD} from a standard \ac{EVD}. In Sec.~\ref{sec:new_decomps}, we use our framework to identify two new unitary synthesis strategies based on the orthogonal and symplectic groups. We present a detailed numerical example of compiling a Hamiltonian simulation problem in Sec.~\ref{sec:hamsim}. Finally, we present some forward-looking perspectives in the concluding Sec.~\ref{sec:conclusion}.

\section{Cartan decompositions of Lie groups}\label{sec:groups}

\begin{figure}
    \centering
    \includegraphics[width=\linewidth]{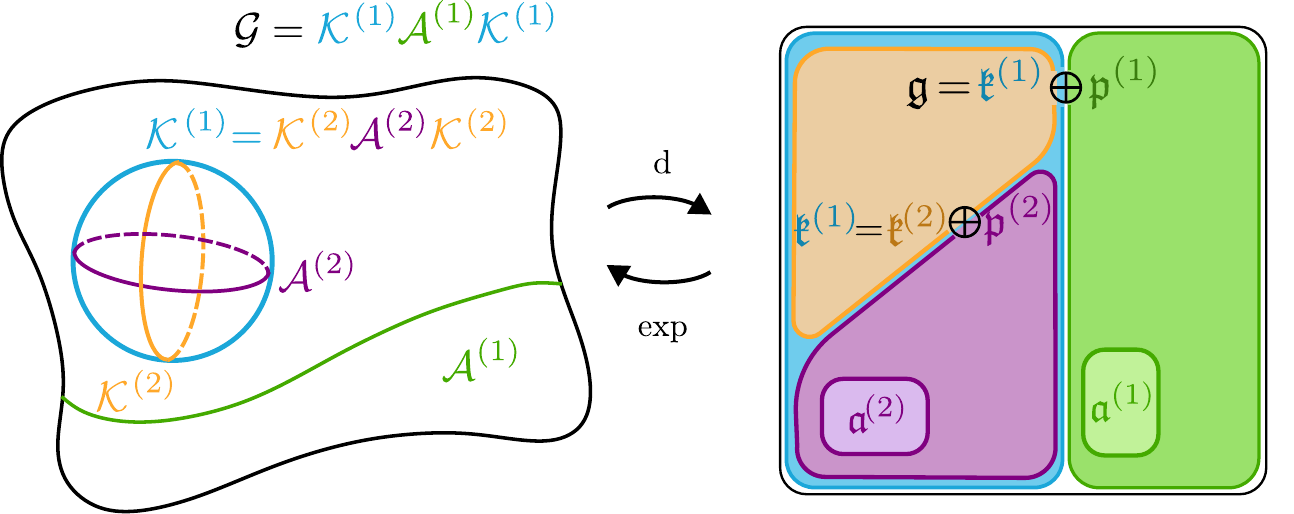}
    \caption{The core objects of this work are compact Lie groups $\groupG$ that are broken down by (recursive) \aclp{CD}. Here we depict a group being decomposed into two factors: a subgroup $\groupK^{(1)}$ and an abelian \acf{CSG} $\groupA^{(1)}$, with the former being decomposed further into $\groupK^{(2)}$ and $\groupA^{(2)}$. In the Lie algebra $\mfg$ of $\groupG$, the decomposition materializes as a recursive vector space sum $\mfg=\big(\mfk^{(2)}\oplus\mfp^{(2)}\big)\oplus\mfp^{(1)}$, where the \acs{CSG} generators stem from abelian subalgebras $\mfa^{(j)}\subset\mfp^{(j)}$.}
    \label{fig:main}
\end{figure}
In this section we discuss \acp{CD} of Lie groups, and the closely associated concept of symmetric spaces.
We will keep the mathematical contents brief and focus on the key technique for compilation, the \emph{\kak decomposition} of group elements.
For this, we also look at recursive \kak decompositions and review their appearance in the literature on quantum compilation.
We sketch the concept of (recursive) Cartan decompositions at the group and algebra level in Fig.~\ref{fig:main}.

\subsection{Symmetric spaces}\label{sec:symmetric_spaces}

\begin{table*}[ht]
    \centering
    \begin{tabular}{lcccccccc}
        Type & $\mathcal{G}$ & $\dim(\mathcal{G})$ & $\mathcal{K}$ & $\dim(\mathcal{K})$ & $\dim(\mathcal{G}/\mathcal{K})$ & $\rank(\mathcal{G}/\mathcal{K})$ & Overparametrization \\\midrule
        A & $\gru(n)\times \gru(n)$ & $2n^2$ & $\gru(n)$ & $n^2$ & $n^2$ & $n$ & $n$ \\
        AI & $\gru(n)$ & $n^2$ & $\grso(n)$ & $(n^2-n)/2$ & $(n^2+n)/2$ & $n$ & $0$ \\
        AII & $\gru(2n)$ & $4n^2$ & $\grsp(n)$ & $2n^2+n$ & $2n^2-n$ & $n$ & $3n$ \\
        AIII & $\gru(p+q)$ & $(p+q)^2$ & $\gru(p)\times \gru(q)$ & $p^2+q^2$ & $2pq$ & $r$ & $(p-q)^2 + r$ \\
        \midrule
        BD & $\grso(n)\times \grso(n)$ & $n^2-n$ & $\grso(n)$ & $(n^2-n)/2$ & $(n^2-n)/2$ & $\lfloor n/2\rfloor$ & $\lfloor n/2\rfloor$\\
        BDI & $\grso(p+q)$ & $[(p+q)^2-p-q]/2$ & $\grso(p)\times \grso(q)$ & $(p^2-p+q^2-q)/2$ & $pq$ & $r$ & $[(p-q)^2 -|p-q|]/2$ \\
        DIII & $\grso(2n)$ & $2n^2-n$ & $\gru(n)$ & $n^2$ & $n^2-n$ & $n/2$ & $3n/2$ \\
        \midrule
        C & $\grsp(n)\times \grsp(n)$ & $4n^2+2n$ & $\grsp(n)$ & $2n^2+n$ & $2n^2+n$ & $n$ & $n$\\
        CI & $\grsp(n)$ & $2n^2+n$ & $\gru(n)$ & $n^2$ & $n^2+n$ & $n$ & $0$ \\
        CII & $\grsp(p+q)$ & $2(p+q)^2+p+q$ & $\grsp(p)\times \grsp(q)$ & $2(p^2+q^2)+p+q$ & $4pq$ & $r$ & $2(p-q)^2+p+q+r$ \\
    \end{tabular}
    \caption{Classification of the (irreducible) symmetric spaces formed from Lie groups of classical real Lie algebras.
    Here $\groupG$ is the total group, $\groupK$ is the subgroup, $\groupG/\groupK$ is the symmetric space, and $\rank$ denotes its rank (an invariant quantity, given by the dimension of its \acs{CSG}), with $r\coloneqq \min(p,q)$.
    The final column shows the overparametrization introduced by the \kak decomposition of $\groupG$, i.e.,~the number of excess degrees of freedom in $\groupK\groupA\groupK$ compared to $\groupG$ itself.
    Note that this is a variant of the mathematical classification which has been adapted to our needs (we excluded non-compact spaces and exceptional Lie algebras, and added an abelian phase to $\gru(n)=\grsu(n)\times \gru(1)$); see App.~\ref{sec:symspace_classification}.
    Adapted from~\cite{edelman2023fifty,wiersema2025geometric}.}
    \label{tab:symm_classif}
\end{table*}

Symmetric spaces are a useful concept in a wide spectrum of physics and numerics.
They come up frequently when trying to find exact decompositions of matrix groups into more manageable subcomponents, as is done in quantum compilation~\cite{khaneja2001cartan,shende2005synthesis,krol2024beyond,edelman2023fifty}.
They also allow to understand symmetry-restricted quantum operations~\cite{wiersema2025geometric} and state spaces, e.g.,~of pure fermionic or bosonic Gaussian states~\cite{hackl2024average,hackl2021bosonic}.
We also refer to~\cite{gorodski2021introduction} and~\cite{magnea2002introduction} for introductions to symmetric spaces, with focus on mathematical clarity and application to quantum information, respectively.
There are a several ways to introduce (Riemannian) symmetric spaces, with somewhat different underlying intuitions. For our purposes, an algebraic definition will be most useful, in particular because we consider Lie groups arising from Lie algebras. 
For this, we use the notion of a \acf{CD}, which we will define rigorously in Def.~\ref{def:cartan_decomposition} in Sec.~\ref{sec:algebras}.
Intuitively, a \ac{CD} of a Lie algebra $\mfg$ splits it into a subalgebra $\mfk$ and the orthogonal complement $\mfp$ in such a way that the two subspaces satisfy special commutation relations.
We thus may informally define a symmetric space to be a quotient space for which the involved tangent spaces end up forming a \ac{CD}.

\begin{definition}
    Given a connected Lie group $\groupG$ and a Lie subgroup $\groupK$, the \emph{quotient space} $\groupP=\groupG/\groupK$ consists of the (left) cosets of $\groupK$, i.e., $\groupP=\{g\groupK\vert g\in \groupG\}$, and it forms a smooth manifold. $\groupP$ is called a \emph{symmetric space} if its tangent space $\mfp$ and the Lie algebra $\mfk$ of $\groupK$ form a \ac{CD} of the Lie algebra $\mfg$ of $\groupG$ via $\mfg=\mfk\oplus\mfp$.
\end{definition}
We note that if $\groupK$ is a normal subgroup, then $\groupG/\groupK$ is itself a group and $\mfp$ is an algebra. For symmetric spaces, we are interested in the complementary case, i.e.,~the quotient space may not form a group.

We will be particularly focused on compact Lie groups, and if $\groupK$ is compact, it turns out that $\groupP$ is a \emph{Riemannian} symmetric space, i.e.,~comes with a Riemannian metric. This is important to us not necessarily due to the geometric implications, but mostly because a complete classification of the Riemannian symmetric spaces was obtained by \'Elie Cartan a century ago~\cite{cartan1926sur}.
We sketch the classification in App.~\ref{sec:symspace_classification}, making a series of restrictions that adapt it to our scope, and arrive at Tab.~\ref{tab:symm_classif}.
From here on, we imply the restriction to (Riemannian) symmetric spaces arising from classical real compact Lie algebras, up to $\mfu(1)$ additions\footnote{The addition of abelian components breaks with the actual classification and with the underlying Lie algebra being a classical algebra. We accept this risk of confusion, for the benefit of obtaining self-consistent recursive decompositions.}.

The class labels for symmetric spaces in Tab.~\ref{tab:symm_classif} can also be encountered in different—but mathematically related—scenarios. For example, to label equivalence classes of so-called (Cartan) involutions; see Sec.~\ref{sec:cartan_involution}. As well, in physical contexts the classification translates to (anti)unitary symmetries~\cite{dalesandro2007quantum}, and in particular gives rise to the periodic table—or ``tenfold way"—of topological insulators~\cite{altland1997nonstandard,ryu2010topological}.

\subsection{\kak decomposition}\label{sec:kak_decomposition}
The \emph{\kak decomposition} (also called the \emph{global \acl{CD}})\footnote{We will largely use the terms ``\acl{CD}'' and ``\kak decomposition'' interchangeably throughout the text, and omit the ``global'' qualifier whenever brevity can be achieved without adding confusion.} is a methodology for exactly decomposing a Lie group $\groupG$ into the product of subgroups $\groupK$ and $\groupA$.
It is a very powerful tool, as it allows us to describe complex transformations using simpler transformations which may be easier to interpret or more efficient to implement. For instance, \acp{EVD} or \acp{SVD} of matrices are special cases of \kak decompositions. As we will see, \kak decompositions are also commonly used in the decomposition of quantum gates, i.e.,~unitary operators, into more manageable constituents.
A key strength of the method is its versatility; we have the freedom to choose any subgroup $\groupK$ of $\groupG$ such that $\groupP=\groupG/\groupK$ is a symmetric space, and any $\groupA$ that is a maximal abelian subgroup of $\groupP$.

To approach the \kak decomposition we look at the so-called KP decomposition first, which generalizes the polar decomposition of matrices. For proofs of the statements in this section, we refer to standard literature such as~\cite{knapp2013lie}.

\begin{thm}{\cite[Thm.~6.31]{knapp2013lie}}\label{thm:kp_decomp}
    Let $\groupG/\groupK$ be a symmetric space. Then $\groupG=\groupK\groupP$, that is, any $G\in\groupG$ can be written as $G=KP$ with $K\in\groupK$ and $P\in\groupP$.
\end{thm}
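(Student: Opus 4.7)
The plan is to exploit the Riemannian structure of the compact symmetric space $\groupG/\groupK$. The task reduces to showing that the exponential map $\exp:\mfp\to\groupG/\groupK$ is surjective; once that is established, producing $K$ and $P$ with $G=KP$ is a short unpacking. (Note that the statement implicitly identifies $\groupP$ with its standard realization $\exp(\mfp)\subset\groupG$ as a section of the coset projection.)

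First, I would record the algebraic input from the \ac{CD} $\mfg=\mfk\oplus\mfp$, in particular the bracket relation $[\mfk,\mfp]\subseteq\mfp$, which guarantees that $\mfp$ is preserved by the adjoint action $\Ad_K$ for every $K\in\groupK$. This is the fact that will let me flip the order of the two factors at the end. Next I would equip $\groupG/\groupK$ with its canonical $\groupG$-invariant Riemannian metric, which exists because $\groupK$ is compact. Standard facts for symmetric spaces then identify the geodesics through the base point $e\groupK$ with curves of the form $t\mapsto \exp(tX)\cdot e\groupK$ for $X\in\mfp$.

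The decisive step is surjectivity of $\exp$ on $\mfp$. Since $\groupG$ is compact, so is $\groupG/\groupK$; hence the Riemannian manifold is complete, and Hopf--Rinow ensures that every coset is joined to $e\groupK$ by a minimising geodesic. Pairing this with the geodesic description from the previous step, I obtain that for any $G\in\groupG$ there exists $X\in\mfp$ with $\exp(X)\cdot e\groupK = G\cdot e\groupK$. Setting $P_0=\exp(X)$ and $K'=P_0^{-1}G\in\groupK$ yields $G=P_0 K'$. To put this into the order requested by the theorem, I use $\Ad_\groupK$-invariance of $\mfp$ to write $K'^{-1}P_0 K'=\exp(\Ad_{K'^{-1}}X)\in\exp(\mfp)$, and then $G=K'\cdot(K'^{-1}P_0 K')$ gives the desired form $G=KP$ with $K=K'\in\groupK$ and $P\in\exp(\mfp)$.

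The main obstacle is clearly the surjectivity statement: it is not intrinsic to the algebraic setup but relies on the differential-geometric input (Hopf--Rinow plus the identification of geodesics with one-parameter subgroups through $\mfp$). An alternative would be a polar-decomposition style argument using a Cartan involution $\theta$ that fixes $\groupK$ pointwise and inverts $\exp(\mfp)$: one would define the ``positive'' part from $G\theta(G)^{-1}$ and extract $P$ via a square-root construction. However, making the square root well-defined and checking that it lands in $\exp(\mfp)$ requires essentially the same analytic machinery as the geodesic route, and is typically more delicate in the compact setting, which is why I would favour the Riemannian argument outlined above.
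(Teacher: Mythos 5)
Your argument is correct, and it is genuinely a different route from what the paper does: the paper provides no proof of this statement at all, instead citing it to Knapp~\cite[Thm.~6.31]{knapp2013lie}. Knapp's Thm.~6.31 is actually set in the non-compact semisimple setting (where the Cartan involution satisfies the definiteness condition the paper discards in its footnote), and its proof runs through the complexification and a polar-decomposition argument, yielding the stronger conclusion that $(K,X)\mapsto K\exp X$ is a diffeomorphism. Your Hopf--Rinow argument is the standard textbook proof for the compact case, which is the paper's actual scope; it buys a short, geometric derivation of surjectivity but does not address injectivity or the non-compact archetype $\grgl(n,\mbc)/\gru(n)$ that the paper mentions as an illustration (there one would replace ``compact $\Rightarrow$ complete'' by the completeness of the bi-invariant/Hadamard metric, after which the same Hopf--Rinow step goes through).

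Two small remarks. First, you are right that the statement implicitly identifies $\groupP$ with $\exp(\mfp)\subset\groupG$ rather than with the quotient manifold; the paper does this without comment, e.g.\ in Prop.~\ref{prop:horizontal_kak_decomp} where it writes ``$P=\exp(x)\in\groupP$''. Making this identification explicit, as you do, is a genuine clarification. Second, the ``alternative'' you dismiss at the end --- extracting $P$ from $\Delta=G\Theta(G)^{-1}$ via a square root --- is precisely the strategy the paper uses constructively in Sec.~\ref{sec:numerical_decomps} and App.~\ref{sec:numerical_details} (where $\Delta$ is the \emph{relative complex structure} and Lemma~\ref{lemma:horizontal_to_kak_decomp} turns a horizontal decomposition of $\Delta$ into a $\kak$ decomposition of $G$). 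In the paper's matrix-group setting the square root is obtained from an eigendecomposition and is no more delicate than the geodesic argument; it has the advantage of being an algorithm rather than an existence proof, which is the whole point of the paper's numerical sections.
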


We will see below that we may equivalently write $\groupG=\groupP\groupK$.
This reproduces the standard polar decomposition for $\groupG=\grgl(n,\mbc)$ and $\groupK=\gru(n)$, allowing us to decompose any invertible square matrix $M$ into a unitary $K$ and a positive definite Hermitian matrix $P$ generated by a Hermitian matrix $x$, $M=KP=K\exp(x)$. Note that this example does not fall into the reduced scope of our classification above because the symmetric space $\grgl(n,\mbc)/\gru(n)$ is non-compact. We consider it an archetypical example nonetheless, given how widely used polar decompositions are. It will be the only non-compact example, and all other groups will be subgroups of some unitary group.

The symmetric space $\groupG/\groupK$ comes with additional structure that is crucial to obtain the \kak decomposition.

\begin{prop}{\cite[Prop.~7.29]{knapp2013lie}}\label{prop:horizontal_kak_decomp}
    Let $\groupP=\groupG/\groupK$ be a symmetric space with tangent space $\mfp$ and select a maximal abelian subalgebra $\mfa\subset\mfp$.
    Then for any $P=\exp(x)\in\groupP$, there is a $K\in\groupK$ and an $a\in\mfa$ such that $x=KaK^\dagger$ and thus $P=K\exp(a)K^\dagger$.
\end{prop}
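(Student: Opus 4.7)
The plan is to generalize the spectral theorem for Hermitian matrices, where every Hermitian operator is conjugate by a unitary to a diagonal one and the diagonal Hermitians form a maximal abelian subalgebra of all Hermitians. Here I would show that the $\Ad(\groupK)$-orbit of any $x\in\mfp$ meets the chosen $\mfa$, via a compactness-based maximum argument on $\groupK$. The identity $P=K\exp(a)K^\dagger$ then follows by exponentiating $x=KaK^\dagger$ and using $\exp(\Ad(K)a)=K\exp(a)K^\dagger$.

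First, I would fix a \emph{regular} element $H\in\mfa$, meaning one whose centralizer in $\mfp$ equals $\mfa$; such $H$ form a dense open subset of $\mfa$, which can be read off from the restricted root space decomposition of $\mfg$ relative to $\mfa$. Next, I would equip $\mfg$ with an $\Ad(\groupG)$-invariant real inner product $\langle\cdot,\cdot\rangle$ that is positive definite on the compact algebra (the negative Killing form, together with standard inner products on any additional $\mfu(1)$ summands, suffices in our restricted setting). Define $f\colon\groupK\to\mbr$ by $f(K)=\langle\Ad(K)x,H\rangle$. Since $\groupK$ is compact, $f$ attains a maximum at some $K_0\in\groupK$, and differentiating along the curve $\exp(tY)K_0$ for arbitrary $Y\in\mfk$ yields
\begin{equation}
0=\langle[Y,\Ad(K_0)x],H\rangle=-\langle\Ad(K_0)x,[Y,H]\rangle
\end{equation}
by $\Ad$-invariance. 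Writing $x'\coloneqq\Ad(K_0)x\in\mfp$, the critical-point condition asserts $x'\perp[\mfk,H]$.

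The remaining step is to identify $[\mfk,H]$ with the orthogonal complement of $\mfa$ inside $\mfp$. Using the restricted root space decomposition $\mfp=\mfa\oplus\bigoplus_{\alpha}\mfp_\alpha$ and $\mfk=\mfm\oplus\bigoplus_{\alpha}\mfk_\alpha$, where $\mfm$ is the centralizer of $\mfa$ in $\mfk$ and the sums run over positive restricted roots, the operator $\ad(H)$ pairs $\mfk_\alpha$ with $\mfp_\alpha$ by an isomorphism scaled by $\alpha(H)$. Regularity of $H$ ensures $\alpha(H)\neq 0$ for every root, so $[\mfk,H]=\bigoplus_{\alpha}\mfp_\alpha=\mfa^\perp\cap\mfp$. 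Combined with $x'\perp[\mfk,H]$, this forces $x'\in\mfa$. Setting $K\coloneqq K_0^{-1}$ and $a\coloneqq x'$ then gives $x=\Ad(K)a=KaK^\dagger$, as required. The hard part will be this final identification: establishing the restricted root space decomposition for the pair $(\mfg,\mfk)$ and in particular checking that $\ad(H)\colon\mfk_\alpha\to\mfp_\alpha$ is an isomorphism whenever $\alpha(H)\neq 0$, which ultimately relies on the nondegenerate pairing between $\mfk_\alpha$ and $\mfp_\alpha$ induced by the Cartan involution defining the decomposition.
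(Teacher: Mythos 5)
Your proof follows the standard strategy found in Knapp and comparable references (to which the paper defers for this statement): fix a regular $H\in\mfa$, maximize the linear functional $K\mapsto\langle\Ad(K)x,H\rangle$ over the compact group $\groupK$, and read off from the critical-point condition that $\Ad(K_0)x$ centralizes $H$. The setup, the choice of invariant inner product, and the derivation of the critical-point identity $\langle\Ad(K_0)x,[Y,H]\rangle=0$ for all $Y\in\mfk$ are all sound.

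However, the step you flag as ``the hard part'' --- proving $[\mfk,H]=\mfa^\perp\cap\mfp$ via the restricted root space decomposition --- is not needed, and its omission is where you have made the argument heavier than it has to be. Apply $\ad$-invariance one more time: for all $Y\in\mfk$,
\begin{equation}
0=\langle\Ad(K_0)x,[Y,H]\rangle=\langle[H,\Ad(K_0)x],Y\rangle.
\end{equation}
Now $[H,\Ad(K_0)x]\in[\mfp,\mfp]\subseteq\mfk$ by the symmetric property of the Cartan decomposition, and the inner product is positive-definite on $\mfk$, so $[H,\Ad(K_0)x]=0$. Regularity of $H$ then places $\Ad(K_0)x$ in $\mfa$ directly, without ever touching the restricted root system. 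This closes the proof in one line and exposes precisely where $[\mfp,\mfp]\subseteq\mfk$ is doing the work. If you do want your identity $[\mfk,H]=\mfa^\perp\cap\mfp$ for its own sake, it also follows from pure linear algebra: $\ad(H)$ is skew-adjoint for the invariant inner product and exchanges $\mfk$ and $\mfp$, so $\operatorname{image}(\ad(H)|_\mfk)=\bigl(\ker\ad(H)|_\mfp\bigr)^\perp\cap\mfp=\mfa^\perp\cap\mfp$ by regularity --- again, no root spaces needed.
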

A maximal abelian subalgebra like $\mfa$ also is called a \acf{CSA}\footnote{We emphasize that ``\acl{CSA}'' is sometimes used to mean a maximal abelian subalgebra consisting of elements whose adjoint action is diagonalizable; we do not make that final assumption here.}. 
We will see later that \emph{any} subalgebra of $\mfp$ is forced to be abelian, which is why this property is sometimes not mentioned explicitly in the definition of a \ac{CSA}.
Note that $\mfa$ here is a \emph{horizontal} \ac{CSA} in $\mfp$, not to be confused with a \ac{CSA} of the full Lie algebra $\mfg$ of $\groupG$.
We will call the abelian group $\groupA\coloneqq\exp(\mfa)$ the (horizontal) \acf{CSG} of $\groupP$.
The abelian group $\gru(1)^{m}$ is also referred to as a torus\footnote{Generalizing the ``standard" two-dimensional torus $\gru(1)^2$ that forms the hull of a bagel.}, so that the \ac{CSG} is called a \emph{maximal torus} in some contexts.
The map acting on a group element $G$ by conjugating it with another group element $K$ is called the \emph{adjoint action} of $\groupG$ on itself, and denoted as
\begin{align}\label{eq:def_Ad}
    \Ad_K:\groupG\to\groupG, \ \ G\mapsto \Ad_KG=KGK^\dagger.
\end{align}
Prop.~\ref{prop:horizontal_kak_decomp} can then be written as $\groupP=\Ad_\groupK\groupA$.

Continuing our example $\groupP=\grgl(n,\mbc)/\gru(n)$, note that $\mfp=\{x\in \mfgl(n,\mbc)| x^\dagger=x\}$, which contains real-valued diagonal matrices as a maximal abelian subalgebra.
%\footnote{Clearly, diagonal matrices commute mutually, and there are no off-diagonal matrices that commute with all of them. Also, $x^\dagger=x$ implies a real-valued diagonal.}.
The proposition above thus tells us that any Hermitian $x\in\mfp$ can be written as $K a K^\dagger$ for some unitary $K$ and a real-valued diagonal $a$. This is simply the \ac{EVD} of $x$. Similarly, $\exp(x)=K\exp(a)K^\dagger$, confirming that a Hermitian exponential matrix has positive eigenvalues only.

We may combine Thm.~\ref{thm:kp_decomp} and Prop.~\ref{prop:horizontal_kak_decomp} into the following corollary, which is an important, if not the most important, statement for our work.

\begin{crllr}\label{cor:kak}
    Let $\groupP=\groupG/\groupK$ be a symmetric space and $\mfa$ a \ac{CSA} of its tangent space $\mfp$.
    Then for any $G\in\groupG$, there exist $K_1, K_2\in\groupK$ and $A\in\groupA$ such that $G=K_1 A K_2$.
\end{crllr}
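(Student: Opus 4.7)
The plan is to chain the two preceding results together: first apply Thm.~\ref{thm:kp_decomp} to peel off one $\groupK$ factor from the left, then use Prop.~\ref{prop:horizontal_kak_decomp} to diagonalize the remaining symmetric-space element through the \ac{CSA} $\mfa$. The whole argument is essentially bookkeeping.

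Concretely, I would first invoke the KP decomposition of Thm.~\ref{thm:kp_decomp} to write an arbitrary $G\in\groupG$ as $G=KP$ with $K\in\groupK$ and $P\in\groupP$. I would then express $P$ as $\exp(x)$ for some $x\in\mfp$; this step relies on the exponential map from the tangent space being surjective onto $\groupP$, which holds in the compact, connected setting we have restricted ourselves to and is implicit in the phrasing of Prop.~\ref{prop:horizontal_kak_decomp}. Applying that proposition then yields $K'\in\groupK$ and $a\in\mfa$ with $P=K'\exp(a)(K')^\dagger=K'A(K')^\dagger$, where $A\coloneqq\exp(a)\in\groupA=\exp(\mfa)$. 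Substituting back gives $G=KK'A(K')^\dagger$, and since $\groupK$ is a group the choices $K_1\coloneqq KK'\in\groupK$ and $K_2\coloneqq(K')^\dagger\in\groupK$ yield the claimed factorization $G=K_1AK_2$.

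The only genuinely non-trivial ingredient is the surjectivity of $\exp:\mfp\to\groupP$, which is where the restriction to compact $\groupG/\groupK$ really bites; without it the argument would have to be adapted (compare the non-compact polar-decomposition example sketched just before Prop.~\ref{prop:horizontal_kak_decomp}, where elements of $\groupP$ are simply \emph{defined} as exponentials of Hermitian matrices). Everything else is a direct composition of the two quoted statements, and one can already read off the ``overparametrization'' column of Tab.~\ref{tab:symm_classif} from the fact that the factorization spends $2\dim(\groupK)+\dim(\mfa)$ parameters on $\dim(\groupG)=\dim(\groupK)+\dim(\groupP)$ underlying degrees of freedom, with $\dim(\mfa)=\rank(\groupP)\leq\dim(\groupP)$.
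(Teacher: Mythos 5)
Your proof is correct and is exactly the argument the paper intends: chain Thm.~\ref{thm:kp_decomp} to get $G=KP$, then Prop.~\ref{prop:horizontal_kak_decomp} to get $P=K'\exp(a)K'^\dagger$, absorb $K'$ into $K_1$ and set $K_2=K'^\dagger$. Your remark that surjectivity of $\exp:\mfp\to\groupP$ is what makes the middle step legitimate in the compact setting is a worthwhile clarification that the paper leaves implicit.
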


As mentioned before, we may reverse the ordering in Thm.~\ref{thm:kp_decomp} using this construction:
\begin{align}
     G=KP=KK_1AK_1^\dagger=K_2AK_2^\dagger K=P'K.
\end{align}
For our example, the corollary provides us with the \ac{SVD} of any invertible square matrix: $G=K_1 A K_2$ with unitary $K_{1,2}$ and a real positive diagonal matrix $A$. 
Similarly, we may start with $\groupG=\grgl(n,\mbr)$ and $\groupK=\gro(n)$ instead, to obtain the polar (Thm.~\ref{thm:kp_decomp}) and \ac{SVD} (Cor.~\ref{cor:kak}) of real invertible square matrices, as well as the orthogonal \ac{EVD} of symmetric matrices (Prop.~\ref{prop:horizontal_kak_decomp}).
For more details and $51$ other matrix factorizations based on \acp{CD}, see~\cite{edelman2023fifty}.
We also make here one crucial remark about the \kak decomposition: it is an \emph{existence} statement, which does not provide methods for obtaining the constituent $K$s and $A$s. Obtaining these in practice is one of the focal points of this work; see Sec.~\ref{sec:numerical_decomps}.
For a tutorial on some of these decompositions in the context of quantum optics, also see~\cite{houde2024matrix}.

Note that for a given group $\groupG$, the subgroup $\groupK$, which must be chosen such that $\groupG/\groupK$ is a symmetric space, fixes the structure of the \kak decomposition. The choice of the \ac{CSA} $\mfa\subset\mfp$, or equivalently of the \ac{CSG} $\groupA<\groupP$, simply changes the appearance of the decomposition in a given basis, and a change of basis via some $K\in\groupK$ takes us to any other choice $\mfa'$. While such a basis choice is not too exciting from a mathematical point of view, it can have important consequences when compiling quantum circuits, as we will see later in Sec.~\ref{sec:recursive_kak_literature}.

It is interesting to note that for Hamiltonian time evolution there is a key difference between $U(t)=\exp(-iHt)$ being decomposed via Prop.~\ref{prop:horizontal_kak_decomp} or Cor.~\ref{cor:kak}. In the former case, we obtain an \ac{EVD} of $H=KaK^\dagger$, and $U(t)=K\exp(at)K^\dagger$ only differs by a rescaling of the \ac{CSA} generator $a$. In the latter case, though the structure $U(t)=K_1(t)a(t)K_2(t)$ is the same for all $t$, we find different subgroup elements $K_{1,2}(t)$ for each evolution time.
In both scenarios, we find a quantum circuit with $t$-independent depth, in stark contrast to approximate product formula approaches, where longer time evolutions have less accuracy than shorter ones, and we must increase the depth to maintain the same level of accuracy.
We provide a detailed description of this compilation approach in Sec.~\ref{sec:hamsim}.

Along with these benefits, the \kak decomposition comes with some caveats. Chiefly, there will be systems for which we cannot efficiently obtain a \kak decomposition because the involved matrix groups have too large dimension, such as dense unstructured gates acting on a large number of qubits. This underlying constraint should always be kept in mind when working with this decomposition. Fortunately, there remain a number of situations where the \kak decomposition is still tractable to find numerically, e.g.,~due to a manageable number of qubits, or some sparse mathematical representation we can work with. One example for the former is decomposing intermediate-sized gates within a circuit as part of a quantum compilation pipeline. The most important example for the latter is the time evolution under free-fermionic Hamiltonians.

\subsection{Recursive \kak decompositions}\label{sec:recursive_kak}
One key feature of the \kak decomposition is that it can be applied recursively, allowing us to break down larger gates into successively smaller ones. The key point to recognize is that after the first decomposition, $K_1, K_2$ live in a Lie group $\groupK<\groupG$, so we can consult the list of symmetric spaces which have $\groupK$ as the base group. For our selection of symmetric spaces in Tab.~\ref{tab:symm_classif}, such a ``follow-up" symmetric space always exists, i.e.,~there are no combinations that lead to dead ends in the recursion. We can then decompose $K_1, K_2$ according to any valid \kak decomposition for $\groupK$ (we may even choose a different decomposition for $K_1$ than for $K_2$!). For example, starting from $\groupG=\gru(N)$, we can perform an AI decomposition, with $\groupK=\gro(N)$, giving the form $U=O^{(1)}_1 A^{(1)} O^{(1)}_2$ for any $U\in\gru(N)$. Then we can apply BDI decompositions to both of $O^{(1)}_1$ and $O^{(1)}_2$, yielding $U=O^{(2)}_{1} A^{(2)}_1 O^{(2)}_{2} A^{(1)} O^{(2)}_{3} A^{(2)}_2 O^{(2)}_{4}$, where $A^{(1)}$ and $A^{(2)}_{1,2}$ lie in the \ac{CSG} from the first and second decomposition, respectively.
In Fig.~\ref{fig:main} we illustrate a recursive \kak decomposition at the group level (left).

This structure yields a set of recursive \kak decompositions that grows exponentially with the recursion length.
Similar to how the classification of symmetric spaces allows us to enumerate all possible \acp{CD}, we can ask if there is a unified way to enumerate all possible recursive \acp{CD}. We will take a few steps in this direction in this work, providing insights into the structure of the recursion space in Sec.~\ref{sec:algebras}.
If we restrict decompositions of type AIII, BDI, and CII to (almost) equal $p$ and $q$, each recursion step roughly halves the dimension of the Lie group in question, so that recursions on matrices of dimension $n$ will roughly have length $\log_2(n)$ in that case.
The full decomposition of $\groupG$ will contain \ac{CSG} elements from all recursion levels but only subgroup elements from $\groupK$ of the very last \kak decomposition.

The space of recursive decompositions can be searched for elements with desired properties. In the context of unitary synthesis, this could be the overall gate count, the number of a specific gate type like CNOTs, or the number of gate parameters.
For each decomposition step, the basis and the \ac{CSA} need to be chosen in addition, which offers further customization options but also makes the search space continuous.
As we will see in the next section, recursive \kak decompositions in the quantum compilation literature are largely restricted to two choices.
Besides manual optimizations, it is mostly the refined level of basis choices at which well-known decompositions differ, interestingly with a significant impact on CNOT counts.
We present new recursive decompositions obtained from our framework in Sec.~\ref{sec:new_decomps}, optimizing parameter counts instead of CNOT counts or targeting less conventional groups.

\subsection{Recursive Cartan decompositions in the wild}\label{sec:recursive_kak_literature}

\begin{figure*}
    \centering
    \includegraphics[width=0.9\linewidth]{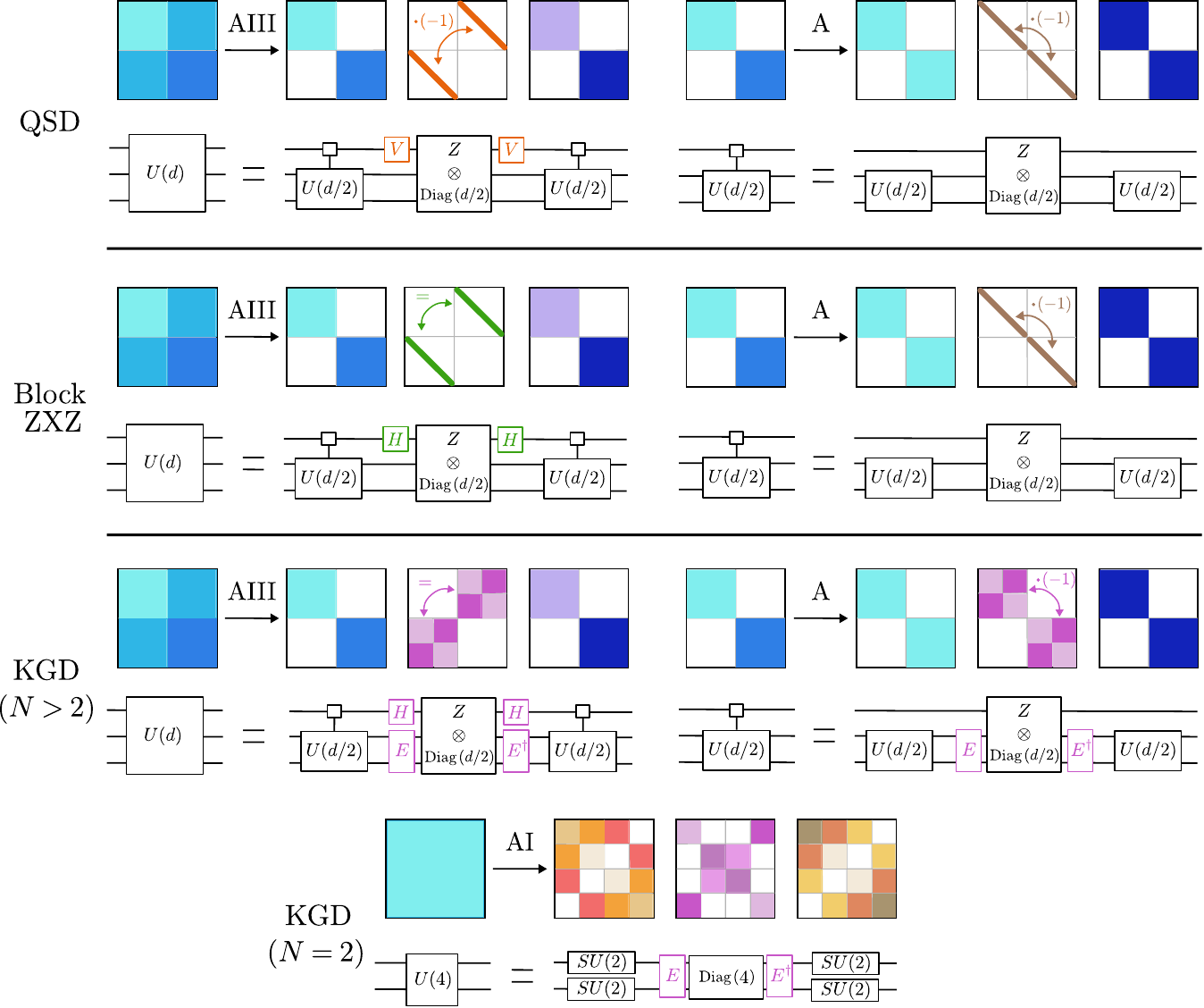}
    \caption{Established recursive \acp{CD} for unitary synthesis from the literature. The circuit diagrams represent the respective \kak decomposition at the group level, with the structure of their generators sketched by the colored matrices above them. The \ac{QSD}, Block-ZXZ, and \ac{KGD} all employ a recursion of alternating types AIII and A, with the \ac{KGD} switching to type AI for its last step on two qubits. Otherwise, the approaches differ in the choice of \ac{CSA}, encoded by basis rotations $V$, $H$ and $E$, and in subsequent optimizations to reduce the CNOT count. See App.~\ref{sec:unified_cartan_decomp_details} for details.}
    \label{fig:recursive_kak_literature}
\end{figure*}

In this section, we highlight that many well-known and top-performing exact unitary synthesis methods are ultimately recursive \acp{CD} of the same type. Specifically, the \acl{KGD}~\cite{khaneja2001cartan, vatan2004realization, bullock2004note, mansky2023near}, the \acl{QSD}~\cite{shende2005synthesis, mottonen2006decompositions, drury2008constructive}, and the optimized Block-ZXZ decomposition~\cite{krol2024beyond} can all be identified with a recursive AIII+A strategy, as we visualize in Fig.~\ref{fig:recursive_kak_literature}.
Alongside the unified framework we present here, the family of synthesis algorithms based on QR decompositions via Givens rotations~\cite{barenco1995elementary, cybenko2001reducing, aho2003compiling, vartiainen2004efficient, shende2005synthesis, jiang2018quantum, rakyta2022approaching, arrazola2022universal} may be considered—to our current understanding—as an independent approach.

It is perhaps surprising that the most well-known decompositions from the literature predominantly use only two types, AIII and A, despite the full classification of symmetric spaces providing many alternative matrix factorizations~\cite{edelman2023fifty}. 
When viewed through this lens, several circuit identities—which otherwise seem to be found via expert insight on independent problems—are revealed as applications of the same \ac{CD} techniques.
Furthermore, the widespread presence of AIII+A \acp{CD} seems to be recognized and acknowledged irregularly in the literature. 
We hope that by elucidating the shared Lie-algebraic structures that underlie these many methods, and highlighting places where different choices are possible, improved unitary synthesis techniques can be developed in the future.

Many synthesis methods explicitly use recursive applications of the \ac{CSD}~\cite{tucci1999rudimentary, mottonen2004quantum, bergholm2005quantum, nakajima2005new, mottonen2006decompositions, iten2016quantum}, which is a type-AIII \ac{CD}. For others, like the Block-ZXZ decomposition~\cite{de2016block,de2018unified, fuhr2018note}, the connection to the type-AIII \ac{CD} is less evident. The AIII decomposition is widely used for unitary synthesis because the $\mathrm{K}$s in $\kak$ end up having beneficial block-diagonal forms. With the choice $p=q=2^{N-1}$ (as is done for every technique mentioned here), the subgroup $\groupK$ can be naturally associated with $(N-1)$-qubit unitaries controlled on the basis states of the remaining qubit. Similarly, the matrix $\mathrm{A}$ has the form of single-qubit rotations controlled on the remaining $N-1$ qubits. Following this recipe recursively leads to decomposition schemes involving so-called ``uniformly controlled rotation"~\cite{mottonen2004quantum} or ``quantum multiplexor'' gates~\cite{shende2005synthesis}, i.e.,~rotations on a subset of qubits controlled by basis states of the remaining qubits. 

While AIII decompositions streamline gates into a convenient block-diagonal form, the gates still remain non-local across all qubits after each recursion step. This is undesirable because it can lead to suboptimal CNOT gate counts when compiling to a universal gate set.
To ameliorate this, a number of techniques combine the AIII decomposition with a second procedure which decouples one qubit completely from the rest at each recursion step. We highlight here that this decoupling step can be understood mathematically as a type-A \ac{CD}\footnote{This fact was noted in~\cite{drury2008constructive} and~\cite{dagli2008general}, though they don't explicitly identify the \ac{CD}'s type.}. 
Combining the type-AIII and type-A decompositions, we thus have the alternating chain of recursions
\begin{align}
    \gru(2^N)&\overset{\rm AIII}{\longrightarrow}\gru(2^{N-1})\times \gru(2^{N-1})\\
    &\overset{\rm A}{\longrightarrow}\gru(2^{N-1})\\
    &\overset{\rm AIII}{\longrightarrow}\gru(2^{N-2})\times \gru(2^{N-2})\\
    &\quad\vdots\nonumber
\end{align}
Techniques that use this recursion strategy differ typically only in two regards: through the choice of \ac{CSA}, and through unitary basis changes on the subgroup $\groupK$. We provide more details on the AIII+A decomposition strategy and outline the choices which differentiate the various families in App.~\ref{sec:unified_cartan_decomp_details}.

Readers familiar with symmetric spaces might notice that the AIII+A recipe could also be adapted to analogous BDI+BD or CII+C recursive strategies (potentially starting with an initial AI or AII step). Indeed, such a decomposition strategy works, though we are only aware of one work which uses this idea for gate synthesis. Specifically, Ref.~\cite{wei2012decomposition} proposes the BDI+BD recipe, covering the three-qubit case of $\grso(8)$\footnote{They also cover the two-qubit BDI case for $\grso(4)$.}, but does not recognize how it can be used recursively for larger systems. 
In Sec.~\ref{sec:new_decomps}, we will present the BDI+BD and CII+C decomposition strategies—which we call the \emph{completely orthogonal decomposition} and the \emph{completely symplectic decomposition}, respectively—in more detail.
Another recursive \ac{CD}, based on a pure AIII recipe, was presented in~\cite{deguise2018simple}.

Finally, we mention that (recursive) \acp{CD} also appear elsewhere in the literature for topics not focused on unitary synthesis, e.g.,~works investigating mathematical frameworks~\cite{dalesandro2007quantum, dagli2008general, guaita2024representation} or decompositions designed for entanglement theory~\cite{bullock2004canonical, bullock2005time, dalesandro2006decompositions}.

\section{Cartan decompositions of Lie algebras}\label{sec:algebras}
In this section we move from the Lie group level to the algebraic level.
We will start with the basic mathematical objects required to understand \acp{CD} at an algebraic, or local, level.
Then we present multiple higher-order \acp{CD}, including the recursive \acp{CD} omnipresent in quantum compilation literature, and extend the framework by Da{\u{g}}l{\i} et al.~\cite{dagli2008general,dalesandro2007quantum}, providing further insight into their structure.
This will also allow us to understand some common wisdom about $\Ztwo$ symmetries in physics from a mathematical perspective.

\begin{definition}
    A real Lie algebra is a vector space $\mfg$ over $\mbr$ together with a \emph{Lie bracket} $[\cdot, \cdot]:\mfg\times \mfg \to\mfg$ that is antisymmetric and satisfies the Jacobi identity $[x,[y,z]]+[z,[x,y]]+[y,[z,x]]=0$ for all $x,y,z\in\mfg$.
    A Lie subalgebra is a vector subspace that is closed under the Lie bracket.
\end{definition}

We will deal exclusively with real matrix Lie algebras\footnote{Note that a real Lie algebra may still contain complex-valued matrices, it will just not be a vector space over the field $\mbc$.} and the matrix commutator as Lie bracket.
A subalgebra $\mfh\subset\mfg$ of a Lie algebra $\mfg$ is called an \emph{ideal} if $[\mfg,\mfh]\subset\mfh$. The algebras $\{0\}$ and $\mfg$ are both always ideals of $\mfg$.

\begin{definition}
    A real Lie algebra $\mfg$ is called \emph{simple} if it is non-abelian and its only ideals are $\{0\}$ and $\mfg$ itself.
    A real Lie algebra is called \emph{semisimple} if it is a direct sum of real simple Lie algebras.
\end{definition}

Finite-dimensional semisimple Lie algebras are of particular interest because they represent natural symmetries of many physical systems, and they have been completely classified mathematically. In particular, we will care about \emph{real compact semisimple Lie algebras} whose simple components are isomorphic to one of the four classical Lie algebras 
\begin{alignat}{3}
    A_{n-1} &=\mfsu(n)&=&\left\{x\in \mbc^{n\times n} | x = -x^\dagger, \tr(x)=0\right\}, \label{eq:def_sun}\\
    B_n/D_n &=\mfso(d)&=&\left\{x\in \mbr^{d\times d} | x=-x^T\right\},\label{eq:def_son}\\
    C_n &=\mfsp(n)&=&\left\{x\in \mfsu(2n) | x=-J_nx^TJ_n^T\right\},\label{eq:def_spn}
\end{alignat}
given in the form of their defining matrix representation. Here, $d=2n+1$ for $B_n$ and $d=2n$ for $D_n$, and we denoted the commonly used symplectic form as
\begin{align}\label{eq:def_J_n}
    % d = \begin{cases} 2n+1 & \text{ for }\ B_n\\2n & \text{ for }\ D_n\end{cases}\quad 
    J_n=\begin{pmatrix} 0 & \id_n \\ -\id_n & 0 \end{pmatrix}.
\end{align}
The only other simple components that could appear in a real compact semisimple Lie algebra, but which we will not include in this work, are the compact real forms of one of the exceptional Lie algebras; also see App.~\ref{sec:symspace_classification}.
For unitary synthesis it will be convenient to allow for additional $\mfu(1)$ components. Even though they break the semisimplicity and depart from the standard classification of \acp{CD}, it will be easy enough to ensure that these components play nicely with the used mathematical tools.

For a specific quantum computing application, Lie algebras commonly enter the stage via the so-called \ac{DLA} of a Hamiltonian, gate generators, or other sets of preferred operators.
\begin{definition}\label{def:dla}
    Given a set of operators $\{g_j\}$, their \acf{DLA} is obtained by taking all linear combinations of nested commutators of the operators, $\langle\{g_j\}\rangle_\mathrm{Lie} = \spanR\{g_j, [g_j, g_k], [g_j, [g_k, g_\ell]], \dots\}$.
    We also define the \ac{DLA} of a Hamiltonian $H$ with respect to a specific operator basis by representing it in said basis as $H=\sum_j \hat{h}_j$ and considering the \ac{DLA} of the basis operators, $\langle iH \rangle_\mathrm{Lie} \coloneqq \langle\{i\hat{h}_j\}\rangle_\mathrm{Lie}$. Note that the basis dependence is implicit.
\end{definition}
For our purposes, it will suffice to think about unitary algebras and their subalgebras, so that the operators are always skew-Hermitian.

\subsection{Local Cartan decompositions}
Given a Lie algebra $\mfg$, we are interested in a subalgebra $\mfk\subset\mfg$ that comes with a special relationship between itself and its complement $\mfp$. For our purposes, we will consider ``complement" to mean the orthogonal complement with respect to the trace inner product on $\mfg$. We will denote the direct sum of orthogonal vector subspaces with $\oplus$, both if they commute or not, which will be implied by the context.

\begin{definition}\label{def:cartan_decomposition}
    Let $\mfg$ be a semisimple Lie algebra with subalgebra $\mfk\subset \mfg$.
    The vector space decomposition $\mfg=\mfk\oplus\mfp$ is called a \emph{(local) \acl{CD}} if the following commutation relations hold:
    \begin{align}
        [\mfk, \mfk] & \subseteq \mfk\quad\mathrm{(subalgebra~property)}, \label{eq:subalg_prop} \\
        [\mfk, \mfp] & \subseteq \mfp\quad\mathrm{(reductive~property)}, \label{eq:reduct_prop} \\
        [\mfp, \mfp] & \subseteq \mfk\quad\mathrm{(symmetric~property)}. \label{eq:symm_prop}
    \end{align}
\end{definition}

We include the first property for completeness, even though it is satisfied by assumption.
At this point, some names deserve a comment. The latter two properties in the definition can be understood from the Lie group perspective: the second property makes the quotient space $\groupP\coloneqq e^\mfg/e^\mfk$ a \emph{reductive homogeneous space}, and adding the third one turns it into a symmetric space.
While we focus on local \acp{CD} in this section, this connection allows us to leverage a key result about their global counterparts, namely the complete classification due to Cartan; see Sec.~\ref{sec:symmetric_spaces} and App.~\ref{sec:symspace_classification}.
In parts of the mathematical literature, $\mfg=\mfk\oplus\mfp$ is only called a \acl{CD} if the pair satisfies an additional property\footnote{Namely that the Killing form, which we will not introduced here, be negative (positive) definite on $\mfk$ ($\mfp$).}.
Our definition above deviates from this convention but is common in the field of quantum computation~\cite{kokcu2022fixed,wiersema2025geometric,dagli2008general}.
Let us reuse the notation of the Adjoint action $\Ad_K$ from Eq.~(\ref{eq:def_Ad}) for its own differential, which acts on Lie algebra elements instead:
\begin{align}\label{eq:def_Ad_on_algebra}
    \Ad_K:\mfg\to\mfg,\ \ x\mapsto KxK^{-1}.
\end{align}
The subalgebra and reductive properties of a \ac{CD} then can be expressed as
\begin{align}\label{eq:CD_properties_at_mixed_level}
    \Ad_\groupK \mfk\subset\mfk,\ \Ad_\groupK \mfp\subset\mfp.
\end{align}
Differentiating this action with respect to the subscript argument yields the adjoint action of $\mfg$ on itself, given by the Lie bracket (i.e.,~the matrix commutator for our purposes):
\begin{align}
    \ad_k:\mfg\to\mfg,\ \ x\mapsto \ad_k(x)=[k,x].
\end{align}
This allows us to rephrase Eq.~(\ref{eq:subalg_prop}-\ref{eq:symm_prop}) as $\ad_\mfk(\mfk)\subseteq\mfk$, $\ad_\mfk(\mfp)\subseteq\mfp$, and $\ad_\mfp(\mfp)\subseteq\mfk$, respectively.

We want to make sure that \acp{CD} still work if we allow for $\mfu(1)$ components, as anticipated above. To address this, we make the following observation.
\begin{observation}\label{obs:phases}
    A \ac{CD} $\mfg=\mfk\oplus\mfp$ of a semisimple Lie algebra can be extended to a \ac{CD} of $\mfg\oplus \mfu(1)^{\oplus m}$, where each of the $\mfu(1)$ components may be independently assigned to $\mfk$ or $\mfp$.
\end{observation}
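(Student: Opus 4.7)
The plan is to proceed by a direct verification of the three commutation-relation properties in Def.~\ref{def:cartan_decomposition}, exploiting the central role played by the $\mfu(1)$ summands in the extended algebra.

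\textbf{Setup.} Given any assignment of the $m$ abelian summands, let $I_\mfk, I_\mfp \subset \{1,\dots,m\}$ denote the index sets sent to $\mfk$ and $\mfp$ respectively, with $I_\mfk \sqcup I_\mfp = \{1,\dots,m\}$. Write $\mathfrak{u}_\mfk = \bigoplus_{j \in I_\mfk} \mfu(1)$ and $\mathfrak{u}_\mfp = \bigoplus_{j \in I_\mfp} \mfu(1)$, and define
\begin{align}
\tilde{\mfk} &\coloneqq \mfk \oplus \mathfrak{u}_\mfk, &
\tilde{\mfp} &\coloneqq \mfp \oplus \mathfrak{u}_\mfp.
\end{align}
The underlying vector-space decomposition $\mfg \oplus \mfu(1)^{\oplus m} = \tilde{\mfk} \oplus \tilde{\mfp}$ holds by construction, and orthogonality with respect to the trace inner product follows since $\mfk \perp \mfp$ by hypothesis, since $\mathfrak{u}_\mfk \perp \mathfrak{u}_\mfp$ as distinct direct summands of $\mfu(1)^{\oplus m}$, and since the $\mfu(1)$ summands are orthogonal to $\mfg$ as a separate external summand.

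\textbf{Key observation.} Since $\mfg \oplus \mfu(1)^{\oplus m}$ is an external direct sum, every element of $\mfu(1)^{\oplus m}$ commutes with every element of $\mfg$; moreover $\mfu(1)^{\oplus m}$ is abelian. Hence all brackets involving at least one $\mfu(1)$ factor vanish. This single fact reduces each of the three required inclusions to the corresponding inclusion for the original decomposition $\mfg = \mfk \oplus \mfp$.

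\textbf{Verification.} Expanding bilinearly and dropping the vanishing mixed brackets yields $[\tilde{\mfk}, \tilde{\mfk}] = [\mfk, \mfk] \subseteq \mfk \subseteq \tilde{\mfk}$ for the subalgebra property, $[\tilde{\mfk}, \tilde{\mfp}] = [\mfk, \mfp] \subseteq \mfp \subseteq \tilde{\mfp}$ for the reductive property, and $[\tilde{\mfp}, \tilde{\mfp}] = [\mfp, \mfp] \subseteq \mfk \subseteq \tilde{\mfk}$ for the symmetric property, where each containment on the right invokes the corresponding assumption on the original \ac{CD}.

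\textbf{Obstacles.} There is essentially no hard step: the whole argument is a bookkeeping exercise made trivial by the centrality of the abelian summands. The only point requiring mild care is to confirm that our definition of \ac{CD} tolerates the presence of the non-semisimple $\mfu(1)$ factors, given that Def.~\ref{def:cartan_decomposition} is phrased for semisimple $\mfg$. This is a definitional, rather than mathematical, issue: the three bracket conditions and the orthogonal vector-space split remain well-defined, which is precisely why the authors flag this extension via Obs.~\ref{obs:phases}. No use of the Killing form or of semisimplicity enters the verification, so the extension is consistent and no further structural input is needed.
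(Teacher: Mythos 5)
Your proof is correct and takes essentially the same route as the paper, which justifies the observation only with the one-sentence remark that the additional $\mfu(1)$ components are central and therefore cannot disrupt the defining commutation relations. You simply make the bilinearity bookkeeping explicit and flag the same definitional caveat about non-semisimple $\mfg$ that the paper itself acknowledges in a footnote.
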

The result in the previous observation works because the additional $\mfu(1)$ components cannot disrupt the defining commutation relations of the \ac{CD}.
They merely dilute the classification of possible \acp{CD} because their location in the decomposition is arbitrary.
We will implicitly use this observation, usually to decompose $\mfu(n)$ instead of $\mfsu(n)$.

The properties of a \ac{CD} allow us to understand the Lie algebra $\mfg$ (and its corresponding Lie group $\groupG$) based on basic commutation relations between the two preferred subspaces, which we will sometimes call the \emph{vertical subspace} ($\mfk$) and the \emph{horizontal subspace} ($\mfp$)\footnote{To the best of our knowledge there is no fundamental reason for this association, but it is widely used from gauge theory to particle physics.}. The strong structural guarantees of a \ac{CD} allow for rather general statements.

\begin{restatable}{prop}{uniquecdofH}\label{prop:unique_CD_of_H}
    Consider a set of operators $\basisB$.
    If there is a \ac{CD} of the \ac{DLA} $\langle \basisB\rangle_\text{Lie}$ for which $\basisB$ is horizontal, it is unique.
\end{restatable}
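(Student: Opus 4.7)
The plan is to leverage two facts in tandem: (i) $\basisB$ generates the whole \ac{DLA} by nested commutators (Def.~\ref{def:dla}), and (ii) the commutation relations (\ref{eq:subalg_prop})--(\ref{eq:symm_prop}) defining a \ac{CD} force iterated commutators of horizontal elements to alternate between $\mfp$ and $\mfk$ according to the parity of their nesting depth. Together these should pin down both subspaces intrinsically from $\basisB$, independently of which \ac{CD} is considered.

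Concretely, I would let $C_n$ denote the span of all nested commutators of length exactly $n$ built from elements of $\basisB$, with $C_1=\spanR\basisB$ and $C_n=\sum_{j+k=n}[C_j,C_k]$ for $n\geq 2$, and define
\begin{align}
    \mfp' \coloneqq \sum_{n\text{ odd}} C_n, \qquad \mfk' \coloneqq \sum_{n\text{ even}} C_n,
\end{align}
so that by Def.~\ref{def:dla} we have $\langle\basisB\rangle_\mathrm{Lie} = \mfp' + \mfk'$. A short induction on $n$, using $[\mfk,\mfk]\subseteq\mfk$, $[\mfk,\mfp]\subseteq\mfp$, and $[\mfp,\mfp]\subseteq\mfk$, then shows that for \emph{any} \ac{CD} $\mfg=\mfk\oplus\mfp$ with $\basisB\subset\mfp$, the inclusions $C_n\subseteq\mfp$ for odd $n$ and $C_n\subseteq\mfk$ for even $n$ both hold; in particular $\mfp'\subseteq\mfp$ and $\mfk'\subseteq\mfk$.

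To close the argument, since $\mfp\cap\mfk=\{0\}$ the containments give $\mfp'\cap\mfk'=\{0\}$, and combining them with $\mfp'+\mfk'=\mfg=\mfp\oplus\mfk$ forces the dimensional equalities $\dim\mfp'=\dim\mfp$ and $\dim\mfk'=\dim\mfk$, hence $\mfp'=\mfp$ and $\mfk'=\mfk$. Both subspaces are thus characterized intrinsically by $\basisB$—as the spans of odd- and even-depth nested commutators of its elements—so any two \acp{CD} of $\langle\basisB\rangle_\mathrm{Lie}$ in which $\basisB$ is horizontal must coincide. The one place where care is required is the inductive step: since $[C_m,C_n]\subseteq C_{m+n}$ rather than merely $C_{n+1}$, one has to track the parity of $m+n$ across all four subcases (odd/odd, odd/even, even/odd, even/even) and match each to the subspace prescribed by the \ac{CD} relations. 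This bookkeeping is routine, but it is the step most prone to slippage and should be carried out explicitly in the full proof.
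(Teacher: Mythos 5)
Your proof is correct and follows essentially the same strategy as the paper's: use the alternating $\mfp/\mfk$ membership of nested commutators according to parity of nesting depth, then note that the DLA is exhausted by such commutators so the decomposition is pinned down. The only cosmetic difference is that the paper works with the left-normed filtration $\basisB^{(\kappa)}=\ad_\basisB^\kappa(\basisB)$, where the induction step is a single recurrence, whereas you track arbitrary bracketings via $C_n=\sum_{j+k=n}[C_j,C_k]$ and handle the four odd/even subcases; both reductions are sound and close the argument in the same way.
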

Note that the statement, which we prove in App.~\ref{sec:horizontal_proofs}, applies to Hamiltonians in a fixed operator basis as well.
It is a useful guardrail for methods that require—or prefer—a set of operators (or Hamiltonian) to lie in the horizontal space, as does our compilation algorithm in Sec.~\ref{sec:hamsim}.

The proposition does not extend to any $\basisB$ in a given algebra, but only holds for the unique \ac{DLA} generated by $\basisB$.
If we are thinking about the commonly used Pauli basis, we furthermore make the following existence observation.

\begin{restatable}{prop}{cdofHexistsPauli}\label{prop:CD_of_H_exists_Pauli}
    Consider a set $\basisB$ of Pauli words $\{iP_j\}_j$ that is a minimal generating set for its \ac{DLA} $\langle \basisB \rangle_\text{Lie}$. Then there is a unique \ac{CD} of the \ac{DLA} for which $\basisB$ is horizontal.
    The statement continues to hold if $\basisB$ is a minimal generating set extended by even-order commutators.
\end{restatable}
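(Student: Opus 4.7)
Uniqueness of such a Cartan decomposition follows directly from Prop.~\ref{prop:unique_CD_of_H}, once existence is secured. So the real task is to exhibit a \ac{CD} of $\mfg=\langle\basisB\rangle_\mathrm{Lie}$ in which $\basisB\subseteq\mfp$.

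I will construct it by equipping $\mfg$ with a $\Ztwo$-grading determined by the parity of ``commutator depth''. Every element of $\mfg$ is a real linear combination of $\{iP\}$ for Pauli words $P$ in some set $S$, and every such $P$ is reachable from $\basisB$ by a nested commutator. My plan is to assign each $P\in S$ a parity $\chi(P)\in\Ztwo$ equal to the number of generators used in any such nested commutator (counted with multiplicity) modulo $2$, and then define $\mfp\coloneqq\spanR\{iP:P\in S,\,\chi(P)=1\}$ and $\mfk\coloneqq\spanR\{iP:P\in S,\,\chi(P)=0\}$. Because $[iP,iQ]$ is a real nonzero multiple of $iR$ with $R=PQ$ (up to phase) when $P,Q$ anticommute, and vanishes otherwise, depths add under the bracket, and the three commutation relations of Def.~\ref{def:cartan_decomposition} follow immediately; moreover $\basisB$ is horizontal since its elements have depth $1$.

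The main obstacle is to show that $\chi$ is well-defined --- that no Pauli word $iR\in\mfg$ is reachable from $\basisB$ by nested commutators of opposite parity. I would argue by contradiction. A word reachable by both parities translates, via the multiplicative structure of the Pauli group, into an odd-length null relation $\prod_{j\in T}P_j=cI$ whose factors admit a sequence of pairwise anticommutations realizing the product as a bona fide nested commutator in $\mfg$. Running that commutator chain in reverse then expresses one particular generator $iP_{j_0}$, $j_0\in T$, as a nested commutator of $\{iP_j:j\in T\setminus\{j_0\}\}$, contradicting the minimality of $\basisB$. Null relations supported on pairwise commuting subsets of $T$, by contrast, cannot be realized at the commutator level at all and so cause no parity conflict; this is the subtle point that makes the purely abelian situations harmless and confines the use of minimality to the non-abelian case, where an anticommutation chain is actually available.

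The claim about augmenting a minimal generating set by even-order commutators is then immediate. A nested commutator with an even number of brackets involves an odd number of generators, so it has odd depth and is automatically assigned parity $1$ by the construction above, i.e.,~it already lies in $\mfp$. Any such augmentation of $\basisB$ therefore remains horizontal in the CD just built, and Prop.~\ref{prop:unique_CD_of_H} again supplies uniqueness.
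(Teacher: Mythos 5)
Your overall strategy matches the paper's: grade $\mfg$ by the parity of commutator depth, assign odd-depth Pauli words to $\mfp$ and even-depth ones to $\mfk$, verify the \ac{CD} commutation relations from additivity of depth under the bracket, and push the well-definedness of that parity back onto minimality of $\basisB$. The second and third paragraphs of your argument, however, contain genuine gaps precisely at the crux of well-definedness.

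The central assertion — that an odd-length null relation $\prod_{j\in T}P_j = cI$ obtained by concatenating two realizations of the same word ``admit[s] a sequence of pairwise anticommutations realizing the product as a bona fide nested commutator in $\mfg$'' — is not established, and it is not true in general. Having two nested-commutator chains for the same word $R$ does not by itself produce a single nested-commutator chain for the combined product: the anticommutation structure of one chain need not mesh with that of the other after reversal, and reading a product of Paulis back into a nested commutator requires anticommutation conditions that must be checked term by term. Similarly, ``running that commutator chain in reverse'' to isolate a generator $iP_{j_0}$ as a commutator of the remaining factors tacitly uses the cyclicity property $iP\propto[iQ,iR]\Rightarrow iR\propto[iP,iQ]$ (the paper's Lemma~\ref{lemma:paulicom}, Eq.~(\ref{eq:paulicom_is_cyclic})), which is special to Pauli words and must be invoked explicitly. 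Worse, your extraction step does not rule out that $P_{j_0}$ also appears elsewhere in the combined chain; in that case expressing $iP_{j_0}$ as a commutator of ``the others'' may still involve $iP_{j_0}$ itself, and no contradiction with minimality follows. The paper's proof in App.~\ref{sec:horizontal_proofs} avoids all this by working \emph{locally}: it fixes the smallest $\kappa$ with $\basisB_\mbr^{(\kappa)}\cap\basisB_\mbr^{(\kappa+1)}\neq\emptyset$, uses the inclusion $\basisB_\mbr^{(\kappa)}\subset\basisB_\mbr^{(\kappa+2)}$, and then carefully shows that the single extra bracket factor $iR\in\basisB$ cannot appear in either commutator chain for $iP$ or for $iQ$ (otherwise minimality of $\kappa$ or the repeated-Pauli reduction of Lemma~\ref{lemma:paulicom} is violated), so that $iR\propto[iP,iQ]$ is expressed without using $iR$, contradicting minimality of $\basisB$. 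Your observation about abelian sub-relations being harmless is correct in spirit, but without the local minimal-$\kappa$ structure there is no clean separation between the benign and dangerous cases. To salvage your argument you would need to reproduce essentially the same bookkeeping. The final paragraph about even-order commutators is correct given a valid construction, and the appeal to Prop.~\ref{prop:unique_CD_of_H} for uniqueness is fine.
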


\subsection{Cartan involutions}\label{sec:cartan_involution}

\begin{table*}[ht]
    \centering
    \begin{tabular}{lclclc}
        Type & $\mfg$ & $\ \ \theta=\Theta$ & Group & Constraint on $G$ & Generic $\theta_0$ \\\midrule
        A & $\mfu(n)\oplus \mfu(n)$ & $\Ad_X\circ\swapsymbol$ & $\gru(n)^{\times 2}$ & $X^\dagger=X^{\smallswap}$ & $\swapsymbol$ \\
        AI & $\mfu(n)$ & $\Ad_V \circ \,\ast $ & $\gru(n)$ & $V^T=V$ & $\ast$\\
        AII & $\mfu(2n)$ & $\Ad_W \circ \,\ast $ & $\gru(2n)$ & $W^T=-W$ & $\Ad_{J_n}\circ\,\ast$\\
        AIII & $\mfu(p+q)$ & $\Ad_H $ & $\gru(p+q)$ & $H^\dagger=H$ & $\Ad_{I_{p,q}}$\\\midrule
        BD & $\mfso(n)\oplus \mfso(n)$ & $\Ad_X\circ\swapsymbol$ & $\grso(n)^{\times 2}$ & $X^T=X^{\smallswap}$ & $\swapsymbol$\\
        BDI & $\mfso(p+q)$ & $\Ad_R $ & $\gro(p+q)$ & $R^T=R$ & $\Ad_{I_{p,q}}$\\
        DIII & $\mfso(2n)$ & $\Ad_L $ & $\grso(2n)$ & $L^T=-L$ & $\Ad_{J_n}$\\\midrule
        C & $\mfsp(n)\oplus \mfsp(n)$ & $\Ad_X\circ\swapsymbol$ & $\grsp(n)^{\times 2}$ & $X^\dagger=X^{\smallswap}$ & $\swapsymbol$\\
        CI & $\mfsp(n)$ & $\Ad_S$ & $\grsp(n)$ & $S^\dagger=-S$ & $\Ad_{J_n}=\ast$\\
        CII & $\mfsp(p+q)$ & $\Ad_P $ & $\grsp(p+q)$ & $P^\dagger=P$ & $\Ad_{K_{p,q}}$\\
    \end{tabular}
    \caption{Functional form of all Cartan involutions on the considered Lie algebras $\mfg$ with respect to their ``defining" representations. Here we made the basis choice of the involution explicit, so that no implicit degrees of freedom are left. All involutions consist of conjugation with a matrix $G$, potentially combined with complex conjugation $\ast$ or a swap operation ${\protect\swapsymbol}$ (Eq.~(\ref{eq:def_swap})). The different involution types on the same algebra differ in the additional properties $G$ has to satisfy, and/or in the presence of $\ast$. 
    We also provide a common, or generic, choice $\theta_0$ for the involution, with $J_n$, $I_{p,q}$ and $K_{p,q}$ defined in Eqs.~(\ref{eq:def_J_n}), and (\ref{eq:def_I_pq_K_pq}).
    The involution $\Theta$ on the group $\groupG=\exp(\mfg)$ is equal to the involution $\theta$ on $\mfg$ in this representation.
    }
    \label{tab:all_cartan_involutions}
\end{table*}

Involutions on a Lie algebra are closely tied to \acp{CD}, and thus to symmetric spaces. They are a convenient tool for defining and classifying possible decompositions. In addition, we will use them to describe interactions of \acp{CD} and to relate \acp{CD} to $\mathbb{Z}_2$ symmetries of physical systems.

\begin{definition}
    Let $\mfg$ be a Lie algebra. A map $\theta:\mfg\to\mfg$ is called an \emph{involution} if it squares to the identity. Moreover, if this map is a Lie algebra automorphism, we will call it a \emph{Cartan involution}.
\end{definition}

Even more so than for the term \ac{CD}, a \emph{Cartan} involution is frequently required to have an additional property\footnote{Namely that $(x,y)\mapsto-B(x,\theta(y))$ be positive definite, where $B$ is the Killing form that we still will not introduce.}. For consistency,  we again deviate from this and use the ``quantum physicists'' convention. This should not cause confusion for our scope because the identity is the only map fulfilling this additional property for real compact Lie algebras, i.e.,~to mathematicians there would be only a single, trivial Cartan involution anyways.

Because any Cartan involution $\theta$ is a linear map which squares to the identity, it can have only two possible eigenvalues, namely $\pm 1$. This partitions the domain $\mfg$ into a direct sum $\mfg=\mfs\oplus\mfm$ whose components are fully symmetric ($\theta(x)=x~\forall\ x\in\mfs$) and fully antisymmetric ($\theta(x)=-x~\forall\ x\in\mfm$). Furthermore, since $\theta$ is an automorphism, we have
\begin{align}
    \theta([\mfs,\mfs])=[\theta(\mfs),\theta(\mfs)]=[\mfs,\mfs].
\end{align}
We conclude that $[\mfs, \mfs]$ lies within the $+1$ eigenspace of $\theta$, i.e., $[\mfs, \mfs]\subseteq \mfs$. This implies that $\mfs$ is itself a Lie (sub)algebra. Through similar arguments, we can conclude that $[\mfm, \mfm]\subset\mfs$ and $[\mfs, \mfm]\subset\mfm$. 

We notice the strong similarity between the three algebraic conditions on the eigenspaces $\mfs$/$\mfm$ and the definition of a \ac{CD}. Indeed, from any Cartan involution $\theta:\mfg\to\mfg$, we can construct a \ac{CD} by partitioning $\mfg$ into the $\theta$-symmetric and $\theta$-antisymmetric subspaces\footnote{Note that the tangent space $\mfm$ of the \emph{symmetric} space $e^\mfg/e^\mfs$ is the $\theta$-\emph{antisymmetric} subspace. \tiny\shrug}. Conversely, any \ac{CD} $\mfg=\mfk\oplus\mfp$ can be used to define an involution $\theta=\Pi_\mfk-\Pi_\mfp$ from the projectors onto $\mfk$ and $\mfp$. This binary split already gives away a close relationship with $\Ztwo$ symmetries, which we discuss in Sec.~\ref{sec:involutions_symmetries}.
Given their equivalence, we will implicitly switch between \acp{CD} and Cartan involutions.

This equivalence also allows us to classify all Cartan involutions, inheriting from the classification of symmetric spaces in Sec.~\ref{sec:symmetric_spaces}. Focusing on the classical simple Lie algebras $\mfsu$, $\mfso$ and $\mfsp$ again, we find 10 types which are assigned the same labels as the corresponding symmetric spaces from Tab.~\ref{tab:symm_classif}. We denote the type of an involution on a semisimple algebra $\mfg=\oplus_j\mfg_j$ as $\oplus_j T_j$, where $T_j$ is the type of $\theta$ restricted to the simple component $\mfg_j$.
For each type, there are infinitely many Cartan involutions, which only differ by a basis change generated by some element of the algebra. 
In Tab.~\ref{tab:all_cartan_involutions}, we make this basis change explicit and characterize the concrete functional form of any valid involution $\theta$ in the defining representation (see Eqs.~(\ref{eq:def_sun})-(\ref{eq:def_spn})) of the algebras. 
They all contain the Adjoint action $\Ad_G$ (c.f.~Eq.~(\ref{eq:def_Ad_on_algebra})) of some $G\in\exp(\mfg)$, with the two exceptions that $G\in\gro(p+q)$ for type BDI and that $G\in\gru(p+q), \det{G}=\pm 1$ is allowed for type AIII even if we restrict to $\mfg=\mfsu(p+q)$.
We also provide a generic form $\theta_0$ that arises from basis choices commonly used in the literature, with $J_n$ defined in Eq.~(\ref{eq:def_J_n}) and
\begin{align}\label{eq:def_I_pq_K_pq}
    I_{p,q}=\begin{pmatrix} \id_p & 0 \\ 0 & -\id_q \end{pmatrix},\ 
    K_{p,q}=\begin{pmatrix} \id_p & \!0 & \!0 & \!0 \\ 0 & \!-\id_q & \!0 & \!0 \\ 0 & \!0 & \!\id_p & \!0 \\ 0 & \!0 & \!0 & \!-\id_q\end{pmatrix}.
\end{align}
Intuitively, it can be very helpful to identify these as Pauli matrices for $p=q$\footnote{Strictly speaking, this works as soon as the system of interest has one two-level subsystem for $J_n$ and $I_{n/2,n/2}$, or two such subsystems for $K_{n/2,n/2}$.}: $J_n=iY_0\otimes \id_{n/2}$, $I_{n/2,n/2}=Z_0\otimes \id_{n/2}$ and $K_{n/2,n/2}=\id_2\otimes Z_1\otimes \id_{n/4}$.
On doubled classical algebras $\mfg=\mfh\oplus\mfh$ we also need a swap operation that exchanges the direct summands. We denote it as
\begin{align}\label{eq:def_swap}
    \swapsymbol(x\oplus y)\equiv \Ad_{\smallswap}(x\oplus y)=y\oplus x, 
\end{align}
where we denoted both the map and the conjugating operator as $\swapsymbol$. The latter is equal to $X_0$ if the doubled algebra is embedded as the diagonal into a larger ambient algebra.
Note that trivial decompositions with $\mfk=\mfg$ are valid \acp{CD} of type AIII on $\gru(n)$, BDI on $\grso(n)$, and CII on $\grsp(n)$, each with $p=0$ (or $q=0$).

The derivation of Tab.~\ref{tab:all_cartan_involutions} can be found in App.~\ref{sec:involutions_calculations:characterize}.
Even though the form of possible involutions depends on the representation of the algebra, this explicit form will enable us to derive the behaviour of Cartan involutions (and thus \acp{CD}) under composition and recursion.
In particular, the characterization plays a crucial role in our proofs of Lemma~\ref{lemma:composition_of_involutions} and Thm.~\ref{thm:grading_recursion_equivalence}. Furthermore, we expect it to be of use for practitioners and to help with identifying Cartan involutions in applications.

Given a Cartan involution $\theta$ on $\mfg$, we also have an involution $\Theta$ on the group $\groupG$—i.e., a map $\Theta:\groupG\rightarrow\groupG$ with $\Theta^2=\id$—which has $\theta$ as its differential: $\theta=\mathrm{d}\Theta$.
For the concrete functional forms on the defining representations in Tab.~\ref{tab:all_cartan_involutions}, we can use a linear $\Theta$, so that we have simply $\Theta=\theta$. We will use this in our derivation of numerical \kak decomposition routines in Sec.~\ref{sec:numerical_decomps}.

\subsection{Cartan gradings}\label{sec:gradings}
Here we discuss the notion of a Cartan grading and how it arises naturally when considering simultaneous \acp{CD} of the same algebra.
The additional structure captured by a Cartan grading will allow us to easily discover new subalgebras and \acp{CD} for a given algebra.
In this and the next section, we build on top of the framework by Da{\u{g}}l{\i} et al.~\cite{dagli2008general}, modifying and extending it.
To construct higher-order \acp{CD} like Cartan gradings from multiple \acp{CD}, we will require them to ``play nice" in the following sense.

\begin{definition}\label{def:compatible_set_of_cds}
    We call two \acp{CD} \emph{compatible} if their corresponding Cartan involutions $\theta_{1,2}$ commute or, equivalently, if their eigenspaces $\mfk_{1,2}$ and $\mfp_{1,2}$ satisfy
    \begin{align}\label{eq:compatibility_pair_of_cds}
        \mfg = \left(\mfk_1\cap\mfk_2\right)\oplus\left(\mfk_1\cap\mfp_2\right)\oplus
        \left(\mfp_1\cap\mfk_2\right)\oplus\left(\mfp_1\cap\mfp_2\right).
    \end{align}
    We denote a set of compatible \acp{CD} as $\{\mfg=\mfk_j\oplus \mfp_j\}_j$ with \ac{CD} types $\{T_j\}_j$ using integer subscripts.
\end{definition}

Next, we define Cartan gradings as a restriction of the semigroup-grading from~\cite{dagli2008general}.

\begin{definition}\label{def:grading}
    A \emph{Cartan $c$-grading} of a Lie algebra $\mfg$ is a direct sum decomposition~\cite[Def.~3.1]{dagli2008general}
    \begin{align}
        \mfg = \bigoplus_{s\in\Ztwo^c} \mfg_s, \quad\text{with}\quad [\mfg_s, \mfg_t]\subseteq \mfg_{s+t}\ \forall\ s,t\in\Ztwo^c.
    \end{align}
\end{definition}
Note that in contrast to Def.~\ref{def:compatible_set_of_cds}, subscripts of Cartan gradings are bit strings, not integers.

A set of $c$ compatible \acp{CD} allows us to split $\mfg$ orthogonally into all $2^c$ intersections of subalgebras $\mfk_j$ and horizontal subspaces $\mfp_j$. Intuitively, we can see this split giving rise to a Cartan $c$-grading, with each bit in $\Ztwo^c$ corresponding to one \ac{CD} and indicating whether the intersection was made with $\mfk_j$ or $\mfp_j$ for that \ac{CD}.
The following proposition makes this intuition precise.

\begin{restatable}{prop}{setofcdstograding}\label{prop:set_of_cds_to_grading}
    Consider $c$ mutually compatible \acp{CD} $\{\mfg=\mfk_j\oplus\mfp_j\}_{1\leq j\leq c}$. % with types $\{T_j\}_j$. 
    Then the following spaces form a Cartan $c$-grading of $\mfg$~\cite[Prop.~3.1]{dagli2008general}:
    \begin{align}\label{eq:grading_from_set}
        \mfg_s = \bigcap_{\substack{j=1\\s_j=0}}^c \mfk_j\cap \bigcap_{\substack{j=1\\s_j=1}}^c \mfp_j,\quad \forall \ s=s_1s_2\dots s_c \in \Ztwo^c.
    \end{align}
    % We write $(T_1,\dots,T_c)$ for the type of the grading.
\end{restatable}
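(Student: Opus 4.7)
My plan is to prove the proposition by passing from the subspace description of each \ac{CD} to its associated Cartan involution, and exploiting the fact that the involutions of a compatible collection commute pairwise. To each \ac{CD} $\mfg = \mfk_j \oplus \mfp_j$ I associate the involution $\theta_j = \Pi_{\mfk_j} - \Pi_{\mfp_j}$, as discussed in Sec.~\ref{sec:cartan_involution}. Compatibility of the \acp{CD} (Def.~\ref{def:compatible_set_of_cds}) is equivalent to $[\theta_j, \theta_k] = 0$ for all $j,k$, so the $\theta_j$ form a commuting family of Lie algebra automorphisms, each of which squares to the identity and hence has only $\pm 1$ as eigenvalues.

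First I would establish the direct-sum decomposition. Since the $\theta_j$ pairwise commute and are diagonalizable, they admit simultaneous eigenspace decompositions. For $s = s_1\dots s_c \in \Ztwo^c$, the joint $((-1)^{s_1}, \dots, (-1)^{s_c})$-eigenspace is
\begin{align}
\widetilde{\mfg}_s = \{x \in \mfg \mid \theta_j(x) = (-1)^{s_j} x \ \forall j\},
\end{align}
and $\mfg = \bigoplus_s \widetilde{\mfg}_s$ as an orthogonal direct sum. It then suffices to identify $\widetilde{\mfg}_s$ with the intersection $\mfg_s$ in Eq.~(\ref{eq:grading_from_set}). This is immediate because the $(+1)$-eigenspace of $\theta_j$ is exactly $\mfk_j$ while the $(-1)$-eigenspace is $\mfp_j$; thus $x \in \widetilde{\mfg}_s$ iff $x$ lies in $\mfk_j$ for each $j$ with $s_j = 0$ and in $\mfp_j$ for each $j$ with $s_j = 1$, i.e., $\widetilde{\mfg}_s = \mfg_s$. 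A short induction on $c$ also makes this concrete: the base case $c=1$ is the defining split $\mfg = \mfk_1 \oplus \mfp_1$, and the inductive step uses that intersecting each component of a grading with $\mfk_c$ and $\mfp_c$ (which again produces a direct sum, by compatibility of $\theta_c$ with the rest) doubles the number of summands in the expected way.

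Next I would verify the grading property $[\mfg_s, \mfg_t] \subseteq \mfg_{s+t}$. Take $x \in \mfg_s$ and $y \in \mfg_t$. Since each $\theta_j$ is a Lie algebra automorphism,
\begin{align}
\theta_j([x,y]) = [\theta_j(x), \theta_j(y)] = [(-1)^{s_j} x, (-1)^{t_j} y] = (-1)^{s_j + t_j}[x,y],
\end{align}
so $[x,y]$ lies in the $(-1)^{s_j + t_j}$-eigenspace of $\theta_j$ for every $j$. By the identification above this is exactly $\mfg_{s+t}$, with bit-wise addition in $\Ztwo^c$ matching multiplication of signs.

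I expect no serious obstacles: once compatibility is rephrased as commutativity of the involutions, both the direct-sum structure and the grading condition follow from standard simultaneous-diagonalization and automorphism arguments. The only mild subtlety is making sure the orthogonality of the joint eigenspaces (and hence the direct-sum character of Eq.~(\ref{eq:grading_from_set})) is not conflated with the compatibility condition itself, which is why I would spell out the equivalence between Eq.~(\ref{eq:compatibility_pair_of_cds}) and $[\theta_1, \theta_2]=0$ explicitly before iterating to $c$ \acp{CD}.
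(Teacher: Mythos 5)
Your proof is correct and is essentially the paper's proof translated into the dual language of Cartan involutions: the paper's iterated use of the compatibility criterion Eq.~(\ref{eq:compatibility_pair_of_cds}) to obtain $\mfg=\bigoplus_s\mfg_s$ is exactly the simultaneous-diagonalization argument you make explicit, and the paper's case-by-case application of the commutation relations Eqs.~(\ref{eq:subalg_prop})--(\ref{eq:symm_prop}) is exactly the automorphism identity $\theta_j([x,y])=[\theta_j(x),\theta_j(y)]$ you invoke. The two formulations are interchangeable by the CD/involution correspondence of Sec.~\ref{sec:cartan_involution}, so no genuinely different ingredient is involved.
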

The original statement in~\cite{dagli2008general} without the compatibility criterion is not correct, which we prove alongside Prop.~\ref{prop:set_of_cds_to_grading} in App.~\ref{sec:involutions_calculations:HOCD}.
The proposition connects \acp{CD} and Cartan gradings to the discrete group $\Ztwo^c$.
Intuitively, instead of reasoning directly about the Lie algebra $\mfg$ and its allowed Cartan involutions, this connection allows us to work with the simpler $\Ztwo^c$, its subgroups, and its homomorphisms to $\Ztwo$.
One consequence is that, just like there are naturally $c$ subgroups of $\Ztwo^{c}$ which are isomorphic to $\Ztwo^{c-1}$, any Cartan $c$-grading naturally contains $c$ Cartan $(c-1)$-gradings, which can be obtained by summing over one of the bits of $\Ztwo^c$. More generally, we may sum over $k$ different bits to obtain $\tbinom{c}{k}$ Cartan $(c-k)$-gradings, corresponding to the statement that $\Ztwo^c$ has $\tbinom{c}{k}$ subgroups isomorphic to $\Ztwo^{c-k}$ if we distinguish its bits.
We will formalize this intuition in the following proposition, depicted in Fig.~\ref{fig:ztwo_and_cds}.

\begin{figure}
    \centering
    \includegraphics[width=\linewidth]{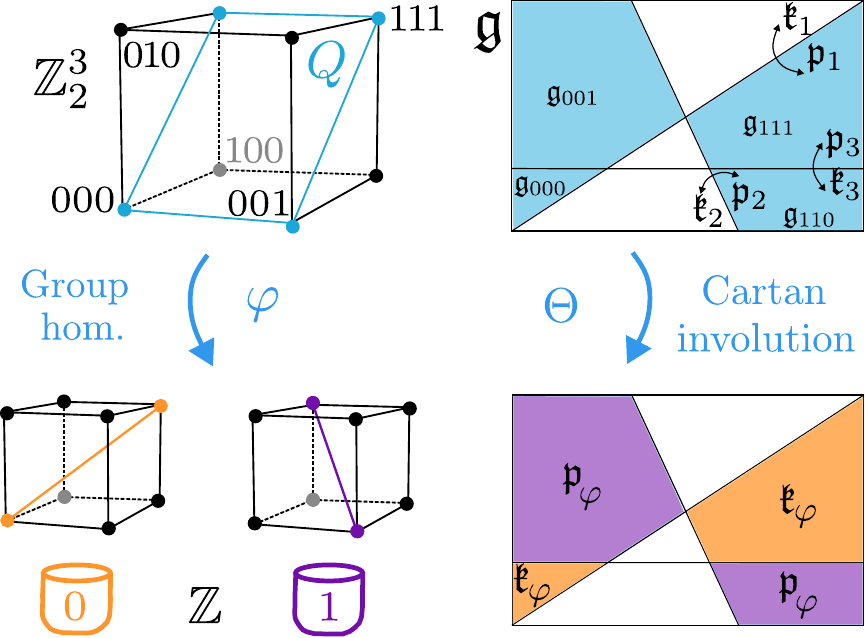}
    \caption{$\Ztwo^c$ captures the structure underlying a Cartan $c$-grading of an algebra $\mfg$ (Def.~\ref{def:grading}). Prop.~\ref{prop:grading_to_set_of_cds} uses this to connect subgroups $Q\subset\Ztwo^c$ to subalgebras $\mfg_Q$, and homomorphisms $\varphi:Q\to\Ztwo$ to Cartan decompositions $\mfg_Q=\mfk_\varphi\oplus\mfp_\varphi$.}
    \label{fig:ztwo_and_cds}
\end{figure}

\begin{restatable}{prop}{gradingtosetofcds}\label{prop:grading_to_set_of_cds}
    Consider a Cartan $c$-grading $\mfg=\bigoplus_{s\in\Ztwo^c}\mfg_s$ together with a subgroup $Q\subseteq\Ztwo^c$.
    Then $\mfg_Q\coloneqq \oplus_{s\in Q}\mfg_s$ is a Lie algebra.
    Further consider a group homomorphism $\varphi:Q\to\Ztwo$. Then 
    \begin{align}
        \mfg_Q=\mfk_\varphi\oplus \mfp_\varphi = \bigoplus_{s\in Q, \varphi(s)=0} \mfg_s\ \ \oplus\bigoplus_{s\in Q, \varphi(s)=1} \mfg_s
    \end{align}
    is a \ac{CD} of $\mfg_Q$.
    For $Q=\Ztwo^c$ and homomorphisms $\varphi_j$ that read out the $j$th bit of $s\in\Ztwo^c$, we obtain a set of compatible \acp{CD} of $\mfg$ that reproduce the grading via Prop.~\ref{prop:set_of_cds_to_grading}.
\end{restatable}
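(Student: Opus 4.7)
The plan is to verify the three assertions in sequence, each reducing to elementary bookkeeping on $\Ztwo^c$ once we exploit the grading axiom $[\mfg_s,\mfg_t]\subseteq\mfg_{s+t}$.

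First I would show that $\mfg_Q$ is a Lie subalgebra: for $x\in\mfg_s$ and $y\in\mfg_t$ with $s,t\in Q$, the grading gives $[x,y]\in\mfg_{s+t}$, and since $Q$ is a subgroup of $\Ztwo^c$ we have $s+t\in Q$, so $[x,y]\in\mfg_Q$. Bilinearity of the bracket then extends this closure to all of $\mfg_Q$, and antisymmetry and the Jacobi identity are inherited from $\mfg$.

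Second, I would check that the $\varphi$-induced split satisfies the three \ac{CD} axioms of Def.~\ref{def:cartan_decomposition}. Each axiom translates into a parity statement about $\varphi(s+t)$, and since $\varphi$ is a group homomorphism, $\varphi(s+t)=\varphi(s)+\varphi(t)\in\Ztwo$. The three cases $0+0=0$, $0+1=1$, and $1+1=0$ then match Eqs.~(\ref{eq:subalg_prop})--(\ref{eq:symm_prop}) respectively: taking $x\in\mfg_s$ and $y\in\mfg_t$ with $s,t\in Q$ in the appropriate subspaces and using $[x,y]\in\mfg_{s+t}$ places the bracket in the subspace predicted by $\varphi$. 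Orthogonality of $\mfk_\varphi\oplus\mfp_\varphi$ follows from the mutual orthogonality of the grading components $\mfg_s$.

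Third, for $Q=\Ztwo^c$ with $\varphi_j$ the $j$-th coordinate projection, each $\varphi_j$ is a group homomorphism to $\Ztwo$, so the previous step yields $c$ \acp{CD} $\mfg=\mfk_{\varphi_j}\oplus\mfp_{\varphi_j}$. Their associated involutions $\theta_j=\Pi_{\mfk_{\varphi_j}}-\Pi_{\mfp_{\varphi_j}}$ act on each $\mfg_s$ as multiplication by $(-1)^{s_j}$, so every $\mfg_s$ is a simultaneous eigenspace of all $\theta_j$ and the involutions commute pairwise, establishing compatibility per Def.~\ref{def:compatible_set_of_cds}. Substituting these \acp{CD} into Eq.~(\ref{eq:grading_from_set}) selects those $\mfg_t$ satisfying $t_j=0$ whenever $s_j=0$ and $t_j=1$ whenever $s_j=1$; the only such $t$ is $s$ itself, so the intersection collapses to $\mfg_s$ and recovers the original grading.

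The main obstacle I anticipate is conceptual rather than technical---keeping straight the roles of grading indices $s\in\Ztwo^c$, the subgroup $Q$, and the homomorphism $\varphi$, and recognizing in the final step that the grading itself furnishes a common eigenspace decomposition for all the $\theta_j$, making their compatibility immediate. Beyond that, every step is a short parity argument over $\Ztwo$.
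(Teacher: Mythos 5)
Your proof is correct and follows essentially the same route as the paper's: close $\mfg_Q$ under the bracket via the subgroup property of $Q$, verify the three CD axioms through $\varphi(s+t)=\varphi(s)+\varphi(t)$, and observe that the coordinate projections $\varphi_j$ are surjective homomorphisms. Your third step is a bit more explicit than the paper's (you actually verify commutativity of the $\theta_j$ and the collapse of the intersection in Eq.~(\ref{eq:grading_from_set}) to $\mfg_s$, which the paper leaves implicit), but the underlying argument is the same.
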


This proposition, which we prove in App.~\ref{sec:involutions_calculations:HOCD}, demonstrates the usefulness of Cartan gradings; discovering new subalgebras or \acp{CD} is as easy as choosing a subgroup of $\Ztwo^c$ or morphisms to $\Ztwo$, as we exemplify next with a few special cases.
First, for any $s\in\Ztwo^c$, choosing the group $Q=\{\mathbf{0}, s\}\cong\Ztwo$, where $\mathbf{0}$ is the identity in $\Ztwo^c$, and the homomorphism $\varphi(s)=1$, leads to a Lie algebra $\mfg_Q$ and its \ac{CD} $\mfg_{\mathbf{0}}\oplus\mfg_s$. This also implies that $\mfg_{\mathbf{0}}$ is itself a Lie algebra, as was also mentioned in~\cite{dagli2008general} and follows from the first statement of the proposition with $Q=\{\mathbf{0}\}$.
Second, for $Q=\{\mathbf{0}, s, t, s+t\}\cong\Ztwo^2, \varphi(s)=\varphi(t)=1$, we find a third \ac{CD} of $\mfg_Q$ from two compatible \acp{CD}. We discuss this scenario in detail in App.~\ref{sec:involutions_calculations:compose}.
Finally, if the homomorphism is not surjective, i.e.,~$\varphi(Q)=\{0\}$, the horizontal space $\mfp_\varphi$ is empty and $\mfk_\varphi$ is not a proper subalgebra of $\mfg_Q$.

% Using the types $\{T_j\}_j$ from the set of compatible \acp{CD} obtained in the last statement in Prop.~\ref{prop:grading_to_set_of_cds}, we can associate a composite type to any Cartan grading, denoted as $(T_1, \dots, T_c)$.
% As this construction reverses that in Prop.~\ref{prop:set_of_cds_to_grading}, the composite type of a Cartan grading obtained from $c$ compatible \acp{CD} of types $\{T_j\}_j$ is exactly $(T_1, \dots, T_c)$.

We conclude this section by observing that combining \acp{CD} into Cartan gradings is a symmetric, or abelian, process, as changing the order of \acp{CD} is as simple as reordering the bits in $\Ztwo^c$.
This property will be used in the next section.

\subsection{Recursive Cartan decompositions}\label{sec:recursions}

Here we discuss another way for \acp{CD} to interact; instead of ``living" next to each other on the total space $\mfg$ as in Def.~\ref{def:compatible_set_of_cds}, they can be acting recursively, with one \ac{CD} decomposing the subalgebra of the previous \ac{CD}. As we saw in Sec.~\ref{sec:recursive_kak_literature}, this is a widely used concept in unitary synthesis. We start by defining it formally.

\begin{definition}\label{def:recursive_decomps}
    An $r$-recursive \ac{CD} of a Lie algebra $\mfg$ is a sequence of \acp{CD}
    \begin{align}
        \mfk^{(\ell)}=\mfk^{(\ell+1)} \oplus \mfp^{(\ell+1)}\,,\quad \forall\ 0\leq \ell < r,
    \end{align}
    with types $\{T_{\ell}\}_\ell$, where we set $\mfk^{(0)}=\mfg$~\cite[Def.~3.2]{dagli2008general}.
    We write the type of the recursive \ac{CD} as $T_1\to T_2\cdots \to T_{r}$.
\end{definition}

Let us compare Cartan gradings and recursive \acp{CD}.
Props.~\ref{prop:set_of_cds_to_grading} and~\ref{prop:grading_to_set_of_cds} showed that we can always think of a grading as combining $c$ \acp{CD} symmetrically, because computing the intersections in Eq.~(\ref{eq:grading_from_set}) is associative and commutative. This yields $2^c$ subspaces $\mfg_s$ that give rise to many new \acp{CD} of $\mfg$ and of subalgebras $\mfg_Q$ composed of multiple $\mfg_s$.
In contrast, an $r$-recursive \ac{CD} has a fixed hierarchy of decompositions, and in particular the horizontal spaces are left untouched by all following recursion steps, leading to $r+1$ subspaces only, namely $\left\{\mfp^{(\ell)}\right\}_{\ell=1}^r$ and $\mfk^{(r)}$. The associated involutions are defined on differing domains $\mfk^{(\ell)}$ and a priori cannot be extended to larger domains, let alone to all of $\mfg$.
This means that we can view recursive \acp{CD} as the main object for unitary synthesis, whereas Cartan gradings form a simpler and more flexible ``intermediate representation". In the following we will see that gradings allow us to discover new recursions and move between them.

Concretely, a Cartan grading not only implies a recursive \ac{CD}~\cite{dagli2008general}, but the converse (almost) holds as well.
To understand the ``almost", we need to introduce a homogeneity criterion for those \acp{CD} that produce semisimple subalgebras.
\begin{definition}\label{def:homogeneity}
    \Acp{CD} of type AIII, BDI or CII produce non-simple subalgebras from simple components.
    A recursive \ac{CD} is called \emph{homogeneous} if all simple components of each such non-simple subalgebra are decomposed with the same decomposition type at all subsequent recursion steps.
\end{definition}

We want to note a few things about this definition.
First, it can be possible to make an inhomogeneous recursive \ac{CD} homogeneous by simply rewriting it. This is because even if \acp{CD} of different types act at the same recursion level, making the recursion inhomogeneous, it might be possible to exchange some of them with trivial decomposition steps to obtain an equivalent homogeneous recursion.
Second, recursive involutions of type AIII, BDI, CII lead to a tree structure of simple components, and homogeneity has to be checked on the complete tree, implying homogeneity on subtrees. The converse is not true.
Third, the homogeneity criterion is a generalization of the requirement that an involution, or $\Ztwo$ symmetry, be either unitary or antiunitary. While this is always true on $\mfu(n)$ (Wigner's theorem), an involution on $\mfu(p)\oplus \mfu(q)$, arising from an AIII decomposition, might perform an (anti)unitary transformation on $\mfu(p)$ ($\mfu(q)$), leading to a ``unitarily indefinite" transformation such as $\theta(x_p\oplus x_q)=x^*_p\oplus x_q$ (of type AI$\oplus$AIII). 
A homogeneous recursive \ac{CD} is not allowed to contain such indefinite transformations, which generalizes the notion from ``unitarily definite" to ``type-definite". For example, even though (DIII$\oplus$BDI) has a unitarily definite involution, it is not type-definite and thus can make a recursive \ac{CD} inhomogeneous.

With the notion of a homogeneous recursive \ac{CD} in our hands, we now can state the mathematical main result of this section.

\begin{restatable}{thm}{gradingrecursionequivalence}\label{thm:grading_recursion_equivalence}
    Consider a Cartan $c$-grading $\mfg=\bigoplus_{s\in\Ztwo^c}\mfg_s$. Then the spaces
    \begin{align}
        \mfk_{\ell}=\bigoplus_{s\in\Ztwo^{c-\ell}} \mfg_{0^{\ell-1} 0\mathlarger{s}}\qquad
        \mfp_{\ell}=\bigoplus_{s\in\Ztwo^{c-\ell}} \mfg_{0^{\ell-1} 1 \mathlarger{s}},
    \end{align}
    for $1\leq \ell\leq c$ 
    and $\mfk_0=\mfg$, define a $c$-recursive \ac{CD}~\cite[Prop~3.2]{dagli2008general}.
    Conversely, consider a classical simple Lie algebra $\mfg$ with an $r$-recursive \ac{CD} that can be made homogeneous. Then there is at least one Cartan $r$-grading that reproduces this recursive \ac{CD}.
\end{restatable}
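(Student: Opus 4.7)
With $\mfk_\ell, \mfp_\ell$ defined as in the statement, the vector-space split $\mfk_{\ell-1} = \mfk_\ell \oplus \mfp_\ell$ is immediate, and the grading law $[\mfg_s, \mfg_t] \subseteq \mfg_{s+t}$ yields the three commutation relations of Def.~\ref{def:cartan_decomposition} by inspection of bit strings: for $s, t$ whose first $\ell-1$ entries vanish, those of $s+t$ also vanish, and the $\ell$-th bit of $s+t$ is the XOR of the corresponding bits of $s, t$. Equivalently, Prop.~\ref{prop:grading_to_set_of_cds} applies at each level with $Q = \{0\}^{\ell-1} \times \Ztwo^{c-\ell+1}$ and the homomorphism reading the $\ell$-th bit.

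\textbf{Reverse direction.} I proceed by induction on $r$, with the base case $r=1$ being trivial. For the inductive step, the first \ac{CD} yields $\mfg = \mfk^{(1)} \oplus \mfp^{(1)}$ with involution $\theta_1$, and the tail is an $(r-1)$-recursive \ac{CD} of $\mfk^{(1)}$. By homogeneity, this tail restricts to a homogeneous $(r-1)$-recursive \ac{CD} on each simple component of $\mfk^{(1)}$ (with $\mfu(1)$ factors absorbed via Observation~\ref{obs:phases}), so the inductive hypothesis supplies a Cartan $(r-1)$-grading of each component, which combine into a Cartan $(r-1)$-grading of $\mfk^{(1)}$. Via Prop.~\ref{prop:set_of_cds_to_grading} this is equivalent to $r-1$ mutually commuting involutions $\tilde\theta_2, \dots, \tilde\theta_r$ on $\mfk^{(1)}$. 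I then extend each $\tilde\theta_j$ to an involution $\tilde\theta_j'$ on $\mfg$ by writing it in the concrete form $\Ad_{G_j} \circ \sigma_j$ with $\sigma_j \in \{\mathrm{id}, \ast, \swapsymbol\}$ from Table~\ref{tab:all_cartan_involutions} and applying the same formula on the ambient algebra. Finally, Prop.~\ref{prop:set_of_cds_to_grading} applied to $\{\theta_1, \tilde\theta_2', \dots, \tilde\theta_r'\}$ produces the desired Cartan $r$-grading, and the formulas from the forward direction then recover the original recursive \ac{CD}.

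\textbf{Main obstacle.} The hard part is showing that each extension $\tilde\theta_j'$ is a genuine Cartan involution on $\mfg$ and commutes with $\theta_1$ (and inductively with all earlier extensions). Commutativity with $\theta_1$ reduces to showing that $\tilde\theta_j'$ preserves both $\mfk^{(1)}$ and $\mfp^{(1)}$; homogeneity forces $G_j$ to be block-diagonal with respect to the splitting of $\mfg$ induced by $\theta_1$, so $G_j$ commutes with the generating matrix of $\theta_1$, and any complex-conjugation or swap factor in $\sigma_j$ commutes with $\sigma_1$ because $G_1$ is real (or the swap is diagonal) in every combination arising from the table. Verifying the involution property requires checking, for each pair of types appearing in Table~\ref{tab:all_cartan_involutions}, that the constraint in the ``Constraint on $G$'' column continues to hold for $G_j$ when viewed in the ambient algebra. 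This is a mechanical but tedious case analysis over the ten admissible types—particularly delicate when a swap-type \ac{CD} (A, BD, or C) collapses a doubled subalgebra, which must be handled before appealing to the inductive hypothesis on its diagonal copy—and it is precisely here that homogeneity is essential: an inhomogeneous tail would produce $G_j$ with blocks of mismatched types, violating the ambient constraint and preventing extension to a Cartan involution.
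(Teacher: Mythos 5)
Your forward direction is correct and follows the paper's route (applying Prop.~\ref{prop:grading_to_set_of_cds} to the subgroups $\{0^{\ell}s\}$ and bit-reading homomorphisms). For the reverse direction you have identified the right overall strategy — induction with the essential content concentrated in lifting a $2$-recursive \ac{CD} to a pair of compatible \acp{CD} on the ambient algebra — but you leave that essential content undone. The paper also reduces to this lifting problem and likewise finds no structural shortcut (the paper explicitly says a type-free proof was not found), which is precisely why it then grinds through the case-by-case verification recorded in Tab.~\ref{tab:2_recursive_lifts}. Labeling that work ``mechanical but tedious'' and stopping is a genuine gap: the table \emph{is} the proof of the base case.

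Two of your heuristics for why the extension should work are also inaccurate in ways that matter. First, ``applying the same formula on the ambient algebra'' only makes sense when the defining representation of $\mfk^{(1)}$ sits inside that of $\mfg$ with the identity isomorphism (types AI, AII, AIII, BDI, CII); for DIII and CI (and for the collapsed doubles A, BD, C) the defining representation of $\mfk^{(1)}\cong\mfu(n)$ is $n\times n$ anti-Hermitian matrices, which is not a subspace of the $2n\times 2n$ ambient algebra — the nontrivial isomorphism $\phi_\times$ of Eq.~(\ref{eq:def_phi_times}) is needed, and under it $\Ad_V\circ\ast$ becomes $\Ad_{\varphi_1(V)}$ with no complex conjugation, changing the type entirely. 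Second, the commutativity claim ``$G_j$ is block-diagonal w.r.t.\ $\theta_1$ so it commutes with the generating matrix of $\theta_1$'' is not how the paper establishes compatibility, and it is not even well-posed in the DIII/CI/double cases where $G_j$ does not literally live in the ambient representation before pulling back through $\phi_\times$. You would need to verify, pair by pair as in Tab.~\ref{tab:2_recursive_lifts}, both that the pulled-back operator satisfies the constraint in Tab.~\ref{tab:all_cartan_involutions} (including the determinant condition, which can require an $i$-rescaling of the conjugating matrix as in the AI$\to$DIII entry) and that it commutes with $\theta_1$. Your observation that homogeneity is what rules out ``type-indefinite'' conjugating operators is correct and matches the paper's discussion, but that remark explains why an obstruction would arise, not that it does not arise.
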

This proposition, which we prove in App.~\ref{sec:involutions_calculations:HOCD}, provides the promised translation between gradings and recursive \acp{CD}, allowing us to traverse the space of recursions by intermediately converting to gradings.
As mentioned in the last section, a Cartan $c$-grading remains intact if we reorder the \acp{CD}. Correspondingly, a single grading can be turned into $c!$ recursive \acp{CD}.
We say that we \emph{lift} an $r=2$-recursive \ac{CD} $T_1\to T_2$ to a set of two \acp{CD} with types $T_1$ and $T_2'$, denoted as $(T_1\to T_2)\nearrow \{T_1, T_2'\}$.
As indicated by saying ``at least'' in the proposition, the lifted type $T_2'$ is not necessarily unique; see the example below.
This is because there may be more than one way to extend the involution of $T_2$ from $\mfk^{(1)}$, where it is defined within the recursion, to all of $\mfg$.
As our proof is constructive, we obtain explicit lifts for all homogoneous $2$-recursive \acp{CD} in App.~\ref{sec:involutions_calculations:HOCD}, and list them in Tab.~\ref{tab:2_recursive_lifts} without claiming completeness.

Overall we proved in this section that recursive \acp{CD} and Cartan gradings induce each other, up to exceptions specified by the homogeneity condition on recursive decompositions.
More precisely, via the equivalence of Cartan gradings and sets of compatible \acp{CD}, we obtain the following structure:
\begin{center}
\begin{tikzcd}[column sep=1em, row sep=4em, wire types={n,n}]
\text{$c$ compatible CDs} 
\arrow[rr, shift left=0ex, Leftrightarrow, "\text{Prop.~\ref{prop:grading_to_set_of_cds}}" near start, "\text{Prop.~\ref{prop:set_of_cds_to_grading}}" near end, shorten=2mm]
% \arrow[rr, shift right=1ex, Leftarrow]
% \arrow[rd, shift left=0ex, Leftrightarrow,, end anchor={[xshift=1.5em]north west}, shorten=2mm] 
% \arrow[rd, shift right=1ex, Leftarrow]
& \text{ } &
\text{Cartan $c$-grading} 
\arrow[ld, shift left=0ex, Leftrightarrow, "\text{Thm.~\ref{thm:grading_recursion_equivalence}}", end anchor={[xshift=-1.5em]north east}, shorten=2mm] 
\\
% \arrow[ld, shift right=1ex, Leftarrow] \\
& \text{$c$-recursive CD} & 
\end{tikzcd}
\end{center}
This tells us that we may use these higher-order \ac{CD} concepts interchangeably. In particular, if a relevant property can be evaluated on a Cartan grading, this can be expected to be easier and faster than considering all possible recursions that it produces.
In addition, gradings allow us to move between recursive \acp{CD}, which are crucial algebraic objects for unitary synthesis in quantum compilation.
We visualize the three concepts for $c=r=3$ in Fig.~\ref{fig:set_grading_recursion_tree} and illustrate the (non-)commutative nature of Cartan gradings (recursive \acp{CD}) in Fig.~\ref{fig:grading_recursion_block_mat}. There, changing the role of two \acp{CD} merely changes the labels of the subspaces in the grading, whereas the recursion changes its structure altogether.

\begin{figure}
    \centering
    \includegraphics[width=\linewidth]{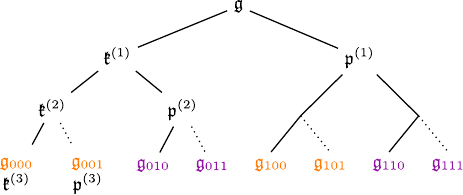}
    \caption{Visualization of higher-order \acp{CD}, here of order three. The $2^3=8$ leaf nodes of the tree form a Cartan $3$-grading of $\mfg$, indexed by bit strings $s\in\Ztwo^3$. This grading arises from three compatible \acp{CD} $\mfg=\mfk_j\oplus\mfp_j$, one of which is applied at each level of the tree. The respective vertical (horizontal) subspaces are composed of 1) the left (right) half of the leaf nodes, 2) the orange (purple) leaf nodes, and 3) the solid (dotted) edges.
    The grading also induces recursive decompositions, one of which is given by the \acp{CD} $\mfg=\mfk^{(1)}\oplus\mfp^{(1)}$, $\mfk^{(1)}=\mfk^{(2)}\oplus\mfp^{(2)}$, and $\mfk^{(2)}=\mfk^{(3)}\oplus\mfp^{(3)}$. Some subspaces of the different higher-order decompositions coincide, namely $\mfk^{(1)}=\mfk_1$, $\mfp^{(1)}=\mfp_1$, $\mfk^{(3)}=\mfg_{000}$, and $\mfp^{(3)}=\mfg_{001}$.
    }
    \label{fig:set_grading_recursion_tree}
\end{figure}

\begin{figure}
    \centering
    \includegraphics[width=\linewidth]{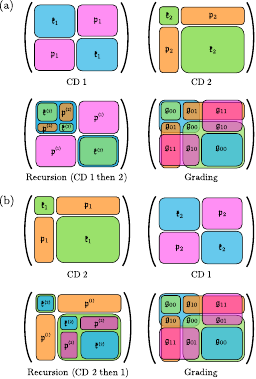}
    \caption{(a) A schematic visualization of the subspaces of Secs.~\ref{sec:gradings} and~\ref{sec:recursions} in the case of two compatible \acp{CD} of type AIII (top). These two decompositions induce a recursive decomposition of the Lie algebra (bottom left) and a $\Ztwo^2$-grading (bottom right). (b) The subspaces obtained when considering a different ordering of the decompositions (top). As discussed in Section~\ref{sec:recursions}, we obtain a different recursive decomposition of the Lie algebra compared to (a) (bottom left), but an evidently isomorphic $\Ztwo^2$-grading (bottom right).}
    \label{fig:grading_recursion_block_mat}
\end{figure}

\subsubsection{Example}
As an example,
consider $\mfg=\mfu(2n)$ and $\theta_1=\ast$. Then $\theta_1$, which is the generic involution of type AI, yields the subalgebra $\mfk^{(1)}=\mfso(2n)$, if we move the global phase to $\mfp^{(1)}$. We define $\theta_2$ on $\mfk^{(1)}$ as $\theta_2=\Ad_{J_n}$, with $J_n$ from Eq.~(\ref{eq:def_J_n}), which is the generic involution of type DIII and yields $\mfk^{(2)}=\mfu(n)\subset\mfso(2n)$.
We may extend $\theta_2$ to all of $\mfg$ as $\tilde{\theta}_2=\Ad_{J_n}$, and find that it is of type $T_2'=$AIII, because $\Ad_{J_n}=\Ad_{iJ_n}$ and $(iJ_n)^\dagger=iJ_n$ (c.f.~Tab.~\ref{tab:all_cartan_involutions}).
We thus found a new \ac{CD} of $\mfg$ of type AIII, and together with $\theta_1$ it gives rise to a $2$-grading that would reproduce the recursion we started out with.
As we discussed, the order within the bit strings does not have an impact on the grading beyond relabeling. In contrast, swapping $T_1$ and $T_2'$ changes the induced $2$-recursive \ac{CD} to $T_2'\to T'_1$ (c.f.~Fig.~\ref{fig:grading_recursion_block_mat}), where $T_1'$ is the restriction of $T_1$ to the subalgebra $\mfk^{(2)\,\prime}=\mfu(n)\oplus\mfu(n)$ of $T_2'$. We find $T'_1=$AI$^{\oplus 2}$ and $\mfk^{(1)\,\prime}=\mfso(n)\oplus\mfso(n)$.

As mentioned above, the lift is not unique. Here, we note that $\ast=\mathrm{id}$ on $\mfk^{(1)}$ by construction, so that we can rewrite $\theta_2=\Ad_{J_n}\circ\,\ast$. We find the corresponding extension $\tilde{\theta}_2=\Ad_{J_n}\circ\,\ast$ acting on $\mfg$, which is of type AII because $J_n^T=-J_n$.
The grading thus now arises from different \acp{CD} of $\mfg$, and the recursion obtained from reordering them is $T_2'\to T'_1=$AII$\to$CI ($\theta_1$ restricted to $\mfsp(n)$ is of type CI) with $\mfk^{(1)\,\prime}=\mfu(n)$.
Note how exchanging the lift carried us from orthogonal to symplectic subalgebras.

\subsubsection{Literature examples}
For the decompositions in the literature discussed in Sec.~\ref{sec:recursive_kak_literature}, we look to the recursion alternating between \acp{CD} of types AIII and A, as well as its siblings alternating BDI and BD or CII and C, respectively. They always merge two branches during their second (fourth, sixth, ...) decomposition that are created by their first (third, fifth,...) decomposition.
This clearly makes them homogeneous and thus allows us to lift the \ac{KGD}, the \ac{QSD}, and the Block-ZXZ decomposition to a grading on $\mfu(2^N)$. According to Tab.~\ref{tab:2_recursive_lifts} in App.~\ref{sec:involutions_calculations:HOCD}, we have the following lifts:
\begin{alignat}{6}
    &(\mathrm{AIII}&\to& \mathrm{A}&)\nearrow &\ \{\mathrm{AIII}&, \mathrm{AIII}\},\\
    &(\mathrm{A}&\to& \mathrm{AIII}&)\nearrow &\ \{\mathrm{A}&, \mathrm{AIII}\},\\
    &(\mathrm{AIII}&\to& \mathrm{AIII}&)\nearrow &\ \{\mathrm{AIII}&, \mathrm{AIII}\},
\end{alignat}
producing a type-AIII \ac{CD} at every iteration. 
If we ignore the AI decomposition in the last step of the \ac{KGD}, we thus can confirm the manual construction of the grading for the \ac{KGD} in~\cite[Sec.~4.1]{dagli2008general} and extend it to \ac{QSD} and Block-ZXZ. However, due to the final AI decomposition, the grading constructed in~\cite{dagli2008general} in fact does \emph{not} reproduce the recursive \ac{KGD}. We explain the details and give a suitable modification for the \ac{KGD} in App.~\ref{sec:grading_for_literature}.

\subsection{Cartan involutions and \texorpdfstring{$\Ztwo$}{Z2} symmetries}\label{sec:involutions_symmetries}
Before we conclude the mathematical discussion of Cartan involutions, we want to address their intimate relationship with $\Ztwo$ symmetries, which also was considered in~\cite{albertini2006analysis}. This allows us to use our tools for \acp{CD} to understand and classify such symmetries better, and to construct new ones.

$\Ztwo$ symmetries are inherently involutions that act on Lie algebra elements like a Hamiltonian or an observable, or on group elements like the time-evolution operator or a quantum circuit.
However, they do not have to be \emph{Cartan} involutions, i.e.,~they might not preserve the algebra/group structure.
In the following we restrict ourselves to symmetries that do have this additional property, allowing us to translate a number of common physical notions to the setting of (higher-order) \acp{CD}. We focus on the algebraic level, with Hamiltonians forming the prime example for an algebra element.
Symmetry-restricted subspaces are subalgebras, whereas antisymmetric spaces are horizontal spaces of a \ac{CD}.
It is then clear why restrictions to antisymmetric subspaces do not lead to mathematically well-defined descriptions of a physical system; such spaces are not closed under commutators, like the time-evolution generator under a Hamiltonian $H$.
Translating Prop.~\ref{prop:unique_CD_of_H}, we find that a Hamiltonian $H$ can be antisymmetric with respect to at most one involution on its \ac{DLA}, computed with respect to some operator basis.
If this basis is the Pauli basis and $H$ consists of a minimal generating set of terms for the \ac{DLA}, Prop.~\ref{prop:CD_of_H_exists_Pauli} tells us further that such an antisymmetry always exists.
To the best of our knowledge, these statements are not commonly known, and they follow immediately from the mentioned propositions, showcasing the use of our framework for basic physics.

As for the combination of multiple $\Ztwo$ symmetries, we (re)discover that they must be compatible in the sense of Def.~\ref{def:compatible_set_of_cds} in order to compose into a new symmetry. That is, they must commute, or their symmetry sectors must share a basis. Further, Cartan gradings of a \ac{DLA} reproduce the intuitive nested structure of symmetry sectors, which, e.g.,~allows us to restrict to the symmetric subspace of one symmetry at a time. 
Other results for \acp{CD} can be used constructively. Prop.~\ref{prop:grading_to_set_of_cds} tells us how to construct new subalgebras and/or symmetries simply by looking at $\Ztwo^c$, its subgroups, and homomorphisms to $\Ztwo$.
This can also be used to construct (polynomial) algebras of any desired type with ``almost Pauli words", i.e.,~low-rank Pauli sentences, as generators. This is contrast to algebras strictly based on Pauli words, which, for example, can not produce polynomially-sized symplectic algebras~\cite{aguilar2024full}.

Classifying the given symmetries then allows us to characterize the symmetry sectors of newly composed symmetries via Lemma~\ref{lemma:composition_of_involutions} and the proof of equivalence in Thm.~\ref{thm:grading_recursion_equivalence} enables us to switch between simultaneously defined symmetries and those that are defined iteratively on a given sector of previous symmetries.
Our finding that lifts of recursive \acp{CD} are not unique implies that symmetries defined in such an iterative manner do not necessarily fix global symmetries.

Finally, we note that the numerical methods for (recursive) \kak decompositions in Sec.~\ref{sec:numerical_decomps} below will allow a decomposition of group elements into purely symmetric ($K$) and purely antisymmetric blocks ($A$) with respect to one or multiple $\Ztwo$ symmetries.
This puts these methods into a very broad context for physics in general, and we hope that the unified framework we present for them will enable practitioners to exploit $\Ztwo$ symmetries in compilation more easily.

\section{Numerical \kak decompositions}\label{sec:numerical_decomps}

\begin{table}[h!]
    \centering
    \begin{tabular}{cA}
        Type & & \phantom{=}\text{Decomposition}\\\midrule
        A & U\oplus U' &= (U_1\oplus U_1)(D\oplus D^\dagger)(U_2\oplus U_2)\\
        AI & U&=O_1 D O_2 \\ 
        AII & U&=S_1(D\oplus D)S_2 \\
        AIII & U&=K_1 F K_2 \\\midrule
        BD & O\oplus O'&=(O_1\oplus O_1)(\mu\oplus \mu^T)(O_2\oplus O_2)\\
        BDI & O&=K_1 F K_2 \\
        DIII & O&= U_1(\mu\oplus\mu^T)U_2 \\\midrule
        C & S\oplus S'&=(S_1\oplus S_1)(\fsl{D}\oplus \fsl{D}^\dagger)(S_2\oplus S_2)\\
        CI & S&=U_1 \fsl{D} U_2\\
        CII & S&= K_1(F\oplus F^T) K_2
    \end{tabular}
    \caption{Generic numerical \kak decompositions obtained from Thm.~\ref{thm:abstract_numerical_non_general}, with the involution given by the generic choice in Tab.~\ref{tab:all_cartan_involutions} and the \ac{CSG} fixed by the middle element of the decomposition. Matrices can be identified by their letter, with $U\in\gru(n)$, $O\in\grso(n)$, $S\in\grsp(n)$, (c.f.~Eqs.~(\ref{eq:def_sun})-(\ref{eq:def_spn})) $D\in\grudiag(n)$, $\fsl{D}\in\grspdiag(n)$, $F\in\grcs(p,q)$, $\mu\in\grschur(n)$, (c.f.~Eqs.~(\ref{eq:def_matrix_csgs})) and $K_i\in\groupG(p)\times\groupG(q)$ with $\groupG=\gru$ (AIII), $\groupG=\grso$ (BDI), and $\groupG=\grsp$ (CII). For DIII and CI, $U_i\in\grso(2n)\cap\grsp(n)\cong\gru(n)$.}
    \label{tab:numerical_overview}
\end{table}

As discussed in Sec.~\ref{sec:groups}, for a given target $G\in \groupG$ the general structure theory of semisimple Lie groups implies the \textit{existence} of a \kak decomposition; the goal of this section is to actually construct these factorizations for the defining representations of all types in Tab.~\ref{tab:symm_classif}.
These numerical algorithms are of practical use in general, as they provide a large number of constructive matrix factorizations~\cite{edelman2023fifty}.
In the specific context of unitary synthesis, they will allow us to compile Hamiltonian time evolution via recursive \kak decompositions in Sec.~\ref{sec:hamsim}, and they guarantee that circuit templates like those discussed in Sec.~\ref{sec:recursive_kak} can be constructed explicitly.

We will combine insights about the general structure of global \acp{CD} to construct any of the \kak decompositions, in a generic form, from a general-purpose \ac{EVD} algorithm, a complex-valued \ac{CSD}, or a real-valued \ac{CSD}.
Our implementation of these numerical \kak decompositions in Python are available at~\cite{symmetrycompilationrepo}.
We will use four abelian subgroups, with $r=\min(p,q)$ and $n_0=n\!\!\mod 2$:
\begin{align}
    \grudiag(n)&=\left\{
        \diag(e^{i\alpha_1},\dots,e^{i\alpha_n})|\alpha_j\in \mbr
    \right\}, \label{eq:def_matrix_csgs}\\
    \grsp_\text{diag}(n)&=\left\{
        D\oplus D^\dagger|D\in\grudiag(n)
    \right\}, \nonumber\\
    \grcs(p, q) &= \left\{\left.
        \begin{pmatrix}
            C & 0 & S\\
            0 & \id_{n-2r} & 0\\
            -S & 0 & C
        \end{pmatrix}
    \right|
        \begin{array}{c}
            C_j=\cos(\alpha_j), \\
            S_j=\sin(\alpha_j), \\
            1\leq j \leq r
        \end{array}
    \right\}, \nonumber\\
    \grschur(n)&=\left\{
        \bigoplus_{j=1}^{\left\lfloor n/2\right\rfloor}\begin{pmatrix}
            c_j & s_j\\-s_j & c_j
        \end{pmatrix} \oplus \id_{n_0}
    \bigg |
        \begin{array}{c}
            c_j=\cos(\alpha_j), \\
            s_j=\sin(\alpha_j)
        \end{array}
    \right\}. \nonumber
\end{align}

\begin{restatable}{thm}{abstractnumericalnongeneral}\label{thm:abstract_numerical_non_general}
    Let $\groupP=\groupG/\groupK$ be a symmetric space from Tab.~\ref{tab:symm_classif}.
    The corresponding generic \kak decomposition in Tab.~\ref{tab:numerical_overview} can be implemented using a standard implementation of a unitary \ac{EVD} or \ac{CSD}.
\end{restatable}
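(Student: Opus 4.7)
The plan is to build each \kak factorization from the single structural identity
\begin{align}
    M \coloneqq G\,\Theta(G)^{-1} = K_1 A^2 K_1^{-1},
\end{align}
which holds because $\Theta$ fixes $\groupK$ pointwise and inverts $\groupA$, as we can read off from Tab.~\ref{tab:all_cartan_involutions}. A structured diagonalization of $M$ therefore recovers $K_1$ and $A^2$; an appropriate branch of the square root yields $A$, and $K_2 \coloneqq A^{-1} K_1^{-1} G$ completes the factorization. The task reduces to producing, for each of the ten symmetric spaces, such a diagonalization from a unitary \ac{EVD} or a \ac{CSD}.

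First I would dispatch the three types AIII, BDI, CII, where the \kak factorization is a \ac{CSD} by construction. In the generic basis, $\groupK$ is block-diagonal ($\gru(p)\times\gru(q)$, $\grso(p)\times\grso(q)$, or $\grsp(p)\times\grsp(q)$) and the \ac{CSG} is the cosine--sine group $\grcs(p,q)$. The standard complex \ac{CSD} then handles AIII and the real \ac{CSD} handles BDI directly, while for CII I would apply a complex \ac{CSD} to the $2(p+q)$-dimensional input and use the symplectic form to verify that the outputs pair into $\grsp(p)\times\grsp(q)$ and that the middle factor assembles into the shape $F\oplus F^T$ of Tab.~\ref{tab:numerical_overview}.

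The remaining seven types are \ac{EVD}-based. For AI, $M = UU^T$ is a complex symmetric unitary, so $\real M$ and $\imag M$ are commuting real symmetric matrices that are simultaneously orthogonally diagonalizable via a single real \ac{EVD}; this produces $O_1\in\grso(n)$ and a diagonal $D^2$. The same Autonne--Takagi-style reduction, twisted by $J_n$ where required, handles AII, DIII and CI. For the doubled types A, BD, C the involution contains $\swapsymbol$, so $M$ decouples into two blocks that are inverse to one another; a single \ac{EVD} over $\gru(n)$, $\grso(n)$, or $\grsp(n)$ of one block produces $K_1$ and $D^2$, and the other block follows automatically.

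The main obstacle is not any individual diagonalization but the residual-symmetry bookkeeping: (i) showing that the eigenspaces of $M$ inherit enough $\Theta$-equivariance that eigenvectors can be chosen inside the required subgroup (real orthogonal, quaternionic-symplectic, or block-diagonal), which generically exploits the freedom within degenerate eigenvalue blocks; (ii) fixing a branch of $A=\sqrt{A^2}$ consistently across eigenvalue pairs so that $K_2 = A^{-1}K_1^{-1}G$ automatically lands in $\groupK$ rather than merely in $\groupG$; and (iii) synchronizing the two block diagonalizations in the doubled cases. Each item reduces to a single structural identity per type, read off from the explicit Cartan involutions in Tab.~\ref{tab:all_cartan_involutions} and from the \ac{CSG} forms in Eq.~(\ref{eq:def_matrix_csgs}).
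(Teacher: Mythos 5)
Your proposal follows essentially the same route as the paper's proof: both build everything around the relative complex structure $\Delta = G\,\Theta(G)^{-1}$, diagonalize it inside the right subgroup to obtain $K_1$ and $A^2$, take a square root to fix $A$, and set $K_2 = A^{-1}K_1^{-1}G$ (the paper's Lemma~\ref{lemma:horizontal_to_kak_decomp} supplies the argument you flag in item (ii) that $K_2$ lands in $\groupK$ and not merely in its $\Theta$-fixed superset). Your items (i) and (iii) correspond to the paper's Lemmas~\ref{lemma:conjugate_eigenvector}, \ref{lemma:ai_orthogonal}, and \ref{lemma:aii_symplectic}. One small point of divergence: for AI you diagonalize the commuting real symmetric pair $\real M,\ \imag M$ with a real EVD, whereas the paper runs a complex EVD of $\Delta$ and then orthogonalizes conjugate eigenvector pairs within degenerate eigenspaces; both are valid. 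Two spots to tighten: for the doubled types BD and for DIII the middle factor is a block-rotation $\mu\in\grschur(n)$ (a real Schur form, obtainable from a unitary EVD by recombining conjugate pairs), not a diagonal $D^2$; and for DIII the eigenbasis must be made simultaneously real and symplectic, which requires the careful $4$-vector recombination the paper spells out rather than a straight Autonne--Takagi twist.
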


% In particular, we do not require an algorithm that respects symplecticity or the restriction to real-valued matrices.
Many key ideas for the constructive proof this theorem—which we leave to App.~\ref{sec:numerical_details}—already exist in the literature. For some symmetric spaces we mostly refer to such previous results, but we will provide some general auxiliary lemmas that help elucidate the underlying structure of these case-specific constructions.
A central object for this generalized structure is
\begin{equation}\label{eq:Delta}
    \Delta=G\Theta(G)^{-1},
\end{equation}
where $\Theta$ is the Cartan involution at the group level. 
In the context of K\"ahler structures, which we discuss in more detail in App.~\ref{sec:kahler}, $\Delta$ is called the \emph{relative complex structure}~\cite{hackl2021bosonic}.

Finally, we conjecture that a fully unified proof, which traces all \kak decompositions back to the eigenvalue problem, is possible, with many important generalization steps taken in App.~\ref{sec:numerical_details}, but we leave it for future work to complete such a proof.

\section{Constructing new circuit decompositions}\label{sec:new_decomps}
Here we construct three new recursive decompositions for arbitrary $N$-qubit unitaries, which emerge naturally from our mathematical framework. 
The first strategy uses an initial AI decomposition to break a unitary $U$ into gates generated by $\mfso$ and $\mfa_{\rm AI}\subset\mfso^\perp$, and then applies a recursive BDI+BD strategy to decompose the resulting orthogonal gates. The second strategy follows a similar route using the unitary symplectic group, i.e., beginning with an AII decomposition, then recursively applying CII+C decompositions. We call these the \emph{completely orthogonal} and \emph{completely symplectic} decompositions, respectively.
The third strategy provides a recursive \ac{CD} that does not introduce excess parameters into the circuit description. As such, it provides a unique path for parameter-optimal decompositions of arbitrary unitary, orthogonal, or symplectic matrices.

\subsection{Completely orthogonal decomposition}

\begin{figure*}[ht]
    \centering
    \includegraphics[width=\linewidth]{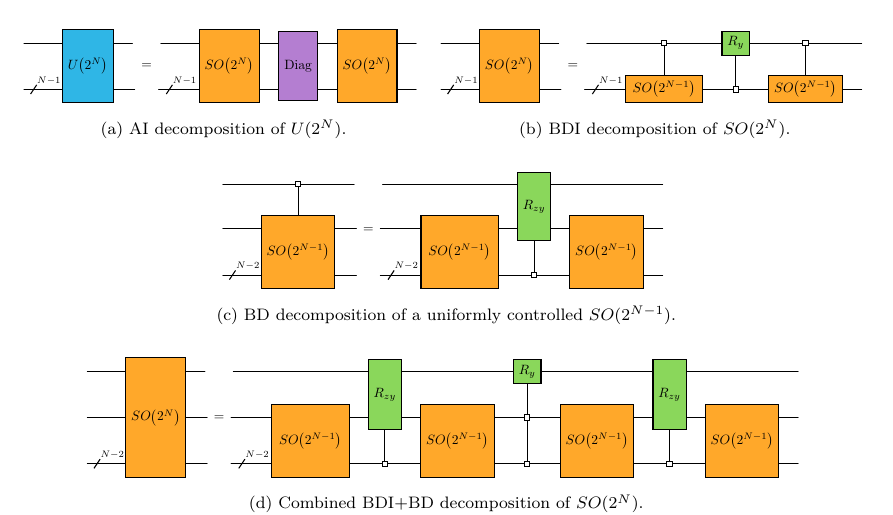}    
    \caption{\Ac{CD} steps used to construct the completely orthogonal decomposition. The $\scriptstyle\square$ symbol indicates a ``uniformly controlled rotation'' or ``quantum multiplexor'' gate, where a different gate is applied to the target for every bitstring value of the control qubits. For every step, we have fixed two degrees of freedom, associated with a choice of basis and a choice of \ac{CSA}. Using different choices would lead to different, unitarily equivalent, decompositions.}
    \label{fig:all_COD_decomps}
\end{figure*}

The completely orthogonal decomposition begins by breaking down an arbitrary $N$-qubit unitary using the subgroup of orthogonal gates. These orthogonal gates are then decoupled from each qubit one-by-one. The full sequence of recursions down to single-qubit gates is
\begin{align}
    \gru\!\left(2^N\right)&\overset{{\rm AI}}{\longrightarrow}\grso\!\left(2^{N}\right)\nonumber\\
    &\overset{{\rm BDI}}{\longrightarrow}\grso\!\left(2^{N-1}\right)\times \grso\!\left(2^{N-1}\right)\nonumber\\
    &\overset{{\rm BD}}{\longrightarrow}\grso\!\left(2^{N-1}\right)\\
    & \quad\vdots\nonumber \\
    &\overset{{\rm BDI}}{\longrightarrow}\grso(2)\times \grso(2)\nonumber\\
    &\overset{{\rm BD}}{\longrightarrow}\grso(2),\nonumber
\end{align}
where the BDI+BD subroutine takes place $N-1$ times.
We depict the constituent steps of this decomposition in Fig.~\ref{fig:all_COD_decomps}.

\subsubsection{AI decomposition of \texorpdfstring{$\gru\!\left(2^N\right)$}{U(2\^{}N)}}
The first step in this decomposition, depicted in Fig.~\ref{fig:all_COD_decomps}a, is to use an AI decomposition to recast a unitary gate as a product of gates $U=KAK'$, where $K^{(\prime)}$ are orthogonal matrices and $A$ is diagonal. More specifically, we partition the algebra $\mfg(N)=\mfu\!\left(2^N\right)=\mfk_{{\rm AI}}(N)\oplus \mfp_{{\rm AI}}(N)$\footnote{This is called the ODO decomposition in~\cite[Thm.~4.1]{edelman2023fifty}.}, with
\begin{align}
    \mfk_{{\rm AI}}(N)&=  \mfso(2^N) \nonumber\\
    &=  \spaniR\{P\in\{\id, \paulix, \pauliy, \pauliz\}^{\otimes N} \vert \#\pauliy~\mathrm{ odd}\}, \\
    \mfp_{{\rm AI}}(N)&=  \mfso(2^N)^\perp \nonumber \\
    &= \spaniR\{P\in\{\id, \paulix, \pauliy, \pauliz\}^{\otimes N}\vert \#\pauliy~\mathrm{ even}\}.     
\end{align}
For the \ac{CSA}, we are free to use the computational basis, 
\begin{align}
    \mfa_{{\rm AI}}(N)=\spaniR\{\ketbra{j}{j}\}_{j=0}^{2^{N}-1}.
\end{align}
This choice generates all diagonal unitaries, i.e., distinct phases applied to each computational basis state.

\subsubsection{BDI decomposition of \texorpdfstring{$\grso(2^N)$}{SO(2\^{}N)}}
For the next step, depicted in Fig.~\ref{fig:all_COD_decomps}b, we break apart $\mfk_{{\rm AI}}(N)$ using a BDI decomposition:
\begin{align}
    \mfk_{{\rm AI}}(N)=\ & \mfk_{{\rm BDI}}(N)\oplus\mfp_{{\rm BDI}}(N), \\
    % \mfk_{{\rm BDI}}(N)=\ & \spaniR\{\{\id, \pauliz\}\otimes\mfso(2^{N-1})\}, \\
    % \mfp_{{\rm BDI}}(N)=\ & \spaniR\{\paulix\otimes\mfso(2^{N-1}), \pauliy\otimes\mfso(2^{N-1})^\perp\}.
    \mfk_{{\rm BDI}}(N)=\ & \id\otimes\mfso(2^{N-1})\ \oplus\ \pauliz\otimes\mfso(2^{N-1}), \\
    \mfp_{{\rm BDI}}(N)=\ & \paulix\otimes\mfso(2^{N-1})\ \oplus\ \pauliy\otimes\mfso(2^{N-1})^\perp.
\end{align}
We then choose the \ac{CSA} to be 
\begin{align}
    \mfa_{{\rm BDI}}(N)=\spaniR\{\pauliy\otimes\ketbra{j}{j}\}_{j=0}^{2^{N-1}-1},
\end{align}
which is the same choice as the \ac{CSD}. Both $\mfk_{{\rm BDI}}(N)$ and $\mfa_{{\rm BDI}}(N)$
will generate uniformly controlled gates. From $\mfk_{{\rm BDI}}(N)$, we generate $\grso(2^{N-1})$ rotations on qubits $\{2,\dots,N\}$, controlled on the $\ket{0}$ and $\ket{1}$ states of the first qubit. From $\mfp_{{\rm BDI}}(N)$, we generate $\pauliy$ rotations on the first qubit, controlled by all computational-basis states of the remaining qubits.

\subsubsection{BD decomposition of uniformly controlled \texorpdfstring{$\grso\!\left(2^{N-1}\right)$}{SO(2\^{}(N-1))}}
Finally, we decouple the first qubit by using a BD decomposition, depicted in Fig.~\ref{fig:all_COD_decomps}c:
\begin{align}
    \mfk_{{\rm BDI}}(N)=\ & \mfk_{{\rm BD}}(N)\oplus\mfp_{{\rm BD}}(N), \\
    % \mfk_{{\rm BD}}(N)=\ & \spaniR\{\id\otimes\mfso(2^{N-1})\}, \\
    % \mfp_{{\rm BD}}(N)=\ & \spaniR\{\pauliz \otimes\mfso(2^{N-1})\}. 
    \mfk_{{\rm BD}}(N)=\ & \id\otimes\mfso(2^{N-1}), \\
    \mfp_{{\rm BD}}(N)=\ & \pauliz \otimes\mfso(2^{N-1}).
\end{align}
We choose a skew-symmetric basis for the \ac{CSA},
\begin{align}\label{eq:a_BD_cod}
    \mfa_{\rm BD}(N)=\spaniR\left\{\pauliz\otimes Y \otimes\ketbra{j}{j}\right\}_{j=0}^{2^{N-2}-1}\!\!.
\end{align}

The BDI and BD steps make up a basic recursion subroutine of the completely orthogonal decomposition. They can be repeated as many times as desired, successively decoupling one qubit each time. The combined BDI+BD decomposition is shown in Fig.~\ref{fig:all_COD_decomps}d.
The closest comparable to the decomposition above is the 3-qubit $\grso(8)$ decomposition outlined in~\cite{wei2012decomposition}, which also uses a BDI+BD strategy, but not recursively. 
Comparing our Fig.~\ref{fig:all_COD_decomps}d with Fig.~2 of~\cite{wei2012decomposition}, the circuits have both similarities and differences. These differences highlight the fact that, even with the same choices of \ac{CD} types, different choices of basis rotations or \acp{CSA} can lead to distinct circuit structures. For example,~\cite{wei2012decomposition} uses the two-qubit ``magic basis'', which makes certain gates local. This point is also discussed in detail for the case of AIII+A decompositions in App.~\ref{sec:unified_cartan_decomp_details}.

\subsection{Completely symplectic decomposition}

\begin{figure*}[ht]
   \centering
   \includegraphics[width=\linewidth]{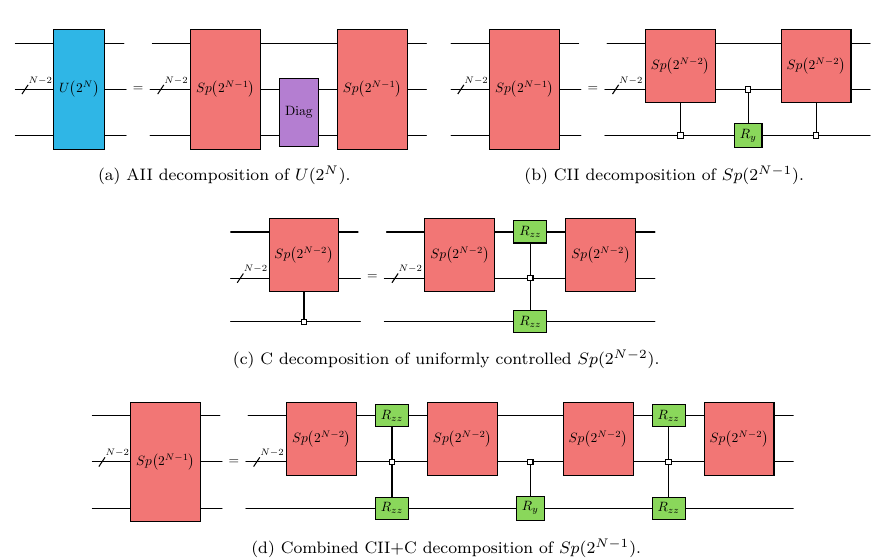}
    \caption{Recursive \ac{CD} steps used to construct the completely symplectic decomposition. It is structurally very similar to the completely orthogonal decomposition in Fig.~\ref{fig:all_COD_decomps}. The first qubit (by convention) plays a special role in the symplectic case, which is why we decouple qubits from the bottom instead of from the top.}
    \label{fig:all_CSD_decomps}
\end{figure*}

This decomposition mirrors the completely orthogonal decomposition, replacing the AI, BDI, and BD decompositions with AII, CII, and C, respectively. The complete set of constituent decompositions is depicted in Fig.~\ref{fig:all_CSD_decomps}, consisting of the following sequence of recursions:
\begin{align}
    \gru\!\left(2^N\right)&\overset{{\rm AII}}{\longrightarrow}\grsp\!\left(2^{N-1}\right)\nonumber\\
    &\overset{{\rm CII}}{\longrightarrow}\grsp\!\left(2^{N-2}\right)\times \grsp\!\left(2^{N-2}\right)\nonumber\\
    &\overset{{\rm C}}{\longrightarrow}\grsp\!\left(2^{N-2}\right)\\
    &\quad\vdots\nonumber \\
    &\overset{{\rm CII}}{\longrightarrow}\grsp(1)\times \grsp(1)\nonumber\\
    &\overset{{\rm C}}{\longrightarrow}\grsp(1),\nonumber
\end{align}
where the CII+C subroutine takes place $N-1$ times.

\subsubsection{AII decomposition of \texorpdfstring{$\gru\!\left(2^N\right)$}{U(2\^{}N)}}
The first step, depicted in Fig.~\ref{fig:all_CSD_decomps}a, is to decompose an arbitrary unitary using an AII decomposition, which splits the Lie algebra into the symplectic subalgebra $\mfg(N)=\mfu(2^N)$ and its complement: 
\begin{align}
    \mfg(N) &=  \mfk_{{\rm AII}}(N)\oplus\mfp_{{\rm AII}}(N), \\
    \mfk_{{\rm AII}}(N) &=  \mfsp\!\left(2^{N-1}\right), \\
    \mfp_{{\rm AII}}(N) &=  \mfsp\!\left(2^{N-1}\right)^\perp.
\end{align}
For our purposes, we represent the symplectic algebra $\mfsp(2^{N-1})$ as the subset of $N$-qubit traceless skew-hermitian matrices which have the following block form in the computational basis:
\begin{align}
    \label{eq:symplectic_block_form_}
    \mfsp\!\left(2^{N-1}\right) = 
    \left\{
    \begin{pmatrix}
        A & B \\
        -B^\dagger & -A^T
    \end{pmatrix} \Bigg\vert
    A^\dagger=-A,B^T=B\right\}.
\end{align}
Notice that the first qubit takes a special role, defining the block-partitioning of the matrix\footnote{Any qubit could play this role in principle, but the form of Eq.~(\ref{eq:symplectic_block_form_}) would change.}. 
We can reparametrize this algebra to the following form:
\begin{align}\label{eq:sp_pauli_blocks}
    \mfsp\!\left(2^{N-1}\right)
    &=\spanR\left\{\id\otimes \mfso\!\left(2^{N-1}\right), \right.\\
    &\left. \hspace{1.5cm}\{X, Y, Z\}\otimes \mfso\!\left(2^{N-1}\right)^\perp\right\},\nonumber
\end{align}
where $\id$ pairs with $\real(A)$, $Z$ with $i\imag(A)$, $X$ with $i\imag(B)$, and $Y$ with $i\real(B)$.
Accordingly, $\mfp_{\rm AII}(N)$ is spanned by $\id\otimes\mfso\!\left(2^{N-1}\right)^\perp$ and $\{X, Y, Z\}\otimes \mfso\!\left(2^{N-1}\right)$.
Hence, we may choose as the \ac{CSA} the $2^{N-1}$ diagonal operators within the first subspace,
\begin{align}
    \mfa_{{\rm AII}}(N)=\spaniR\{\id\otimes\ketbra{j}{j}\}_{j=0}^{2^{N-1}-1},
\end{align}
% which can be interpreted as the set of generators for uniformly-controlled phases on the first qubit.
which are arbitrary diagonal generators on all but the first qubit.

\subsubsection{CII decomposition of \texorpdfstring{$\grsp\!\left(2^{N-1}\right)$.}{Sp(2\^{}(N-1))}}
For the next decomposition, depicted in Fig.~\ref{fig:all_CSD_decomps}b, we define a CII involution on $\mfsp\!\left(2^{N-1}\right)$ by conjugating the last qubit by $\pauliz$:
\begin{align}
    \theta_{\rm CII} = \Ad_{\id_{2^{(N-1)}}\otimes\pauliz},
\end{align}
The $+1$ ($-1$) eigenspace of $\theta_{\rm CII}$ consequently is given by generators with $\pauliz$ or $\id$ ($\paulix$ or $\pauliy$) on the last qubit. For the $+1$ eigenspace, the symplectic structure is maintained on all but the last qubit, while for the $-1$ eigenspace it is maintained for $\paulix$ on the last qubit but for $\pauliy$ the combinations of leading Pauli operators and (skew-)orthogonal subspaces in Eq.~(\ref{eq:sp_pauli_blocks}) are flipped. Overall we find
\begin{alignat}{5}
    \mfk_{\rm CII}(N)
    &=\mfsp\!\left(2^{N-2}\right) \otimes \id\ \oplus\ \mfsp\!\left(2^{N-2}\right)\!\otimes \pauliz,\label{eq:k_CII_}\\
    \mfp_{\rm CII}(N)
    &=\mfsp\!\left(2^{N-2}\right) \otimes \paulix\ \oplus\ \mfsp\!\left(2^{N-2}\right)^\perp\!\otimes \pauliy. \label{eq:p_CII_}
\end{alignat}
Looking at Eqs.~(\ref{eq:sp_pauli_blocks}) and (\ref{eq:p_CII_}), we also identify a suitable \ac{CSA} as 
\begin{align}
    \mfa_{{\rm CII}}(N)=\spaniR\{\id\otimes\ketbra{j}{j}\otimes\pauliy\}_{j=0}^{2^{N-2}-1}.
\end{align}
This \ac{CSA} generates uniformly controlled $\pauliy$ rotations on the last qubit, conditioned on the computational-basis states of the remaining qubits, excepting the first one.

\subsubsection{C decomposition of uniformly controlled \texorpdfstring{$\grsp\!\left(2^{N-2}\right)$}{Sp(2\^{}(N-2))}}
Our final step is to use a type-C decomposition, depicted in Fig.~\ref{fig:all_CSD_decomps}c, to separate the components in Eq.~(\ref{eq:k_CII_}) ending in $\id$ and $\pauliz$ from each other:
\begin{align}
    \mfk_{{\rm CII}}(N)& =  \mfk_{{\rm C}}(N)\oplus\mfp_C(N), \\
    \mfk_{{\rm C}}(N) &= \mfsp(2^{N-2})\otimes\id, \\
    \mfp_{{\rm C}}(N) &= \mfsp(2^{N-2})\otimes\pauliz.
\end{align}
As the \ac{CSA} for this stage we can choose (c.f.~Eq.~(\ref{eq:sp_pauli_blocks}))
\begin{align}
    \mfa_{{\rm C}}(N) &=  \spaniR\left\{Z\otimes\ketbra{j}{j}\otimes\pauliz\right\}_{j=0}^{2^{N-2}-1}\!\!.
\end{align}
This gives rise to uniformly controlled $\pauliz\pauliz$ rotations on the first and last active qubit, with all qubits inbetween acting as multiplexing controls.

The combined CII+C steps can be used recursively to decouple one qubit at a time; see Fig.~\ref{fig:all_CSD_decomps}d. Note that we decouple qubits in reverse order compared to the completely orthogonal decomposition due to the fact that the first qubit has a special role in defining the symplectic structure, which we want to preserve throughout the recursion. Other than the first one, the qubits can be decoupled in any desired order.
Beyond this difference in ordering, we observe that the structure of the completely orthogonal and completely symplectic \acp{CD} is very similar.

\subsection{Parameter-optimal recursive decompositions}\label{sec:overp_free_recursions}
When evaluating a (recursive) decomposition, the literature discussed in Sec.~\ref{sec:recursive_kak_literature} often focuses on CNOT counts, considering single-qubit gates to be significantly cheaper. While this is an adequate priority on noisy quantum computers, it can be expected that the converse will be true for error-corrected quantum computation.
There, CNOTs are commonly considered notably cheaper than a single-qubit rotation (about an arbitrary angle), because the former is a Clifford gate whereas the latter requires $T$ gates. The exact relative cost between Clifford and $T$ gates depends on architectural details and the used error correction code, and is subject to constant change under recent research.
Our summary of the excess parameters introduced by \kak decompositions in Tab.~\ref{tab:symm_classif} allows us to characterize recursions that guarantee optimal parameter counts, with the dimension of the decomposed group providing a lower bound. There is exactly one such recursion for each ``simple" Lie group (up to a global phase) we consider, and they only differ in the starting position within the following chain:
\begin{align}
    \grsp(n)&\overset{CI}{\longrightarrow}\gru(n)\nonumber\\
    &\overset{{\rm AI}}{\longrightarrow}\grso(n)\nonumber\\
    &\overset{{\rm BDI}}{\longrightarrow}\grso(\lceil n/2\rceil)\times \grso(\lfloor n/2\rfloor)\\
    &\overset{{\rm BDI}}{\longrightarrow}\cdots \overset{{\rm BDI}}{\longrightarrow}\grso(q_j+1)^{\times n_j} \times \grso(q_k)^{\times 2^j-n_j}\nonumber\\
    &\overset{{\rm BDI}}{\longrightarrow}\cdots \overset{{\rm BDI}}{\longrightarrow}\bigtimes_{i=1}^{n} \grso(2),\nonumber
\end{align}
with $q_j=\lfloor n / 2^j\rfloor$ and $n_j=n \!\!\mod \!2^j$ after $j$ BDI decompositions.
Note that this chain exists for any $n$, and that BDI decompositions are exactly parameter-optimal for the two scenarios $p=q$ and $p=q+1$ that we require.
The chain starting at $\gru(n)$ also was noted to be parameter-optimal in~\cite{fuhr2018note}.

Provided a suitable basis choice, this translates into a minimal number of rotation angles and we expect this to lead to good $T$ gate counts.
While other decompositions that introduce an overparametrization can still be optimized to remove the excess degrees of freedom again, this requires additional compilation efforts, and it is not clear whether it is possible to attain minimal parameter counts from a generic decomposition that was designed to minimize the CNOT count.

Another reason to turn to parameter-efficient recursive \acp{CD} will be showcased for the compilation of a Hamiltonian simulation problem in the next section. There, we perform the recursive decomposition in an entirely different representation than that on qubits, so that a CNOT-count minimizing decomposition does not provide any immediate value, whereas reduced parameter counts lead to lower compilation cost.
Indeed, in Sec.~\ref{sec:fdhs:example}, we will use the above parameter-optimal chain to decompose $\grso(2n)$.

\section{Compiling Hamiltonian simulation}\label{sec:hamsim}
In this section we will compile the time evolution of a Hamiltonian to a quantum circuit with time-independent depth, by combining a recursive \ac{CD} (Sec.~\ref{sec:recursive_kak}) and the introduced numerical decomposition techniques (Sec.~\ref{sec:numerical_decomps}). Such a compilation task and the same example Hamiltonian were tackled with a variational approach in~\cite{kokcu2022fixed}, and we will compare the two approaches further below.

\subsection{Algorithm}\label{sec:hamsim:algo}
\begin{figure}
    \centering
    \includegraphics[width=\linewidth]{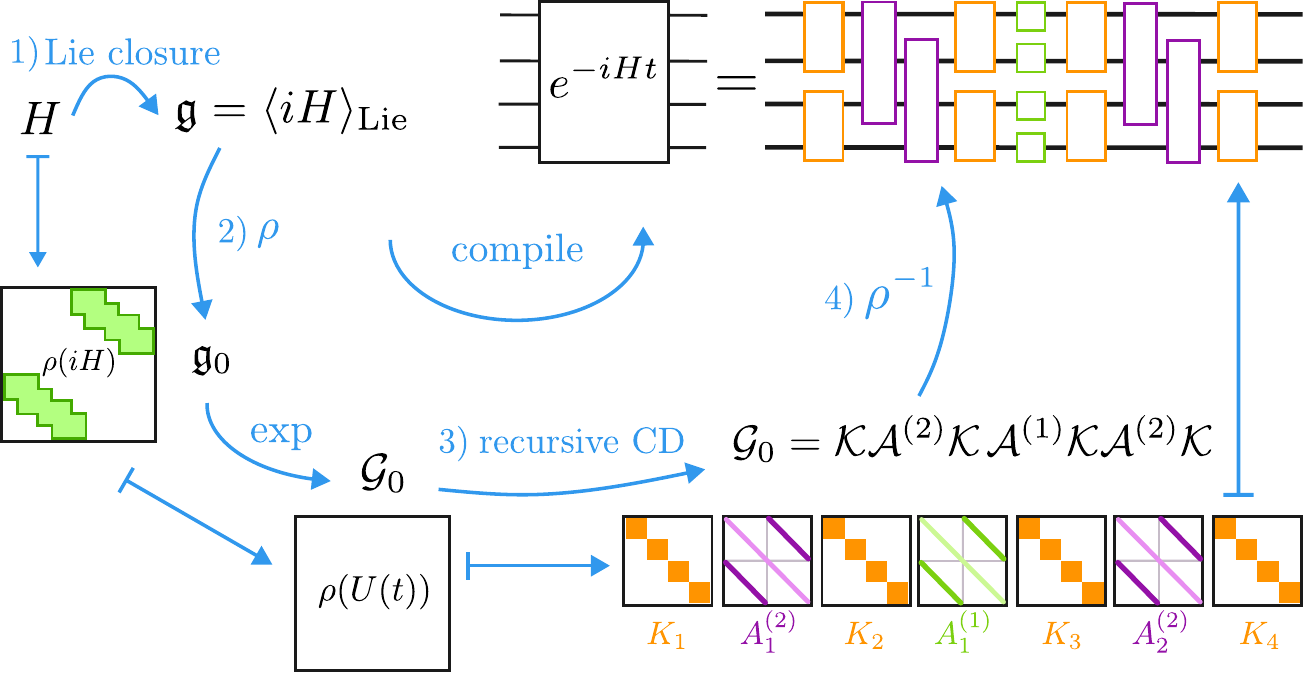}
    \caption{Our compilation algorithm for fixed-depth Hamiltonian simulation. For a given Hamiltonian and basis set, we obtain its \ac{DLA} $\mfg$ and find a mapping $\rho$ to the irreducible representation $\mfg_\circ$ (left).
    For fast-forwardable Hamiltonians, the mapped Hamiltonian $\rho(iH)$ can be exponentiated and the time evolution $\rho(U(t))$ can be broken down with a recursive \ac{CD} using the numerical techniques from Sec.~\ref{sec:numerical_decomps} (bottom). The matrix factors can then be mapped back to a quantum circuit in the original representation via $\rho^{-1}$ (right), for example by extracting their generators and mapping them to Pauli word generators (not shown here).}
    \label{fig:fdhs_algo}
\end{figure}
Our algorithm takes as input a Hamiltonian $H$ expressed in the Pauli basis and one or multiple evolution times $t$, as well as a desired recursive \ac{CD}; see step three below for details. It outputs a quantum circuit that implements the time evolution under $H$ exactly, in terms of Pauli rotation gates and their rotation angles for the different evolution times. The algorithm consists of four steps, visualized in Fig.~\ref{fig:fdhs_algo}.

First, the \ac{DLA} $\mfg$ of $H$ is computed with respect to the Pauli basis (see Def.~\ref{def:dla}), and an irreducible representation $\mfg_\circ$ of $\mfg$, which in turn most likely is reducible, is identified.
The Lie closure is straightforward to compute assuming proper conditioning of the involved coefficients\footnote{Straightforward does not imply efficient, it's just a reasonably simple task.}. Identifying $\mfg_\circ$ may be done in various ways, as we describe in App.~\ref{sec:fdhs_calculations:identify}.
In our example below, we will make use of a particularly simple consequence of the recent classification of all Pauli Lie algebras via their anticommutation graph classes~\cite{aguilar2024full}.

Second, an algebra isomorphism $\rho:\mfg\to\mfg_\circ$ is obtained and $iH$ is mapped to $\rho(iH)$. Note that we can also associate a group isomorphism to $\rho$, which we will denote with the same symbol in this section. For structured systems like spin chains, this can commonly be done manually. However, with the prospect of depending less on this type of manual work, we also provide an automation of this step for horizontal Hamiltonians in Sec.~\ref{sec:hamsim:variant}.
Then the target time evolution $\rho(U(t))=\exp(\rho(iH)t)$ is computed as a dense matrix. For the example below, we provide a physically motivated manual mapping, but also showcase the automated mapping algorithm.

Third, a recursive \ac{CD} of the algebra is specified, which can be done conveniently by defining the respective involutions.
For this, a set of compatible \acp{CD} (commuting involutions, c.f.~Prop.~\ref{prop:set_of_cds_to_grading}) and one of the recursions induced by the resulting Cartan grading may be chosen manually. The space of recursions could also be searched automatically\footnote{Meaning that the chain of involution \emph{types}, rather than concrete involutions, is searched. The basis choices can then be found in another step.}, optimizing a user-specified metric of the resulting circuit, like depth, CNOT count, or parameter count.
We will not outline such an automatic approach here but will consider the recursive \ac{CD} to be an additional input.
To execute the step, $\rho(U(t))$ from step two is decomposed with the chosen recursion. This yields a series of unitary operators $\{K_i\}$ and $\{A^{(j)}_k\}$ from the spaces $\groupK_r$ and $\{\groupA_j\}_{j=1}^r$, respectively, such that 
\begin{align}
    \rho(U(t)) = [K_1 A^{(r)}_1 K_2] A^{(r-1)}_1 [K_3 A^{(r)}_2 K_4]\cdots.
\end{align}

Fourth, the unitary operators are mapped back to the reducible representation by extracting their generators and applying $\rho^{-1}$ to them. This yields generators of quantum gates and thus a quantum circuit description for $U(t)$.
Note that the compiled circuit structure for $\exp(iHt)$ is fixed, but different parameter values are needed for different evolution times $t$.
In particular, the vertical group elements $K_i$ differ between evolution times.

\subsection{Variant for horizontal Hamiltonians}\label{sec:hamsim:variant}
Note that in the previous section we did not assume any particular relationship between the (recursive) \ac{CD} and the Hamiltonian defining the time evolution, which makes our algorithm more versatile and removes a constraint present in other approaches~\cite{kokcu2022fixed}.
However, if we are guaranteed that $\rho(iH)$ lies in the horizontal subspace of the first \ac{CD} in the recursion, we can exploit that the first \kak decomposition satisfies $K_2=K_1^\dagger$. Concretely, this implies that the time-evolution operator is given by
\begin{align}
    \rho(U(t))=K A(t) K^\dagger=K\exp(at)K^\dagger,
\end{align}
so that the compiled circuit can be used for all evolution times simply by rescaling the elements $at$ of the \ac{CSA}.

If we demand that $\rho(iH)$ be in the first horizontal space of the recursion, this is a requirement not only for $H$, but also for $\rho$ and the first involution: the terms in $H$ need to fit into the horizontal subspace\footnote{That is, they need to be isomorphic as an algebra subspace to a part of the horizontal space.} of a suitably chosen involution, and the mapping $\rho$ needs to respect this relationship by mapping $iH$ into the horizontal subspace. Prop.~\ref{prop:unique_CD_of_H} tells us that there can only be one such involution for fixed $H$, operator basis, and $\rho$. This interplay suggests to merge steps two and three into a single step.
An algorithm for automatically determining a mapping $\rho$—if it exists—that places a target Hamiltonian $H$ into the canonical horizontal space of the BDI involution is given for the free fermionic case in Sec.~\ref{sec:hgraph}.

\subsection{Example}\label{sec:fdhs:example}
Now we compile the time evolution under a free-fermionic Hamiltonian, using the algorithm variant from Sec.~\ref{sec:hamsim:variant}. We defer details of the following calculations to App.~\ref{sec:fdhs_calculations}.
Consider the transverse field XY model on a one-dimensional spin chain of length $n$ (i.e., acting on $n$ qubits), with open boundary conditions.
It is described by the Hamiltonian
\begin{align}\label{eq:hamsim:H}
    H=\sum_{i=1}^{n-1} \alpha^X_i X_i X_{i+1}+\alpha^Y_i Y_iY_{i+1} + \sum_{i=1}^n \beta_i Z_i,
\end{align}
where $X_i, Y_i, Z_i$ are the standard Pauli operators acting on the $i$th site, $\{\alpha^{X/Y}_i\}_i$ are coupling strengths, and $\{\beta_i\}_i$ are field strengths.

First, we compute the \ac{DLA} $\mfg=\langle H\rangle_\text{Lie}$. Following~\cite{kokcu2022fixed}, we consider the Pauli basis and find
\begin{align}
    \mfg 
    &= \left\langle\{\widehat{X_iX_j}, \widehat{X_iY_j}, \widehat{Y_i X_j}, \widehat{Y_i Y_j}\}_{1\leq i<j\leq n}\cup \{Z_i\}_{i=1}^n\right\rangle_{i\mbr},\nonumber
\end{align}
with $\widehat{A_iB_j}\coloneqq A_i Z_{i+1}\cdots Z_{j-1} B_j$.
$\mfg$ is isomorphic to $\mfso(2n)$, a classical simple algebra (App.~\ref{sec:fdhs_calculations:algebra}).
Note that obtaining a \ac{DLA} with polynomial size from a set of Pauli operators is not very common, and can only ever yield (copies of) $\mfso$~\cite{wiersema2023classification,kökcü2024classification,aguilar2024full}.
Polynomial \acp{DLA} with different structure and an ``almost-Pauli'' basis can be constructed via \acp{CD} themselves, by choosing Cartan involutions that are not diagonal in the Pauli basis (also c.f.~Sec.~\ref{sec:involutions_symmetries}).

For the next step, we choose the recursive \ac{CD} and the isomorphism $\rho$ synergistically, as anticipated in the horizontal variant of our algorithm.
We choose a repeated BDI decomposition $\mfso(p+q)\to\mfso(p)\oplus\mfso(q)$ for three reasons.
One, this recursion only has one type of \ac{CD} and a single dense matrix representation can be used for the full recursion, i.e.,~the canonical form of the BDI involution is self-consistent (see Sec.~\ref{sec:involutions_calculations:HOCD}).
Two, the full decomposition will be parameter-optimal if we choose $p$ and $q$ to differ at most by one at each recursion step (see Sec.~\ref{sec:overp_free_recursions}). That is, there will be exactly as many group elements in the compiled circuit as there are dimensions in $\mfg$ (namely $2n^2-n$).
Three, the Hamiltonian $H$ fits into the horizontal space of an initial BDI decomposition, allowing us to use the variant described in Sec.~\ref{sec:hamsim:variant} and to showcase the automated mapping from App.~\ref{sec:hgraph}.
According to Prop.~\ref{prop:unique_CD_of_H}, there can't be a DIII decomposition of $\mfg$ with respect to which $H$ is horizontal if we keep the Pauli basis fixed.

\begin{figure*}
    \centering
    \includegraphics[width=\linewidth]{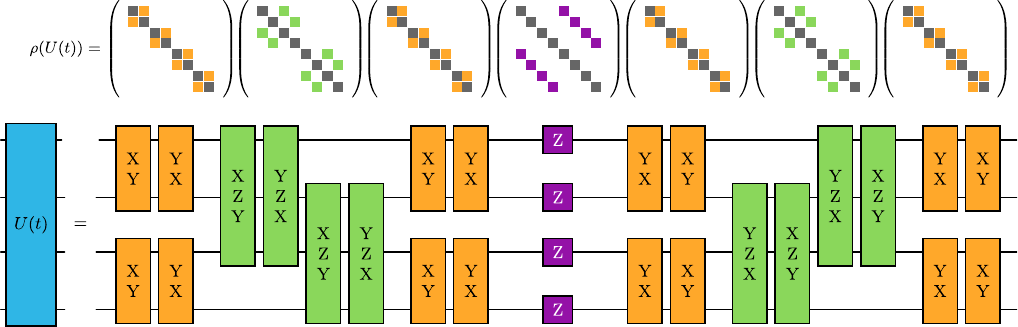}
    \caption{Compilation result for time evolution under the Hamiltonian in Eq.~(\ref{eq:hamsim:H}). After mapping the \ac{DLA} of $H$ to an irreducible representation via an isomorphism $\rho$, we obtain a recursive \kak decomposition (top) of the time-evolution operator $\rho(U(t))$ in the form of a matrix product. The matrix factors consist of commuting Givens rotations that represent Pauli rotations in the original representation, e.g.,~the middle factor above encodes Pauli-$Z$ rotations.
    Applying the inverse map $\rho^{-1}$ to the algebra elements that generate those abelian matrix factors then yields the quantum circuit (bottom) implementing $U(t)=\exp(iHt)$.}
    \label{fig:fdhs_compiled_circuit}
\end{figure*}

In addition to the \textit{type} of the decompositions, we need to fix the bases by means of concrete involutions. Due to the self-consistency of the canonical form of the BDI involution, we do not require intermediate basis changes and may simply use the canonical form on different block matrices.
At each recursion level, the involution is $\theta_j=\Ad_{\id_c\oplus I_{p,q}\oplus\id_d}$. $c$ and $d$ are padding dimensions that allow us to use the BDI involution on a $p+q$-dimensional block of the full matrix representation.

Next, $\mfg$ is mapped to an irreducible matrix representation on $\mbr^{2n}$. We derive $\rho$, which maps $iH$ into the horizontal space of the first BDI decomposition, manually in App.~\ref{sec:fdhs_calculations:mapping} and provide an implementation of the algorithm in Sec.~\ref{sec:hgraph} in code~\cite{symmetrycompilationrepo}.
As an example, for $n=3$ the manually mapped Hamiltonian takes the form
\begin{align}
    \rho(iH)=2\begin{pmatrix}
        0&0&0&\hamcolor{-\beta_1}&\hamcolor{\alpha^X_1}&0\\
        0&0&0&\hamcolor{\alpha^Y_1}&\hamcolor{-\beta_2}&\hamcolor{\alpha^X_2}\\
        0&0&0&0&\hamcolor{\alpha^Y_2}&\hamcolor{-\beta_3}\\
        \hamcolor{\beta_1}&\hamcolor{-\alpha^Y_1}&0&0&0&0\\
        \hamcolor{-\alpha^X_1}&\hamcolor{\beta_2}&\hamcolor{-\alpha^Y_2}&0&0&0\\
        0&\hamcolor{-\alpha^X_2}&\hamcolor{\beta_3}&0&0&0
    \end{pmatrix},
\end{align}
whereas one automatically obtained mapping swaps the fourth column (row) with the and fifth column (row) and flips the sign of the first and last columns and rows.
% \begin{align}
%     \rho(iH)=2\begin{pmatrix}
%         0&0&0&\hamcolor{-\alpha^X_1}&\hamcolor{\beta_1}&0\\
%         0&0&0&\hamcolor{-\beta_2}&\hamcolor{\alpha^Y_1}&\hamcolor{-\alpha^X_2}\\
%         0&0&0&\hamcolor{\alpha^Y_2}&0&\hamcolor{\beta_3}\\
%         \hamcolor{\alpha^X_1}&\hamcolor{\beta_2}&\hamcolor{-\alpha^Y_2}&0&0&0\\
%         \hamcolor{-\beta_1}&\hamcolor{-\alpha^Y_1}&0&0&0&0\\
%         0&\hamcolor{\alpha^X_2}&\hamcolor{-\beta_3}&0&0&0
%     \end{pmatrix}.
% \end{align}
We will continue with the former of the two mappings.
Then we compute the time evolution $\rho(U(t))$ using a standard numerical matrix exponential and apply the recursive BDI decomposition; see App.~\ref{sec:fdhs_calculations:apply_decomp} for details.
Overall, this leaves us with $\mathcal{O}(n^2)$ Givens rotations, i.e.,~matrices from an embedding of $\grso(2)$, as well as $\mathcal{O}(n^2)$ \ac{CSG} elements that encode multiple such rotations, which commute and thus can be pulled apart.

Finally, we map the obtained decomposition of the $2n\times 2n$ matrix $\rho(U(t))$ back to the original qubit representation. This is particularly simple, because we are left with Givens rotations on the vector space $\mbr^{2n}$, from which the generators and rotation angles can be read out immediately. The former can be mapped to a Pauli word via $\rho^{-1}$, modifying the rotation angle by a prefactor of $\pm 2$; see App.~\ref{sec:fdhs_calculations:mapping_back} for details and an example.
This concludes our compilation of $U(t)=\exp(iHt)$ into Pauli rotation gates.

\subsubsection{Compiled circuit}

The Pauli rotations in the compiled quantum circuit for the time evolution $U(t)$ are single-qubit Pauli-$Z$ rotations (\ac{CSA} generators of the first decomposition), pairs of $XY$ and $YX$ rotations on disjoint neighbouring qubit pairs (vertical generators of the last decomposition), and commuting layers of non-local rotations generated by strings $\widehat{X_i Y_j}$ (intermediate \ac{CSA} generators), with larger gates appearing less often than smaller ones. 
An example circuit is shown together with a visualization of the generators in the irreducible representation in Fig.~\ref{fig:fdhs_compiled_circuit}.
Similar to the techniques shown in~\cite{kokcu2022fixed}, the output circuit can be optimized to reduce, e.g.,~the CNOT count. Groups of the multi-qubit gates benefit from gate cancellations, and the pairs of rotations generated by $X_iY_{i+1}$ and $Y_iX_{i+1}$ can be implemented with just two CNOTs~\cite{vidal2004universal}.

\subsubsection{Compilation performance}
Our algorithm scales polynomially in the dimension of the \ac{DLA} if a hardcoded mapping $\rho$ between the representations is used.
Our procedure to find $\rho$ automatically, on the other hand, uses a heuristic subroutine to solve an NP-complete problem, leading to a prohibitive worst-case computational cost in theory.
To assess the overall computational cost and the impact of this subroutine in practice, we run our compilation algorithm for various spin counts $n$ and report the runtimes at different levels of workflow automation in Fig.~\ref{fig:fdhs_performance}, including the worst-case NP-complete isomorphism finding. We provide the code and data at~\cite{symmetrycompilationrepo}.

\begin{figure}
    \centering
    \includegraphics[width=\linewidth]{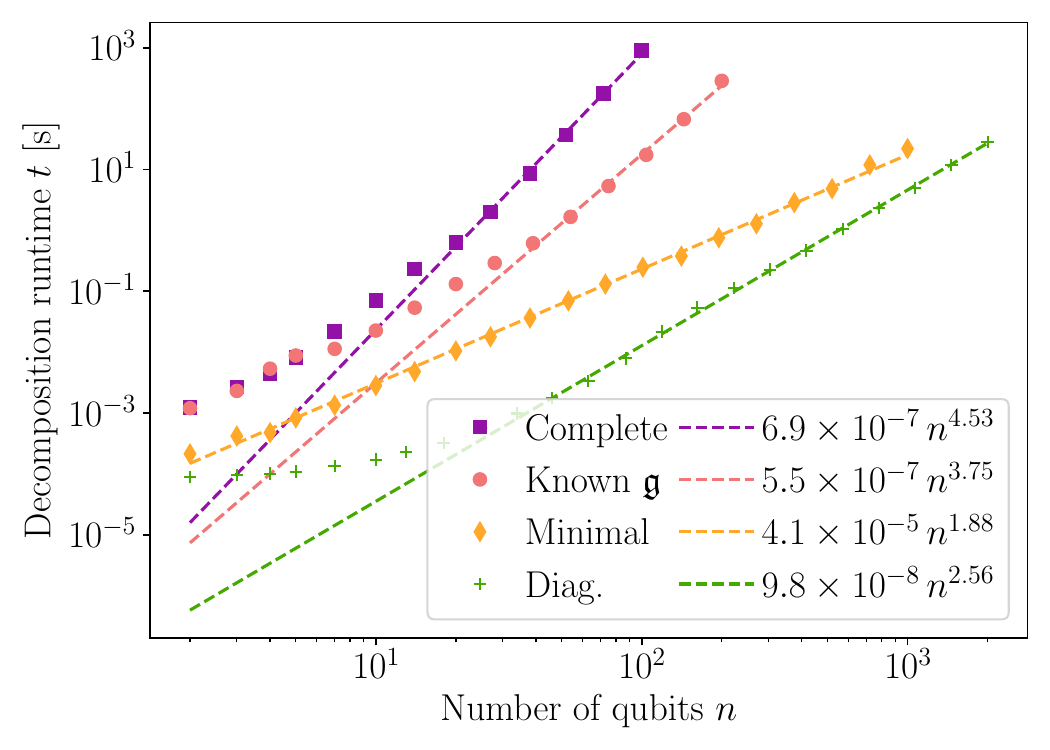}
    \caption{Time to compile the time evolution under the Hamiltonian $H$ of the transverse-field XY model, at different levels of automation.
    The complete workflow computes the \ac{DLA} $\mfg$ of $H$ and finds a mapping $\rho$ to a low-dimensional representation, solving a worst-case NP-complete problem (subgraph isomorphism).
    The intermediate workflow uses a hard-coded version of $\mfg$ instead, but still computes $\rho$ itself.
    The fastest, minimal workflow exploits hard-coded $\mfg$ and $\rho$ (see Apps.~\ref{sec:fdhs_calculations:algebra} and~\ref{sec:fdhs_calculations:mapping}). Computing the algebra is by far the most expensive step for the shown regime, and finding $\rho$ in turn is far more expensive than computing the recursive \ac{CD}, leading to a notable tradeoff between (manual) pre-processing and compilation runtime.
    Finally, we also show the runtime for the first decomposition step of the compiler alone, which computes the spectrum of $H$ (see Sec.~\ref{sec:diagonalization}).
    }
    \label{fig:fdhs_performance}
\end{figure}

The complete workflow takes the Hamiltonian, along with some abstract information about its \ac{DLA} $\mfg$, and the recursive \ac{CD} as inputs.
Accordingly, it has to both compute the qubit representation $\mfg$ and find the homomorphism $\rho$ to a small-dimensional irreducible representation before performing the decomposition itself. We find it to scale roughly as $\mathcal{O}(n^{4.5})$ up to $100$ qubits, for which it requires about $10^3$ seconds.

The intermediate workflow uses a hardcoded version of $\mfg$ instead of computing it automatically, but still computes  $\rho$ itself. We find that this already reduces the compilation time notably, to about $20$ seconds for $100$ qubits. This shows that in this regime computing $\mfg$, an evidently polynomially scaling task, is much more expensive in our implementation than the heuristic solver for the NP-hard mapping problem implemented in NetworkX~\cite{hagberg2008exploring}. These runtimes also expose that the linear fit is not a good approximation any longer, indicating that the heuristic indeed scales mildly super-polynomially.

The fastest, minimal workflow uses a hardcoded mapping for $\rho$ instead, namely that from App.~\ref{sec:fdhs_calculations:mapping}. We find that this again reduces the exponent of the polynomial scaling significantly, which in turn reduces the compilation time on $100$ qubits to $0.3$ seconds. \emph{Putting it all together, compiling the full Hamiltonian simulation for arbitrary $t$ on $10^3$ qubits into $\sim 2\cdot 10^6$ rotation gates takes about $22$ seconds.}
Surprisingly, the minimal workflow scales \emph{sub-linearly} with the dimension of the \ac{DLA}, $\dim(\mfg)=2n^2-n$, for the investigated regime. This indicates that the components of the algorithm that clearly scale at least quadratically, such as the \ac{CSD} at a fixed recursion level, contribute to the total runtime with very small prefactors. This is confirmed by the shown runtime for the very first decomposition step alone, which is used for diagonalization; see Sec.~\ref{sec:diagonalization}.
We suspect that an implementation that performs the required bookkeeping more efficiently than our Python implementation would then exhibit a quadratic scaling.

With this strong performance, our recursive \ac{CD} algorithm provides a scalable solution to compiling the time evolution of Hamiltonians $H$ with polynomial \acp{DLA}, and to obtain diagonalizing circuits for $H$.
The manual labour or additional knowledge required for the fastest variant of the algorithm can be traded-in for additional computational cost, until only a few conceptual aspects of the recursion are required as inputs for the complete workflow.

\subsection{Byproduct: Exact diagonalization}\label{sec:diagonalization}
The numerical implementation of (recursive) \acp{CD} together with automated mapping of horizontal Hamiltonians to small-dimensional representations allows us to diagonalize such Hamiltonians exactly.
Numerically, a single \ac{CD} is sufficient to obtain the \ac{CSG} element $A\in\groupA$ in this matrix representation, which then can be mapped back to commuting Paulis via $\rho^{-1}$.
This set of commuting Pauli operators may or may not be easy to diagonalize exactly. For models with single-qubit field operators\footnote{Note that a polynomial \ac{DLA} will only result from Hamiltonians with at most one type of single-qubit operators per site.}, the \ac{CSA} can be chosen to contain those field operators, facilitating the diagonalization.

Note that while we are not diagonalizing a new class of Hamiltonians exactly, this technique is a convenient byproduct of our time-evolution compiler, and allows us to solve some of the ``essentially easy" Hamiltonians that represent free fermions in an automated fashion. To showcase this technique, we generate random Hamiltonians of the transverse-field XY model from our compilation example above. Then we map them to the small-dimensional representation using the hard-coded isomorphism\footnote{As we use the fixed mapping, we skip representing the original Hamiltonian in the qubit basis.} $\rho$ from App.~\ref{sec:fdhs_calculations:mapping}, perform a single \kak decomposition of type BDI, and map the \ac{CSG} element $A$ back to the qubit representation. As we chose the \ac{CSA} $\langle\{iZ_j\}_j\rangle_{\mbr}$, we can then read off the eigenvalues of $H$ immediately from the coefficients of $A$ in this basis.
We report performance metrics of this method, which essentially reduces to the first decomposition step of the compilation algorithm, in Fig.~\ref{fig:fdhs_performance}.

\subsection{Other compilation techniques}
Above we compiled time evolution under a Hamiltonian with polynomially-sized \ac{DLA} using a modular, scalable algorithm. 
This compilation task has been researched intensely, and in this section we will discuss two closely related works in detail.

The first, by Kökcü et al.~\cite{kokcu2022fixed}, constructs a \kak decomposition of $U(t)$ variationally, by solving a non-convex optimization problem, and considers the transverse field XY model as an example, like we did in Sec.~\ref{sec:fdhs:example}. Precisely, it fixes a (non-recursive) \ac{CD} $\langle iH\rangle_\text{Lie}=\mfk\oplus\mfp$, a \ac{CSA} $\mfa\subset\mfp$, an element $iv\in \mfa$ such that $\exp(iv\mbr)$ is dense in $\exp(\mfa)$, and an ansatz $K(\phi)$ that parametrizes $\exp(\mfk)$ faithfully in a sufficiently large area.
The algorithm then optimizes the parameters $\phi$ by extremizing some cost function, which guarantees $a=K^\dagger(\phi_c)iHK(\phi_c)\in\mfa$ for any \emph{local} extremum $\phi_c$. This implies that
\begin{align}
    iH = K(\phi_c) a K^\dagger(\phi_c) \ \Rightarrow\ U(t) = K(\phi_c) \exp(at) K^\dagger(\phi_c),\nonumber
\end{align}
which is the desired time-evolution circuit.
The size of the ansatz circuit $K$, its parameter count $|\phi|$, and evaluation of $f$ and $\nabla f$ all scale with $\dim \langle iH\rangle_\text{Lie}$, making it feasible for polynomial \acp{DLA} only, as expected.

Our algorithm differs from that of~\cite{kokcu2022fixed} in multiple aspects, most of which derive from the fact that the above approach is variational, while ours is not.
The variational ansatz $K(\phi)$ allows to choose a favourable circuit structure with minimal parameter count immediately, which is only possible to a limited degree in our algorithm, via the choice of the recursive decomposition and the bases in the irreducible representation.
However, the runtime of the non-convex optimization task in the variational approach is hard to estimate, and in practice it turns out to require very many optimization steps, quickly making the compilation infeasible according to the authors~\cite{fdhsrepo}.
We reproduce the numerical experiments from~\cite{kokcu2022fixed} with PennyLane~\cite{bergholm2018pennylane} in~\cite{symmetrycompilationrepo} and confirm difficulties in achieving convergence beyond 10 qubits in our alternative implementation.
In contrast, our algorithm not only has a static runtime with respect to different instances of the same size, but this runtime also is very low in practice, allowing us to compile the time evolution on 1000 qubits in 22 seconds.
Finally, the variational approach requires $iH\in\mfp$, which is equivalent to $H$ having a $\mathbb{Z}_2$ antisymmetry that is a Cartan involution (see Sec.~\ref{sec:involutions_symmetries}) and fixes the \ac{CD} (Prop.~\ref{prop:unique_CD_of_H}). While our algorithm benefits from $iH\in\mfp$, providing a single compiled circuit for all evolution times $t$ (see Sec.~\ref{sec:hamsim:variant}), it can just as well be used for $iH\not\in\mfp$, as recompiling for different $t$ is feasible with our method.

Then, the related work~\cite{kokcu2022algebraic} is representative of various efforts in quantum circuit compression~\cite{bassman2022constant,peng2022quantum,ogunkoya2024qutrit} and also implements a compiler for the time evolution of Hamiltonians with polynomially-sized \ac{DLA}.
This method chains a Trotterization of $U(t)$ with an iterative compression of the resulting circuit.
Even though the Trotterization introduces a compilation error in principle, the cheap compression allows for very deep intermediate circuits, and in turn for Trotter step sizes that lead to numerically exact results.

The two key strengths of the compression algorithm lie 1) in its performance, allowing to compile time evolution on $10^3$ spins in minutes on a high-end desktop CPU, and 2) in the simple circuit structure it produces; it only uses gates generated by the individual terms in $H$, and thus inherits locality from the Hamiltonian, for example.
For the transverse field XY model, this compression approach produces favourable quantum circuits, but our recursive \ac{CD} algorithm is faster, running in seconds on a laptop for $10^3$ qubits and scaling quadratically rather than cubically. Admittedly, the cheaper quantum circuits are preferable over the reduced classical compilation runtime at the current state of quantum hardware.
A limitation of the compression algorithm is that it needs to recompile the circuits for different evolution times $t$, with the potential remedy to reuse circuits for fractions of the evolution time that were obtained during the iterative compression.
In addition, the compression cannot be used to diagonalize the Hamiltonian like we did with our method above.

\section{Conclusion}
\label{sec:conclusion}

This manuscript provides a comprehensive overview of recursive \acp{CD} for the task of unitary synthesis. We have elucidated the salient theoretical properties of such decompositions at both the group and algebra levels, and provided a unified approach for numerically computing any decompositions in practice. Leveraging this framework, we showcased three new recursive decomposition strategies and tackled a large-scale compilation problem related to Hamiltonian time-evolution.
The framework presented here can serve as a foundation for quantum compilation software which automates and streamlines the important task of unitary synthesis. 

Looking forward, some interesting directions for future work stand out. By knowing exactly which building blocks are fixed, and which degrees of freedom are still available, searches can more easily be undertaken to discover beneficial circuit decomposition strategies. Such searches could be performed manually, guided by expert knowledge, or carried out more comprehensively by automated software.
Another area for future investigation is incorporating additional optimizations into the decompositions. The most resource-efficient decompositions with respect to CNOT gate counts result from applying further special-purpose optimizations on top of (recursive) \acp{CD}, often by extracting certain gates from one part of the circuit and commuting them to be absorbed in another part~\cite{shende2005synthesis, krol2024beyond}. These optimizations may seem very specific to the settings where they appear, but we suspect that the special structure of \acp{CD}, in particular the connections to $\mathbb{Z}_2$ symmetries and the commutator relations Eqs.~(\ref{eq:subalg_prop})-(\ref{eq:symm_prop}), could be used to generalize these techniques and unearth further beneficial optimizations.

A further direction worth investigation is the application of the recursive \ac{CD} framework to problems with known symmetries, e.g., in geometric quantum machine learning~\cite{wiersema2025geometric, larocca2022group,skolik2022equivariant,meyer2022exploiting,west2024provably,sauvage2022building} or in the simulation of physical systems. In these cases, the symmetries of the system would be reflected in the symmetries of the chosen decomposition. As a simplified illustration, for a circuit where the initial state and the final observable are both invariant under some symmetry group $\groupK$, decomposing the circuit into a \kak form would allow the matrices $\mathrm{K}$ to be removed, leaving the problem in a fully-commuting form.
Finally, on the mathematical side, we note there is a generalized ``two-sided" notion of \acp{CD}, where the two matrices $K_1,K_2$ in \kak are chosen from different compatible subgroups~\cite{edelman2023fifty}. This generalization seems not to have been used before for quantum computing problems. It would be interesting to investigate this generalization more deeply, potentially incorporating it into the recursive \acp{CD} framework and providing a further tool for unitary synthesis.

\section*{Acknowledgements}
DW thanks Korbinian Kottmann for many helpful discussions.
MW and RTF were supported by the U.S. Department Of Energy through a quantum computing program sponsored by the Los Alamos National Laboratory Information Science \& Technology Institute. MC acknowledges support by the Laboratory Directed Research and Development (LDRD) program of LANL under project number 20230049DR. This work was also initially supported by the LANL's ASC Beyond Moore’s Law project.

\bibliography{ref}
\clearpage
\newpage
\onecolumngrid
\appendix
\acresetall
\section{Riemannian symmetric spaces and their classification}\label{sec:symspace_classification}
\begin{figure}
    \centering
    \includegraphics[width=0.85\linewidth]{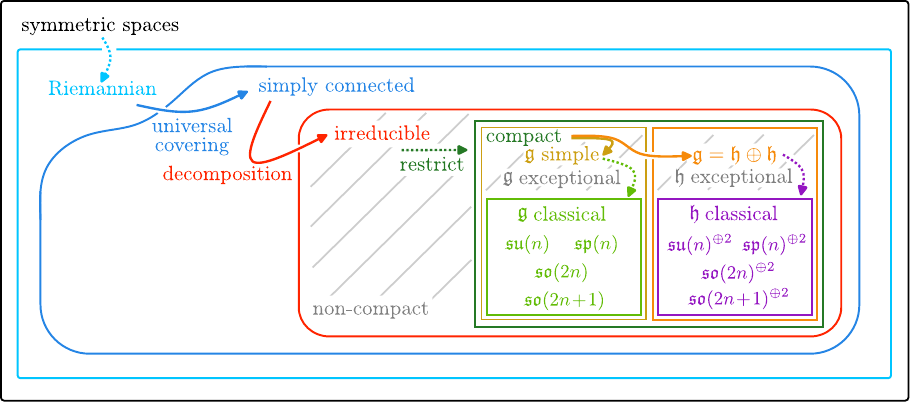}
    \caption{Classification procedure for Riemannian symmetric spaces (blue), a subclass of symmetric spaces.
    The procedure consists of multiple steps (solid arrows), which we interleave with restrictions to spaces of interest to us (dashed arrows).
    The total space of the quotient that yields the symmetric space takes the form $\groupG=\bigtimes_{i} \groupG_i$ with irreducible simply connected $\groupG_i$ (red) whose Lie algebras are one of $\mfsu(n)^{\oplus \nu}$, $\mfso(2n+1)^{\oplus \nu}$, $\mfsp(n)^{\oplus \nu}$, $\mfso(2n)^{\oplus \nu}$ with $n\geq 1$ and $\nu=1$ (green) or $\nu=2$ (purple). Here we manually excluded non-compact symmetric spaces and exceptional Lie algebras (grey).
    The last step is to classify the symmetric spaces that arrive from a given $\groupG_i$, which is not depicted here.
    }
    \label{fig:symspace_classification}
\end{figure}
Here we briefly outline the classification of (Riemannian) symmetric spaces, following~\cite{gorodski2021introduction}. We will make a number of restrictions to the mathematically rigorous classification procedure to adapt it to our needs.
We found it instructive to visualize the classification procedure together with our choices of restriction in Fig.~\ref{fig:symspace_classification}.

The first step is to consider simply connected symmetric spaces, which are (Riemannian) products of so-called \emph{irreducible} symmetric spaces.
If the space is, in fact, \emph{not} simply connected, we consider its universal covering, which is simply connected. 
Second, irreducible symmetric spaces come in four classes, two of which are compact and two of which are noncompact, with a duality between the cases.
We are interested in the compact case, which leaves us with two classes: the Lie algebra $\mfg$ of $\groupG$ must be a) real compact simple, or b) ``double" such an algebra, i.e.,~$\mfg=\mfh\oplus\mfh$ with $\mfh$ real compact simple.
The third step is to note that real compact simple algebras consists of just four infinite families of so-called \emph{classical} Lie algebras as well as five exceptional Lie algebras ($E_{6/7/8}$, $F_4$, $G_2$).
We restrict the mathematical classification to the classical Lie algebras alone, given by $A_n=\mfsu(n+1)$, $B_n=\mfso(2n+1)$, $C_n=\mfsp(n)$ and $D_n=\mfso(2n)$; also see Eqs.~(\ref{eq:def_sun})-(\ref{eq:def_spn}).
This means that the irreducible spaces we consider are created from groups $\groupG$ that either have a classical Lie algebra or double of a classical Lie algebra, depicted by the green and purple box in Fig.~\ref{fig:symspace_classification}, respectively.
In the last step, a detailed analysis of the remaining cases yields the subgroups $\groupK$ (or subalgebras $\mfk$) that give rise to symmetric spaces $\groupG/\groupK$ for a given $\groupG$. Filtering them for our restrictions, we arrive at a small number of Riemannian symmetric spaces of interest.

In addition to the manual restrictions performed above, we will implicitly extend the classification result to some non-semisimple algebras $\mfg$, using the fact that \acp{CD} are compatible with abelian phases; see Obs.~\ref{obs:phases}. Intuitively, abelian phases do not interact with any other part of the space, which gives them a Euclidean geometry. They can be divided out by including them in $\groupK$ or appended to $\groupP$ by allowing for such Euclidean components.
It turns out that the classification has particular notational clarity if we work with $\gru(n)=\grsu(n)\times \gru(1)$ instead of $\grsu(n)$.

We summarize the classification, combined with all manual restrictions and extensions, in Tab.~\ref{tab:symm_classif} in the main text, which is adapted from~\cite{edelman2023fifty,wiersema2025geometric}.

\section{AIII+A Cartan decompositions}\label{sec:unified_cartan_decomp_details}
In this section, we provide more details on the form of AIII+A \acp{CD} used for several popular unitary synthesis algorithms mentioned in Sec.~\ref{sec:recursive_kak_literature}, namely the Khaneja-Glaser~\cite{khaneja2001cartan, vatan2004realization, bullock2004note, mansky2023near}, Quantum Shannon~\cite{shende2005synthesis, mottonen2006decompositions, drury2008constructive}, and optimized Block-ZXZ~\cite{krol2024beyond} families. By doing so, we highlight the close similarity of all these approaches. The results are summarized in Tab.~\ref{tab:AIII_A_decompositions}.

\begin{table}[t]
    \centering
    \begin{tabular}{|p{3.5cm}|l|c|c|c|c|}
        \hline
        Family & Classes & $\mfg(N)$ & $\mfk(N)$ & $\mfp(N)$ & $\mfa(N)$ \\
        \hline

        \rc{tabpurple}
        \cellcolor{white} Quantum Shannon De-  &
        (i) AIII & 
        $\mfu(2^N)$ &
        $\{\id, \pauliz\}\otimes\mfu(2^{N-1})$ &
        $\{\paulix,\pauliy\}\otimes\mfu(2^{N-1})$ &
        $\pauliy\otimes\{\ketbra{j}{j}\}$ \\
        \rc{tabgreen}
        \cellcolor{white} composition \cite{shende2005synthesis, mottonen2006decompositions, drury2008constructive} &
        (ii) A & 
        $\{\id, \pauliz\}\otimes\mfu(2^{N-1})$ &
        $\id\otimes\mfu(2^{N-1})$ &
        $\pauliz\otimes\mfu(2^{N-1})$ &
        $\pauliz\otimes\{\ketbra{j}{j}\}$ \\
        \hline
        
        \rc{tabpurple}
        \cellcolor{white} Optimized Block-ZXZ &
        (i) AIII & 
        $\mfu(2^N)$ &
        $\{\id, \pauliz\}\otimes\mfu(2^{N-1})$ &
        $\{\paulix,\pauliy\}\otimes\mfu(2^{N-1})$ &
        $\paulix\otimes\{\ketbra{j}{j}\}$  \\
        \rc{tabgreen}
        \cellcolor{white} Decomposition \cite{krol2024beyond} &
        (ii) A & 
        $\{\id, \pauliz\}\otimes\mfu(2^{N-1})$ &
        $\id\otimes\mfu(2^{N-1})$ &
        $\pauliz\otimes\mfu(2^{N-1})$ &
        $\pauliz\otimes\{\ketbra{j}{j}\}$  \\
        \hline
        
        \rc{tabpurple}
        \cellcolor{white}Khaneja-Glaser &
        (i) AIII ($N>2$)&
        $\mfu(2^N)$ &
        $\{\id,\pauliz\}\otimes\mfu(2^{N-1})$ &
        $\{\paulix,\pauliy\}\otimes\mfu(2^{N-1})$ &
        Eqs.~(\ref{eq:KG_CSA_AIII_3})-(\ref{eq:KG_CSA_AIII_5}) \\
        \rc{tabgreen}
        \cellcolor{white} Decomposition \cite{khaneja2001cartan, bullock2004note, vatan2004realization,mansky2023near} &
        (ii) A ($N>2$) &
        $\{\id,\pauliz\}\otimes\mfu(2^{N-1})$ &
        $\id\otimes\mfu(2^{N-1})$ &
        $\pauliz\otimes\mfu(2^{N-1})$ &
        Eqs.~(\ref{eq:KG_CSA_A_3})-(\ref{eq:KG_CSA_A_5}) \\
        \rc{salmon1!50}
        \cellcolor{white} &
        (iii) AI ($N=2$) &
        $\mfu(4)$ &
        $\mfso(4)$ &
        $\mfso(4)^\perp$ &
        $\mfa_\mathrm{base}$ (Eq.~(\ref{eq:KG_CSA_base}))\\
        \hline
    \end{tabular}
  \caption{Explicit details for several decomposition algorithms in the literature, which all can be seen as recursive AIII+A \acp{CD}. $N$ indicates the number of qubits active in the decomposition. For brevity, imaginary spans of all listed elements are implied. 
  Cells with the same colour indicate that the same components are used in that decomposition step.
  Notice that for all classes, the AIII and A rows only differ in the choice of \ac{CSA} $\mfa$.
  The \ac{KGD} breaks this pattern at the $2$-qubit step.
  Here, $\mfso(4)$ and $\mfso(4)^\perp$ correspond to Pauli words on $2$ qubits with an odd/even number of $\pauliy$s, rotated to the so-called magic basis~\cite{vatan2004optimal}. This basis change makes it possible to decouple the last two qubit via a different decomposition than AIII+A, thanks to the exceptional isomorphism  $\mfso(4)\cong \mfsu(2)\oplus\mfsu(2)$.}
  \label{tab:AIII_A_decompositions}
\end{table}

For  clarity, in the following we will work with the unitary algebras $\mfu(n)$ with $n=2^N$ for $N$ qubits, rather than the special unitary algebras $\mfsu(n)$ that are historically seen for some of the above decompositions. The only difference between $\mfu(n)$ and $\mfsu(n)$ is whether the identity is included as an element. Since the identity plays nice (i.e., commutes) with everything else, this choice has no major mathematical consequences for the recursive decomposition, c.f. Obs.~\ref{obs:phases}. We find that using the unitary algebras leads to simpler derivations and a clearer picture of the various subalgebras and subspaces. It is also in line with our characterization of Riemannian symmetric spaces in the main text, Tab.~\ref{tab:symm_classif}. 

All decompositions cited above proceed by considering a basis of the unitary algebra consisting of Pauli words, 
\begin{align}
    \mfg(N)\coloneqq\mfu(n)=\spaniR\{P_1\otimes\cdots\otimes P_N \vert P_j\in\{\id, \paulix, \pauliy, \pauliz\}\},
\end{align}
and splitting off the $\id$ and $\pauliz$ components from the $\paulix$ and $\pauliy$ components for the first qubit. 
This gives a \ac{CD} $\mfg(N)=\mfk_{\rm AIII}(n)\oplus\mfp_{\rm AIII}(n)$ with 
\begin{align}\label{eq:k_aiii_literature}
    \mfk_{\rm AIII}(N) & :=\,\id\otimes\mfu(2^{N-1})\ \oplus\ \pauliz\otimes\mfu(2^{N-1}), \\
    \mfp_{\rm AIII}(N) & :=\paulix\otimes\mfu(2^{N-1})\ \oplus\ \pauliy\otimes\mfu(2^{N-1}),
\end{align}
which satisfies the symmetric space commutation conditions in Eqs.~(\ref{eq:subalg_prop})-(\ref{eq:symm_prop}) and clearly is induced by the involutions $\theta_{\rm AIII}=\Ad_{Z_0}$.
Notice that any other Pauli operator could be separated out instead of $\pauliz$.
However, separating out $\{\id, \pauliz\}$ is particularly illustrative because the subalgebra $\mfk_{\rm AIII}({N})$ becomes block diagonal.
This block diagonal structure makes it obvious how to introduce a type-A decomposition, which decomposes block diagonal matrices into their swap-symmetric ($\id$) and swap-antisymmetric ($\pauliz$) components with respect to the involution $\swapsymbol=\Ad_{X_0}$, 
\begin{align}
    \mfk_{\rm AIII}(N)&= \mfk_{A}({N})\oplus\mfp_A({N}), \\
    \mfk_A({N}) &= \id\otimes\mfu(2^{N-1}), \\
    \mfp_A({N}) &= \pauliz\otimes\mfu(2^{N-1})\label{eq:horizontal_typeA_qsd}.    
\end{align}
Since $\mfk_A({N})$ has fully decoupled the first qubit from the rest—i.e., $\mfk_A({N})\cong \mfu(2^{N-1})=\mfg(N-1)$ and all algebra elements are the identity on the first qubit—we can proceed recursively with AIII and A decompositions, until reaching a desired stopping point (typically, $\mfg(2)=\mfu(4)$ or $\mfg(1)=\mfu(2)$).

This recursive AIII+A decomposition strategy underlies—sometimes unknowingly—all the above-mentioned decompositions. The primary difference between the different decomposition families is in the choice of \ac{CSA}. This choice is mathematically arbitrary, but in practice it influences the final form of the decomposed circuit, which obfuscates the similarities between the decompositions. We will now outline how each decomposition fits this pattern. 

\subsection{Quantum Shannon Decomposition}
The first step of the \ac{QSD} is a straightforward usage of the \ac{CSD}, which can be seen to be an AIII \ac{CD} from Tab.~\ref{tab:numerical_overview}. For this decomposition, the \ac{CSG} element $A$ has the following form (in the computational basis):
\begin{align}
    A_{\rm AIII} = \begin{pmatrix}
        C & S \\
        -S & C
    \end{pmatrix},
\end{align}
where $C$/$S$ are diagonal matrices with entries $\cos(\theta_j)$/$\sin(\theta_j)$ for some $\theta_j\in\mathbb{R}$. It is easy to see that any such matrix is generated as $A_{\rm AIII}=\exp(i\sum_j\theta_j\pauliy\otimes\ketbra{j}{j})$; hence the initial \ac{CSA} for the \ac{QSD} can be identified as
\begin{align}
    \mfa_{\rm AIII}(N)=\spaniR\{\pauliy\otimes\ketbra{j}{j}\}_{j=0}^{2^{N-1}-1},
\end{align}
where $\ket{j}$ are the computational basis states of the remaining $N-1$ qubits. Note that the dimension of this \ac{CSA} is $n/2=2^{N-1}$, as expected for an AIII decomposition with $p=q=n/2$.

Next, the \ac{QSD} applies a ``demultiplexing" step—a type-A \ac{CD}—to the gates generated by $\mfk_{\rm AIII}(N)$. As discussed above, this produces a horizontal space given by Eq.~(\ref{eq:horizontal_typeA_qsd}).
The \ac{QSD} specifies the horizontal gates, i.e., the \ac{CSG} elements, for the demultiplexing step to be unitaries of the form $D\oplus D^\dagger$, where $D$ is diagonal, i.e., $D=\diag(\{e^{i\phi_j}\}_{j=0}^{2^{N-1}-1})$ with $\phi_j\in\mathbb{R}$.
Such matrices are easily seen to be generated by a \ac{CSA} of the form 
\begin{align}
    \mfa_{\rm A}({N})=\spaniR\{\pauliz\otimes\ketbra{j}{j}\}_{j=0}^{2^{N-1}-1},
\end{align}
with $\ket{j}$ again running through all computational basis states of the remaining $N-1$ qubits. Again, we note that the dimension of this \ac{CSA} is $2^{N-1}$, matching the rank of the type-A decomposition of $\mfu(2^{N-1})^{\oplus 2}$.

\subsection{Optimized Block-ZXZ Decomposition}

The (unoptimized) Block-ZXZ decomposition, first appearing in~\cite{de2016block, de2018unified}, breaks an $N$-qubit unitary down into a particular block-diagonal form,
\begin{align}
    \label{eq:Block_ZXZ_decomp}
    U = \begin{pmatrix}
        Q_0 & \\
          & Q_1
    \end{pmatrix}
    (H\otimes \id)
    \begin{pmatrix}
        \id &  \\
          & R
    \end{pmatrix}
    (H\otimes \id)
    \begin{pmatrix}
        \id &  \\
          & S
    \end{pmatrix},    
\end{align}
where $H$ is the Hadamard gate applied to the first qubit, and $Q_0$, $Q_1$, $R$, $S$ are arbitrary unitaries on the remaining $N-1$ qubits. This form looks similar to the \ac{CSD}, except (i) the inclusion of Hadamard gates, (ii) the block diagonal central matrix that is no longer diagonal, and (iii) two of the matrices having $\id$ on the upper blocks instead of generic unitaries.
To derive Eq.~(\ref{eq:Block_ZXZ_decomp}) from the \ac{CD} perspective, we first decompose $U$ via a type-AIII decomposition, choosing the \ac{CSA}\footnote{This \ac{CSA} also appears in the earlier work~\cite{nakajima2005new}.} $\mfa_\mathrm{BZXZ,0}(N)=\spaniR\{\paulix\otimes\ketbra{j}{j}\}_{j=0}^{2^{N-1}-1}$. This gives 
\begin{align}
    \label{eq:CSD_matrix_product}
    U & = \begin{pmatrix}
        V_0 & \\
          & V_1
    \end{pmatrix}
    \exp(i\paulix\otimes\sum_{j=0}^{2^{N-1}-1}\theta_j\ketbra{j}{j})
    \begin{pmatrix}
        W_0 &  \\
          & W_1
    \end{pmatrix} \\
&    = \begin{pmatrix}
        V_0 & \\
          & V_1
    \end{pmatrix}
    (H\otimes\id)
    \exp(i\pauliz\otimes\sum_{j=0}^{2^{N-1}-1}\theta_j\ketbra{j}{j})
    (H\otimes\id)
    \begin{pmatrix}
        W_0 &  \\
          & W_1
    \end{pmatrix} \\
&    = \begin{pmatrix}
        V_0 & \\
          & V_1
    \end{pmatrix}
    (H\otimes\id)
    \begin{pmatrix}
        D & \\
          & D^\dagger
    \end{pmatrix}
    (H\otimes\id)
    \begin{pmatrix}
        W_0 &  \\
          & W_1
    \end{pmatrix}, 
\end{align}
where $D:=\mathrm{diag}(\{e^{i\theta_j}\}_{j=0}^{2^{N-1}-1})$, and we have used the identity $X=HZH$.
So far we have a standard $U=K_1AK_2$ decomposition with the block-diagonal $K_1$, $K_2$ belonging to the subgroup $\groupK=\exp(\mfk_{\rm AIII}(N))$, and 
\begin{align}
   A=    
    (H\otimes\id)
    \begin{pmatrix}
        D & \\
          & D^\dagger
    \end{pmatrix}
    (H\otimes\id) 
\end{align}
is in the \ac{CSA} $\mfa_\mathrm{BZXZ,0}(N)$. 

Our next step is to insert the identity on each side of the \ac{CSA}, namely,
\begin{align}
U 
= & \begin{pmatrix}
        V_0 & \\
          & V_1
    \end{pmatrix}
    \left[ 
    \begin{pmatrix}
        W_0 &  \\
          & W_0
    \end{pmatrix}
    \begin{pmatrix}
        W_0^\dagger &  \\
          & W_0^\dagger
    \end{pmatrix}
    \right]
    (H\otimes\id)
    \begin{pmatrix}
        D & \\
          & D^\dagger
    \end{pmatrix}
    (H\otimes\id)
    \left[ 
    \begin{pmatrix}
        W_0 &  \\
          & W_0
    \end{pmatrix}
    \begin{pmatrix}
        W_0^\dagger &  \\
          & W_0^\dagger
    \end{pmatrix}
    \right]
    \begin{pmatrix}
        W_0 &  \\
          & W_1
    \end{pmatrix} \\
= & \begin{pmatrix}
        V_0W_0 & \\
          & V_1W_0
    \end{pmatrix}
    (H\otimes\id)
    \begin{pmatrix}
        W_0^\dagger D W_0 & \\
          & W_0^\dagger D^\dagger W_0
    \end{pmatrix}  
    (H\otimes\id)
    \begin{pmatrix}
        \id &  \\
          & W_0^\dagger W_1
    \end{pmatrix} \\
= & \begin{pmatrix}
        \widetilde{V_0} & \\
          & \widetilde{V_1}
    \end{pmatrix}
    (H\otimes\id)
    \begin{pmatrix}
        \widetilde{D} & \\
          & \widetilde{D}^\dagger
    \end{pmatrix}  
    (H\otimes\id)
    \begin{pmatrix}
        \id &  \\
          & S
    \end{pmatrix}, \label{eq:block_ZXZ_new_csa}
\end{align}
where $\widetilde{V_i}=V_iW_0$, $\widetilde{D}=W_0^\dagger D W_0$, $S=W_0^\dagger W_1$, and we have used that 
$\begin{pmatrix}
        W_0 & \\
          & W_0
\end{pmatrix} = \id\otimes W_0$ commutes with $H\otimes\id$. 
To view Eq.~(\ref{eq:block_ZXZ_new_csa}) as a $\kak$ decomposition, the $A$ must now be understood to be generated by the maximally commutative subalgebra $\mfa_\mathrm{BZXZ,1}(N)=(\id\otimes W_0^\dagger)\mfa_\mathrm{BZXZ,0}(N)(\id\otimes W_0)$. Fortunately, Eq.~(\ref{eq:block_ZXZ_new_csa}) is still a proper AIII Cartan $\kak$ decomposition, thanks to Eq.~(\ref{eq:CD_properties_at_mixed_level}). 

Our final step to arrive at Eq.~(\ref{eq:Block_ZXZ_decomp}) is to insert a second identity to the left—but not to the right—of the central matrix in the equation above:
\begin{align}
U 
= & \begin{pmatrix}
        \widetilde{V_0} & \\
          & \widetilde{V_1}
    \end{pmatrix}
    (H\otimes\id)
    \left[
    \begin{pmatrix}
        \widetilde{D} & \\
          & \widetilde{D}
    \end{pmatrix}    
    \begin{pmatrix}
        \widetilde{D}^\dagger & \\
          & \widetilde{D}^\dagger
    \end{pmatrix} 
    \right]
    \begin{pmatrix}
        \widetilde{D} & \\
          & \widetilde{D}^\dagger
    \end{pmatrix}  
    (H\otimes\id)
    \begin{pmatrix}
        \id &  \\
          & S
    \end{pmatrix} \\
= & \begin{pmatrix}
        \widetilde{V_0}\widetilde{D} & \\
          & \widetilde{V_1}\widetilde{D}
    \end{pmatrix} 
    (H\otimes\id)
    \begin{pmatrix}
        \id & \\
          & \widetilde{D}^{\dagger 2}
    \end{pmatrix}  
    (H\otimes\id)
    \begin{pmatrix}
        \id &  \\
          & S
    \end{pmatrix}. \label{eq:block_ZXZ_rewritten}
\end{align}
Setting $Q_i=\widetilde{V}_i\widetilde{D}=V_iDW_0$ and $R=\widetilde{D}^{\dagger 2}=W_0^\dagger D^{\dagger 2}W_0$, we arrive at the Block-ZXZ decomposition, Eq.~(\ref{eq:Block_ZXZ_decomp}). Although Eq.~(\ref{eq:block_ZXZ_rewritten}) has the form of a $\kak$ decomposition, we highlight that we can no longer read off the \ac{CD} from the block matrices directly. After our left-multiplication by $(\id\otimes\widetilde{D})(\id\otimes\widetilde{D}^\dagger)$, we absorbed the left matrix into the subgroup $\groupK$, and the right matrix into the \ac{CSA} $\mfa_\mathrm{BZXZ,1}$. While the subgroup is invariant under this operation, the horizontal subspace is not. Thus, Eq.~(\ref{eq:Block_ZXZ_decomp}) is equivalent to a type-AIII \ac{CD}, but some care should be taken to assign elements to correct horizontal and vertical subspaces.

Whether or not one performs the extra steps above to rewrite the type-AIII decomposition, the subgroup $\groupK$ is unchanged. At this stage, we have two paths for building a recursion. The first straightforwardly involves performing Block-ZXZ decompositions on each diagonal block in Eq.~(\ref{eq:Block_ZXZ_decomp}). On the other hand, we can intersperse our AIII recursions with a subsequent demultiplexing type-A decomposition, as is done in the \ac{QSD}. Employing the demultiplexing step, as noticed by~\cite{krol2024beyond}, allows extra optimizations from the \ac{QSD} to be applied, resulting in an optimized Block-ZXZ decomposition which has lower CNOT counts than the \ac{QSD}.

Finally, we note that~\cite{de2016block, de2018unified} also present a ``dual'' decomposition, the Block-XZX decomposition. Within our framework, this can be seen to be a type-AIII decomposition, where the subalgebra is chosen to be 
\begin{align}
    \mfk_{\rm AIII}=\id\otimes\mfu(2^{N-1})\ \oplus\ \paulix\otimes\mfu(2^{N-1}),
\end{align}
instead of that in Eq.~(\ref{eq:k_aiii_literature}), and the \ac{CSA} is chosen to be $\spaniR\{\pauliz\otimes\ketbra{j}{j}\}_{j=0}^{2^{N-1}-1}$. Such a choice remains compatible with a subsequent type-A decomposition where the antisymmetric part has an $\paulix$ instead of a $\pauliz$ on the first qubit.

\subsection{Khaneja-Glaser Decomposition}\label{sec:kgd_details}
The most important thing to note about the \ac{KGD}~\cite{khaneja2001cartan, bullock2004note, vatan2004realization,mansky2023near} is that, while it is a recursive AIII+A decomposition for $N>2$, it switches to an AI decomposition at $N=2$\footnote{This important detail, enabled by the exceptional isomorphism $\mfsu(2)\oplus\mfsu(2)\cong\mfso(4)$, is highly obfuscated within the existing literature, where it has been explicitly recognized at all. We make this identification based on the fact that the AI \ac{CD} is the only one which could have a \ac{CSA} whose dimension matches the \ac{CSA} $\mfa_\mathrm{base}$ which is given in~\cite{khaneja2001cartan}. Alternatively, one may recognize that the \ac{KGD} is the odd-even decomposition, which is of type AI on two qubits~\cite{dagli2008general}, or look at the chosen involution and characterize it via Tab.~\ref{tab:all_cartan_involutions}.}. 
Bullock~\cite{bullock2004note} identified the type-AIII decomposition correctly but—focusing on the connection to the \ac{CSD}—they missed the fact that there are different types of \acp{CD} at play, as would have been evident from the multi-stage approach in the original work~\cite{khaneja2001cartan}. Drury and Love~\cite{drury2008constructive}, as well as Da{\u{g}}l{\i} et al.~\cite{dagli2008general} acknowledge the second type of decomposition without identifying it.

All told, the \ac{KGD} employs the following recursion:

\begin{align}
    \gru(2^N)&\overset{\rm AIII}{\longrightarrow}\gru(2^{N-1})\times \gru(2^{N-1})
    \overset{\rm A}{\longrightarrow}\gru(2^{N-1})
    \cdots \nonumber 
    \overset{\rm A}{\longrightarrow}\gru(4) 
    \overset{{\rm AI}}{\longrightarrow}\grso(4).
\end{align}
After the AI decomposition, two-qubit gates take the form $G=K_1 A K_2$, with $K_1,K_2\in \grso(4)$ and $A\in\exp(\mfa_\mathrm{base})$, where we have defined the base \ac{CSA} for the \ac{KGD} to be
\begin{align}
    \mfa_\mathrm{base} := \spaniR
    \{
    \id\otimes\id,
    \paulix\otimes\paulix, 
    \pauliy\otimes\pauliy, 
    \pauliz\otimes\pauliz 
    \}.
    \label{eq:KG_CSA_base}
\end{align}
We highlight that this is a 4-dimensional \ac{CSA}, as one would expect from an AI decomposition that assigns the identity to the horizontal space, not the $2$-dimensional \ac{CSA} that one would have for an AIII decomposition\footnote{Though the final AI decomposition is a distinguishing characteristic of the \ac{KGD}, this choice is mathematically arbitrary.}.

The \acp{CSA} $\mfa_{\rm AIII}(N)$ and $\mfa_{\rm A}(N)$ for $N>2$ qubits are defined relative to this base AI \ac{CSA}. The AIII \acp{CSA} are defined as follows:
\begin{align}
    \mfa_{\rm AIII}(3) &= \phantom{\spanR\{}
    \paulix \otimes \mfa_\mathrm{base}\label{eq:KG_CSA_AIII_3},\\
    \mfa_{\rm AIII}(4) &= \spanR\{
    \paulix \otimes \{\id,\paulix\} \otimes \mfa_\mathrm{base}\} \label{eq:KG_CSA_AIII_4},\\
    \mfa_{\rm AIII}(5) &= \spanR\{
    \paulix \otimes \{\id,\paulix\} \otimes \{\id,\paulix\} \otimes \mfa_\mathrm{base}\}, \label{eq:KG_CSA_AIII_5}\\
    &\ \ \vdots \nonumber
\end{align}
It is easy to verify that each of these forms a commutative subalgebra within the corresponding horizontal subspace $\mfp_{\rm AIII}(N)=\paulix\otimes\mfu(2^{N-1})\ \oplus\ \pauliy\otimes\mfu(2^{N-1})$, and each has the correct (maximal) dimension $2^{N-1}$. 

The corresponding type-A involutions at each step have a similar form, with the initial $\paulix$ replaced by a $\pauliz$:
\begin{align}
    \mfa_{\rm A}(3) &= \phantom{\spanR\{}
    \pauliz \otimes \mfa_\mathrm{base} \label{eq:KG_CSA_A_3}, \\
    \mfa_{\rm A}(4) &= \spanR\{
    \pauliz \otimes \{\id,\paulix\} \otimes \mfa_\mathrm{base}\} \label{eq:KG_CSA_A_4},\\
    \mfa_{\rm A}(5) &= \spanR\{
    \pauliz \otimes \{\id, \paulix\} \otimes \{\id,\paulix\} \otimes \mfa_\mathrm{base}\} \label{eq:KG_CSA_A_5},\\
    &\ \ \vdots \nonumber 
\end{align}
These can again be confirmed to be valid \acp{CSA} with dimension $2^{N-1}$ within the horizontal subspace $\mfp_A(N)=\pauliz\otimes\mfu(2^{N-1})$.

One point worth further note is that in the $N=2$ AI decomposition, the subgroup that the operators $K_1$, $K_2$ belong to is $\grso(4)$, which naively would correspond to gates of the form $K_j=\exp(iH_j)$, with real symmetric generators $H_j$ that are non-local in general. However, it is common for the decomposition to include a basis change to the ``magic basis''~\cite{vatan2004optimal} which renders these generators local, namely $\mfk_{{\rm AI}}(2)=\spaniR\{\id\otimes\{\paulix, \pauliy, \pauliz \}, \{\paulix, \pauliy, \pauliz \}\otimes\id\}$. From this, we can see that $\mfk_{{\rm AI}}(2)$ is isomorphic to $\mfsu(2)\oplus\mfsu(2)\cong\mfso(4)$, i.e., the gates $K_j$ are tensor products of single-qubit rotations.

Dedicated readers who have thoroughly read the original \ac{KGD} proposal~\cite{khaneja2001cartan} may notice some small modifications in our treatment here. For one, we have chosen to include the identity component in $\mfa_\mathrm{base}$ (working with the unitary group instead of the special unitary group), as it makes the recursion relations much easier to phrase. In particular, the subalgebra of the AIII decomposition becomes $\mfk_{\rm AIII}(N)=\mfsu(2^{N-1})\oplus\mfsu(2^{N-1})\oplus \mfu(1)$ if global phases are excluded.
This requires an additional, rather manual step that splits off the $\mfu(1)$ component before applying the type-A decomposition, c.f.~\cite[Notation~4]{khaneja2001cartan}, which we do not need to worry about. For $N>2$, our $\mfa_{\rm AIII}(N)$ matches with that by Khaneja and Glaser because we include the identity generator in the vertical spaces.
Finally, we have reversed the order of the qubits from the \ac{KGD}, to better highlight the similarities with other decompositions. 

\section{Proofs for horizontally generated \acp{DLA}}\label{sec:horizontal_proofs}
Here we prove Prop.~\ref{prop:unique_CD_of_H} and Prop.~\ref{prop:CD_of_H_exists_Pauli} from the main text, which we restate for convenience.

\uniquecdofH*
\begin{proof}
    The \ac{DLA} $\mfg=\langle \basisB\rangle_\text{Lie}$ of $\basisB$ is obtained by iteratively computing commutators between its elements.
    Denote the set of $\kappa$th-order commutators for $\kappa\geq 0$ as 
    \begin{align}
        \basisB^{(\kappa)}= \ad^\kappa_{\basisB}(\basisB),\quad \text{i.e., }\quad \basisB^{(0)}=\basisB,\ \ \basisB^{(1)}=[\basisB,\basisB],\ \ \basisB^{(2)}=\big[\basisB,[\basisB,\basisB]\big],~\dots
    \end{align}
    Now assume there is a \ac{CD} $\mfg=\mfk\oplus\mfp$ with $\basisB\subset\mfp$.
    Then $\basisB ^{(2\kappa)}\subset\mfp$ and $\basisB ^{(2\kappa+1)}\subset\mfk$ for all $\kappa\geq 0$, which we may show by induction.
    By assumption we have $\basisB^{(0)}=\basisB\subset\mfp$, which implies $\basisB^{(1)}=[\basisB,\basisB]\subset[\mfp,\mfp]\subset\mfk$, so that the statement holds for $\kappa=0$.
    Now assume that the statement holds for some $\kappa$ and consider the case $\kappa+1$.
    Then we may repeat the above argument to show $\basisB^{(2(\kappa+1))}=[\basisB,\basisB^{(2\kappa+1)}]\subset [\mfp,\mfk]\subset \mfp$ as well as $\basisB^{(2(\kappa+1)+1)}=[\basisB,\basisB^{(2(\kappa+1))}]\subset [\mfp,\mfp]\subset \mfk$, proving the statement for $\kappa+1$.
    This then implies that $\spanR\left\{\bigoplus_{j}\basisB^{(2j)}\right\}\subset\mfp$ and $\spanR\left\{\bigoplus_{j}\basisB^{(2j+1)}\right\}\subset\mfk$, and in particular, the even-order and odd-order commutators span orthogonal vector spaces.
    As $\mfg$ is the span of (finite-order) commutators, the two spaces must cover all of $\mfp$ and $\mfk$ and thus sum up to $\mfg$. This means that, given the information that there is a \ac{CD} with $\basisB\subset \mfp$, we constructed the decomposition without any additional degrees of freedom, showing that it is unique.
\end{proof}

For the next proof, we prepare some observations about Pauli algebras:
\begin{lemma}\label{lemma:paulicom}
    Let $P, Q, R$, and $S$ be arbitrary Pauli words.
    Then
    \begin{align}
        [[iQ, iR], iR] &\propto 
        \begin{cases} 
            i Q & \text{ if }\ \{R,Q\}= 0\\ 0 & \text{ if }\ [R,Q] = 0
        \end{cases}\ \ ,\label{eq:paulicom_qrr}\\
        \textrm{If}~~iP\propto[iQ, iR] &\Rightarrow iQ\propto [iP, iR]\ \ \text{ and }\ iR\propto[iP,iQ],\label{eq:paulicom_is_cyclic}\\
        \textrm{If}~~[iP, iR]\propto[iQ, iR]\neq 0 &\Rightarrow iP\propto iQ,\label{eq:paulicom_equal_pauli_equal}\\
        [[iP, iQ], iR]&\propto\begin{cases}
            0 & \text{ if }\ [P,Q]=0\ \text{ or }\ [PQ, R]=0\\
            -[[iQ, iR], iP] & \text{ if }\ \{P,Q\}=0,\ [R,P]=0\ \text{ and }\ \{Q, R\}=0\\
            -[[iR, iP], iQ] & \text{ if }\ \{P,Q\}=0,\ [Q,R]=0\ \text{ and }\ \{R, P\}=0
        \end{cases}.\label{eq:paulicom_jacobi}
    \end{align}
    In addition, any $\kappa$th-order commutator $iS$ in which the same Pauli word $iR$ appears twice is proportional to a $(\kappa-2)$th-order commutator $i\tilde{S}$ of the remaining $\kappa-2$ Pauli words.
\end{lemma}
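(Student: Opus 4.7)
The plan is to reduce every claim in the lemma to the multiplication and (anti)commutation structure of Pauli words: any two Pauli words $P,Q$ either commute or anticommute, the product $PQ$ is again a Pauli word up to a phase, and $R^{2}\in\{\pm I\}$. I would lean throughout on the binary/symplectic encoding of Pauli words in $\Ztwo^{2n}$, in which commutation is dictated by the symplectic form and multiplication corresponds to addition of binary vectors; the key consequence I will reuse is that a nonzero nested commutator of Pauli words equals, up to a scalar, the Pauli word whose binary vector is the sum of the binary vectors of its entries.

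\textbf{Parts (1)--(3).} Part (1) I would settle by direct computation: $[R,Q]=0$ kills the inner commutator outright, while $\{R,Q\}=0$ gives $[iQ,iR]=-2QR$ and $RQR=-QR^{2}=\mp Q$, so that $[[iQ,iR],iR]=-2i[QR,R]=\mp 4iQ$. For part (2), $[iQ,iR]\neq 0$ forces $\{Q,R\}=0$ and $iP\propto iQR$; a quick check shows that $P$ then anticommutes with both $Q$ and $R$, so part~(1) applied with the roles of the operators permuted delivers both $iQ\propto[iP,iR]$ and $iR\propto[iP,iQ]$. For part~(3), the same anticommutation conclusions put the two commutators into the form $-2PR$ and $-2QR$ respectively, and $PR\propto QR$ is equivalent to $P\propto Q$ after multiplying through by $R$.

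\textbf{Part (4).} The plan is to begin with the ordinary Jacobi identity $[[iP,iQ],iR]+[[iQ,iR],iP]+[[iR,iP],iQ]=0$ and split into cases. If $[P,Q]=0$, then $[iP,iQ]=0$ and the first term vanishes directly; if instead $\{P,Q\}=0$ but $[PQ,R]=0$, then $[iP,iQ]\propto iPQ$ commutes with $iR$, so the first term is again zero. In the two remaining cases the specified (anti)commutation patterns force exactly one of the other two Jacobi summands to vanish, leaving the stated equality between $[[iP,iQ],iR]$ and the negative of one of its cyclic partners.

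\textbf{Last observation and main obstacle.} I would prove this by induction on $\kappa$, with base case $\kappa=2$ settled by $[iR,iR]=0$. For the inductive step the plan is to combine two ingredients: first, the binary-encoding observation above shows that whenever $iS\neq 0$, it equals, up to a scalar, the Pauli word whose binary vector is the sum of the $\kappa-2$ surviving entries, because the two binary vectors attached to the duplicated $iR$ cancel; second, I would exhibit an explicit $i\tilde{S}$ by locating an adjacent pair of duplicated $iR$ entries in the nesting, deleting it, and collapsing the resulting $[iR,[iR,\,\cdot\,]]$ via part~(1). The hard part will be that the two duplicated entries need not start out adjacent: naively using Jacobi to move them together spawns side terms in which one copy of $iR$ has been absorbed into a new commutator, so the inductive hypothesis no longer applies to those side terms. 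My intended resolution is to bypass Jacobi and argue directly in the binary encoding, noting that deleting an adjacent duplicated pair leaves the accumulated binary vectors along the inside-out evaluation unchanged modulo the cancellation $r+r=0$, so the symplectic non-orthogonality check at each surviving step is still satisfied and the residual $(\kappa-2)$-order commutator is nonzero and proportional to $iS$ as required.
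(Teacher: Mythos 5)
Your treatment of parts (1)--(4) is sound and essentially identical to the paper's: direct computations exploiting $R^2\propto\id$ and the commute/anticommute dichotomy for (1)--(3), and the Jacobi identity for (4). No issues there.

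For the final claim, you correctly identify the crux --- the two copies of $iR$ need not be adjacent --- but your proposed resolution does not hold up, and you have also underestimated what part~(4) of the lemma already gives you. First, your worry that ``naively using Jacobi to move them together spawns side terms'' is unfounded for Pauli words: part~(4) shows that whenever $[[iP,iQ],iR]\neq 0$, exactly \emph{one} of the other two Jacobi summands vanishes, so Jacobi collapses to a clean one-term proportionality rather than a two-term identity. This is precisely the engine the paper uses to bubble the outer $iR$ inward until it sits next to the inner $iR$, after which part~(1) collapses the adjacent pair; there is no uncontrolled branching and hence no need for an inductive hypothesis to absorb side terms.

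Second, your replacement --- ``argue directly in the binary encoding'' --- is incorrect as stated. The cancellation $r+r=0$ certifies only that the \emph{total} Pauli word is preserved; it does not preserve the \emph{intermediate} accumulated products, which are what govern non-vanishing at each nesting level. Concretely, if $iR$ sits at positions $a<b$ and $X_j$ denotes the partial product after the first $j$ entries, then for $a\le j<b$ the deleted sequence has $\tilde X_{j}=X_{j+1}R$ (binary $\tilde x_j=x_{j+1}+r$), so the non-vanishing condition at the step involving $P_{j+1}$ changes by $\langle r,p_{j+1}\rangle$. If any intermediate word $P_{j+1}$ anticommutes with $R$, the deleted chain can vanish even when $iS$ does not, so your claim that ``the symplectic non-orthogonality check at each surviving step is still satisfied'' fails. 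In short: the binary argument works only when the two $iR$'s are already adjacent, which is the case you did not need to worry about; the non-adjacent case is exactly where the paper's use of part~(4) does the real work. You would need to restore the Jacobi-bubbling argument (using part~(4) as the ``single-term'' Jacobi identity for Pauli words) to make this step rigorous, rather than bypassing it.
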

\begin{proof}
    We will mostly be computing this proof using the fact that Pauli words square to the identity and pairs of Pauli words either anticommute or commute.
    For Eq.~(\ref{eq:paulicom_qrr}), note that the second case is trivial and for the first we know $\{Q,R\}=\{QR,R\}=0$, so that $[[iQ, iR], iR]\propto i QRR=iQ$.
    For Eq.~(\ref{eq:paulicom_is_cyclic}), we apply Eq.~(\ref{eq:paulicom_qrr}) to the premise by applying the commutator with $iQ$ or with $iR$: 
    \begin{align}
        [iP,iQ]\propto[[iQ,iR],iQ]\overset{(\ref{eq:paulicom_qrr})}\propto iR\qquad 
        [iP,iR]\propto[[iQ,iR],iR]\overset{(\ref{eq:paulicom_qrr})}\propto iQ.
    \end{align}
    Similarly, Eq.~(\ref{eq:paulicom_equal_pauli_equal}) follows from the assumption by applying the commutator with $iR$:
    \begin{align}
        iP\overset{(\ref{eq:paulicom_qrr})}\propto [[iP, iR],iR]\propto[[iQ, iR],iR]\overset{(\ref{eq:paulicom_qrr})}\propto iQ.
    \end{align}
    To prove Eq.~(\ref{eq:paulicom_jacobi}), we will use the Jacobi identity, a defining property of Lie brackets.
    \begin{align}
        [[iP, iQ], iR]&\propto\begin{cases}
            0 & \text{ for }\ [P,Q]=0\\
            [-PQ,iR] & \text{ for }\ \{P,Q\}=0
        \end{cases}\\
        &\propto \begin{cases}
            0 & \text{ for }\ [P,Q]=0\ \text{ or }\ [PQ, R]=0\\
            -[[iQ, iR], iP]-[[iR, iP], iQ] & \text{ for }\ \{P,Q\}=0\ \text{ and }\ \{PQ, R\}=0 
        \end{cases}\\
        &\propto \begin{cases}
            0 & \text{ for }\ [P,Q]=0\ \text{or}\ [PQ, R]=0\\
            -[[iQ, iR], iP] & \text{ for }\ \{P,Q\}=0,~[R, P]=0,~\text{and}~\{Q, R\}=0\\
            -[[iR, iP], iQ] & \text{ for }\ \{P,Q\}=0,~[Q, R]=0,~\text{and}~\{R, P\}=0
        \end{cases}.
    \end{align}
    Here, the first step is trivial, the second step uses the Jacobi identity for the second case, and the third step uses the fact that if $PQ$ and $R$ anticommute, one of $P$ and $Q$ has to commute with $R$ and the other word has to anticommute with $R$.
    
    The final statement about $\kappa$th-order commutators can be proven using Eq.~(\ref{eq:paulicom_jacobi}). 
    Let $iR$ appear in such a commutator $iS$ at the $a$th and $(a+\kappa'+1)$th position. Then we may write $iS=[[[[iP,iR],\dots],iR], \dots]$ for some $(a-1)$-order commutator $iP$, the first dots implying $\kappa'$ commutator nesting levels and the second dots summarizing the remaining $\kappa''=\kappa-a-\kappa'-1$ nesting levels. We can then use Eq.~(\ref{eq:paulicom_jacobi}) $\kappa'$ times to move the outer $iR$ next to the inner $iR$, i.e.,~we bring the full commutator into the form $iS\propto[[[iP,iR],iR], \dots]$ where $iP$ is unchanged and the dots now summarize $\kappa''+\kappa'$ nesting levels of the commutator. Eq.~(\ref{eq:paulicom_qrr}) then implies $iS\propto[iP,\dots]=i\tilde{S}$ with the same dots as in the previous expression, which is the promised $(\kappa-2)$th order commutator ($iP$ has $(a-1)$ nesting levels and the dots summarize $\kappa''+\kappa'=\kappa-a-1$ levels), because we removed two commutator nestings from the $\kappa$th-order commutator $iS$.
\end{proof}

\cdofHexistsPauli*
\begin{proof}
    Let $\basisB=\{iP_j\}_j$ be a set of Pauli words and a minimal generating set for its \ac{DLA} $\mfg=\langle\basisB\rangle_\text{Lie}$.
    We will need to show all (higher-order) commutators of elements from $\basisB$, which will again be of the form $iP$, have a fixed ``commutator parity", i.e.,~they cannot arise as both some odd-order and some even-order commutator. Then it will be easy to show the existence of a \ac{CD}.

    First, we deal with generators from the center of $\basisB$.
    Note that any such center generator $iA$ is a zeroth-order commutator and will not appear as any higher-order commutator, because if we could write $iA\propto[iQ,iR]$, Eq.~(\ref{eq:paulicom_qrr}) in Lemma~\ref{lemma:paulicom} would imply $0=[iA,iR]\propto[[iQ,iR],iR]\propto iQ$ and thus $iA=[0,iR]=0$. Consequentially, we may assign the center of $\basisB$ to $\mfp$ and proceed with the remaining generating set, within which each Pauli word anticommutes with at least one other operator.
    Take the definition of $\basisB^{(\kappa)}$ from the proof of Prop.~\ref{prop:unique_CD_of_H} and denote scalar multiplication without linear combinations as $\basisB^{(\kappa)}_\mbr$, which satisfy the inclusion $\basisB_\mbr^{(\kappa)}\subset \basisB_\mbr^{(\kappa+2)}$.
    To see this, let $iQ\in\basisB^{(\kappa)}$ and $iR\in \basisB$ anticommute with $iQ$. Then Eq.~(\ref{eq:paulicom_qrr}) in Lemma~\ref{lemma:paulicom} implies $iQ\propto[iR, [iR, iQ]]\in\basisB^{(\kappa+2)}_\mbr$.

    Now we are prepared to show that the spaces $\basisB_\text{even}=\bigcup_{\kappa\geq 0} \basisB_{\mbr}^{(2\kappa)}$ and $\basisB_\text{odd}=\bigcup_{\kappa\geq 0} \basisB_{\mbr}^{(2\kappa+1)}$ are disjoint.
    To see this by contradiction, assume the contrary and let $\kappa_1,\kappa_2$ be the smallest integers such that there is a Pauli word $iP\in \basisB_{\mbr}^{(2\kappa_1)}\cap \basisB_{\mbr}^{(2\kappa_2+1)}$. Note that the inclusion from above implies that 
    \begin{align}
        \kappa_1\leq \kappa_2 &\Rightarrow iP\in \basisB_{\mbr}^{(2\kappa_1)}\subset\basisB_{\mbr}^{(2\kappa_1+2(\kappa_2-\kappa_1))}=\basisB_{\mbr}^{(2\kappa_2)},\\
        \kappa_1>\kappa_2 &\Rightarrow iP\in \basisB_{\mbr}^{(2\kappa_2+1)}\subset\basisB_{\mbr}^{(2\kappa_2+1+2(\kappa_1-\kappa_2-1))}=\basisB_{\mbr}^{(2\kappa_1-1)}.
    \end{align}
    Taken together, we find that $iP\in\basisB_{\mbr}^{(\kappa)}\cap \basisB_{\mbr}^{(\kappa+1)}$, where $\kappa=\max(2\kappa_1-1, 2\kappa_2)$ is the smallest number such that this statement holds.
    
    Now, $iP\in\basisB_{\mbr}^{(\kappa+1)}$ implies that there are $iQ\in\basisB^{(\kappa)}_\mbr, iR\in\basisB$ such that $iP \propto [iQ, iR]$.
    % \begin{align}\label{eq:mutual_pauli_coms}
    %     \textrm{If}~iP = [iQ, iR] \quad\overset{(\ref{eq:pauli_commutator_minimal})}{\Rightarrow}\quad [iP,iQ]\propto iR \ \land\ [iP,iR]\propto iQ.
    % \end{align}
    Meanwhile, $iP$ is also a nested commutator of $\kappa$ generators from $\basisB$.
    % $iP$ is a $\kappa$th-order commutator, and as such is a commutator of $\kappa$ Pauli words from $\basisB$.
    $iR$ cannot be among those generators, because then $iR$ would appear twice in the $(\kappa+1)$th-order commutator $[iP,iR]$, so that it is proportional to a $(\kappa-1)$th-order commutator $iP'$ (Lemma~\ref{lemma:paulicom}). Together with $iP\propto [iQ,iR]$, this would imply
    \begin{align}
        \basisB^{(\kappa)}_\mbr \ni iQ\overset{(\ref{eq:paulicom_qrr})}{\propto}[[iQ, iR], iR]=[iP,iR]\overset{\text{Lemma }\ref{lemma:paulicom}}{\propto} iP'\in\basisB^{(\kappa-1)}_\mbr,
        % \exists\ iP'\in\basisB^{(\kappa-1)}_\mbr:\quad [iP',iR]=iP=[iQ,iR] \overset{(\ref{eq:paulicom_equal_pauli_equal})}{\Rightarrow} iP'\propto iQ\in\basisB^{(\kappa)}_\mbr
    \end{align}
    and thus $iQ\in\basisB_{\mbr}^{(\kappa-1)}\cap \basisB_{\mbr}^{(\kappa)}$, contradicting the minimality of $\kappa$.
    Similarly, $iQ$ is a commutator of $\kappa$ generators which cannot contain $iR$.
    To see this by contradiction, assume that there exists $iQ'\in \basisB^{(\kappa-1)}$ such that $iQ\propto[iQ',iR]$.
    Then, by Eqs.~(\ref{eq:paulicom_is_cyclic}) and~(\ref{eq:paulicom_equal_pauli_equal}), we have $iQ'\propto iP$ so that $iP$ also is a $(\kappa-1)$th-order commutator. This implies $iP\in\basisB_{\mbr}^{(\kappa-1)}\cap \basisB_{\mbr}^{(\kappa)}$, again contradicting the minimality of $\kappa$.
    
    Eq.~(\ref{eq:paulicom_is_cyclic}) tells us that $iR\in\basisB$ can be obtained via $[iP,iQ]$, a nested commutator of elements from $\basisB\setminus\{iR\}$. This implies that $\basisB$ was not a minimal generating set, completing the contradiction.
    With this, we have shown that $\basisB_\text{odd}$ and $\basisB_\text{even}$ are disjoint, and thus contain bases for two orthogonal subspaces $\mfk$ and $\mfp$, respectively, due to orthonormality of the Pauli basis.
    
    To check the commutation relations, note that the commutator order is ``almost additive" under commutation\footnote{This can be shown by reordering commutators with the Jacobi identity and exploiting the simple structure that results for Pauli words.}:
    \begin{alignat}{4}
        & [\mfk,\mfk]= \sum_{\kappa,\kappa'\geq 0} [B_{\mbr}^{(2\kappa+1)},B_{\mbr}^{(2\kappa'+1)}]&\subset& \sum_{\kappa\geq 1} B_{\mbr}^{(2\kappa+1)}&=\mfk,\\
        \left[B_{\mbr}^{(\kappa)},B_{\mbr}^{(\kappa')}\right]\subset B_{\mbr}^{(\kappa+\kappa'+1)}\qquad\Rightarrow \qquad 
        & [\mfk,\mfp]= \sum_{\kappa,\kappa'\geq 0} [B_{\mbr}^{(2\kappa+1)},B_{\mbr}^{(2\kappa')}]&\subset& \sum_{\kappa\geq 1} B_{\mbr}^{(2\kappa)}&=\mfp,\\
        & [\mfp,\mfp]= \sum_{\kappa,\kappa'\geq 0} [B_{\mbr}^{(2\kappa)},B_{\mbr}^{(2\kappa')}]&\subset& \sum_{\kappa\geq 0} B_{\mbr}^{(2\kappa+1)}&=\mfk.
    \end{alignat}
    We thus have a \ac{CD} of $\mfg$ with $B\subset\mfp$, which furthermore is unique due to Prop.~\ref{prop:unique_CD_of_H}. 
    Clearly, adding even-order commutators of $B$ to the terms in the Hamiltonian will be compatible with this unique decomposition.
\end{proof}

\section{Cartan involutions and their interaction: calculations and proofs}\label{sec:involutions_calculations}
In this appendix we calculate the concrete form of all Cartan involutions acting on the defining irreducible representation of the classical Lie algebras $\mfsu$, $\mfso$, and $\mfsp$, as well as their doubles $\mfsu^{\oplus 2}$, $\mfso^{\oplus 2}$, and $\mfsp^{\oplus 2}$. Throughout App.~\ref{sec:involutions_calculations}, we will work with the special unitary groups/algebras, in contrast to the main text that uses their unitary counterparts. This is to highlight subtleties in the phases of composed and recursive \acp{CD}, which otherwise would be hidden in the global phases of $\mfu(n)$.

For convenience we recall the defining representations from Eqs.~(\ref{eq:def_sun})-(\ref{eq:def_spn}):
\begin{align}
    \mfsu(n)&=\left\{x\in \mbc^{n\times n} | x = -x^\dagger, \tr(x)=0\right\},\\
    \mfso(n)&=\left\{x\in \mbr^{n\times n} | x=-x^T\right\},\\
    \mfsp(n)&=\left\{x\in \mfsu(2n) | x=-J_nx^TJ_n^T\right\}.
\end{align}
Note that $\mfso(n)$ and $\mfsp(n)$ can be characterized as subspaces in $\mfsu(n)$ and $\mfsu(2n)$ on which $\ast=\rm{id}$ or $\ast=\Ad_{J_n}$, respectively.

We will then use the characterization of the involutions to compute two types of product tables for Cartan involutions; the first describes the involution resulting from composing two (commuting) involutions and the second captures the behaviour under recursion, in the sense of recursive \acp{CD} from Def.~\ref{def:recursive_decomps}.

\subsection{Characterization}\label{sec:involutions_calculations:characterize}
We will characterize all possible Cartan involutions on the classical Lie algebras and their doubles by combining the canonical form of the involution, which is commonly used to define them, with the degrees of freedom that capture changes of the basis in the representation. The result is summarized in Tab.~\ref{tab:all_cartan_involutions} in the main text. For AI and AII involutions, we find a generalization of Cor.~A3 in~\cite{wiersema2023classification}.

\textbf{A.}
The canonical form for an type-A involution on $\mfsu(n)\oplus\mfsu(n)$ is $\theta_0(x)= x^{\smallswap}$, where we use $\swapsymbol$ in analogy to $\ast$ to denote the swapping operation of the two algebra components.
In an arbitrary basis, captured by a unitary transformation $V_0\oplus V_1\in \grsu(n) \times \grsu(n)$, this leads to
\begin{align}
    \theta
    =\Ad_{V_0^\dagger\oplus V_1^\dagger}\circ\swapsymbol\circ \Ad_{V_0\oplus V_1}
    =\Ad_{(V_0^\dagger V_1) \oplus (V_1^\dagger V_0)}\circ\swapsymbol
    =\Ad_{X}\circ\swapsymbol,
\end{align}
with $X=(V_0^\dagger V_1)\oplus (V_1^\dagger V_0)\in \grsu(n)\times \grsu(n)$ and $X^\dagger=X^{\smallswap}$.

\textbf{AI.}
The canonical form for an AI involution on $\mfsu(n)$ is $\theta_0(x)=x^\ast$. In an arbitrary basis, captured by a unitary transformation $V_0\in \grsu(n)$, this leads to
\begin{align}
    \theta
    =\Ad_{V_0^\dagger}\circ\ast\circ\Ad_{V_0}
    =\Ad_{V_0^\dagger V_0^\ast}\circ\,\ast
    =\Ad_{V}\circ\,\ast,
\end{align}
with $V=V_0^\dagger V_0^\ast \in \grsu(n)$ and $V^T=(V_0^\dagger V_0^\ast)^T=V$.

\textbf{AII.}
The canonical form for an AII involution on $\mfsu(2n)$ is $\theta_0(x)=J_n x^\ast J_n^T$ with 
\begin{align}\label{eq:J_n_in_invol_app}
    J_n = \begin{pmatrix}
        0 & \mathbb{I}_n \\
        -\mathbb{I}_n & 0\\
    \end{pmatrix}\in \grsp(n)\cap \grso(2n).
\end{align}
In an arbitrary basis, captured by a unitary transformation $V_0\in \grsu(n)$, this leads to
\begin{align}
    \theta
    =\Ad_{V_0^\dagger}\circ\Ad_{J_n}\circ\ast\circ\Ad_{V_0}
    =\Ad_{V_0^\dagger J_n V_0^\ast}\circ\,\ast
    =\Ad_{W}\circ\,\ast,
\end{align}
with $W=V_0^\dagger J_n V_0^\ast \in \grsu(2n)$ and $W^T=(V_0^\dagger J_n V_0^\ast)^T=V_0^\dagger (-J_n) V_0^\ast=-W$.

\textbf{AIII.}
The canonical form for an AIII involution on $\mfsu(p+q)$ is $\theta_0(x)=I_{p,q} x I_{p,q}$
with 
\begin{align}\label{eq:I_pq_in_invol_app}
    I_{p,q} = \begin{pmatrix}
        \mathbb{I}_p & 0\\
        0 & -\mathbb{I}_q\\
    \end{pmatrix}\in \gro(p+q)\cap \mathrm{Sym}(p+q),
\end{align}
where $\mathrm{Sym}$ denotes symmetric matrices.
In an arbitrary basis, captured by a unitary transformation $V_0\in \grsu(p+q)$, this leads to
\begin{align}
    \theta
    =\Ad_{V_0^\dagger}\circ\Ad_{I_{p,q}}\circ\Ad_{V_0}
    =\Ad_{V_0^\dagger I_{p,q}V_0}
    =\Ad_{H},
\end{align}
with $H=V_0^\dagger I_{p,q}V_0 \in \gru(p+q)$ and $H^\dagger=(V_0^\dagger I_{p,q} V_0)^\dagger=H$ as well as $\det{H}=(-1)^q$. By using $-I_{p,q}$ in place of $I_{p,q}$, the determinant can be changed to $\det{H}=(-1)^p$, so that we may restrict to $H\in \grsu(p+q)$ as long as $p$ or $q$ is even.

\textbf{BD.}
The canonical form is as for the type-A involution, and accordingly we get for an arbitrary basis, captured by $Q_0\oplus Q_1\in \grso(n) \times \grso(n)$,
\begin{align}
    \theta=\Ad_{X}\circ\swapsymbol,
\end{align}
with $X=(Q_0^\dagger Q_1)\oplus (Q_1^\dagger Q_0)\in \grso(n)\times \grso(n)$ and $X^T=X^{\smallswap}$.

\textbf{BDI.}
The canonical form for a BDI involution on $\mfso(p+q)$ is $\theta_0(x)=I_{p,q} x I_{p,q}$
with $I_{p,q}$ from Eq.~(\ref{eq:I_pq_in_invol_app}).
In an arbitrary basis, captured by an orthogonal transformation $Q_0\in \grso(p+q)$, this leads to
\begin{align}
    \theta
    =\Ad_{Q_0^T}\circ\Ad_{I_{p,q}}\circ\Ad_{Q_0}
    =\Ad_{Q_0^T I_{p,q}Q_0}
    =\Ad_{R},
\end{align}
with $R=Q_0^T I_{p,q}Q_0 \in \gro(p+q)$ and $R^T=R$ as well as $\det{R}=(-1)^q$ (or $\det{R}=(-1)^p$), as for AIII.

\textbf{DIII.}
The canonical form for a DIII involution on $\mfso(2n)$ is $\theta_0(x)=J_n x J_n^T$
with $J_n$ from Eq.~(\ref{eq:J_n_in_invol_app}).
In an arbitrary basis, captured by an orthogonal transformation $Q_0\in \grso(2n)$, this leads to
\begin{align}
    \theta
    =\Ad_{Q_0^T}\circ\Ad_{J_n}\circ\Ad_{Q_0}
    =\Ad_{Q_0^T J_nQ_0}
    =\Ad_{L},
\end{align}
with $L=Q_0^T J_n Q_0 \in \grso(2n)$ (using $\det{J_n}=1$) and $L^T=(Q_0^T J_n Q_0)^T=-L$.

\textbf{C.}
The canonical form is as for the type-A involution, and accordingly we get for an arbitrary basis, captured by $S_0\oplus S_1\in \grsp(n) \times \grsp(n)$,
\begin{align}
    \theta=\Ad_{X}\circ\swapsymbol,
\end{align}
with $X=(S_0^\dagger S_1)\oplus (S_1^\dagger S_0)\in \grsp(n)\times \grsp(n)$ and $X^\dagger=X^{\smallswap}$.

\textbf{CI.}
The canonical form for a CI involution on $\mfsp(n)$ is $\theta_0(x)=J_nxJ_n^T$. In an arbitrary basis, captured by a unitary symplectic transformation $S_0\in \grsp(n)$, this leads to
\begin{align}
    \theta
    =\Ad_{S_0^\dagger}\circ\Ad_{J_n}\circ\Ad_{S_0}
    =\Ad_{S_0^\dagger J_n S_0}
    =\Ad_{S},
\end{align}
with $S=S_0^\dagger J_n S_0 \in \grsp(n)$ and $S^\dagger=(S_0^\dagger J_n S_0)^\dagger=-S$, where we used $J_n^\dagger=-J_n$.

A useful alternative description can be obtained by noticing that on $\grsp(n)$ (in its canonical basis with respect to $J_n$) we have $\Ad_{J_n}=\ast$, so that all CI involutions also take the form

\begin{align}
    \theta
    =\Ad_{S_0^\dagger}\circ\,\ast\circ\Ad_{S_0}
    =\Ad_{S_0^\dagger S_0^\ast}\circ\,\ast
    =\Ad_{\tilde{S}}\circ\,\ast,\qquad
    \tilde{S}^T = (S_0^\dagger S_0^\ast)^T=S_0^\dagger S_0^\ast=\tilde{S}.
\end{align}

\textbf{CII.}
The canonical form for a CII involution on $\mfsp(p+q)$ is $\theta_0(x)=K_{p,q}x K_{p,q}$ with $K_{p,q}$ from Eq.~(\ref{eq:def_I_pq_K_pq}).
In an arbitrary basis, captured by a unitary symplectic transformation $S_0\in \grsp(p+q)$, this leads to
\begin{align}
    \theta
    =\Ad_{S_0^\dagger}\circ\Ad_{K_{p,q}}\circ\Ad_{S_0}
    =\Ad_{S_0^\dagger K_{p,q} S_0}
    =\Ad_{P},
\end{align}
with $P=S_0^\dagger K_{p,q} S_0 \in \grsp(p+q)$ and $P^\dagger=(S_0^\dagger K_{p,q} S_0)^\dagger=P$.
Though we will not make use of it here, we mention an alternative way to phrase type-CII involutions, just like for type CI.
Using $\Ad_{J_n}=\Ad_{J_n^\dagger}=\ast$ on $\grsp(p+q)$, we write
\begin{align}
    \theta
    =\Ad_{S_0^\dagger}\circ\Ad_{K_{p,q}J_n}\circ\,\ast\circ\Ad_{S_0}
    =\Ad_{S_0^\dagger K_{p,q} J_n S_0^\ast}\circ\,\ast
    =\Ad_{\tilde{P}}\circ\,\ast,
\end{align}
with $\tilde{P}\in\grsp(p+q)$ and $\tilde{P}^T=(S_0^\dagger K_{p,q} J_n S_0^\ast)^T=-S_0^\dagger J_n K_{p,q} S_0^\ast=-\tilde{P}$, where we used $J_n^T=-J_n$, $K_{p,q}^T=K_{p,q}$ and $[K_{p,q},J_n]=0$.

\subsection{Composing Cartan involutions}\label{sec:involutions_calculations:compose}

In this section we report a composition table for Cartan involutions using the characterization in Tab.~\ref{tab:all_cartan_involutions} in the main text.

\begin{restatable}{lemma}{compositionofinvolutions}\label{lemma:composition_of_involutions}
    The composition $\theta_1\circ\theta_2$ of two Cartan involutions $\theta_{1,2}$ is a Cartan involution iff they commute. On a classical simple Lie algebra, the involution type resulting from the types of $\theta_{1,2}$ is given by
    \begin{center}
    {\footnotesize\emph{
    \begin{tabular}{c|ccc}
        $\mfsu$     & AI & AII & AIII \\
        \hline
        AI   & AIII & AIII & AI/AII \\
        AII  & AIII & AIII & AII/AI \\
        AIII & AI/AII & AII/AI & AIII
    \end{tabular}\hspace{6ex}
    \begin{tabular}{c|cc}
        $\mfso$     & BDI & DIII \\
        \hline
        BDI   & BDI/DIII & DIII/BDI \\
        DIII  & DIII/BDI & BDI/DIII
    \end{tabular}\hspace{6ex}
    \begin{tabular}{c|cc}
        $\mfsp$     & CI & CII \\
        \hline
        CI   & CII/CI & CI/CII \\
        CII  & CI/CII & CII/CI
    \end{tabular}
    }}
    \end{center}
    Which involutions commute is indicated by the commutation \emph{Criterion} $G_1G_2=\pm G_2G_1$ in Tab.~\ref{tab:involution_composition}, and the ambiguous entries above are determined by the sign in the criterion.
\end{restatable}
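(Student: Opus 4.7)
The plan is to separate the lemma into two essentially independent claims: the iff characterization of when a composition of involutions is an involution, and the case-by-case identification of the resulting type.

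For the iff, I would start from the observation that the composition of two Lie algebra automorphisms is automatically an automorphism, so only the order-two condition needs to be verified. Expanding $(\theta_1\circ\theta_2)^2 = \theta_1\theta_2\theta_1\theta_2$ and using $\theta_i^2 = \mathrm{id}$ immediately shows that this equals the identity iff $\theta_1\theta_2 = \theta_2\theta_1$. Both directions of the iff follow at once.

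For the type table, my plan is to exploit the canonical functional forms collected in Tab.~\ref{tab:all_cartan_involutions}. Each involution factors as $\Ad_{G_j}$ possibly followed by $\ast$ or $\swapsymbol$, so I would simplify the composition by pushing the discrete operations to the right using the identities $\ast\circ\Ad_G = \Ad_{G^\ast}\circ\ast$ and $\swapsymbol\circ\Ad_G = \Ad_{G^{\smallswap}}\circ\swapsymbol$, together with $\ast^2 = \swapsymbol^2 = \mathrm{id}$. The outcome has the form $\Ad_{G_3}\circ\epsilon$ with $\epsilon\in\{\mathrm{id},\ast,\swapsymbol,\ast\circ\swapsymbol\}$, and a simple parity count of the original $\ast$/$\swapsymbol$ factors already fixes which column of the table the composed involution lives in—e.g. for $\mfsu$, an even number of $\ast$ produces a type-AIII form, an odd number an AI/AII form.

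What remains is to settle the ambiguous entries (those marked as $X/Y$) by checking which transposition/adjoint condition the resulting $G_3$ satisfies. Here the sign in $G_1 G_2 = \pm G_2 G_1$ enters decisively: rewriting $\theta_1\theta_2 = \theta_2\theta_1$ at the level of the $\Ad$-generators forces $G_2 G_1 = c\, G_1 G_2$ for a scalar $c$, and squaring together with the unitarity/determinant data constrains $c\in\{+1,-1\}$. Substituting this relation into $G_3^T$ or $G_3^\dagger$ collapses the constraint on $G_3$ to one of $+G_3$ or $-G_3$, picking out exactly one of the two candidate types. The main obstacle I anticipate is the bookkeeping across all cases: one has to track how $\ast$ interacts with $T$ versus $\dagger$, handle the idiosyncrasies of $\mfsp$ (where $\Ad_{J_n}$ coincides with $\ast$ on the defining representation, so CI admits two equivalent canonical forms), and verify the determinant conditions on $G_3$ so that it truly lies in the appropriate group for AIII, BDI, or CII. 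Rather than try to unify the argument, I would tabulate, entry-by-entry, the composed generator $G_3$, its $T$/$\dagger$ image, and the sign forced by the commutation criterion, then read off the type directly; the doubled-algebra rows A/BD/C are handled analogously by keeping track of $\swapsymbol$ in place of $\ast$.
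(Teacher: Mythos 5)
Your two-part structure is exactly the paper's: the iff claim is proved verbatim by $\theta_3^2=\theta_1\theta_2\theta_1\theta_2=\mathrm{id}\Leftrightarrow\theta_1\theta_2=\theta_2\theta_1$, and the type table is obtained by normal-ordering the discrete pieces $\ast$, $\swapsymbol$ past the $\Ad_{G_j}$ and then inspecting the symmetry/determinant properties of the composed conjugator $G_3$, case by case, precisely as tabulated in Tab.~\ref{tab:involution_composition}.

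One step of your sketch is imprecise, though. You assert that ``squaring together with the unitarity/determinant data constrains $c\in\{+1,-1\}$.'' Squaring gives nothing new here (it is equivalent to the involution property already in hand), and for $\mfsu(n)$ the Schur-commutant argument plus $\det$ only constrains the scalar to $e^{i\phi}$ with $\phi=2\pi k/n$, not to $\pm1$; indeed for AI$\circ$AI and AIII$\circ$AIII the scalar really can be a nontrivial $n$-th root of unity (and the output type is AIII regardless). The restriction to $\pm1$ is only needed for the genuinely ambiguous unitary entries, AI$\circ$AIII and AII$\circ$AIII, and there the paper derives it by a separate argument: writing $HVH^\ast V^\ast=e^{i\phi}\id$ with $HV=A+iB$ ($A,B$ real) forces $e^{i\phi}\id=A^2+B^2-i[A,B]$; since $[A,B]$ is traceless it cannot contribute a multiple of the identity, so $e^{i\phi}\in\mbr$. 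For $\mfso$ and $\mfsp$ your reasoning is fine because the commutant scalar is forced to be real (resp.\ to satisfy $e^{2i\phi}=1$ by the symplectic form) without further work. You should also be aware that the BDI$\circ$BDI case needs an auxiliary determinant lemma (two anticommuting symmetric orthogonals force even dimension and equal determinants $(-1)^{n}$) to justify that $G_3$ lands in $\grso(2n)$ rather than only $\gro$; your plan to ``verify determinant conditions'' would have to include this, but as written it is only flagged as an obstacle rather than resolved.
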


We provide the proof of this lemma below, using explicit calculations for each combination of involutions in the defining representation and leveraging our characterization in Tab.~\ref{tab:all_cartan_involutions}.
Note that even though we use a fixed representation to compute the table, it is valid for all representations because the representing maps are Lie algebra homomorphisms and thus preserve all necessary structure. The only adjustment required for differing representations is that the commutation criterion $G_1G_2=\pm G_2G_1$ needs to be inspected \emph{after} mapping to the defining representation, in order to determine the ambiguous entries in the Lemma.

Also notice that the composition of commuting involutions reproduces a special case of Lemma~\ref{prop:grading_to_set_of_cds}. The two involutions define two compatible \acp{CD} $\mfg=\mfk_i\oplus\mfp_i$, $i\in\{1,2\}$, which lead to a Cartan $2$-grading
\begin{align}
    \mfg=
    \underset{\mfg_{00}}{\underbrace{\mfk_1\cap \mfk_2}}
    \ \oplus\ \underset{\mfg_{01}}{\underbrace{\mfk_1\cap \mfp_2}}
    \ \oplus\ \underset{\mfg_{10}}{\underbrace{\mfp_1\cap \mfk_2}}
    \ \oplus\ \underset{\mfg_{11}}{\underbrace{\mfp_1\cap \mfp_2}}.
\end{align}
We may then take the full group $\Ztwo^2$ with the homomorphism $\varphi:\Ztwo^2\to\Ztwo, s_0s_1\mapsto s_0+s_1$, so that $\mfg=(\mfg_{00}\oplus \mfg_{11})\oplus (\mfg_{01}\oplus \mfg_{10})$ is a new \ac{CD}.
In words, we assign a positive sign to spaces that have the same sign under $\theta_1$ and $\theta_2$, and a negative sign to those that had differing sign.
This is exactly the behaviour under the composed involution $\theta_1\circ\theta_2$!

\textbf{Example.}
To illustrate the composition table above, we will generate Cartan involutions of a new type from a set of Cartan involutions. 
Consider three involutions $\theta_{1,2,3}$ of type AI on $\mfsu(4)$, with commutation relations of their conjugation operators given by $[V_1,V_2]=[V_2,V_3]=\{V_1,V_3\}=0$.
The composition table above tells us that $\theta_4\coloneqq\theta_1\circ\theta_2$ is of type AIII and $\theta_5\coloneqq\theta_3\circ\theta_4$ of type AII, where we used that $\{V_3,V_1V_2^\ast\}=0$ so that the second entry for AI$\circ$AIII has to be read out for $\theta_5$.

Concretely, consider $V_1=X_0X_1$, $V_2=X_0$, and $V_3=Z_1$, which lie in $\grsu(4)$ and satisfy the above commutation relations. The involutions then are
\begin{align*}
    \theta_1 &= \Ad_{X_0X_1}\circ\,\ast,\quad
    \theta_2 = \Ad_{X_0}\circ\,\ast,\quad
    \theta_3 = \Ad_{Z_1}\circ\,\ast,\\
    \theta_4 &= \Ad_{X_0X_1X^\ast_0}=\Ad_{X_1},\\
    \theta_5 &= \Ad_{Z_1X^\ast_1}\circ\,\ast=\Ad_{Y_1}\circ\,\ast,
\end{align*}
where we adjusted the phase in $\theta_{5}$ suitably.
Noting that $\det{X_1}$ and $\det{Y_1}$ are determinants of $4\times 4$ matrices, we find the results to match the properties of AIII and AII involutions, as anticipated:
\begin{align}
    X_1^\dagger=X_1,\ \det{X_1}=1,\qquad
    Y_1^T=-Y_1,\ \det{Y_1}=1.
\end{align}

Before proving Lemma~\ref{lemma:composition_of_involutions}, we prepare a small technical statement.

\begin{lemma}\label{lemma:ac_sym_ortho_mats}
    Let $R_1,R_2\in\grso(d)$ with $R_i^T=R_i$ and $R_1R_2=-R_2R_1$. Then $d=2n$ for some $n\in\mbn$, and $\det{R_1}=\det{R_2}=(-1)^{n}$.
\end{lemma}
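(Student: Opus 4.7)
The plan is to exploit the fact that each $R_i$ is simultaneously symmetric and orthogonal, which forces it to be an involution. First I would observe that the condition $R_i\in\gro(d)$ gives $R_i^T R_i=\id_d$, and combining this with $R_i^T=R_i$ yields $R_i^2=\id_d$. Hence $R_i$ is diagonalizable with spectrum in $\{+1,-1\}$. (I am reading the $\grso(d)$ in the hypothesis as $\gro(d)$, consistent with how $R$ arises in the BDI row of Tab.~\ref{tab:all_cartan_involutions}; the stated conclusion $\det R_i=(-1)^n$ is inconsistent with $\det R_i=+1$ unless one additionally constrains $n$ to be even.)

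Next I would decompose $\mbr^d=V_+\oplus V_-$ into the eigenspaces of $R_1$, with $a\coloneqq\dim V_+$ and $b\coloneqq\dim V_-$, so $a+b=d$. The crux is to turn the anticommutation $R_1R_2=-R_2R_1$ into a statement about these eigenspaces: for any $v\in V_+$,
\begin{equation}
R_1(R_2 v)=-R_2R_1 v=-R_2 v,
\end{equation}
so $R_2 v\in V_-$, and symmetrically $R_2(V_-)\subseteq V_+$. Since $R_2$ is invertible, its restrictions $R_2|_{V_\pm}\colon V_\pm\to V_\mp$ are injective, which forces $a=b$. Therefore $d=2a$, and setting $n\coloneqq a$ gives $d=2n$ as claimed.

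The determinants are then immediate from the eigenvalue counts: $\det R_1=(+1)^n(-1)^n=(-1)^n$, and running the same eigenspace-swapping argument with the roles of $R_1$ and $R_2$ exchanged gives $\det R_2=(-1)^n$ as well.

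There is no real obstacle here; the whole proof is a short spectral argument that uses only (i) symmetric $+$ orthogonal $\Rightarrow$ involution, and (ii) anticommutation with an involution interchanges its eigenspaces. The only thing worth flagging is the $\grso$ vs.\ $\gro$ point noted above, which should presumably be corrected in the statement itself.
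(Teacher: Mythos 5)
Your proof is correct and follows essentially the same spectral argument as the paper's: both reduce to $R_i^2=\id$, use anticommutation to show that $R_2$ intertwines the $\pm1$ eigenspaces of $R_1$, and use invertibility of $R_2$ to equate their dimensions; yours is marginally cleaner in that $d=2n$ falls out of $a=b$ directly, whereas the paper separately establishes that $d$ is even via $\det{R_1R_2}=\det{-R_2R_1}$ before the eigenvalue analysis. Your flag about $\grso$ versus $\gro$ is also well taken: the lemma is invoked in the BDI$\circ$BDI entry of Tab.~\ref{tab:involution_composition} with conjugating operators $R_i\in\gro(p+q)$ that may have determinant $-1$, the argument never uses $\det{R_i}=1$, and so the hypothesis should indeed read $\gro(d)$.
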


\begin{proof}[Proof of Lemma~\ref{lemma:ac_sym_ortho_mats}]
    Let $R_1,R_2\in\grso(d)$ with $R_i^T=R_i$ and $R_1R_2=-R_2R_1$.
    We must have $\det{R_1R_2}=\det{-R_2R_1}$, which implies $1=(-1)^d$ and thus $d$ is even, i.e.,~$d=2n$ for some $n\in\mbn$.
    
    Let $v$ be an eigenvector of $R_1$ with eigenvalue $\lambda$, then $v=R_1^2v=\lambda^2 v$ implies that the spectrum of $R_1$ only contains $+1$ and $-1$.
    Next, note that anticommutativity implies $R_1R_2v=-R_2R_1 v=-R_2\lambda v=-\lambda R_2v$, so that $R_2v$ is an eigenvector of $R_1$ with the eigenvalue $-\lambda$. As $R_2$ is orthogonal and in particular invertible, it provides a bijection between eigenvectors of $R_1$ for the eigenvalue $1$ and those for $-1$, so that the respective multiplicity of both eigenvalues is $\frac{d}{2}=n$~\cite{SE_anticom}.
    The determinant of $R_1$ therefore is $\det{R_1}=(-1)^{n}$.
    All of the above statements hold when switching $R_1$ and $R_2$, so that $\det{R_2}=\det{R_1}$.
\end{proof}

We now turn to the proof of the composition lemma.

\begin{table*}
    \centering
    \begin{tabular}{ccccclcA}
        $T_1\circ T_2$ & $\theta_1$ & $\theta_2$ & Criterion & $\phi$ & $\ \ \quad\qquad\theta_3$ & $T_3$ & &\quad\text{Calculations} \\\midrule
        
        \belowrulesepcolor{tabgrey}\rc{tabgrey} &&&&&&& G^\dagger &= e^{i\frac{\phi}{2}}V_2^TV_1^\dagger=e^{i\frac{\phi}{2}}V_2V_1^\ast\overset{\star}{=}G \\
        \rc{tabgrey} \mr{AI$\circ$AI} & \mr{$\Ad_{V_1}\circ\,\ast$} & \mr{$\Ad_{V_2}\circ\,\ast$} & \mr{$V_1V_2^\ast=e^{i\phi}V_2V_1^\ast$} &
        \mr{$\frac{2\pi k}{n}$} & \mr{$\Ad_{G=e^{-i\frac{\phi}{2}}V_1V_2^\ast}$} & \mr{AIII} & \det{G}&=(-1)^k \quad(\ref{eq:phase_determinant})\\
        
        &&&&&&& G^\dagger&=e^{i\frac{\phi+\pi}{2}}W^TV^\dagger=e^{i\frac{\phi-\pi}{2}}WV^\ast\overset{\star}{=}G\\
        \mr{AI$\circ$AII} & \mr{$\Ad_{V}\circ\,\ast$} & \mr{$\Ad_{W}\circ\,\ast$} & \mr{$VW^\ast=e^{i\phi}WV^\ast$} & \mr{$\frac{2\pi k}{2n}$} & \mr{$\Ad_{G=e^{-i\frac{\phi+\pi}{2}}VW^\ast}$} & \mr{AIII} & \det{G}&=(-1)^{k+n} \quad(\ref{eq:phase_determinant}) \\
        
        \rc{tabgrey} &&& $VH^\ast=HV$ & $0$ && AI & G^T&=e^{-i\frac{\vartheta}{n}}H^\dagger V^T=e^{-i\frac{\vartheta}{n}}HV\overset{\star}{=}e^{i\phi} G \\
        \rc{tabgrey} \mr{AI$\circ$AIII} & \mr{$\Ad_{V}\circ\,\ast$} & \mr{$\Ad_{H}$} & $VH^\ast=-HV$ & $\pi$ & \mr{$\Ad_{G=e^{-i\frac{\vartheta}{n}}VH^\ast}\circ\,\ast$} & AII & \det{G}&=e^{-i\frac{\vartheta}{n}n}\det{H^\ast}=e^{-i(\vartheta+\vartheta)}=1 \\
        
        &&&&&&& G^\dagger&=e^{i\frac{\phi}{2}}W_2^TW_1^\dagger=e^{i\frac{\phi}{2}}W_2W_1^\ast\overset{\star}{=}G \\
        \mr{AII$\circ$AII} & \mr{$\Ad_{W_1}\circ\,\ast$} & \mr{$\Ad_{W_2}\circ\,\ast$} & \mr{$W_1W_2^\ast=e^{i\phi}W_2W_1^\ast$} & \mr{$\frac{2\pi k}{2n}$} & \mr{$\Ad_{G=e^{-i\frac{\phi}{2}}W_1W_2^\ast}$} & \mr{AIII} & \det{G}&=(-1)^k \quad(\ref{eq:phase_determinant}) \\
        
        \rc{tabgrey} &&& $WH^\ast=HW$ & $0$ && AII & G^T&=-e^{-i\frac{\vartheta}{2n}}HW\overset{\star}{=}-e^{i\phi} G \\
        \rc{tabgrey}\mr{AII$\circ$AIII} & \mr{$\Ad_{W}\circ\,\ast$} & \mr{$\Ad_{H}$} & $WH^\ast=-HW$ & $\pi$ & \mr{$\Ad_{G=e^{-i\frac{\vartheta}{2n}}WH^\ast}\circ\,\ast$} & AI & \det{G}&=e^{-i\frac{\vartheta}{2n}2n}\det{H^\ast}=e^{-i(\vartheta+\vartheta)}=1 \\ % =e^{-i\frac{\vartheta}{2n}}H^\dagger W^T

        &&&&&&& G^\dagger &= e^{i\frac{\phi}{2}}H_2^\dagger H_1^\dagger=e^{i\frac{\phi}{2}}H_2H_1\overset{\star}{=}G \\
        \mr{AIII$\circ$AIII} & \mr{$\Ad_{H_1}$} & \mr{$\Ad_{H_2}$} & \mr{$H_1H_2=e^{i\phi}H_2H_1$} &
        \mr{$\frac{2\pi k}{n}$} & \mr{$\Ad_{G=e^{-i\frac{\phi}{2}}H_1H_2}$} & \mr{AIII} & \det{G}&=(-1)^k \quad(\ref{eq:phase_determinant}) \\
        \midrule
        
        \belowrulesepcolor{tabgrey}\rc{tabgrey} & & & $R_1R_2=R_2R_1$ & $0$ & & BDI & G^T&=R_2^TR_1^T=R_2R_1\overset{\star}{=} e^{i\phi} G\\
        \rc{tabgrey} \mr{BDI$\circ$BDI} &\mr{$\Ad_{R_1}$} & \mr{$\Ad_{R_2}$} & $R_1R_2=-R_2R_1$ & $\pi$ & \mr{$\Ad_{G=R_1R_2}$} & DIII & \det{G}&=\det{R_1}\det{R_2}\quad (\text{Lemma}~\ref{lemma:ac_sym_ortho_mats})\\
        
        & & & $RL=LR$ & $0$ & & DIII & G^T&=L^TR^T=-LR\overset{\star}{=} -e^{i\phi} G\\
        \mr{BDI$\circ$DIII} & \mr{$\Ad_{R}$} & \mr{$\Ad_{L}$} & $RL=-LR$ & $\pi$ & \mr{$\Ad_{G=RL}$} & BDI & \det{G}&=\det{R}\quad (\ref{eq:bdi_diii_determinant}) \\
        
        \rc{tabgrey} & & & $L_1L_2=L_2L_1$ & $0$ & & BDI & G^T&=L_2^TL_1^T=L_2L_1\overset{\star}{=} e^{i\phi} G\\
        \rc{tabgrey} \mr{DIII$\circ$DIII} & \mr{$\Ad_{L_1}$} & \mr{$\Ad_{L_2}$} & $L_1L_2=-L_2L_1$ & $\pi$ & \mr{$\Ad_{G=L_1L_2}$} & DIII & \det{G}&=1 \\\aboverulesepcolor{tabgrey}\midrule
        
        & & & $S_1S_2=S_2S_1$ & $0$ & & CII & G^\dagger&=S_2^\dagger S_1^\dagger=S_2S_1\overset{\star}{=}e^{i\phi} G\\
        \mr{CI$\circ$CI} & \mr{$\Ad_{S_1}$} & \mr{$\Ad_{S_2}$} & $S_1S_2=-S_2S_1$ & $\pi$ & \mr{$\Ad_{G=S_1S_2}$} & CI & \det{G}&=1\\
        
        \rc{tabgrey}& & & $SP=PS$ & $0$ & & CI & G^\dagger&=P^\dagger S^\dagger=-PS\overset{\star}{=}-e^{i\phi} G\\
        \rc{tabgrey}\mr{CI$\circ$CII} & \mr{$\Ad_{S}$} & \mr{$\Ad_{P}$} & $SP=-PS$ & $\pi$ & \mr{$\Ad_{G=SP}$} & CII & \det{G}&=1\\
        
        & & & $P_1P_2=P_2P_1$ & $0$ & & CII & G^\dagger&=P_2^\dagger P_1^\dagger=P_2P_1\overset{\star}{=}e^{i\phi} G\\
        \mr{CII$\circ$CII} & \mr{$\Ad_{P_1}$} & \mr{$\Ad_{P_2}$} & $P_1P_2=-P_2P_1$ & $\pi$ & \mr{$\Ad_{G=P_1P_2}$} & CI & \det{G}&=1\\
    \end{tabular}
    \caption{The composition table of Cartan involutions, based on the characterization in Tab.~\ref{tab:all_cartan_involutions}. Given two involution types with concrete involutions $\theta_{1,2}$ (left), their composition yields a new Cartan involution if a commutativity criterion holds, which involves a phase $e^{i\phi}$ (mid left). The new involution $\theta_3$ and its type $T_3=T_1\circ T_2$ are fixed up to a phase of the conjugation operator, which in turn is dictated by $\phi$ (mid right). We verify $T_3$ with brief calculations of its conjugation operator (transformation behaviour and determinant, right). For AI$\circ$AIII and AII$\circ$AIII we defined $e^{i\vartheta}=\det{H}$. If a pair of types can give rise to different new types, the calculations apply to both cases.
    Each equality marked with a $\star$ exploits the commutativity criterion, and the equalities preceding each of these use properties of the conjugation operators of $\theta_{1,2}$. For details on the phase and determinant calculations, see the main text.
    }
    \label{tab:involution_composition}
\end{table*}

\begin{proof}[Proof of Lemma~\ref{lemma:composition_of_involutions}]
As $\theta_{1,2}$ are elements of the automorphism group of $\mfg$, we know that $\theta_{3}\coloneqq\theta_1\circ\theta_2$ is an automorphism as well.
Thus, $\theta_{3}$ is a Cartan involution if and only if $\theta_{3}^2=\id_{\mfg}$, which implies
\begin{align}
    \id_{\mfg}
    =\theta_{3}^2=\theta_1\circ\theta_2\circ\theta_1\circ\theta_2
    \quad \Leftrightarrow\quad \theta_1\circ\theta_2=\theta_2\circ\theta_1,
\end{align}
where we used $\theta_{1,2}^2=\id_\mfg$.

To compute the composition tables, we start with a useful observation regarding the adjoint action of $\gru(n)$.
Given two group elements $G, H$ from $SU$, $SO$ or $Sp$, $\Ad_G=\Ad_H$ implies $G=e^{i\phi} H$.
This is because, by Schur's lemma, the commutants $\gru^{\mfg}$ for $\mfg\in\{\mfsu,\mfso,\mfsp\}$ all are $\gru(1)$.
For $G,H\in \grsu(n)$, the phase above is constrained to $\phi=\frac{2\pi k}{n},\ k\in\mathbb{Z}$ due to $\det{e^{i\phi}\id_n}=\det{G}/\det{H}=1$. In this scenario, we will make use of
\begin{align}\label{eq:phase_determinant}
    \det{e^{i\frac{\phi}{2}}\id_n}=e^{i\frac{2\pi k}{2n}n}=(-1)^k.
\end{align}
For $G,H\in \grso(n)$ and $G,H\in \grsp(n)$ the phase is instead constrained by $\phi\in\{0, \pi\}$ because $e^{i\phi}\in\mathbb{R}$ and $({e^{i\phi}}\id_{2n})^TJ_n(e^{i\phi}\id_{2n})=J_n$ if and only if $e^{2i\phi}=1$, respectively.

For the individual scenarios, we will go through the same computational steps for each pair of involution types that act on the same classical Lie algebra. 
Given such a pair, the first step is to make the involutions concrete, using Tab.~\ref{tab:all_cartan_involutions} from the main text. Second, we derive a commutativity criterion for the conjugation operators of the involutions, together with a constraint from above on the global phase in the criterion.
Third, we give a specific functional form for $\theta_3=\theta_1\circ\theta_2$ and its type, and lastly prove that the conjugation operator of $\theta_3$ has the correct properties if the commutativity criterion is satisfied. 
We denote the type of $\theta_3$ as $T_3=T_1\circ T_2$, remarking that $T_3$ is not fixed by $T_1$ and $T_2$ alone, but additionally depends on the phase in the commutativity criterion. Further, note that we only consider the scenario of commuting involutions from here on, so that the ordering of $T_1$ and $T_2$ does not matter.
Given the number of calculations we have to perform and their similarity, we present them in short form in Tab.~\ref{tab:involution_composition}. We perform the calculation explicitly for AI$\circ$AI as an example below, and only give complementary comments for the other cases where necessary.

\textbf{AI$\circ$AI.}
First, according to Tab.~\ref{tab:all_cartan_involutions}, two involutions of type AI take the form $\theta_i=\Ad_{V_i}\circ\,\ast$ with $V_i^T=V_i$. Second, their composition reads 
\begin{align}
    \theta_1\circ\theta_2=\Ad_{V_1}\circ\,\ast\,\circ\Ad_{V_2}\circ\,\ast=\Ad_{V_1V_2^\ast},
\end{align}
and analogously for $\theta_2\circ\theta_1$, so that the commutativity criterion is $V_1V_2^\ast=e^{i\phi} V_2V_1^\ast$ with $\phi=\frac{2\pi k}{n}$, as discussed in the observations above.
Third, we observe the absence of complex conjugation from the composition, so that the candidate involution type is AIII and we should test for the corresponding criterion $G^\dagger\overset{?}{=}G$ from Tab.~\ref{tab:all_cartan_involutions}. We do this for a global phase ansatz $G=e^{i\alpha}V_1V_2^\ast$:
\begin{align}
    G^\dagger = e^{-i\alpha}V_2^TV_1^\dagger
    =e^{-i\alpha}V_2V_1^\ast
    =e^{-i(\alpha+\phi)}V_1V_2^\ast
    \overset{?}{=}e^{i\alpha}V_1V_2^\ast,
\end{align}
where we used $V_i^T=V_i$ and the commutativity criterion. To satisfy $G^\dagger=G$, we simply set $\alpha=-\phi/2$. Finally, we need to check the determinant $\det{G}=\det{e^{-i\frac{\phi}{2}}\id_n}\det{V_1}\det{V_2^\ast}=(-1)^k$, which is in line with the criterion for AIII involutions, and tells us that $p$ or $q$ for the involution must have the same parity as $k$, which derives from the commutativity criterion.

As mentioned above, we will only give some additional comments where necessary for the other pairs.

\textbf{AI$\circ$AIII and AII$\circ$AIII.}
For these pairs, we can (and need to) restrict the global phase $\phi$ further than for AI$\circ$AI or AI$\circ$AII.
We consider AI$\circ$AIII here but the discussion for AII$\circ$AIII is analogous.
Note that $VH^\ast=e^{i\phi}HV$ implies
\begin{align}
    e^{i\phi}\id=HVH^\ast V^\ast
    =(A+iB)(A-iB)
    =A^2+B^2-i[A,B],
\end{align}
for some real matrices $A,B$, and that $[A,B]$ is traceless, so that it cannot contribute to the LHS term.
Consequentially, $e^{i\phi}\in\mbr$ and thus we may restrict the phase to $\phi\in\{0,\pi\}$.
The phase then determines whether $\theta_3$ is of type AI or AII, as can be seen from $G^T=e^{i\phi}G=\pm G$.
As for the phase of $G$, it is chosen to compensate the determinant of $H^\ast$, where $\det{H}=e^{i\vartheta}\in\{1,-1\}$.

\textbf{BDI$\circ$BDI.}
This case is straightforward, except for the determinant of $G$.
Note that $R_1$ and $R_2$ may carry a determinant of $-1$, so that $\det{G}=\det{R_1}\det{R_2}$ might differ from $+1$.
This is not an issue for the case $R_1R_2=R_2R_1$, as $\theta_3$ is of type BDI then, and $\det{G}=-1$ is allowed.
However, for $R_1R_2=-R_2R_1$, we need to prove that $n$ is even (because DIII is only defined on $\mfso(2n)$) and that $\det{G}=1$, which is exactly what the prepared Lemma~\ref{lemma:ac_sym_ortho_mats} guarantees.

\textbf{BDI$\circ$DIII.}
This case again is straightforward up to the determinant.
Note that $L$ and $R$ must be $2n\times 2n$ matrices for DIII to apply.
For $RL=LR$, we compute that $G=RL$ is skew-symmetric and orthogonal, i.e.,~$G^2=-\id$. This implies, for an eigenvalue $\lambda$ of $G$ with eigenvector $v$, that 
\begin{align}
    v=-G^2v=-\lambda^2v \quad\Rightarrow\quad\lambda=\pm i.
\end{align}
Furthermore, the eigenvalues of a skew-symmetric matrix come in pairs $(\lambda, -\lambda)$, leading to the same multiplicity for $i$ and $-i$ and thus to $\det{G}=i^n(-i)^n=1$.
For $RL=-LR$, we obtain a type-BDI involution, so that the determinant need not be fixed beyond $\det{G}\in\{\pm1\}$.
Overall, we find
\begin{align}\label{eq:bdi_diii_determinant}
    \det{G}=\det{R}\ \underset{1}{\underbrace{\det{L}}} =\begin{cases}
        1 & \text{ for }\ RL=LR\\
        \det{R} & \text{ for }\ RL=-LR.\\
    \end{cases}
\end{align}
% Now note that the same argument applies to the skew-symmetric $L$, so that $\det{L}=1$ as well, which implies $\det{R}=1$.
% and can thus infer that in even dimensions, only symmetric orthogonal matrices $R$ with $\det{R}=1$ can commute with skew-symmetric orthogonal matrices. 

\textbf{CI$\circ$CI.}
The cases for symplectic matrices are straightforward, we only remark that we use the characterizations $\theta=\Ad_S,\ S^\dagger=-S$ for CI and $\theta=\Ad_{P},\ P^\dagger=P$ for CII.
\end{proof}

\textbf{Composition table structure.}
Note that the composition tables for $\mfso$ and $\mfsp$ look like the multiplication table of the sign group $C_2$ (or addition table of $\mathbb{Z}_2$), with identity elements BDI and CII for $G_1G_2=G_2G_1$ or DIII and CI for $G_1G_2=-G_2G_1$.
For $\mfsu(n)$, the composition operation on the set of types is commutative, non-associative, has an identity (only for $G_1G_2=G_2G_1$), and non-unique inverses. This gives $\{\text{AI, AII, AIII}\}$ together with composition the structure of a commutative (unital) magma. While it is composed of invertible elements only, it is not a quasi-group due to non-uniqueness of the inverses.
We stress that these are merely observations relating Cayley tables of groupoids to the composition tables above, and there might not be a deeper meaning behind this; we will not pursue this line of investigation further.

\subsection{Proofs for higher-order Cartan decompositions}\label{sec:involutions_calculations:HOCD}
In this section, we prove Props.~\ref{prop:set_of_cds_to_grading},~\ref{prop:grading_to_set_of_cds} and Thm.~\ref{thm:grading_recursion_equivalence}.

\setofcdstograding*
\begin{proof}
    First, note that the compatibility criterion implies that 
    \begin{align}
        \mfg = \mfk_1\oplus\mfp_1
        =(\mfk_1\cap\mfk_2)\oplus(\mfk_1\cap\mfp_2)\oplus(\mfp_1\cap\mfk_2)\oplus(\mfp_1\cap\mfp_2)
        =\dots
        % =\bigoplus_{s\in\Ztwo^p} \bigcap_{j=1}^p \mfg_j^{s_j}
        =\bigoplus_{s\in\Ztwo^c}\mfg_s.
    \end{align}
    Next, let $x\in\mfg_s$ and $y\in\mfg_t$ for $s,t\in\Ztwo^c$.
    Note that this implies, for all $1\leq j\leq c$, $x\in\mfk_j$ if $s_j=0$ and $x\in\mfp_j$ if $s_j=1$, and similarly for $y$ and $t$.
    The commutation relations of the \acp{CD} then imply
    \begin{align}
        \forall\ 1\leq j\leq c:\ [x,y]\in\begin{cases}
            [\mfk_j,\mfk_j] & \text{ if }\ s_j=t_j=0\\
            [\mfk_j,\mfp_j] & \text{ if }\ s_j=0, t_j=1\\
            [\mfp_j,\mfk_j] & \text{ if }\ s_j=1, t_j=0\\
            [\mfp_j,\mfp_j] & \text{ if }\ s_j=t_j=1
        \end{cases} \ \ =\begin{cases}
            \mfk_j & \text{ if }\ s_j+t_j=0\\   
            \mfp_j & \text{ if }\ s_j+t_j=1
        \end{cases}
        % \mfg_j^{s_j+t_j}\ \ 
        \quad\Rightarrow\quad [x,y]\in\mfg_{s+t}.
    \end{align}
    This proves the proposition. 
    
    We also want to show that the compatibility requirement cannot be skipped as claimed in~\cite{dagli2008general}.
    Concretely, consider the algebra $\mfsu(3)$ and its defining representation
    \begin{align}
        \mfg=\{x\in \mbc^{3\times 3}|x^\dagger=-x, \tr(x)=0\},
    \end{align}
    together with the involutions
    \begin{align}
        \theta_1(x)=x^\ast,\quad \theta_2(x)=\Ad_W;~W=\begin{pmatrix}
            0 & e^{i\frac{\pi}{4}} & 0 \\
            e^{-i\frac{\pi}{4}} & 0 & 0 \\
            0 & 0 & 1
        \end{pmatrix}.
    \end{align}
    $\theta_1$ clearly is of type AI, and $\theta_2$ is conjugation by a Hermitian operator $W$ from $\gru(3)$ with $\det{W}=-1$, and thus of type AIII (c.f.~Tab.~\ref{tab:all_cartan_involutions}).
    The involutions do not commute,
    \begin{align}
        \theta_1\circ\theta_2\circ\theta_1 = \Ad_{W^\ast}\neq\Ad_{W},
        \quad\text{using} \quad
        W^\ast\not\propto W,
    \end{align}
    so that the \acp{CD} are not compatible. Indeed we find the subspaces
    \begin{alignat}{5}
        \mfk_1&=\{x\in\mfg | x^\ast=x\},\quad
        &\mfk_2=&\left\{
            \begin{pmatrix}
                ia & (i-1)b & \varphi \\ 
                (i+1)b & ia & \varphi (1-i) \\ 
                -\varphi^\ast & -\varphi^\ast (1+i) & -2ia 
            \end{pmatrix}
            \bigg | a, b \in\mbr, \varphi\in\mbc\right\},
        \\
        \mfp_1&=\{x\in\mfg | x^\ast=-x\},\quad
        &\mfp_2=&\left\{
            \begin{pmatrix} 
                ia & (i+1)b & \varphi \\ 
                (i-1)b & -ia & -\varphi (1-i) \\
                -\varphi^\ast & \varphi^\ast (1+i) & 0
            \end{pmatrix}
            \bigg | a, b \in\mbr, \varphi\in\mbc\right\},
    \end{alignat}
    which lead to the intersections
    \begin{align}
        \mfg_{00}=\mfg_{01}=\{0\}, \mfg_{10}=\{\diag(ia, ia, -2ia)|a\in \mbr\},\mfg_{11}=\{\diag(ia, -ia, 0)|a\in \mbr\},
    \end{align}
    spanning a two dimensional subspace of $\mfg$ only. In particular, as the $\mfg_s,\ s\in\Ztwo^2$ do not sum to $\mfg$, we do not obtain a Cartan grading.
\end{proof}

\gradingtosetofcds*

\begin{proof}
    For the first statement, let $Q\subseteq\Ztwo^p$ be a subgroup and note that
    \begin{align}
        [\mfg_Q,\mfg_Q]\subset \bigoplus_{s,t\in Q}\mfg_{s+t}=\bigoplus_{r\in Q}\mfg_r=\mfg_Q,
    \end{align}
    where we used the grading property and then that $Q$ is a group, with equality following from $0+t=t$.

    Next, let $\varphi:Q\to\Ztwo$ be a surjective group homomorphism and denote its preimages as $P_i\coloneqq\varphi^{-1}(i)$. Note that
    \begin{align}
        \varphi(P_i+P_j)=\varphi(P_i)+\varphi(P_j)=i+j,
    \end{align}
    so that $P_i+P_j\subseteq P_{i+j}$. This implies
    \begin{align}
        [\mfg_{P_i}, \mfg_{P_j}]
        =\bigoplus_{s\in P_i, t\in P_j}[\mfg_s,\mfg_t]
        \subseteq \bigoplus_{s\in P_i, t\in P_j} \mfg_{s+t}
        \subseteq \bigoplus_{s\in P_{i+j}} \mfg_s\nonumber
        =\mfg_{P_{i+j}},
    \end{align}
    which is equivalent to the commutation relations of a \ac{CD}:
    \begin{align}
        [\mfk_\varphi,\mfk_\varphi]\subseteq \mfk_\varphi,\quad
        [\mfk_\varphi,\mfp_\varphi]\subseteq \mfp_\varphi,\quad
        [\mfp_\varphi,\mfp_\varphi]\subseteq \mfk_\varphi.
    \end{align}
    
    For the last statement we simply note that $\varphi_j:s\mapsto s_j$ is a homomorphism, because addition on $\Ztwo^c$ is defined element-wise, and surjective because $\Ztwo^c$ contains elements with $s_j=0$ and elements with $s_j=1$.
\end{proof}

\gradingrecursionequivalence*
\begin{proof}
    The first part is easy to prove using Prop.~\ref{prop:grading_to_set_of_cds}. Note that the vector space structure is correct:
    \begin{alignat}{3}
        \mfk_0&=
        \mfg=
        \bigoplus_{s\in\Ztwo^{c}} \mfg_{s}=
        \bigoplus_{s\in\Ztwo^{c-1}} \mfg_{0 \mathlarger{s}}\ \oplus
        \bigoplus_{s\in\Ztwo^{c-1}} \mfg_{1 \mathlarger{s}}
        &&=\mfk_{1}\oplus\mfp_{1},\\
        \mfk_{\ell+1}&=
        \bigoplus_{s\in\Ztwo^{c-\ell-1}} \mfg_{0^{\ell} 0\mathlarger{s}}
        =\bigoplus_{s\in\Ztwo^{c-\ell-2}} \mfg_{0^{\ell+1} 0\mathlarger{s}}\oplus \bigoplus_{s\in\Ztwo^{c-\ell-2}} \mfg_{0^{\ell+1} 1\mathlarger{s}}
        &&=\mfk_{\ell+2}\oplus \mfp_{\ell+2}.
    \end{alignat}
    Next, note that $\{0^\ell s|s\in\Ztwo^{c-\ell}\}\subset\Ztwo^c$ is a subgroup and that $\varphi: 0^\ell s\mapsto s_{1}$, for each $\ell$, is a group homomorphism from this subgroup to $\Ztwo$ with $\mfk_{\varphi}=\mfk_{\ell+1}$ and $\mfp_{\varphi}=\mfp_{\ell+1}$.
    Prop.~\ref{prop:grading_to_set_of_cds} then proves that the vector space decompositions above are in fact \acp{CD}.

    For the second part we will have to work more.
    We will use a proof by induction on the recursion depth $r$.
    The induction hypothesis is that an $r$-recursive \ac{CD} that can be made homogeneous induces a Cartan $r$-grading of $\mfg$.
    Accordingly, we first need to prove the hypothesis for $r=2$ (the base case) by transforming—or lifting—a $2$-recursive \ac{CD} to a pair of \acp{CD} on $\mfg$, which then yields a Cartan $2$-grading via Prop.~\ref{prop:set_of_cds_to_grading}.
    We defer this proof to further below and continue with the induction for now.
    
    Assume that the hypothesis holds for some $r$ and consider the case $r+1$.
    That is, we have an $(r+1)$-recursive \ac{CD} $T_1\to T_2\to\dots \to T_{r+1}$, which we assume has been made homogeneous. This recursion contains a homogeneous $r$-recursive \ac{CD} $T_2\to \dots\to T_{r+1}$ of $\mfk_1$, the vertical space of the very first \ac{CD} of type $T_1$.
    The induction hypothesis for $r$ then implies that there are $r$ \acp{CD} (of types $T'_2,\dots,T_{r+1}'$, say) of $\mfk_1$ that yield a Cartan $r$-grading of $\mfk_1$ which reproduces the $r$-recursive \ac{CD} of $\mfk_1$.
    Then we have $r$ new $2$-recursive \acp{CD} $T_1\to T_i'$, for $2\leq i\leq r+1$, of $\mfg$.
    We force these to be homogeneous again by choosing the same lift on branches of the given $r$-recursive decomposition, which we illustrate below.
    Applying the base case to these $2$-recursive \acp{CD}, we find a pair of ``standard" \acp{CD} $\{T_1, T_i''\}$ of $\mfg$ for each $2\leq i\leq r+1$. Now, the first \ac{CD} is the same for all $i$, so that we obtain $r+1$ compatible \acp{CD} $\{T_1, T_2'',\dots T_{r+1}''\}$ of $\mfg$ overall. To conclude, we note that those \acp{CD} again yield a Cartan $(r+1)$-grading via Prop.~\ref{prop:set_of_cds_to_grading}.
    
    Before completing the proof by proving the base case, we want to illustrate the interplay of homogeneity and the induction step from above.
    For this, consider a $3$-recursive \ac{CD}  $T_1\to T_2\to T_3$ as an example. 
    During the induction step, we first perform the lift $(T_2\to T_3)\nearrow\{T_2,T_3'\}$ via the induction hypothesis. This yields two $2$-recursive \acp{CD} $\{T_1\to T_2, T_1\to T_3'\}$, which then are lifted as $(T_1\to T_2)\nearrow\{T_1,T_2''\}$ and $(T_1\to T_3')\nearrow \{T_1,T_3''\}$, respectively.
    Overall, we obtain the three \acp{CD} of types $T_1$, $T_2''$, and $T_3''$.
    If any of the types AIII, BDI, and CII are contained in the chain, they lead to additional simple components of the intermediate Lie algebras. These are then decomposed independently of each other in the next recursion steps, with the exception of types A, BD, and C.
    This leads to a tree-like structure (with types A, BD, and C causing branches to merge), as discussed in the context of Def.~\ref{def:homogeneity}. This tells us that the recursive structure can be ``resolved" locally, lifting involutions on the simple components independently of each other.
    As the recursion is homogeneous, i.e.,~the decomposition types across all branches are the same at a given recursion depth, this local resolution of the recursion is self-consistent, and we only ever encounter $r=2$ recursions from our explicit calculations above.
    
    For lifts that are not unique in the type they produce, we assume that compatible choices are made between branches. Take, e.g.,~the recursion AIII$\to$(AI$\oplus$AI)$\to$(DIII$\oplus$DIII), which is homogeneous. In order to resolve the recursion into a set of \acp{CD}, we need to first lift each branch, created by the type-AIII \ac{CD}, to the same type of involution, e.g.,~to $\{$AI$\oplus$AI,AIII$\oplus$AIII$\}$. Afterwards, the resulting homogeneous $2$-recursive \acp{CD} AIII$\to$(AI$\oplus$AI) and AIII$\to$(AIII$\oplus$AIII) can be lifted. As a counter-example, the lift ((AI$\oplus$AI)$\to$(DIII$\oplus$DIII))$\nearrow\{$(AI$\oplus$AI),(AIII$\oplus$AII)$\}$ would be invalid.
    
    \textbf{Proof of the base case $r=2$.}
    For the base case $r=2$, we unfortunately did not find a general proof, but a type-based one. Similar to our derivation of involution compositions in App.~\ref{sec:involutions_calculations:compose}, we will use the characterization of all Cartan involutions of classical simple Lie algebras in their defining representation to show that the second involution, defined on the $+1$ eigenspace of the first, can be extended to a Cartan involution on the total space that is compatible with the first involution.  We first discuss the general procedure for a $2$-recursive decomposition in detail and then briefly perform the calculations for all specific type combinations.
    
    We are given two Cartan involutions $\theta_{1,2}$ expressed in the defining representation of their domains. $\theta_1$ is an involution on $\mfg$ whereas $\theta_2$ is an involution on $\mfk$, the defining representation of the $+1$ eigenspace $\tilde{\mfk}_1$ of $\theta_1$. Note that $\mfg$ or $\mfk$ (or both) may be a direct sum of two classical algebras and potentially an abelian center $\mfu(1)$. In small dimensions, the classical algebras might not be simple, e.g.,~$\mfso(2)\cong\mfu(1)$, which is abelian, or $\mfso(4)\cong\mfsu(2)\oplus\mfsu(2)$, which is semisimple.
    Denoting the generic form (see Tab.~\ref{tab:all_cartan_involutions}) of $\theta_1$ by $\theta_0$, we may express it as
    \begin{align}
        \theta_1=\Ad_{G^{-1}} \circ~\theta_0\circ\Ad_{G},\quad G\in\exp(\mfg);
    \end{align}
    also see the characterization in Sec.~\ref{sec:involutions_calculations:characterize}.
    Using an isomorphism $\phi$ from the $+1$ eigenspace of $\theta_0$ to its defining representation $\mfk$, we modify $\theta_2$ to act on $\tilde{\mfk}_1\subset\mfg$:
    \begin{align}
        \theta'_2 = \Ad_{G^{-1}} \circ\phi^{-1}\circ\theta_2\circ \phi \circ\Ad_G.
    \end{align}
    The extension from $\tilde{\mfk}_1$ to $\mfg$ then is as simple as to change the domain of $\theta'_2$, which we then prove to be a valid Cartan involution on $\mfg$ by identifying it with one of the entries in Tab.~\ref{tab:all_cartan_involutions}.
    Next we discuss the required isomorphisms $\phi$ and afterwards we provide an example calculation and additional details on more involved cases.

    \textbf{Isomorphisms $\phi$.}
    We will only use two distinct isomorphisms (up to domain changes) for our case-based proof: the identity and a specific isomorphism $\phi_\times$ on $\mfu(n)$, which we discuss in the following.

    We begin with the isomorphism for DIII and CI involutions, which will turn out to also be useful for types A, BD, and C later on.
    The defining representation of $\mfso(2n)$ is
    \begin{align}
        \mfso(2n)=\left\{\begin{pmatrix}x & y \\ -y^T & z \end{pmatrix}\bigg|x,y,z\in\mbr^{n\times n}, x^T=-x, z^T=-z\right\}.
    \end{align}
    The DIII Cartan involution in canonical form on this representation is $\Ad_{J_n}$, which has the $+1$ eigenspace
    \begin{align}
        \tilde{\mfk}_1=\left\{\begin{pmatrix}x & y \\ -y & x \end{pmatrix}\bigg|x,y\in\mbr^{n\times n} x^T=-x, y^T=y\right\}\cong\mfu(n).
    \end{align}
    Before we define the isomorphism $\phi_\times$, note that the CI Cartan involution in its canonical form, defined on
    \begin{align}
        \mfsp(n)=\left\{\begin{pmatrix}x & y \\ -y^\ast & x^\ast \end{pmatrix}\bigg|x,y\in\mbc^{n\times n}, x^\dagger=-x, y^T=y\right\},
    \end{align}
    is $\Ad_{J_n}$ as well, which equals $\ast$ on this representation. The $+1$ eigenspace for the CI involution thus works out to be the same $\tilde{\mfk}_1$ as from the DIII involution above. This is why we only require one non-trivial isomorphism, which we define now.
    
    Given a DIII or CI involution $\theta_1$ expressed in the definining representation of its domain, an isomorphism from its $+1$ eigenspace $\tilde{\mfk}_1$ to the defining representation of the unitary algebra $\mfk=\mfu(n)=\{x\in\mbc^{n\times n}|x^\dagger =-x\}$ is given by
    \begin{align}\label{eq:def_phi_times}
        \phi_\times:\tilde{\mfk}_1\to\mfk_1,\quad
        \begin{pmatrix}x & y \\ -y & x \end{pmatrix} \mapsto x+iy.
    \end{align}
    The inverse of $\phi_\times$ simply copies the real and imaginary parts into a $2\times 2$ block matrix. To check that it is a vector space isomorphism, note that it is linear and surjective,
    \begin{align}
        v\in\mbc^{n\times n}, v^\dagger=-v\ \ &\Rightarrow\ \  
        \real(v)^T=\real(v^\dagger)=-\real(v),\quad
        \imag(v)^T=\imag(-v^\dagger)=\imag(v),
    \end{align}
    and thus bijective because $\dim \tilde{\mfk}_1=\dim \mfk_1$. Further, it is a Lie algebra isomorphism because
    \begin{align}
        \left[\phi_\times\begin{pmatrix}x & y \\ -y & x \end{pmatrix},\phi_\times\begin{pmatrix}z & w \\ -w & z \end{pmatrix}\right]
        &=[x+iy,z+iw]
        =[x,z]-[y,w]+i[y,z]+i[x,w]\\
        &=\phi_\times\begin{pmatrix}[x,z]-[y,w] & [y,z]+[x,w] \\ -[y,z]-[x,w] & [x,z]-[y,w] \end{pmatrix}
        =\phi_\times\left(\left[\begin{pmatrix}x & y \\ -y & x \end{pmatrix},\begin{pmatrix}z & w \\ -w & z \end{pmatrix} \right]\right).
    \end{align}

    Next, we would like to know the transformation behaviour of an involution $\theta_2=\Ad_G\circ\,\ast^\tau$, $\tau\in\{0,1\}$, on $\mfu(n)$ (or $\mfsu(n)$) when pulled back to $\tilde{\mfk}_1$:
    \begin{align}
        \phi_\times^{-1}\circ\theta_2\circ\phi_\times \begin{pmatrix}x & y \\ -y & x \end{pmatrix}
        &=\phi_\times^{-1}\left(G(x+(-1)^\tau iy)G^\dagger\right)\\
        &=\begin{pmatrix}
            \real(GxG^\dagger)-(-1)^\tau\imag(GyG^\dagger) &
            \imag(GxG^\dagger)+(-1)^\tau \real(GyG^\dagger) \\
            -\imag(GxG^\dagger)-(-1)^\tau \real(GyG^\dagger) &
            \real(GxG^\dagger)-(-1)^\tau\imag(GyG^\dagger) \end{pmatrix}\nonumber\\
        \Rightarrow\qquad\phi_\times^{-1}\circ\theta_2\circ\phi_\times &= \Ad_{\varphi_\tau(G)},
        \quad \varphi_\tau(G)\coloneqq\begin{pmatrix}
        \real(G) & (-1)^\tau \imag(G) \\
        -\imag(G) & (-1)^\tau \real(G) 
        \end{pmatrix}.\label{eq:def_varphi_tau}
    \end{align}
    For our recursion base cases we are interested in the scenarios $G^T=\pm G,\tau=1$ (AI and AII) and $G^\dagger=G, \tau=0$ (AIII). Note that they imply
    \begin{align}
        \varphi_\tau(G)^T
        &=\begin{pmatrix}
        \real(G)^T & - \imag(G)^T \\
        (-1)^\tau\imag(G)^T & (-1)^\tau \real(G)^T 
        \end{pmatrix}
        =\begin{cases}
            \varphi_\tau(G) & \text{ for } G^T=G, \tau=1,\\
            -\varphi_\tau(G) & \text{ for } G^T=-G, \tau=1,\\
            \varphi_\tau(G) & \text{ for } G^\dagger=G, \tau=0.
        \end{cases}
    \end{align}
    Here we used $\real(G)^T=\pm\real(G)$, $\imag(G)^T=\pm\imag(G)$ for $G^T=\pm G, \tau=1$ and $\real(G)^T=\real(G)$, $\imag(G)^T=-\imag(G)$ for $G^\dagger=G, \tau=0$.
    In addition, we compute
    \begin{align}
        \det{\varphi_\tau(G)}&=\det{(-1)^\tau\real(G)^2-\imag(G)^2}
        =\begin{cases}
            (-1)^n & \text{ for } G^T=G, \tau=1,\\
            1 & \text{ for } G^T=-G, \tau=1,\\
            1 & \text{ for } G^\dagger=G, \tau=0,
        \end{cases}
    \end{align}
    using $[\real(G),\imag(G)]=0$, $\real(G)^2+\imag(G)^2=\pm\id$ for $G^T=\pm G, \tau=1$ and $\{\real(G),\imag(G)\}=0$, $\real(G)^2-\imag(G)^2=\id$ for $G^\dagger=G, \tau=0$, which all follow from $G^\dagger G=\id$. These commutation rules then allow us to compute the determinant~\cite{powell2011calculating}.

    We also note that $\phi_\times$ can be used for involutions of types A, BD, and C as well. We illustrate this with a type-A involution, the other cases follow analogously.
    The $+1$ eigenspace of the type-A Cartan involution in canonical form is
    \begin{align}
        \tilde{\mfk}_2=\left\{ x\oplus x | x\in\mfsu(n)\right\}\cong\mfsu(n).
    \end{align}
    Changing the domain of $\phi_\times$ from $\tilde{\mfk}_1$ to $\tilde{\mfk}_2$ yields the isomorphism $x\oplus x\mapsto x$ onto the defining representation of $\mfsu(n)$. We omit the proof that this is still an isomorphism, as its inverse is a canonical reducible representation of $\mfsu(n)$.

    \textbf{Calculation example.}
    Let us go through the calculation for the $2$-recursion AII$\to$CI in detail, which is staged on $\mfsu(2n)$.
    The canonical involution for AII is $\theta_0=\Ad_{J_n}\circ\,\ast$, which needs to be combined with a basis choice $U\in \grsu(2n)$ to obtain any other Cartan involution. The involution for CI in any basis of the defining representation of $\mfsp(n)$ is $\theta_2=\Ad_S\circ\,\ast$ with $S\in \grsp(n)$ and $S^T=S$; see the alternative characterization in Sec.~\ref{sec:involutions_calculations:characterize}. 
    % Tab.~\ref{tab:all_cartan_involutions}.
    % Now, due to $\Ad_{J_n^{-1}}=\ast$ on $\grsp(n)$, we may rewrite this involution as $\theta_2=\Ad_{S_0}\circ\Ad_{J_n}\circ\,\ast=\Ad_{S}\circ\,\ast$ with $S\coloneqq S_0J_n\in\grsp(n)$ and $S^T=J_n^T S_0^T=-S_0^\dagger J_n=S$.
    As the $+1$ eigenspace of $\theta_0$ on the defining representation of $\mfsu(2n)$ matches the defining representation of $\mfsp(n)$\footnote{That is to say, unitary symplectic generators are defined as unitary block matrices with additional constraints imposed on the blocks.}, we may use the trivial isomorphism $\phi=\mathrm{id}$.
    Together, this yields the pulled-back map 
    \begin{align}
        \theta'_2
        &=\Ad_{U^\dagger}\circ\mathrm{id}^{-1}\circ[\Ad_S\circ\,\ast]\circ\mathrm{id}\circ\Ad_U
        =\Ad_{U^\dagger S U^\ast}\circ\,\ast=\Ad_{S'}\circ\,\ast,\\
        \text{with}\ \ S'^T&=U^\dagger S^T U^\ast=S', \ \ \det{S'}=\det{U^\dagger}\ \det{S}\ \det{U}=1,
    \end{align}
    which is an involution of type AI.
    To understand that lifts are not unique, consider the original expression $\theta_2=\Ad_S$ with $S^\dagger=-S$. Repeating the above calculation, we find
    \begin{align}
        \theta'_2
        &=\Ad_{U^\dagger}\circ\mathrm{id}^{-1}\circ[\Ad_S]\circ\mathrm{id}\circ\Ad_U
        =\Ad_{U^\dagger S U}=\Ad_{iU^\dagger S U}=\Ad_{S'},\\
        \text{with}\ \ S'^\dagger&=-iU^\dagger S^\dagger U=S', \ \ \det{S'}=i^{2n}\det{U^\dagger}\ \det{S}\ \det{U}=(-1)^n,
    \end{align}
    indicating that this extension $\theta'_2$ is of type AIII.

    For all other cases, we abstract the calculation away into a tabular form in Tab.~\ref{tab:2_recursive_lifts}, commenting on noteworthy details in the following.
    
    \begin{table*}
        \centering
        \resizebox{0.95\textwidth}{!}{%
        \begin{tabular}{llccccclc}
            $T_1\rightarrow T_2$ & & &
            $\theta_0$ & $\theta_2$ & \\
            $\qquad\nearrow \{T_1,T_2'\}$ & \mr{$\mfg$} & \mr{basis $G\in\exp(\mfg)$} & $\theta'_2$ &
            Calculations & \mr{$\phi$} \\\midrule
            
            \belowrulesepcolor{tabgrey}\rc{tabgrey} A$\to$AI & & &
            $\swapsymbol$ & $\Ad_V\circ\,\ast, V\in \grsu(n), V^T=V$ & \\
            \rc{tabgrey} $\qquad\nearrow\{$A,AI$^{\oplus 2}\}$ & \mr{$\mfsu(n)^{\oplus 2}$} & \mr{$U\in \grsu(n)^{\times 2}$} &  $\Ad_{V'=U^\dagger V^{\oplus 2}U^\ast}\circ \ast$ &
            $V'^T=U^\dagger (V^T)^{\oplus 2} U^\ast=V', \det{V'}=1$ & \mr{$\phi_\times$} \\
            
            A$\to$AII & & &
            $\swapsymbol$ & $\Ad_W\circ\,\ast, W\in \grsu(2n), W^T=-W$ & \\
            $\qquad\nearrow\{$A,AII$^{\oplus 2}\}$ &\mr{$\mfsu(2n)^{\oplus 2}$} & \mr{$U\in \grsu(2n)^{\times 2}$}&  $\Ad_{W'=U^\dagger W^{\oplus 2}U^\ast}\circ \ast$ &
            $W'^T=U^\dagger (W^T)^{\oplus 2} U^\ast=-W', \det{W'}=1$ & \mr{$\phi_\times$} \\

            \rc{tabgrey} A$\to$AIII & & &
            $\swapsymbol$ & $\Ad_H, H\in \gru(n), H^\dagger=H, \det{H}\in\{\pm 1\}$ & \\
            \rc{tabgrey} $\qquad\nearrow\{$A,AIII$^{\oplus 2}\}$ &\mr{$\mfsu(n)^{\oplus 2}$} & \mr{$U\in \grsu(n)^{\times 2}$}&  $\Ad_{H'=U^\dagger H^{\oplus 2}U}$ &
            $H'^\dagger=U^\dagger (H^\dagger)^{\oplus 2} U=H', \det{H'}=1$ & \mr{$\phi_\times$} \\ 
            
            AI$\to$BDI & & &
            $\ast$ & $\Ad_R, R\in \gro(n), R^T=R$ &\\
            $\qquad\nearrow\{$AI,AIII$\}$ & \mr{$\mfsu(n)$} & \mr{$U\in \grsu(n)$} &  $\Ad_{H=U^\dagger R U}$ &
            $H^\dagger=U^\dagger R^\dagger U=H, \det{H}=\det{R}\in\{\pm1\}$ & \mr{id} \\
            
            \rc{tabgrey} AI$\to$DIII & & &
            $\ast$ & $\Ad_L, L\in \grso(n), L^T=-L$ & \\
            \rc{tabgrey} $\qquad\nearrow\{$AI,AIII$\}$ & \mr{$\mfsu(2n)$} & \mr{$U\in \grsu(2n)$} &  $\Ad_{H=iU^\dagger L U}$ &
            $H^\dagger=-iU^\dagger L^\dagger U=H, \det{H}=(-1)^n$ & \mr{id} \\
            
            AII$\to$CI && &
            $\Ad_{J_n}\circ\,\ast$ & $\Ad_S\circ\,\ast, S\in \grsp(n), S^T=S$ & \\
            $\qquad\nearrow\{$AII,AI$\}$ & \mr{$\mfsu(2n)$} & \mr{$U\in \grsu(2n)$} &  $\Ad_{V=U^\dagger S U^\ast}\circ\,\ast$ &
            $V^T=U^\dagger S^T U^\ast=V, \det{V}=1$ & \mr{id} \\
            
            \rc{tabgrey} AII$\to$CII  & & &
            $\Ad_{J_n}\circ\,\ast$ & $\Ad_P, P\in \grsp(n), P^\dagger=P$ & \\
            \rc{tabgrey} $\qquad\nearrow\{$AII,AIII$\}$ &\mr{$\mfsu(2n)$} & \mr{$U\in \grsu(2n)$} &  $\Ad_{H=U^\dagger P U}$ &
            $H^\dagger=U^\dagger P^\dagger U=H, \det{H}=1$ & \mr{id} \\
            
            AIII$\to$A & & &
            $\Ad_{I_{n,n}}$ & $\Ad_{X}\circ\swapsymbol, X\in \grsu(n)^{\times 2}, X^\dagger=X^{\smallswap}$ & \\
            $\qquad\nearrow\{$AIII,AIII$\}$ & \mr{$\mfsu(2n)$} & \mr{$U\in \grsu(2n)$} &  $\Ad_{H=U^\dagger X U^{\smallswap}\smallswap}$ &
            $H^\dagger=\swapsymbol {U^{\smallswap}}^\dagger X^\dagger U=H, \det{H}=(-1)^n$ & \mr{id} \\
            
            \rc{tabgrey} AIII$\to$AI$^{\oplus 2}$ & & &
            $\Ad_{I_{p,q}}$ & $\Ad_{V}\circ\,\ast, V\in \grsu(p)\times \grsu(q), V^T=V$ & \\
            \rc{tabgrey} $\qquad\nearrow\{$AIII,AI$\}$ & \mr{$\mfsu(n)$} & \mr{$U\in \grsu(n)$} &  $\Ad_{V'=U^\dagger V U^\ast}\circ\,\ast$ &
            $V'^T=U^\dagger V^T U^\ast=V', \det{V'}=1$ & \mr{id} \\
            
            AIII$\to$AII$^{\oplus 2}$ & & &
            $\Ad_{I_{2p,2q}}$ & $\Ad_{W}\circ\,\ast, W\in \grsu(2p)\times \grsu(2q), W^T=-W$ & \\
            $\qquad\nearrow\{$AIII,AII$\}$ & \mr{$\mfsu(2n)$} & \mr{$U\in \grsu(2n)$} &  $\Ad_{W'=U^\dagger W U^\ast}\circ\,\ast$ &
            $W'^T=U^\dagger W^T U^\ast=-W', \det{W'}=1$ & \mr{id} \\
            
            \rc{tabgrey} AIII$\to$AIII$^{\oplus 2}$ & & &
            $\Ad_{I_{p,q}}$ & $\Ad_{H}, H\in \gru(p)\times \gru(q), H^\dagger=H,\det{H}\in\{\pm 1\}$ & \\
            \rc{tabgrey} $\qquad\nearrow\{$AIII,AIII$\}$ & \mr{$\mfsu(n)$} & \mr{$U\in \grsu(n)$} &  $\Ad_{H'=U^\dagger H U}$ &
            $H'^\dagger=U^\dagger H^\dagger U=H', \det{H'}\in\{\pm 1\}$ & \mr{id} \\
            \aboverulesepcolor{tabgrey}\midrule
            
            BD$\to$BDI & & &
            $\swapsymbol$ & $\Ad_R, R\in \gro(n), R^T=R$ & \\
            $\qquad\nearrow\{$BD,BDI$^{\oplus 2}\}$ & \mr{$\mfso(n)^{\oplus 2}$} & \mr{$Q\in \grso(n)^{\times 2}$} &  $\Ad_{R'=Q^T R^{\oplus 2}Q}$ &
            $R'^T=Q^T (R^T)^{\oplus 2} Q=R', \det{R'}=1$ & \mr{$\phi_\times$} \\
            
            \rc{tabgrey} BD$\to$DIII & & &
            $\swapsymbol$ & $\Ad_L, L\in \grso(2n), L^T=-L$ & \\
            \rc{tabgrey} $\qquad\nearrow\{$BD,DIII$^{\oplus 2}\}$) & \mr{$\mfso(2n)^{\oplus 2}$} & \mr{$Q\in \grso(2n)^{\times 2}$} &  $\Ad_{L'=Q^T L^{\oplus 2}Q}$ &
            $L'^T=Q^T (L^T)^{\oplus 2} Q=-L', \det{L'}=1$ & \mr{$\phi_\times$} \\
            
            BDI$\to$BD & & &
            $\Ad_{I_{n,n}}$ & $\Ad_{X}\circ\swapsymbol, X\in \grso(n)^{\times 2}, X^T=X^{\smallswap}$ & \\
            $\qquad\nearrow\{$BDI,BDI$\}$ & \mr{$\mfso(2n)$} & \mr{$Q\in \grso(2n)$} & $\Ad_{V=Q^T X Q^{\smallswap}\smallswap}$ &
            $V^T=\swapsymbol {Q^{\smallswap}}^T X^T Q=V, \det{V}=(-1)^n$ & \mr{id} \\            
            \rc{tabgrey} BDI$\to$BDI$^{\oplus 2}$ & & &
            $\Ad_{I_{p,q}}$ & $\Ad_{R}, R\in \gro(p)\times \gro(q), R^T=R$ & \\
            \rc{tabgrey} $\qquad\nearrow\{$BDI,BDI$\}$ & \mr{$\mfso(n)$} & \mr{$Q\in \grso(n)$} &  $\Ad_{R'=Q^T R Q}$ &
            $R'^T=Q^T R^T Q=R', \det{R'}\in\{\pm 1\}$ & \mr{id} \\
            
            BDI$\to$DIII$^{\oplus 2}$ & & &
            $\Ad_{I_{2p,2q}}$ & $\Ad_{L}, L\in \grso(2p)\times \grso(2q), L^T=-L$ & \\
            $\qquad\nearrow\{$BDI,DIII$\}$ & \mr{$\mfso(2n)$} & \mr{$Q\in \grso(2n)$} &  $\Ad_{L'=Q^T L Q}$ &
            $L'^T=Q^T L^T Q=-L', \det{L'}=1$ & \mr{id} \\
            % \aboverulesepcolor{tabgrey}\midrule
            
            \rc{tabgrey} DIII$\to$AI & & &
            $\Ad_{J_n}$ & $\Ad_{V}\circ\,\ast, V\in \grsu(n), V^T=V$ & \\
            \rc{tabgrey} $\qquad\nearrow\{$DIII,BDI$\}$ & \mr{$\mfso(2n)$} & \mr{$Q\in \grso(2n)$} &  $\Ad_{R=Q^T \varphi_1(V) Q}$ &
            $R^T=Q^T\varphi_1(V)^T Q=R, \det{R}=(-1)^n$ & \mr{$\phi_\times$} \\
            
            DIII$\to$AII & & &
            $\Ad_{J_{2n}}$ & $\Ad_{W}\circ\,\ast, W\in \grsu(2n), W^T=-W$ & \\
            $\qquad\nearrow\{$DIII,DIII$\}$ & \mr{$\mfso(4n)$} & \mr{$Q\in \grso(4n)$} & $\Ad_{L=Q^T \varphi_1(W) Q}$ &
            $L^T=Q^T\varphi_1(W)^T Q=-L, \det{L}=1$ & \mr{$\phi_\times$} \\
            
            \rc{tabgrey} DIII$\to$AIII & & &
            $\Ad_{J_n}$ & $\Ad_{H}, H\in \gru(n), H^\dagger=H, \det{H}\in\{\pm1\}$ & \\
            \rc{tabgrey} $\qquad\nearrow\{$DIII,BDI$\}$ & \mr{$\mfso(2n)$} & \mr{$Q\in \grso(2n)$} &  $\Ad_{R=Q^T \varphi_0(H) Q}$ &
            $R^T=Q^T \varphi_0(H)^T Q=R, \det{R}=1$ & \mr{$\phi_\times$} \\
            \aboverulesepcolor{tabgrey}\midrule
            
            C$\to$CI & & &
            $\swapsymbol$ & $\Ad_S, S\in \grsp(n), S^\dagger=-S$ &\\
            $\qquad\nearrow\{$C,CI$^{\oplus 2}\}$ &\mr{$\mfsp(n)^{\oplus 2}$} & \mr{$Z\in \grsp(n)^{\times 2}$} &  $\Ad_{S'=Z^\dagger S^{\oplus 2}Z}$ &
            $S'^\dagger=Z^\dagger (S^\dagger)^{\oplus 2} Z=-S', \det{S'}=1$ & \mr{$\phi_\times$} \\
            
            \rc{tabgrey} C$\to$CII & & &
            $\swapsymbol$ & $\Ad_P, P\in \grsp(n), P^\dagger=P$ & \\
            \rc{tabgrey} $\qquad\nearrow\{$C,CII$^{\oplus 2}\}$ & \mr{$\mfsp(n)^{\oplus 2}$} & \mr{$Z\in \grsp(n)^{\times 2}$} &  $\Ad_{P'=Z^\dagger P^{\oplus 2}Z}$ &
            $P'^\dagger=Z^\dagger (P^\dagger)^{\oplus 2} Z=P', \det{P'}=1$ & \mr{$\phi_\times$} \\
            
            CI$\to$AI & & & 
            $\Ad_{J_{2n}}$ & $\Ad_{V}\circ\,\ast, V\in \grsu(2n), V^T=V$ & \\
            $\qquad\nearrow\{$CI,CII$\}$ & \mr{$\mfsp(2n)$} & \mr{$Z\in \grsp(2n)$} &  $\Ad_{P=Z^\dagger \varphi_1(V) Z}$ &
            $P^\dagger=Z^\dagger \varphi_1(V)^\dagger Z=P, \det{P}=1$ & \mr{$\phi_\times$} \\
            
            \rc{tabgrey} CI$\to$AII & & &
            $\Ad_{J_{2n}}$ & $\Ad_{W}\circ\,\ast, W\in \grsu(2n), W^T=-W$ & \\
            \rc{tabgrey} $\qquad\nearrow\{$CI,CI$\}$ & \mr{$\mfsp(2n)$} & \mr{$Z\in \grsp(2n)$} &  $\Ad_{S=Z^\dagger \varphi_1(W)Z}$ &
            $S^\dagger=Z^\dagger \varphi_1(W)^\dagger Z=-S, \det{S}=1$ & \mr{$\phi_\times$} \\
            
            CI$\to$AIII & & & 
            $\Ad_{J_n}$ & $\Ad_{H}, H\in \gru(n), H^\dagger=H, \det{H}\in\{\pm1\}$ & \\
            $\qquad\nearrow\{$CI,CII$\}$ & \mr{$\mfsp(n)$} & \mr{$Z\in \grsp(n)$} &  $\Ad_{P=Z^\dagger \varphi_0(H)Z}$ &
            $P^\dagger=Z^\dagger \varphi_0(H)^\dagger Z=P, \det{P}=1$ & \mr{$\phi_\times$} \\
            
            \rc{tabgrey} CII$\to$C & & &
            $\Ad_{K_{n,n}}$ & $\Ad_X\circ\swapsymbol, X\in \grsp(n)^{\times 2}, X^\dagger=X^{\smallswap}$ & \\
            \rc{tabgrey} $\qquad\nearrow\{$CII,CII$\}$ & \mr{$\mfsp(2n)$} & \mr{$Z\in \grsp(2n)$} &  $\Ad_{P=Z^\dagger X Z^{\smallswap}\smallswap}$ &
            $P^\dagger=\swapsymbol {Z^{\smallswap}}^\dagger X^\dagger Z=P, \det{P}=(-1)^{2n}=1$ & \mr{id} \\
            
            CII$\to$CI$^{\oplus 2}$ & & &
            $\Ad_{K_{p,q}}$ & $\Ad_S, S\in \grsp(p)\times \grsp(q),S^\dagger=-S$ & \\
            $\qquad\nearrow\{$CII,CI$\}$ & \mr{$\mfsp(n)$} & \mr{$Z\in \grsp(n)$} &  $\Ad_{S'=Z^\dagger S Z}$ &
            $S'^\dagger=Z^\dagger S^\dagger Z=-S', \det{S'}=1$ & \mr{id} \\
            
            \rc{tabgrey} CII$\to$CII$^{\oplus 2}$ & & &
            $\Ad_{K_{p,q}}$ & $\Ad_P, P\in \grsp(p)\times \grsp(q),P^\dagger=P$ & \\
            \rc{tabgrey} $\qquad\nearrow\{$CII,CII$\}$ & \mr{$\mfsp(n)$} & \mr{$Z\in \grsp(n)$} &  $\Ad_{P'=Z^\dagger P Z}$ &
            $P'^\dagger=Z^\dagger P^\dagger Z=P', \det{P'}=1$ & \mr{id} \\
        \end{tabular}%
        }
        \caption{How to lift any (homogeneous) $2$-recursive \ac{CD}, $(T_1\to T_2)\nearrow \{T_1,T_2'\}$, proving the base case for Thm.~\ref{thm:grading_recursion_equivalence}.~The proof requires us to verify that for compatible $\theta_{1}=\Ad_{G^{-1}}\theta_0 \Ad_G$ and $\theta_2$, where $G\in\exp(\mfg)$ expresses the basis choice of $\theta_1$,  the extension $\theta'_2$ of $\theta_2$ is a \ac{CD} on $\mfg$, which we calculate here.
        $\phi_\times$ is defined in Eq.~(\ref{eq:def_phi_times}), $\varphi_\tau$ for $\tau=0,1$ in Eq.~(\ref{eq:def_varphi_tau}).}
        \label{tab:2_recursive_lifts}
    \end{table*}
    
    \textbf{Details on remaining cases.}
    For AI$\to$DIII our calculation would yield the conjugation operator $H_0=U^\dagger L U$ and $\theta'_2=\Ad_{H_0}$, which satisfies $H_0^\dagger=-H$. However, there is no Cartan involution of this form on $\mfsu(n)$. We note that $\Ad_{iH_0}=\Ad_{H_0}$ and $(iH_0)^\dagger=iH_0$, so that we may use $H'=iH_0$ as conjugation operator and find $\theta'_2$ to be of type AIII.
    For AIII$\to$A (and similarly for BDI$\to$BD and CII$\to$C), our computation is
    \begin{align}
        \theta'_2=\Ad_{U^\dagger} \circ \Ad_X \circ \swapsymbol \circ \Ad_U
        = \Ad_{U^\dagger X U^{\smallswap}\smallswap}=\Ad_H,\quad H^\dagger = \swapsymbol {U^{\smallswap}}^\dagger X^\dagger U = U^\dagger {X^\dagger}^{\smallswap} U^{\smallswap}\swapsymbol=H,
    \end{align}
    where we used $X^\dagger=X^{\smallswap}$ and denoted in a slight misuse of notation the conjugation operator that gives rise to the map $\swapsymbol$ as $\swapsymbol$ again. Note that this SWAP operator has determinant $\det{\swapsymbol}=(-1)^n$.
    For cases starting with CI or DIII, refer to the explicit calculations for $\phi_\times$ above.  
\end{proof}

\textbf{Inhomogeneous recursions.}
In the proof above we calculated a lift for all homogeneous $r=2$-recursive \acp{CD}.
Here we briefly comment on the inhomogeneous case, which is excluded in the statement of Thm.~\ref{thm:grading_recursion_equivalence}.
There are three inhomogeneous involution types on $\mfsu(p)\oplus\mfsu(q)$, which emerges as the subalgebra of an AIII involution: AI$\oplus$AIII, AI$\oplus$AII, and AII$\oplus$AIII (note that AIII includes the identity involution for $p=n, q=0$ or vice versa).
The inhomogeneity implies that the former involutions cannot be extended to the (simple) algebra $\mfsu(p+q)$, because we know the form that all involutions on simple algebras take, and a partial complex conjugation or conjugation by a partially symmetric and partially antisymmetric operator are not among them.
The same holds for the orthogonal (symplectic) cases, with the indefinite involution type given by BDI$\oplus$DIII (CII$\oplus$CI) and the semisimple algebra arising from a BDI (CII) decomposition.
With this, we know that the following base cases for $r=2$ cannot be lifted to a pair of compatible \acp{CD} on the total algebra:
AIII$\to($AI$\oplus$AIII$)$,
AIII$\to($AII$\oplus$AIII$)$,
AIII$\to($AI$\oplus$AII$)$,
BDI$\to($DIII$\oplus$BDI$)$, and
CII$\to($CI$\oplus$CII$)$.

Conversely, there are in fact inhomogeneous recursions that can be lifted to a grading. Take, for example, the $3$-recursive \ac{CD} AIII$\to$(AI$\oplus$AI)$\to$(DIII$\oplus$BDI), which is inhomogeneous in the last involution.
If we lift the DIII and BDI decompositions to AIII decompositions, we find the set of two \acp{CD} $\{\text{AIII}\to\text{(AI$\oplus$AI)}, \text{AIII}\to\text{(AIII$\oplus$AIII)}\}$, which can be lifted further to a set of three \acp{CD} with types $\{\text{AIII},\text{AI},\text{AIII}\}$.

\subsection{Grading from Khaneja-Glaser decomposition}\label{sec:grading_for_literature}
Here we use our results from the previous section to obtain a grading from the \ac{KGD}.
This question has been considered in~\cite[Sec.~4.1]{dagli2008general}, but without considering the fact that the final step in the \ac{KGD} differs in type from the other steps in the recursion; see App.~\ref{sec:kgd_details}.
Specifically, the \ac{KGD} for $N$ qubits performs $N-2$ times the sequence AIII$\to$A and concludes with a single type-AI decomposition of $\gru(4)$, arriving at a $(2N-3)$-recursive \ac{CD}.
This means that we first need to lift A$\to$AI, resulting in AI$^{\oplus 2}$ alongside the original type-A decomposition of $\gru(4)\times\gru(4)$, i.e.,~two $(2N-4)$-recursive \acp{CD}.
Next, we lift AIII$\to$A and AIII$\to$AI$^{\oplus 2}$, resulting in AIII and AI, respectively. We obtain three $(2N-5)$-recursive \acp{CD} and observe that reducing the recursion depth by two produced two type-AIII \acp{CD} and reproduced the AI decomposition.
The last thing we need to note then is that lifting A$\to$AIII results in AIII$^{\oplus 2}$, which in turn is lifted to AIII.
This happens alongside the lift of the AI decomposition, which we can draw schematically as
\begin{align}
    \left\{(\mathrm{AIII}\to\mathrm{A})^{\to N-2}\to \mathrm{AI}\right\}
    \!\!\overset{\text{lift by }2}{\nearrow}\!\!
    \left\{\!\begin{array}{l} 
    (\mathrm{AIII}\to\mathrm{A})^{\to N-3}\to \mathrm{AI},\\
    (\mathrm{AIII}\to\mathrm{A})^{\to N-3}\to \mathrm{AIII},\\
    (\mathrm{AIII}\to\mathrm{A})^{\to N-3}\to \mathrm{AIII}
    \end{array}\!\right\}
    \!\!\overset{\text{lift by }2}{\nearrow}\!\!
    \left\{\!\begin{array}{l} 
    (\mathrm{AIII}\to\mathrm{A})^{\to N-4}\to \mathrm{AI},\\
    (\mathrm{AIII}\to\mathrm{A})^{\to N-4}\to \mathrm{AIII},\\
    (\mathrm{AIII}\to\mathrm{A})^{\to N-4}\to \mathrm{AIII},\\
    (\mathrm{AIII}\to\mathrm{A})^{\to N-4}\to \mathrm{AIII},\\
    (\mathrm{AIII}\to\mathrm{A})^{\to N-4}\to \mathrm{AIII}
    \end{array}\!\right\}
    \overset{\cdots}{\nearrow}
    \left\{\!\begin{array}{c} 
    \mathrm{AI},\\
    \mathrm{AIII},\\
    \vdots\\
    \mathrm{AIII}
    \end{array}\!\right\},\nonumber
\end{align}
with one type-AI and $(2N-4)$ type-AIII decompositions on the RHS.
The total number of \acp{CD} differs from the categorization in~\cite{dagli2008general} by $2$, because it assumes the recursion $(\text{AIII}\to\text{A})^{\to (N-1)}\to\text{AIII}$.
However, we may still use their first $2N-4$ \acp{CD} and simply replace their remaining three type-AIII \acp{CD} by one type-AI \ac{CD}.
The rotated AI decomposition in the magic basis is described easiest through its involution,
\begin{align}
    \theta_{\rm AI} = \Ad_{Y_{N-1}Y_{N}}\circ\,\ast.
\end{align}
It can trivially be extended from $\mfu(4)$ to the total algebra $\mfu(2^N)$ and maintains its type to be AI, as expected from our discussion above.
The \ac{CD} of $\mfu(2^N)$ to replace the three type-AIII \acp{CD} thus is
\begin{align}
    \mfk_{\rm AI} &= \spanR\left\{\mfu\!\left(2^{N-1}\right)\otimes \id, \mfu\!\left(2^{N-2}\right)\otimes \id\otimes\mfsu(2)\right\},\\
    \mfp_{\rm AI} &= \spanR\left\{\mfu\!\left(2^{N-2}\right)\otimes \{X, Y, Z\}\otimes \{X, Y, Z\}\right\}.
\end{align}

\section{Details on numerical \kak decompositions}\label{sec:numerical_details}
In this section, we discuss our unified perspective on numerical \kak decomposition algorithms from Sec.~\ref{sec:numerical_decomps} and collect important details for the practical implementation provided in~\cite{symmetrycompilationrepo}.
The key statements are summarized in Thm.~\ref{thm:abstract_numerical_non_general}, the proof of which tells us how to use an \ac{EVD} of the relative complex structure $\Delta$ to construct a \kak decomposition of the original group element $G\in\groupG$.
Before we prove the theorem, we will introduce some helper objects and show some of their useful properties.
We begin with some computations with involutions in the standard representation.

\begin{lemma}\label{lemma:conjugate_eigenvector}
    Let $\groupP=\groupG/\groupK$ be a symmetric space from Tab.~\ref{tab:symm_classif} with involution $\Theta$ in its defining representation. Then $\Theta=\Ad_U\circ\ \ast^\tau$, with $\tau\in\{0,1\}$ and we define $\varphi:\mbc^n\to\mbc^n, v\mapsto U v^{\ast^\tau}$.
    Then for any group element $K\in\groupK$ ($P\in\groupP$) with eigenvectors $v_j$ for the eigenvalues $\lambda_j$, $\varphi(v_j)$ are eigenvectors as well, with eigenvalues $\lambda_j^{\ast^\tau}$ ($\lambda_j^{\ast^{1-\tau}}$).
    Note that $U\in\groupG$ for types AI, AII, DIII, CI and CII, $U\in\gru(p+q)$ with $\det{U}=\pm 1$ for type AIII, $U\in\gro(p+q)$ for BDI, and $U=U'\swapsymbol$ with $U'\in\groupG$ for types A, BD and C.
\end{lemma}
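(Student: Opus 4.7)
My plan is to reduce the claim to the two identities $\Theta(K)=K$ for $K\in\groupK$ and $\Theta(P)=P^{-1}$ for $P\in\exp(\mfp)$, which follow by exponentiating $\theta|_{\mfk}=\id$ and $\theta|_{\mfp}=-\id$. Writing $\Theta=\Ad_U\circ\,\ast^\tau$ in the defining representation (with $\tau=0$ in the swap cases, since $\swapsymbol$ is itself conjugation by a permutation matrix), these identities read
\begin{align}
U K^{\ast^\tau}=K\,U,\qquad U P^{\ast^\tau}=P^{-1} U.
\end{align}

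With this in hand, the calculation for $K$ is a one-liner: for $Kv=\lambda v$, apply $\ast^\tau$ and left-multiply by $U$ to get
\begin{align}
K\,\varphi(v) \;=\; K U v^{\ast^\tau} \;=\; U K^{\ast^\tau}v^{\ast^\tau} \;=\; U(Kv)^{\ast^\tau} \;=\; \lambda^{\ast^\tau}\,\varphi(v),
\end{align}
which is the desired eigenvalue relation. For $P$ I would do exactly the same manipulation using the second identity, obtaining $P\,\varphi(v)=(\lambda^{-1})^{\ast^\tau}\varphi(v)$ when $Pv=\lambda v$. Because $P$ is unitary, $\lambda$ lies on the unit circle so $\lambda^{-1}=\lambda^\ast$, and the exponents collapse via $(\lambda^\ast)^{\ast^\tau}=\lambda^{\ast^{1-\tau}}$, giving exactly the stated eigenvalue.

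To finish, I would verify the stated domain of $U$ case by case by reading off Tab.~\ref{tab:all_cartan_involutions}: for AI, AII, DIII, CI, CII the conjugating element already lies in $\groupG$ itself; for AIII and BDI it lies in the slightly larger $\gru(p+q)$ with $\det=\pm1$ or in $\gro(p+q)$, because the canonical involution involves $I_{p,q}$; and for the three swap types A, BD, C we have $\Theta=\Ad_{U'}\circ\swapsymbol$ with $U'\in\groupG$ acting diagonally on the doubled algebra, which combines into $U=U'\swapsymbol$ as asserted.

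The only mild obstacle is bookkeeping in the swap cases: one must verify that $\swapsymbol$ genuinely acts on $\mbc^n$ as conjugation by an invertible matrix (so that the manipulation above is literally a matrix identity rather than an identity of abstract maps), and that the factorization $U=U'\swapsymbol$ then respects $\Theta(M)=UM U^{-1}$ in the ambient embedding of $\mfg\oplus\mfg$ into the larger classical algebra. Once this is noted, the manipulation is identical to the non-swap case with $\tau=0$, and the lemma follows.
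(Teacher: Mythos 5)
Your proof is correct, and it takes a genuinely different route from the paper's. The paper establishes three auxiliary trace identities (that $\tr[\Theta(G)]=\tr[G]^{\ast^\tau}$, that $\varphi(v)\varphi(w)^\dagger = \Theta(vw^\dagger)$, and that $\varphi$ preserves orthonormality), then computes the matrix element $\varphi(v_i)^\dagger G\varphi(v_j)$ via a trace manipulation using $\Theta$ as a group homomorphism, and finally reads off the eigenvalue relation by appealing to orthonormality of both $\{v_j\}$ and $\{\varphi(v_j)\}$. You instead rewrite $\Theta(K)=K$ and $\Theta(P)=P^{-1}$ as the matrix identities $UK^{\ast^\tau}=KU$ and $UP^{\ast^\tau}=P^{-1}U$ and apply them directly to the eigenvalue equation, which is more elementary and notably works for a single eigenvector rather than requiring a full orthonormal eigenbasis. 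What the paper's longer route buys is the isometry property of $\varphi$ (their Eq.~\eqref{eq:varphi_is_isometry}), which they reuse later (e.g.\ in Lemmas~\ref{lemma:ai_orthogonal} and~\ref{lemma:aii_symplectic}); your proof leaves that as a separate observation. One small point worth tightening: your justification ``by exponentiating $\theta|_\mfk=\id$ and $\theta|_\mfp=-\id$'' covers $\exp(\mfk)$ and $\exp(\mfp)$; for $\groupK$ this suffices only because the $\groupK$'s in Tab.~\ref{tab:symm_classif} happen to be connected, and for $\groupP$ one should note that one is implicitly identifying $\groupP$ with $\exp(\mfp)\subset\groupG$, consistent with Prop.~\ref{prop:horizontal_kak_decomp}. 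Also, for $P$ unitary, $\Theta(P)=P^{-1}=P^\dagger$, which is exactly the relation the paper invokes, so the two formulations agree.
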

\begin{proof}
    First, note that, for any $v,w\in\mbc^n$ and any orthonormal basis $\{a_i\}_i$,
    \begin{align}
        \tr[\Theta(G)]&=\tr[U G^{\ast^\tau} U^\dagger]=\tr[G]^{\ast^\tau},\label{eq:trace_and_Theta}\\
        \varphi(v)\varphi(w)^\dagger&=\Theta(vw^\dagger), \label{eq:connection_varphi_Theta}\\
        \varphi(a_i)^\dagger\varphi(a_j)
        &=\tr[\varphi(a_j)\varphi(a_i)^\dagger]
        \overset{(\ref{eq:connection_varphi_Theta})}{=}\tr[\Theta(a_ja_i^\dagger)]
        \overset{(\ref{eq:trace_and_Theta})}{=}\tr[a_ja_i^\dagger]^{\ast^\tau}
        =(a_i^\dagger a_j)^{\ast^\tau}
        =\delta_{ij}.\label{eq:varphi_is_isometry}
    \end{align}
    Note as well that $\Theta$ of this form is a group homomorphism, i.e.,~$\Theta(AB) = \Theta(A)\Theta(B)$.
    We perform the proof simultaneously for $G\in\groupK$ ($G\in\groupP$).
    $G\in\groupK$ ($G\in\groupP$) implies $\Theta(G)=G$ ($\Theta(G)=G^\dagger$) and thus, for an orthonormal eigenbasis $\{v_j\}$ with eigenvalues $\{\lambda_j\}_j$,
    \begin{align}
        \varphi(v_i)^\dagger G \varphi(v_j)
        &=\tr[G \varphi(v_j)\varphi(v_i)^\dagger]
        \overset{(\ref{eq:connection_varphi_Theta})}{=}\tr[\Theta\left(\Theta(G) v_jv_i^\dagger\right)]
        \overset{(\ref{eq:trace_and_Theta})}{=}\tr[G^{(\dagger)} v_jv_i^\dagger]^{\ast^\tau}
        \overset{Gv_j=\lambda_j v_j}{=}\tr[\lambda^{(\ast)} v_jv_i^\dagger]^{\ast^\tau}\nonumber\\
        &=\lambda^{\ast^{\tau(+1)}}(v_i^\dagger v_j)^{\ast^\tau}.
    \end{align}
    The orthonormality of $\{v_j\}_j$ and $\{\varphi(v_j)\}_j$ (via Eq.~(\ref{eq:varphi_is_isometry})) then yields that $G \varphi(v_j)=\lambda^{\ast^{\tau(+1)}}\varphi(v_j)$, i.e.,~$\varphi(v_j)$ is again an eigenvector of $G$, with eigenvalue $\lambda_j^{\ast^{\tau}}$ ($\lambda_j^{\ast^{\tau+1}}=\lambda_j^{\ast^{1-\tau}}$).
\end{proof}

Next, we will prove two lemmas that we will use throughout the construction of the numerical \kak decompositions below.
We begin by re-proving~\cite[Lemma~5.2]{fuhr2018note}, in order to connect the general perspective in the lemma above to a specific involution.

\begin{lemma}\label{lemma:ai_orthogonal}
    Let $\Delta$ be symmetric unitary, i.e.,~$\Delta\in\gru(n)$ and $\Delta^T=\Delta$. Then a real orthogonal eigenbasis can be obtained from any eigenbasis of $\Delta$.
\end{lemma}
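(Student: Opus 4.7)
The plan is to recognize this as a special case of Lemma~\ref{lemma:conjugate_eigenvector} applied to the AI involution $\Theta = {\ast}$ (i.e., $U = \id$ and $\tau = 1$) on $\gru(n)/\grso(n)$. A symmetric unitary $\Delta$ lies in the horizontal subspace $\groupP$ of this symmetric space: indeed $\Delta^T = \Delta$ together with $\Delta^\dagger\Delta = \id$ is equivalent to $\Theta(\Delta) = \Delta^\ast = \Delta^{-1} = \Delta^\dagger$, which is the defining property of elements in $\groupP$. Hence Lemma~\ref{lemma:conjugate_eigenvector} applies and tells us that for every eigenvector $v_j$ with eigenvalue $\lambda_j$, the image $\varphi(v_j) = v_j^\ast$ is again an eigenvector, with eigenvalue $\lambda_j^{\ast^{1-\tau}} = \lambda_j$. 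In other words, every eigenspace $E_\lambda$ of $\Delta$ is closed under complex conjugation.

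Next, I would convert this invariance into the existence of a real basis within each eigenspace. For any $v \in E_\lambda$, both $\real(v) = \tfrac{1}{2}(v + v^\ast)$ and $\imag(v) = \tfrac{1}{2i}(v - v^\ast)$ lie in $E_\lambda \cap \mbr^n$, and $v = \real(v) + i\,\imag(v)$. Thus the real subspace $E_\lambda^{\mbr} \coloneqq E_\lambda \cap \mbr^n$ satisfies $E_\lambda^{\mbr} \otimes_{\mbr} \mbc = E_\lambda$, so $\dim_{\mbr} E_\lambda^{\mbr} = \dim_{\mbc} E_\lambda$. Starting from the given eigenbasis, splitting each vector into its real and imaginary parts and then applying Gram-Schmidt within $E_\lambda^{\mbr}$ produces a real orthonormal basis of $E_\lambda$.

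To conclude, I would assemble these per-eigenspace bases into a global real orthonormal eigenbasis. Eigenvectors belonging to distinct eigenvalues of the unitary $\Delta$ are automatically orthogonal, so orthogonality across eigenspaces is preserved. Collecting the real bases over all distinct $\lambda$ yields a real orthogonal basis of $\mbc^n$ that diagonalizes $\Delta$, which—written as a matrix $O$—gives the desired factorization $\Delta = O D O^T$ with $O \in \gro(n)$ and $D$ diagonal unitary.

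I do not expect any significant obstacle in this argument: the only subtlety is correctly applying Lemma~\ref{lemma:conjugate_eigenvector} with the right value of $\tau$ and on the correct side ($\groupP$ rather than $\groupK$) so that the eigenvalue is preserved under $\varphi$. The remainder is a standard basis-reconstruction argument exploiting conjugation-invariance of the eigenspaces.
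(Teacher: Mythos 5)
Your proof is correct and rests on exactly the same key lemma as the paper: both invoke Lemma~\ref{lemma:conjugate_eigenvector} with $\Theta = \ast$ to conclude that each eigenspace $E_\lambda$ is invariant under complex conjugation. Where you diverge is in the finishing step. The paper works eigenvector by eigenvector with explicit phase manipulations — normalizing a self-conjugate eigenvector by its first nonzero entry, and for $v^\ast \not\propto v$ choosing a phase $\alpha$ so that $\tilde v^T \tilde v$ becomes real before forming the real combinations $v'_\pm \propto \tilde v \pm \tilde v^\ast$. You instead observe that $E_\lambda^\mathbb{R} = E_\lambda \cap \mathbb{R}^n$ is a real form of $E_\lambda$ (so $\dim_\mathbb{R} E_\lambda^\mathbb{R} = \dim_\mathbb{C} E_\lambda$), collect the real and imaginary parts of the given eigenvectors, and apply Gram--Schmidt within each $E_\lambda^\mathbb{R}$. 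Your route is cleaner in one respect: by doing Gram--Schmidt inside the full real eigenspace, you sidestep the bookkeeping the paper's "recombine $v$ with $v^\ast$" prescription requires (if $v_1^\ast$ overlaps with another basis vector $v_2$, a naive pair-by-pair recombination would not preserve orthogonality, which the paper glosses over with the word "filter"). The paper's version, on the other hand, is more explicitly algorithmic — it gives closed-form expressions for the new vectors rather than deferring to Gram--Schmidt — which is in line with the paper's emphasis on direct numerical implementation. Both arguments are valid and reach the same conclusion.
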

\begin{proof}
    Note that $\Delta^T=\Delta$ implies $\Delta^\ast=\Delta^\dagger$, so that $\Delta\in\groupP$ with respect to the generic type-AI involution $\Theta=\ast$. Thus, the ``$\groupP$-case" of Lemma~\ref{lemma:conjugate_eigenvector} can be applied to $\Delta$, telling us that for any eigenvector $v$ of $\Delta$ with eigenvalue $\lambda$, $v^\ast$ is an eigenvector with eigenvalue $\lambda$ as well. In the following, assume that $v$ is normalized.
    
    If $v^\ast\propto v$, let $\tilde{v}=v/v_0$, where $v_0$ is the first non-zero entry of $v$.
    Then $\tilde{v}^\ast$ is an eigenvector with $\tilde{v}^\ast\propto \tilde{v}$, still, which actually implies $\tilde{v}^\ast=\tilde{v}$, so that $\tilde{v}$ is evidently real.
    If $v^\ast\not\propto v$, we define $\tilde{v}\coloneqq e^{i\alpha}v$, with $\alpha$ a yet-to-be-specified phase, and calculate the overlap of $\tilde{v}$ with its complex conjugate,
    \begin{align}
        (\tilde{v}^\ast)^\dagger \tilde{v}=\tilde{v}^T \tilde{v}=e^{2i\alpha}v^Tv \eqqcolon e^{2i\alpha}(a+ib).
    \end{align}
    Then, we select the phase to be $\alpha=-\tfrac{1}{2}\arctan(b/a)$ for $a\neq 0$ and to $\tfrac{\pi}{4}$ for $a=0$, so that the above overlap is purely real, i.e., $\imag(\tilde{v}^T \tilde{v})=0$. The following new vectors then form an orthonormal basis of the two-dimensional eigenspace $\spanC\{v,v^\ast\}$ with eigenvalue $\lambda$,
    \begin{align}
        v'_\pm =\sqrt{\pm 1} \frac{\tilde{v} \pm \tilde{v}^\ast}{\|\tilde{v}\pm \tilde{v}^\ast\|_2},
        \quad \|\tilde{v}'_\pm\|_2=\frac{\|\tilde{v}\pm \tilde{v}^\ast\|_2}{\|\tilde{v}\pm \tilde{v}^\ast\|_2}=1,\quad (\tilde{v}'_+)^\dagger \tilde{v}'_-\propto\underset{1}{\underbrace{\tilde{v}^\dagger \tilde{v}}}\underset{2 \imag(\tilde{v}^T \tilde{v})=0}{\underbrace{-\tilde{v}^\dagger \tilde{v}^\ast+\tilde{v}^T\tilde{v}}}-\underset{1}{\underbrace{\tilde{v}^T \tilde{v}^\ast}}=0.
    \end{align}
    Further, these vectors are real-valued,
    \begin{align}
        {v'}^\ast_\pm=\pm\sqrt{\pm 1}\frac{\tilde{v}^\ast \pm \tilde{v}}{\|\tilde{v}\pm \tilde{v}^\ast\|_2}=v'_\pm,
    \end{align}
    where we used $\sqrt{\pm 1}^\ast=\pm\sqrt{\pm 1}$.
    Overall, we thus need to filter a generic eigenbasis for eigenvectors that additionally are linearly independent under complex conjugation, apply global phases to them, and recombine any $v$ that has $v^\ast\not\propto v$ with its complex conjugate $v^\ast$.
\end{proof}

Note that while the next lemma makes a statement about $\groupK$ and $\groupP$ in the symplectic case, the parentheses are swapped, i.e.,~the expressions in parentheses correspond to $\groupK$, not $\groupP$ like in Lemma~\ref{lemma:conjugate_eigenvector}.

\begin{lemma}\label{lemma:aii_symplectic}
    Let $\Delta$ be a $J_n$-symmetric (a $J_n$-anti-symmetric) unitary, i.e.,~$\Delta\in\gru(n)$ and $J_n\Delta^T J_n^T=\Delta$ ($J_n\Delta^T J_n^T=\Delta^\dagger$) . Then a symplectic eigenbasis can be obtained from any eigenbasis of $\Delta$.
\end{lemma}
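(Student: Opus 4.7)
The plan is to adapt the strategy of Lemma~\ref{lemma:ai_orthogonal}, replacing complex conjugation with the antilinear map $\varphi:v\mapsto J_n v^\ast$, which will play the role of a quaternionic structure. First I would identify which of the two stated conditions corresponds to $\Delta\in\groupP$ versus $\Delta\in\groupK$ with respect to the canonical AII Cartan involution $\Theta_0=\Ad_{J_n}\circ\,\ast$. A short calculation using $J_n^T=J_n^{-1}=-J_n$ shows that conjugating and then transposing $\Theta_0(K)=K$ yields $J_nK^TJ_n^T=K^\dagger$, while the analogous manipulation of $\Theta_0(P)=P^\dagger$ yields $J_nP^TJ_n^T=P$. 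This places the lemma in the setting of Lemma~\ref{lemma:conjugate_eigenvector} with $\tau=1$ and $U=J_n$, so that $\varphi(v_j)$ is again an eigenvector of $\Delta$, with eigenvalue $\lambda_j$ in the $\groupP$ case and $\lambda_j^\ast$ in the $\groupK$ case.

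Three key properties of $\varphi$ then follow immediately from $J_n^2=-\id$ and $J_n^T=-J_n$: it satisfies $\varphi^2=-\id$, it obeys the overlap identity $\langle\varphi(v),\varphi(w)\rangle=\overline{\langle v,w\rangle}$, and $\langle v,\varphi(v)\rangle=-v^T J_n v=0$ by antisymmetry of $J_n$. In particular $v$ and $\varphi(v)$ are always linearly independent, since $\varphi(v)=cv$ would force $-v=\varphi^2(v)=|c|^2 v$, and they already form the basic building block of a symplectic pair.

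The construction of the eigenbasis then splits into two sub-cases. If $\Delta\in\groupK$ and $\lambda\neq\lambda^\ast$, the eigenspaces $E_\lambda$ and $E_{\lambda^\ast}$ are distinct and $\varphi$ provides a bijection between them that preserves inner products up to conjugation, so any orthonormal basis $\{v_j\}$ of $E_\lambda$ automatically yields an orthonormal basis $\{\varphi(v_j)\}$ of $E_{\lambda^\ast}$ and a natural symplectic pairing. In the remaining situations—$\Delta\in\groupP$, or $\Delta\in\groupK$ with real $\lambda\in\{\pm 1\}$—the map $\varphi$ preserves $E_\lambda$. Here I would proceed by induction on $\dim E_\lambda$: pick any unit $v_1\in E_\lambda$, note that $\{v_1,\varphi(v_1)\}$ is orthonormal by the third property above, and verify that $E'=\mathrm{span}(v_1,\varphi(v_1))^\perp\cap E_\lambda$ is itself $\varphi$-invariant; iteration then produces an orthonormal basis of $E_\lambda$ arranged in pairs $(v_j,\varphi(v_j))$ and forces $\dim E_\lambda$ to be even. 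Assembling all such pairs across the eigenspaces column-wise into a matrix of the form $[u_1,\ldots,u_n,-J_n u_1^\ast,\ldots,-J_n u_n^\ast]$ yields the block form $\left(\begin{smallmatrix}A&B\\-B^\ast&A^\ast\end{smallmatrix}\right)$ characteristic of $\grsp(n)$ by construction.

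The main obstacle I anticipate is establishing the $\varphi$-invariance of $E'$ in the inductive step: one must show that $u\perp v_1$ and $u\perp\varphi(v_1)$ together imply that $\varphi(u)$ is orthogonal to both. Orthogonality of $\varphi(u)$ to $\varphi(v_1)$ is immediate from the overlap identity, while orthogonality to $v_1$ requires first rewriting $v_1=-\varphi^2(v_1)$ and invoking the overlap identity again. This is the single place where the quaternionic identity $\varphi^2=-\id$ is genuinely used, and it is also what rules out stranding an odd ``leftover'' vector at the end of the iteration.
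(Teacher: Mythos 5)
Your proof is correct and follows the same overall strategy as the paper's: identify the two hypotheses with $\Delta\in\groupP$ and $\Delta\in\groupK$ for the canonical type-AII involution, apply Lemma~\ref{lemma:conjugate_eigenvector} with $\varphi(v)=J_nv^\ast$, use the antisymmetry of $J_n$ to get $v\perp\varphi(v)$, and assemble the pairs $(v_j,\varphi(v_j))$ into a symplectic matrix. You are in fact more explicit than the paper on one point: the paper writes ``If we arrange all pairs as unitary matrix $(J_nv^\ast_1,\dots,J_nv^\ast_n,v_1,\dots,v_n)$'' without explaining how to select the $v_j$ from an arbitrary eigenbasis so that the resulting $2n$ vectors are mutually orthonormal (in particular, so that the overlaps $v_i^\dagger J_nv_j^\ast$ vanish for $i\neq j$ within a degenerate eigenspace, and so that no $\varphi(v_i)$ is already accounted for among the chosen $v_j$). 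Your inductive argument — picking $v_1$, forming $\varphi(v_1)$, verifying $\varphi$-invariance of the orthogonal complement via $\varphi^2=-\id$ and the overlap identity, then iterating — fills exactly this gap and also explains why the relevant eigenspace dimensions are forced to be even, which the paper leaves implicit. One small stylistic difference: the paper finishes by directly computing $M^T J_n M = J_n$, whereas you appeal to the block form $\bigl(\begin{smallmatrix}A&B\\-B^\ast&A^\ast\end{smallmatrix}\bigr)$ together with unitarity; both are valid characterizations of $\grsp(n)$, though the former is slightly more self-contained.
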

\begin{proof}
    We begin with the $J_n$-symmetric case.
    Note that $J_n\Delta^T J_n^T=\Delta$ implies $J_n\Delta^\ast J_n^T=\Delta^\dagger$, so that for the involution $\Theta=\Ad_{J_n}\circ\, \ast$ (of type AII), we have $\Delta\in\groupP$, and we can use Lemma~\ref{lemma:conjugate_eigenvector}.
    It tells us that for any eigenvector $v$ with eigenvalue $\lambda$, $J_nv^\ast$ is an eigenvector with eigenvalue $\lambda$ as well.
    Writing $v=u\oplus w$, compute
    \begin{align}\label{eq:orthogonal_conjugate_vector_sp}
        v^\dagger J_n v^\ast=u^\dagger w^\ast - w^\dagger u^\ast=\sum_{j=1}^n u_j^\ast w_j^\ast-w_j^\ast u_j^\ast=0,
    \end{align}
    so that the conjugate eigenvectors $\{v, J_nv^\ast\}$ with eigenvalue $\lambda$ form an orthonormal basis.
    If we arrange all pairs as unitary matrix $(J_nv^\ast_1, \dots, J_nv^\ast_n, v_1, \dots, v_n)$, it is additionally symplectic with respect to $J_n$,
    \begin{align}
        \begin{pmatrix}
        \rowrule & v^\dagger_1 J_n^T & \rowrule \\ 
        & \vdots & \\
        \rowrule & v^\dagger_n J_n^T & \rowrule \\ 
        \rowrule & v_1^T & \rowrule \\ 
        & \vdots & \\
        \rowrule & v_n^T & \rowrule \\ 
        \end{pmatrix}
        J_n
        \begin{pmatrix}
            | &  &| & | &  & | \\
            J_nv^\ast_1 & \dots & J_nv^\ast_n & v_1 & \dots & v_n \\
            | &  &| & | & & | 
        \end{pmatrix}
        =
        \begin{pmatrix}
            v_i^\dagger J_n^T J_n J_n v_j^\ast & v_i^\dagger J_n^T J_n v_j \\
            v_i^T J_n J_n v^\ast_j & v_i^T J_n v_j \\
        \end{pmatrix}
        =
        \begin{pmatrix}
            0 & \id_n \\
            -\id_n & 0 \\
        \end{pmatrix}
        =
        J_n,
    \end{align}
    where we used Eq.~(\ref{eq:orthogonal_conjugate_vector_sp}) for the block-diagonal and $J_n^TJ_n=\id=-J_n^2$ as well as $v_i^\dagger v_j=\delta_{ij}$ for the block-off-diagonal entries.

    For the $J_n$-anti-symmetric case, note that $J_n\Delta^T J_n^T=\Delta^\dagger$ implies $J_n\Delta^\ast J_n^T=\Delta$, so that Lemma~\ref{lemma:conjugate_eigenvector} applies as before, but with $\Delta\in\groupK$. This implies that for $v$ an eigenvector with eigenvalue $\lambda$, $J_nv^\ast$ is an eigenvector with eigenvalue $\lambda^\ast$. Everything else remains the same, so that we can use the same symplectic basis as above.

\end{proof}

Throughout the proof below, we will refer to the \emph{relative complex structure} $\Delta\coloneqq G\Theta(G)^\dagger$ of a group element $G\in\groupG$ and a Cartan involution (on the group), $\Theta$.
It will be relevant that $\Delta$ extracts the horizontal part of $G$, in the sense of a KP decomposition $G=PK$ (c.f.~Thm.~\ref{thm:kp_decomp}),
\begin{align}
    \Delta
    =G \Theta(G)^\dagger
    =PK \Theta(PK)^\dagger
    =PK(P^\dagger K)^\dagger
    =P^2.
\end{align}

The following is a useful generalization of a computation in~\cite{fuhr2018note}:
\begin{lemma}\label{lemma:horizontal_to_kak_decomp}
    Let $\groupP=\groupG/\groupK$ be a symmetric space from Tab.~\ref{tab:symm_classif} with \ac{CSG} $\groupA$, and let $G\in\groupG$ with relative complex structure $\Delta\in\groupP$.
    Then a horizontal \kak decomposition (c.f.~Prop.~\ref{prop:horizontal_kak_decomp}) $\Delta=K_1A^2K_1^\dagger$ with $K_1\in\groupK$ and $A\in\groupA$ implies a \kak decomposition, $G=K_1AK_2$, with $K_2=A^\dagger K_1^\dagger G\in\groupK$.
\end{lemma}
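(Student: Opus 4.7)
The plan is to verify directly that $K_2 = A^\dagger K_1^\dagger G$ satisfies two things: (i) it reproduces the stated product $G = K_1 A K_2$, and (ii) it lies in $\groupK$. The first is immediate by construction: $K_1 A K_2 = K_1 A A^\dagger K_1^\dagger G = G$, so no real work is needed there. The substance of the lemma is the membership $K_2 \in \groupK$.

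To establish (ii), I would use the characterization $\groupK = \{H \in \groupG \mid \Theta(H) = H\}$ coming from the Cartan involution. Apply $\Theta$ to $K_2$ and use that $\Theta$ is a group homomorphism on the defining representation (as noted after Tab.~\ref{tab:all_cartan_involutions}, where $\Theta = \theta$). Two elementary facts then compress the computation:
\begin{itemize}
\item $\Theta(K_1) = K_1$, since $K_1 \in \groupK$.
\item $\Theta(A) = A^{-1} = A^\dagger$, since $A = \exp(a)$ with $a \in \mfa \subset \mfp$, on which $\theta$ acts as $-\mathrm{id}$.
\end{itemize}
These give $\Theta(K_2) = \Theta(G)^{\phantom\dagger}\!\Theta(K_1)^\dagger \Theta(A)^\dagger = \Theta(G)\, K_1^\dagger A$ if we apply $\Theta$ on the right, or equivalently, reordering, the task reduces to verifying the identity $A^\dagger K_1^\dagger G = \Theta(G)\, K_1^\dagger A$ (up to suitable daggers depending on conventions).

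The key algebraic step is then to rearrange this desired identity and recognize it as exactly the hypothesis $\Delta = K_1 A^2 K_1^\dagger$. Concretely, multiplying the target equation on the right by $\Theta(G)^\dagger$ (or on the left by $K_1 A$, as convenient) and using the definition $\Delta = G\,\Theta(G)^\dagger$ collapses both sides to the horizontal \kak identity that was assumed. Since all steps are reversible, this shows $\Theta(K_2) = K_2$, completing the proof.

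I do not expect any genuine obstacle here: the lemma is essentially the observation that the horizontal \kak decomposition of $\Delta = P^2$ (where $P$ comes from the KP decomposition $G = PK$) transfers to a \kak decomposition of $G$ itself, with the remaining $\groupK$-factor absorbing the difference. The only mild care needed is to confirm that the argument uses only $\Theta$ being an involutive group homomorphism on the defining representation, plus the $\pm 1$ eigenspace decomposition at the group level, both of which are established earlier in Sec.~\ref{sec:cartan_involution}.
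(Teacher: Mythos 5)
Your central computation is correct and matches the paper's: showing $\Theta(K_2)=K_2$ via the homomorphism property of $\Theta$, the relations $\Theta(K_1)=K_1$, $\Theta(A)=A^{-1}$, and the identity $\Delta=G\Theta(G)^\dagger=K_1A^2K_1^\dagger$. The paper organizes this as $\Theta(K_2)^\dagger K_2=\id$ and you organize it as $\Theta(K_2)=K_2$, but the algebra is the same.

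However, there is a genuine gap in the final step, and it is precisely the step the paper devotes the second half of its proof to. You assert the characterization $\groupK=\{H\in\groupG\mid\Theta(H)=H\}$, but this is \emph{not} true in general for the symmetric spaces in Tab.~\ref{tab:symm_classif}. The fixed-point set $\overline{\groupK}\coloneqq\{H\in\groupG\mid\Theta(H)=H\}$ is a closed subgroup, but it can be disconnected, and $\groupK$ is only its identity component. For example, in the AI case $\groupG=\gru(n)$, $\Theta=\ast$, the fixed-point set is $\gro(n)$, whereas $\groupK=\grso(n)$; in the BDI case $\groupG=\grso(p+q)$, $\Theta=\Ad_{I_{p,q}}$, the fixed-point set is $\{O_1\oplus O_2\mid O_i\in\gro,\ \det{O_1}\det{O_2}=1\}$, which has two connected components while $\groupK=\grso(p)\times\grso(q)$ is only one of them. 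Your computation therefore only shows $K_2\in\overline{\groupK}$, not $K_2\in\groupK$. The paper closes this gap with a topological argument: it constructs a continuous path $\phi(t)=A^{-t}K_1^{1-2t}G^t$ inside $\overline{\groupK}$ with $\phi(0)=K_1$ and $\phi(1)=K_2$, so $K_1$ and $K_2$ lie in the same connected component of $\overline{\groupK}$, which is $\groupK$ since $K_1\in\groupK$. Your phrase ``the $\pm 1$ eigenspace decomposition at the group level'' is exactly where this is elided — groups are not linear spaces, and the fixed-point set of a group involution need not be connected even when the corresponding Lie-algebra eigenspace decomposition is perfectly well behaved.
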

\begin{proof}
    Given a horizontal decomposition $\Delta=K_1A^2K_1^\dagger$, first note that we can compute a (non-unique) square root $A$ of $A^2$ for any of the \acp{CSG} in Tab.~\ref{tab:numerical_overview}. Such a root can be mapped to other bases and other representations of $\groupA$ via basis changes and group isomorphisms, respectively.
    Next, compute for $K_2=A^\dagger K_1^\dagger G$,
    \begin{align}
        \Theta(K_2)^\dagger K_2
        &=\big(\underset{A}{\underbrace{\Theta(A^\dagger)}}\underset{K_1^\dagger}{\underbrace{\Theta(K_1^\dagger)}}\Theta(G) \big)^\dagger A^\dagger K_1^\dagger G
        =\Theta(G)^\dagger \underset{(K_1A^2K_1^\dagger)^\dagger=\Delta^\dagger}{\underbrace{(K_1 (A^2)^\dagger K_1^\dagger)}} G
        =\Theta(G)^\dagger \underset{\Theta(G) G^\dagger}{\underbrace{\Delta^\dagger}} G
        =\id.
    \end{align}
    This implies that $\Theta(K_2)^\dagger=K_2^\dagger$, and thus $\Theta(K_2)=K_2$.
    Next, define $\overline{\groupK}\coloneqq\{G\in\groupG|\Theta(G)=G\}$ so that $K_2\in\overline{\groupK}$. We have $\groupK\subset\overline{\groupK}$, but in general $\groupK\neq\overline{\groupK}$, as $\overline{\groupK}$ can consist of multiple connected components, each diffeomorphic to $\groupK$.
    The spaces $\groupP$, $\groupG$ and $\groupK$ are (path-)connected and contain the identity, allowing us to create smooth paths $t\mapsto U^t$ for $t\in[0, 1]$ and $U$ from any of those three spaces.
    Multiplying such paths together, we define $\phi: [0, 1]\to \overline{\groupK}, t\mapsto A^{-t}K_1^{(1-2t)}G^t$.
    We have $\phi(0)=K_1$ and $\phi(1)=A^\dagger K_1^\dagger G=K_2$, so that $\phi$ connects $K_1$ and $K_2$, implying that they both must be in the same connected component of $\overline{\groupK}$, namely $\groupK$.
\end{proof}

With these general facts in our hands, we are ready to prove the main theorem, which we restate for convenience.

\abstractnumericalnongeneral*
\begin{proof}
    Throughout, we assume that eigenvalue decompositions are performed over the complex numbers and obtained eigenvectors are orthonormal.
    
    \textbf{A.} We have $\groupG=\gru(n)\times\gru(n)$, $\groupK=\{K\oplus K|K\in\gru(n)\}$ and $\groupA=\{D\oplus D^\dagger|D\in\grudiag(n)\}$.\\\noindent
    This decomposition is constructed in the proof of~\cite[Thm.~12]{shende2005synthesis}. As we will adapt it for types BD and C below, we recite it here.
    Given a group element $G=U\oplus U'\in\groupG$, compute $\delta=UU'^\dagger\in\gru(n)$\footnote{Note that $\Delta=(U\oplus U')(U'\oplus U)^\dagger=\delta\oplus\delta^\dagger$.}.
    A standard \ac{EVD} of $\delta$ yields $\delta=U_1 D^2 U_1^\dagger$, with $U_1\in\gru(n)$ and $D^2\in\grudiag(n)$.
    % , where
    % \begin{align}
    %     \grudiag(n)=\left\{\diag(e^{i\theta_1},\dots,e^{i\theta_n})| \theta_j\in\mbr, 1\leq j\leq n \right\}.
    % \end{align}
    We compute a (non-unique) square root $D$ from $D^2$ and set $U_2=DU_1^\dagger U'$. Then
    \begin{align}\label{eq:num_type_A_proof}
        (U_1\oplus U_1) (D\oplus D^\dagger)(U_2\oplus U_2)
        =(\underset{\delta=UU'^\dagger}{\underbrace{U_1DDU_1^\dagger}} U')\oplus(U_1D^\dagger DU_1^\dagger U')
        =U\oplus U'=G,
    \end{align}
    which is the sought-after \kak decomposition of $G$.
    
    \textbf{AI.} We have $\groupG=\gru(n)$, $\groupK=\grso(n)$ and $\groupA=\grudiag(n)$.\\\noindent
    This proof very closely follows~\cite[Thm.~5.1]{fuhr2018note}, but restricts $\groupK$ from $\gro(n)$ to $\grso(n)$ and puts it into a broader context using $\Delta$ and Lemma.~\ref{lemma:horizontal_to_kak_decomp}.
    Given $U\in\gru(n)$, we first extract a global phase and set $\tilde{U}=\sqrt[n]{\det{U}}^{-1}U\in\grsu(n)$, where $\sqrt[n]{\det{U}}$ is any $n$th root of the determinant.
    Then, compute $\Delta=\tilde{U}\Theta(\tilde{U})^\dagger=\tilde{U}\tilde{U}^T$, where we used that $\Theta=\ast$ in the generic case.
    Next, we compute the eigenvectors of $\Delta$, which we could assume to be real-valued because $\Delta^T=\Delta$ is symmetric. However, if we only assume access to generic eigenvalue decompositions, an explicit orthogonalization is provided in Lemma~\ref{lemma:ai_orthogonal} above.
    Now we have a horizontal decomposition of $\Delta=O_1 \tilde{D}^2 O_1^\dagger$ with $O_1\in\grso(n)$, where we can guarantee $\det{O_1}=1$ by flipping the sign of any eigenvector, if necessary. After computing a (non-unique) square root $\tilde{D}$ of $\tilde{D}^2$ with $\det{\tilde{D}}=1$\footnote{Note that $\det{\tilde{D}^2}=\det{\Delta}=\det{\tilde{U}}^2=1$.}, Lemma~\ref{lemma:horizontal_to_kak_decomp} tells us that $O_2\coloneqq \tilde{D}^\dagger O_1^\dagger \tilde{U} \in\grso(n)$. Reintroducing the global phase via $D=\sqrt[n]{\det{U}} \tilde{D}$, we obtain the desired \kak decomposition,
    \begin{align}
        O_1 D O_2=\sqrt[n]{\det{U}} O_1 \tilde{D} \tilde{D}^\dagger O_1^\dagger \tilde{U}=\sqrt[n]{\det{U}}\tilde{U}=U.
    \end{align}
    
    Our perspective on this proof not only makes the subgroup more precise by setting $\groupK=\grso(n)\subset\gro(n)$ and connects it to other involution types, but it also allows us to easily generalize to type-AI decompositions in any other basis of $\gru(n)$.
    
    \textbf{AII.} We have $\groupG=\gru(2n)$, $\groupK=\grsp(n)$ and $\groupA=\{D\oplus D|D\in\grudiag(n)\}$.\\\noindent
    A type-AII decomposition was constructed in \cite[Sec.~IV~A]{bullock2005time}, using a decomposition of a $J_n$-anti-symmetric Lie algebra element, employing a QR decomposition along the way. Our proof provides a different path to the decomposition, following that for type AI instead.
    
    Given $U\in\gru(2n)$, we again obtain $\tilde{U}\in\grsu(2n)$ by extracting $\sqrt[2n]{\det{U}}$ and compute $\Delta=\tilde{U}\Theta(\tilde{U})^\dagger=\tilde{U}J_n\tilde{U}^TJ_n^T$, which follows from $\Theta=\Ad_{J_n}\circ\,\ast$.
    Next, we compute the eigenvectors of $\Delta$ and make them symplectic via Lemma~\ref{lemma:aii_symplectic}.
    Then we have a horizontal decomposition, $\Delta=S_1 (\tilde{D}^2\oplus \tilde{D}^2) S_1^\dagger$ with $S_1\in\grsp(n)$ and compute a (non-unique) square root $\tilde{D}$ of $\tilde{D}^2$ with $\det{\tilde{D}}=1$.
    Applying Lemma~\ref{lemma:horizontal_to_kak_decomp} tells us that $S_2\coloneqq (\tilde{D}\oplus \tilde{D})^\dagger S_1^\dagger \tilde{U} \in\grsp(n)$. Reintroducing the global phase via $D=\sqrt[2n]{\det{U}} \tilde{D}$, we obtain the desired \kak decomposition,
    \begin{align}
        S_1 (D\oplus D) S_2=\sqrt[2n]{\det{U}} S_1 (\tilde{D}\oplus \tilde{D}) (\tilde{D}\oplus\tilde{D})^\dagger S_1^\dagger \tilde{U}=\sqrt[2n]{\det{U}}\tilde{U}=U.
    \end{align}
    
    \textbf{AIII.} We have $\groupG=\gru(p+q)$, $\groupK=\gru(p)\times\gru(q)$ and $\groupA=\grcs(p,q)$.\\\noindent
    As was noted repeatedly in the literature~\cite{bullock2005time,fuhr2018note,edelman2023fifty}, the type-AIII \kak decomposition can be implemented with a \ac{CSD}, which we here consider an elementary building block.
    We just remark that the implementation \texttt{scipy.linalg.cossin} in SciPy~\cite{virtanen2020scipy} based on~\cite{sutton2009computing}, which we use in our code, requires a minor modification for the case $p>q$ to adapt it to our choice of \ac{CSG}. Given $U=K_1 FK_2$, this is achieved by exchanging the first $q$ with the following $p-q$ columns (rows / rows and columns) of $K_1$ ($K_2$ / $F$), which leaves the structure of $K_1$ and $K_2$ intact and moves $F$ into our \ac{CSG}.

    \textbf{BD.} We have $\groupG=\grso(n)\times\grso(n)$, $\groupK=\{K\oplus K|K\in\grso(n)\}$ and $\groupA=\{\mu\oplus \mu^T|\mu\in\grschur(n)\}$.\\\noindent
    We will proceed analogously to the construction for type A, but using a Schur decomposition instead of an \ac{EVD}.
    Given $G=O\oplus O'\in\groupG$, compute $\delta=OO'^T$ and perform a real-valued Schur decomposition $\delta=O_1 \mu^2 O_1^\dagger$ with $O_1\in \gro(n)$ and $\mu^2\in\grschur(n)$.
    If $\det{O_1}=-1$, swap two columns of $O_1$ and the according rows \emph{and} columns of $\mu$, so that $O_1\in\grso(n)$ and $\mu\in\grschur(n)$.
    Then, compute a (non-unique) square root $\mu$ of $\mu^2$ and set $O_2=\mu O_1^\dagger O'$, so that $\det{O_2}=\det{\mu}\det{O_1}^{-1}\det{O'}=1$, because Schur matrices have determinant $1$, $\det{O_1}=1$ was guaranteed manually, and $O'\in\grso(n)$. Analogously to Eq.~(\ref{eq:num_type_A_proof}), we find the \kak decomposition $(O_1\oplus O_1)(\mu\oplus\mu^T)(O_2\oplus O_2)$ of $G=O\oplus O'$.
    
    \textbf{BDI.} We have $\groupG=\grso(p+q)$, $\groupK=\grso(p)\times\grso(q)$ and $\groupA=\grcs(p,q)$.\\\noindent
    The construction is analogous to that for AIII, simply using a \ac{CSD}. The available implementation in SciPy returns real-valued, i.e.,~orthogonal matrices if provided with a real-valued input.
    Similar to AIII, we reorder the blocks to adapt to our choice of \ac{CSA}, finding
    \begin{align}
        O = K_1 F K_2 = (O_{1,p}\oplus O_{1,q}) F (O_{2,p}\oplus O_{2,q}),
    \end{align}
    with $O_{i,p}\in\gro(p)$, $O_{i,q}\in\gro(q)$ and $F\in\grcs(p,q)$.
    To conclude, we need to adjust the determinants of $O_{i,p}$ and $O_{i,q}$ to one.
    Happily, this is not difficult to do. First, we note that we have
    \begin{equation}\label{eq:detconst}
        1 = \det{G} = \det{O_{1,p}}\det{O_{1,q}}\det{F}\det{O_{2,p}}\det{O_{2,q}},
    \end{equation}
    with $\det{F}=1$ because $F\in\grcs(p,q)$, so that the only thing that can go wrong is that pair(s) of the above matrices $O_{i,r}$ have determinant $-1$.
    We may fix those by multiplying the first column (row) of $O_{1,r}$ ($O_{2,r}$) by the determinant of the respective $O_{i,r}$ and applying the same sign changes to $F$:
    \begin{align}
        G = (O_{1,p}\oplus O_{1,q}) d_1^2 F d_2^2(O_{2,p}\oplus O_{2,q})
        =(\widetilde{O}_{1,p}\oplus \widetilde{O}_{1,q}) \widetilde{F} (\widetilde{O}_{2,p}\oplus \widetilde{O}_{2,q}),
    \end{align}
    where $d_i=\diag (d_{i,p},1,\dots, 1,d_{i,q},1,\dots, 1)$, which squares to $\id$, with the determinants $d_{i,r}=\det{O_{i,r}}$ at positions $\ell_0=1$ and $\ell_1=\max(p,q)+1$.
    Clearly, we achieved $\det{\widetilde{O}_{i,r}}=1$, and the transformed \ac{CSG} element satisfies $\widetilde{F}\in\grcs(p,q)$ as only its submatrix corresponding to rows/columns $(\ell_0,\ell_1)$ is transformed via
    \begin{align}
        \widetilde{F}\big|_{(\ell_0,\ell_1)}=
        \begin{pmatrix} 
            d_{1,p}d_{2,p}\cos(\theta_1) & d_{1,p}d_{2,q}\sin(\theta_1) \\ -d_{1,q}d_{2,p}\sin(\theta_1) & d_{1,q}d_{2,q}\cos(\theta_1) 
        \end{pmatrix}
        =\begin{pmatrix} 
            d_{1,p}d_{2,p}\cos(\theta_1) & d_{1,p}d_{2,q}\sin(\theta_1) \\ -d_{1,p}d_{2,q}\sin(\theta_1) & d_{1,p}d_{2,p}\cos(\theta_1) 
        \end{pmatrix}
        =\begin{pmatrix} 
            \cos(\widetilde{\theta}_1) & \sin(\widetilde{\theta}_1) \\ -\sin(\widetilde{\theta}_1) & \cos(\widetilde{\theta}_1) 
        \end{pmatrix},\nonumber
    \end{align}
    with $\widetilde{\theta}_1=\tfrac{1}{2}(1-d_{1,p}d_{2,p})\pi +d_{1,p}d_{1,q}\theta_1$.
    
    \textbf{DIII.} We have $\groupG=\grso(2n)$, $\groupK=\gru(n)$ and $\groupA=\{\mu\oplus \mu^T|\mu\in\grschur(n)\}$.\\\noindent
    Looking at the canonical DIII involution $\Theta = \Ad_{J_n}$, we see that this decomposition is analogous to the type-AII decomposition, but restricted to $\grso(2n)$, so that we need to construct a symplectic eigenbasis like for AII while staying within the orthogonal group.
    Given $O\in\grso(2n)$, compute $\Delta=O\Theta(O)^T=OJ_nO^TJ_n^T$. 
    Applying Lemma~\ref{lemma:aii_symplectic}, we find a symplectic basis $\{v_k,J_nv_k^\ast\}_k$ of eigenvectors of $\Delta$ with eigenvalues $\lambda_k$.
    As $\Delta^\ast=\Delta$, Lemma~\ref{lemma:conjugate_eigenvector} implies that these pairs come with (mutually orthogonal) conjugate eigenvectors $\{v^\ast_{k}, J_n v_k\}_k$ with eigenvalues $\lambda_k^\ast$.
    In the following, we will create a real-valued symplectic basis, distinguishing the cases $v^\ast_k\not\propto v_k$ and $v_k^\ast\propto v_k$.
    
    If $v_k^\ast\not\propto v_k$, we may make the vectors $v_k$ and $v_k^\ast$ orthogonal, as we did in Lemma~\ref{lemma:ai_orthogonal}, and combine the quadruple $(v_k, J_n v_k^\ast, v_k^\ast, J_n v_k)$ into a \emph{real} symplectic basis of the spanned four-dimensional space,
    \begin{align}\label{eq:recombine_DIII}
        \begin{array}{rl}
            w_{k,\alpha}&=\begin{cases}
                \tfrac{1}{\sqrt{2}}(J_nv_k^\ast+J_nv_k) & \text{ for }\ \alpha=0\\
                \tfrac{i}{\sqrt{2}}(J_nv_k^\ast-J_nv_k) & \text{ for }\ \alpha=1\\
                \tfrac{1}{\sqrt{2}}(v_k^\ast+v_k) & \text{ for }\ \alpha=2\\
                \tfrac{i}{\sqrt{2}}(v_k^\ast-v_k) & \text{ for }\ \alpha=3\\
            \end{cases}\ \ ,\\
            & \\ & \\
        \end{array}
        \quad
        \begin{array}{rl}
            w_{k,\alpha}^\ast
            &=w_{k,\alpha},\\
            w_{k,\alpha}^\dagger w_{\ell,\beta}
            &=\delta_{k\ell}\delta_{\alpha\beta},\\
            w_{k,\alpha}^T J_n w_{\ell,\beta}
            &=(-1)^{\beta\leq 1} w_{k,\alpha}^\dagger w_{\ell,\bar\beta}\\
            &=(-1)^{\beta\leq 1} \delta_{k\ell}\delta_{\alpha,\bar\beta}
            =\delta_{k\ell} \begin{pmatrix}
                0 & 0 & 1 & 0 \\
                0 & 0 & 0 & 1 \\
                -1 & 0 & 0 & 0 \\
                0 & -1 & 0 & 0 \\
            \end{pmatrix},
        \end{array}
    \end{align}
    where we used orthonormality of $\{v_k,v_k^\ast,J_nv_k,J_nv_k^\ast\}$ and calculated $J_nw_{k,\beta}=(-1)^{\beta\leq 1}w_{k,\bar\beta}$ with $\bar\beta=(\beta+2)\!\!\mod 4$.
    We denote the number of eigenspaces for which $v_k\perp v_k^\ast$ as $d$, with $d\leq\frac{n}{2}$.
    
    On the other hand, if $v_k^\ast \propto v_k$, we have $J_nv_k\propto J_nv_k^\ast$ as well, and like in the proof of Lemma~\ref{lemma:ai_orthogonal}, we may adjust the global phase of $v_k$ to obtain two orthogonal real symplectic vectors $v'_\ell, J_n v'_\ell$.
    Note that this can only happen if $\lambda_k^\ast=\lambda_k$, i.e.,~$\lambda_k\in\{\pm 1\}$, because eigenvectors of differing eigenvalues are orthogonal. We sort the vectors $v'_\ell$ such that $\Delta v'_\ell=-v'_\ell$ for $1\leq \ell\leq f$, i.e.,~the $f$ eigenvectors with eigenvalue $-1$ appear first. We note that such degenerate pairs in turn must come in eigenvalue pairs, i.e.,~for each $(v'_\ell, J_n v'_\ell)$ with $\lambda_\ell=\pm 1$, there is another pair $(v'_m,J_nv_m')$ with the same eigenvalue\footnote{This can be seen from the fact that if we slightly perturb the original matrix such that the eigenvectors $v_\ell$ and $v_\ell^\ast$ are no longer degenerate, we will produce a quadruple that has complex conjugate eigenvalues. Continuity of the eigenvalue computation then tells us that the eigenvalues of these pairs must match if we move back to the degenerate point.}. This guarantees that $f$ is even.
    
    Symplecticity with respect to $J_n$ is then achieved by sorting the basis vectors as 
    \begin{align}
        (w_{1,0}, w_{1, 1}, w_{2,0},\dots,w_{d,1},\ |\ v'_1,\dots v'_{n-2d},\ \ |\ \ w_{1,2}, w_{1,3}, w_{2,2},\dots,w_{d,3},\ |\  J_nv'_1,\dots J_nv'_{n-2d}).
    \end{align}
    
    The $w_{k,\alpha}$ form a $4d$-dimensional subspace. For each fixed $k$, $\Delta$ takes the form of the following $4\times4$ matrix:
    \begin{align}
        \Delta_{k,\alpha\beta}
        &=w_{k,\alpha}^\dagger \Delta w_{k,\beta}
        =w_{k,\alpha}^\dagger \real(\lambda_k) w_{k,\beta}+w_{k,\alpha}^\dagger \imag(\lambda_k) (-1)^{\beta+(\beta\geq 2)}w_{k,\beta'}\nonumber\\
        &=
        \begin{pmatrix}
            \real(\lambda_k) & -\imag(\lambda_k) & 0 & 0 \\
            \imag(\lambda_k) & \real(\lambda_k) & 0 & 0 \\
            0 & 0 & \real(\lambda_k) & \imag(\lambda_k)\\
            0 & 0 & -\imag(\lambda_k) & \real(\lambda_k)\\
        \end{pmatrix}_{\alpha\beta},
    \end{align}
    where we implied $(-1)^{\beta\geq 2}$ to be $-1$ if $\beta\geq 2$ and 1 otherwise, and we computed $\Delta w_{k,\beta}=\real(\lambda_k)w_{k,\beta}+\imag(\lambda_k)(-1)^{\beta+(\beta\geq 2)}w_{k,\beta'}$ with $\beta'=(1-\beta)\!\!\mod 4$.
    Note that the two non-zero $2\times 2$ block matrices above are the transpose of each other and have the structure of a Schur block matrix because $\lambda_k=\exp(i\theta_k)$ for some angle $\theta_k$, and thus $\real{\lambda_k}=\cos(\theta_k)$ as well as $\imag{\lambda_k}=\sin(\theta_k)$.

    Now, investigating the remaining basis vectors given by $\{v'_\ell, J_nv'_\ell\}$, we will have $n-2d$ blocks with shape $2\times2$. $\Delta$ has the eigenvalue $\lambda_\ell=-1$ on $f$ of these blocks, and $\lambda_\ell=1$ on the other blocks. The diagonal entries with $-1$ ($1$) can be interpreted as $\cos(\pi)$ ($\cos(0)=1$), with corresponding off-diagonal entries $\sin(\pi)=0$ ($\sin(0)=0$).
    As we found $f$ to be even, these cosine entries can be paired up into Schur blocks, which evidently are symmetric as they are real diagonal.
    
    Note that our arrangement of the basis guarantees that the first and second $n\times n$ block matrices are the transpose of each other. We made this explicit in the subspace spanned by the $\{w_{k,\alpha}\}$ and manually sorted the $\{v'_\ell, J_nv'_\ell\}$ such that this is true.
    
    % Overall, we find $\Delta$ to have the desired form $\mu\oplus\mu^T$ with $\mu\in\grschur(n)$.
    Putting the pieces together, we thus have a horizontal \kak decomposition $\Delta=U_1 \left(\mu^2\oplus{\mu^2}^T\right) U_1^\dagger$ with $U_1\in\grso(2n)\cap\grsp(n)\cong\gru(n)$ and $\mu^2\in\grschur(n)$.
    Using a (non-unique) square root $\mu$ of $\mu^2$, Lemma~\ref{lemma:horizontal_to_kak_decomp} then implies the sought-after \kak decomposition $U_1 (\mu\oplus\mu^T) U_2$ with $U_2=(\mu^T\oplus\mu) U_1^\dagger O$.

    \textbf{C.} We have $\groupG=\grsp(n)\times\grsp(n)$, $\groupK=\{K\oplus K|K\in\grsp(n)\}$ and $\groupA=\{D\oplus D^\dagger|D\in\grspdiag(n)\}$.\\\noindent
    We will proceed analogously to the construction for type A, but use a symplectic \ac{EVD} provided by Lemma~\ref{lemma:aii_symplectic}.
    Given $G=S\oplus S'\in\groupG$, compute $\delta=SS'^\dagger$, which is symplectic, and apply Lemma~\ref{lemma:aii_symplectic} to obtain an \ac{EVD} $\delta=S_1 \fsl{D}^2S_1^\dagger$ with $S_1\in\grsp(n)$ and $\fsl{D}^2\in\grspdiag(n)$.
    Then, compute a (non-unique) square root $\fsl{D}$ of $\fsl{D}^2$ and set $S_2=\fsl{D}S_1^\dagger S'$, which is in $\grsp(n)$.
    Eq.~(\ref{eq:num_type_A_proof}) then holds, telling us that we found the desired \kak decomposition of $G=S\oplus S'$.
    
    \textbf{CI.} We have $\groupG=\grsp(n)$, $\groupK=\gru(n)$ and $\groupA=\grspdiag(n)$.\\\noindent
    This decomposition is similar to type AI, but restricted to $\grsp(n)$. It is to AI what DIII is to AII, and in particular $\groupK$ is the same.
    Accordingly, we note that for a given $S\in\groupG$, $\Delta=S\Theta(S)^\dagger=SS^T$ comes with a symplectic eigenbasis that can additionally be made real-valued, just like for DIII.
    More precisely, for each eigenvector $v_k$ of $\Delta$ with eigenvalue $\lambda_k$, $J_nv_k$ is an eigenvector with eigenvalue $\lambda_k^\ast$ (Lemma~\ref{lemma:aii_symplectic}) and $v_k^\ast$ ($J_nv_k^\ast$) is an eigenvector with eigenvalue $\lambda_k$ ($\lambda^\ast_k$) ($\groupP$-case of Lemma~\ref{lemma:ai_orthogonal}).
    The recombination in Eq.~(\ref{eq:recombine_DIII}) happens within an eigenspace, so that $\Delta$ remains diagonal in the new basis. Using the same basis vector ordering as for DIII, we additionally find that the diagonalization is skew-repeat-diagonal, i.e.,~$\Delta=U_1(D^2\oplus {D^2}^\dagger)U_1^\dagger$ with $D\in\grudiag(n)$ so that $\fsl{D}^2\coloneqq D^2\oplus {D^2}^\dagger\in\grspdiag(n)$.
    Using a (non-unique) square root $\fsl{D}$ of $\fsl{D}^2$, Lemma~\ref{lemma:horizontal_to_kak_decomp} then implies the sought-after \kak decomposition $S=U_1 \fsl{D} U_2$ with $U_2=\fsl{D}^\dagger U_1^\dagger S$.
    
    \textbf{CII.} We have $\groupG=\grsp(p+q)$, $\groupK=\grsp(p)\times\grsp(q)$ and $\groupA=\{F\oplus F^T| F \in \grcs(p,q)\}$.\\\noindent
    This case involves a number of permutations, leading to heavy notation.
    In essence, the symplecticity of $S\in\groupG$ guarantees that we may reorder a generic unitary \ac{CSD} in such a way that all its components become symplectic as well, while maintaining the product structure of $\gru(p)\times\gru(q)$, achieving $K_i\in\grsp(p)\times\grsp(q)$ overall.

    We will assume $p\geq q$ for simplicity.
    First, we make the involved matrix representations explicit, and introduce the objects $\chi$ and $\eta$,
    \begin{align}
        \chi &= \begin{pmatrix}
            \id_q & 0 & 0 & 0 & 0 & 0\\
            0 & \id_{p-q} & 0 & 0 & 0 & 0\\
            0 & 0 & 0 & 0 & \id_q & 0\\
            0 & 0 & \id_{q} & 0 & 0 & 0\\
            0 & 0 & 0 & \id_{p-q} & 0 & 0\\
            0 & 0 & 0 & 0 & 0 & \id_q\\
        \end{pmatrix},\quad
        \eta = \begin{pmatrix}
            \id_q & 0 & 0 & 0 & 0 & 0\\
            0 & 0 & \id_{p-q} & 0 & 0 & 0\\
            0 & \id_q & 0 & 0 & 0 & 0\\
            0 & 0 & 0 & \id_{p-q} & 0 & 0\\
            0 & 0 & 0 & 0 & \id_q & 0\\
            0 & 0 & 0 & 0 & 0 & \id_q\\
        \end{pmatrix},\\
        \grsp(p)\times\grsp(q)&\cong\left\{\chi (S_1\oplus S_2)\chi^\dagger | S_1\in\grsp(p), S_2\in\grsp(q)\right\}\\
        &=\left\{\begin{pmatrix}
            A_1 & 0 & B_1 & 0 \\
            0 & A_2 & 0 & B_2 \\
            C_1 & 0 & D_1 & 0 \\
            0 & C_2 & 0 & D_2\\
        \end{pmatrix}\bigg| \begin{pmatrix}
            A_1 & B_1 \\ C_1 & D_1
        \end{pmatrix}\in\grsp(p), \begin{pmatrix}
            A_2 & B_2 \\ C_2 & D_2
        \end{pmatrix}\in\grsp(q)\right\}.\label{calc:0}
    \end{align}
    Now let $S\in\grsp(p+q)$, calculate $S'=\eta^\dagger \chi^\dagger S \chi\eta$, and apply a generic complex-valued \ac{CSD} (see type AIII) to $S'$ to obtain
    \begin{align}
        S'=
        \left(U_{1, 1}\oplus U_{1,2}\right) F_0 \left(U_{2,1}\oplus U_{2,2}\right)
        \quad\Rightarrow\quad
        S=\chi\eta S' \eta^\dagger \chi^\dagger
        =V_1 \bar{F} V_2.
    \end{align}
    Here we introduced $V_i=\chi\eta\left(U_{i,1}\oplus U_{i,2}\right)\eta^\dagger\chi^\dagger$, which is of the form $\chi(A\oplus B)\chi^\dagger$ for some $A\in\gru(2p),B\in\gru(2q)$, as well as
    \begin{align}
        \bar F 
        \coloneqq \chi\eta F_0\eta^\dagger\chi^\dagger
        = \chi\eta \begin{pmatrix}
            C_1 & 0 & 0 & 0 & S_1 & 0 \\
            0 & C_2 & 0 & 0 & 0 & S_2 \\
            0 & 0 & \id_{p-q} & 0 & 0 & 0 \\
            0 & 0 & 0 & \id_{p-q} & 0 & 0 \\
            -S_1 & 0 & 0 & 0 & C_1 & 0 \\
            0 & -S_2 & 0 & 0 & 0 & C_2 \\
        \end{pmatrix}\eta^\dagger\chi^\dagger
        =\begin{pmatrix}
            C_1 & 0 & S_1 & 0 & 0 & 0 \\
            0 & \id_{p-q} & 0 & 0 & 0 & 0 \\
            -S_1 & 0 & C_1 & 0 & 0 & 0 \\
            0 & 0 & 0 & C_2 & 0 & S_2 \\
            0 & 0 & 0 & 0 & \id_{p-q} & 0 \\
            0 & 0 & 0 & -S_2 & 0 & C_2 \\
        \end{pmatrix}.\label{calc:2}
    \end{align}
    Note that $K_{p,q} V_i K_{p,q}^\dagger=V_i$ and $K_{p,q} \bar F K_{p,q}^\dagger=\bar F ^\dagger$, so that 
    \begin{align}
        \Delta
        = S K_{p,q} S^\dagger K_{p,q}^\dagger
        = V_1 \bar F V_2 K_{p,q} V_2^\dagger \bar F^\dagger V_1^\dagger K_{p,q}^\dagger
        = V_1 \bar F ^2 V_1^\dagger.
    \end{align}
    In addition, $K_{p,q}\in\grsp(p+q)$, so that $\Delta\in\grsp(p+q)$ as well. Lemma~\ref{lemma:aii_symplectic} then allows us to reorder the columns in $V_1$ to obtain a new matrix $K_1\in\grsp(p+q)$\footnote{Strictly speaking, the lemma applies to an eigenbasis. However, a basis change between $\bar F^2$ and a proper diagonalization $\bar D^2$ is compatible with the procedure, so that we do not detail this additional basis change here.}, 
    and we can maintain the property $\Theta_{\mathrm{CII}}(V_1)=V_1$ while reordering, because $J_{p+q}$ and $K_{p,q}$ commute and thus share an eigenbasis\footnote{More precisely, Lemma~\ref{lemma:conjugate_eigenvector} (applied twice) tells us that eigenvectors of $V_1$ come in quadruples, because it is in the vertical space $\grsp(p+q)$ of the generic type-AII involution and in the vertical space of $\Theta_{\mathrm{CII}}$.}.
    The reordering accordingly is applied to the rows and columns of $\bar F^2$ to obtain ${\mathring{F}}^2$, which still must satisfy $\Theta_{\mathrm{CII}}({\mathring{F}}^2)={{\mathring{F}}^2}{}^\dagger$ because $\Theta_\mathrm{CII}(\Delta)=\Delta^\dagger$ and $\Theta_\mathrm{CII}(K_1)=K_1$. Additionally, ${\mathring{F}}^2$ is in $\grsp(p+q)$ because $K_1$ and $\Delta$ are. Overall, this gives ${\mathring{F}}^2$ the structure $F\oplus F^T$ for some $F\in\grcs(p,q)$, because this is the only compatible structure with symplecticity and being horizontal with respect to $\Theta_\mathrm{CII}$\footnote{Strictly speaking, the diagonal structure of the cosine and sine blocks might not be achieved automatically. However, simultaneously reordering rows and columns within the dimensions $(p+1,\dots,p+q)$ and $(2p+q+1,\dots,2p+2q)$ preserves all mentioned properties and achieves this diagonal structure.}.
    Now we have $\Delta=K_1 {\mathring{F}}^2 K_1^\dagger$ with $K_1\in\grsp(p+q)\cap\left(\grsp(p)\times\grsp(q)\right)=\groupK$ and ${\mathring{F}}^2\in\groupP$, so that Lemma~\ref{lemma:horizontal_to_kak_decomp} provides a \kak decomposition of $S$.
\end{proof}

\section{Details on fixed-depth Hamiltonian simulation}\label{sec:fdhs_calculations}

\subsection{Computing the dynamical Lie algebra}\label{sec:fdhs_calculations:algebra}
The considered Hamiltonian is 
\begin{align}\label{eq:ff_ham}
    H=\sum_{i=1}^{n-1} \alpha^X_i X_i X_{i+1}+\alpha^Y_i Y_iY_{i+1} + \sum_{i=1}^n \beta_i Z_i,
\end{align}
and we want to obtain its \ac{DLA} $\mfg=\langle iH\rangle_\text{Lie}$ with respect to the Pauli basis. 
It has been shown, e.g.,~in~\cite{kokcu2022fixed}, to be
\begin{align}
    \mfg 
    &= \left\langle\{\widehat{X_iX_j}, \widehat{X_iY_j}, \widehat{Y_i X_j}, \widehat{Y_i Y_j}\}_{1\leq i<j\leq n}\cup \{Z_i\}_{i=1}^n\right\rangle_{i\mbr}.\label{eq:ff_dla}
\end{align}

We verify this calculation by testing a) that all multi-qubit operators $i\widehat{A_iB_j}$ with $A,B\in\{X,Y\}$ can be reached with nested commutators of Hamiltonian terms, and b) that no other operators can be created. For ease of reference, we call such strings of Pauli operators $XY$-strings.
For step a), we may create the $2\ell$-qubit $XY$-strings 
\begin{align}
    &\bigcirc_{k=1}^{\ell-1} \left(\ad_{iX_{i+2k}X_{i+2k+1}} \circ \ad_{iY_{i+2k-1}Y_{i+2k}}\right)\ \ (iX_i X_{i+1}) \\
    =&[iX_{i+2\ell-2}X_{i+2\ell-1}, [\cdots, [iX_{i+2}X_{i+3}, [iY_{i+1}Y_{i+2}, iX_iX_{i+1}]]]] \nonumber \\
    \propto&i\widehat{X_iX_{i}}_{+2\ell-1},
\end{align}
% For XX:
% i + 2\ell - 1 <= n
% ==> 2\ell <= n+1-i
% For XY:
% i + 2\ell - 2 <= n
% ==> 2\ell <= n+2-i

for $1\leq \ell \leq \frac{n+1-i}{2}$, as well as the $(2\ell-1)$-qubit $XY$-strings $i\widehat{X_iY_{i}}_{+2\ell-2}$ for $2\leq \ell \leq \frac{n+2-i}{2}$ that result from skipping the outermost adjoint map above.
Then, all other $XY$-strings can be obtained by applying $\ad_{iZ_i}$ to one or both ends of the created strings.

For step b), we note that the $Z_i$ commute and that commutators between $i\widehat{A_i B_j}$ and $iZ_k$ vanish or yield an $XY$-string of the same size.
This leaves us to check commutators of the form $[i\widehat{A_iB_j},i\widehat{C_kD_l}]$, where $i<j$, $k<l$, and w.l.o.g. $i\leq k$.
Those commutators vanish unless at least one of the edges coincide, i.e.,~unless at least one of the conditions $\{i=k, j=k, j=l\}$ is satisfied.
Thus, we need to compute the commutators (where differing index variables are assumed to differ in value)
\begin{alignat}{4}
    [i\widehat{A_iB_j},i\widehat{C_iD_l}]&\propto\begin{cases}
        i\widehat{\overline{B}_j D_l} & \text{ for } A=C, j<l\\
        i\widehat{\overline{D}_l B_j} & \text{ for } A=C, l<j\\
        0 & \text{ else}
    \end{cases},&\quad 
    [i\widehat{A_iB_j},i\widehat{C_iD_j}]&\propto\begin{cases}
        i Z_j & \text{ for } A=C, B\neq D\\
        i Z_i & \text{ for } A\neq C, B=D\\
        0 &\text{ else}
    \end{cases},\\
    [i\widehat{A_iB_j},i\widehat{C_jD_l}]&\propto \begin{cases}
        i \widehat{A_i D_l} & \text{ for } B\neq C\\
        0 & \text{ else}
    \end{cases},& \quad
    [i\widehat{A_iB_j},i\widehat{C_kD_j}]&\propto \begin{cases}
        i \widehat{A_i \overline{C}_k} & \text{ for } B=D\\
        0 & \text{ else}
    \end{cases}.
\end{alignat}
Here $\overline{B}$ denotes $X$ if $B=Y$ and vice versa.
We find that commutators of the constructed $XY$-strings indeed only ever reproduce other $XY$-strings or single-qubit $Z$ operators. This concludes the second step and shows that the Lie closure given above is correct.

\subsection{Identifying the dynamical Lie algebra}\label{sec:fdhs_calculations:identify}
Next, we would like to identify the algebra as a representation of one of the classical Lie algebras, or as a direct sum thereof.
There is a whole toolbox available to perform this identification in general.
A first indication is the dimension of the algebra, limiting the type of algebra to a finite number of options.
This can be refined by analyzing the connected components of the non-commutation graph individually, and by considering the center of the algebra/its components. If the basis elements are Pauli words, the recent full classification of Pauli algebras in~\cite{aguilar2024full} limits the options further, and for complete families of algebras their scaling with the qubit count may indicate the class of algebra in Thm.~1 therein.
For a fully automated identification, more advanced algorithms can be found in~\cite{rand1988identification,degraaf1997algorithm}.

For our concrete algebra, we take an easier path and exploit the simple structure of $H$ and the \ac{DLA} ``manually".
For this, we note that the terms in $H$ are not a minimal generating set, but that the set $\{iX_jX_{j+1}\}_{j=1}^{n-1}\cup\{iZ_j\}_{j=1}^n$ generates the same algebra, and that this new set indeed is a minimal generating set\footnote{Note that the missing terms $\{Y_jY_{j+1}\}_j$ from the Hamiltonian are second-order commutators of the minimal generating set.}.
Furthermore, note that the anticommutation graph of this new set is particularly simple, as it is a one-dimensional chain with $2n-1$ nodes:

\begin{figure}[h!]
    \centering
\includegraphics[width=.4\linewidth]{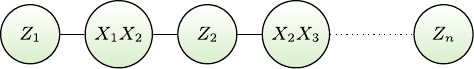}
\end{figure}

This means that the anticommutation graph is in class~A from Thm.~1 in~\cite{aguilar2024full} with $n_L=2n-1$ and $n_c=0$. Their Thm.~2 then implies that $\mfg\cong\mfso(2n)$.
As a sanity check, we count dimensions. $\mfg$ has $4\frac{n^2-n}{2}$ $XY$-strings and $n$ single-qubit field operators, summing up to $2n^2-n$ dimensions. The classical algebra $\mfso(2n)$ has $\frac{(2n)^2-2n}{2}=2n^2-n$ dimensions, so that the dimension counts match.

\subsection{Mapping to an irreducible representation}\label{sec:fdhs_calculations:mapping}
The above analysis implies that the defining representation of $\mfso(2n)$ (i.e., as traceless antisymmetric matrices acting on $\mbc^{2n}$) is a faithful representation of the \ac{DLA} Eq.~(\ref{eq:ff_dla}).
Our strategy will be to map a target Hamiltonian of the form Eq.~(\ref{eq:ff_ham}) to this representation, and then use our explicit numerical algorithms (Section~\ref{sec:numerical_decomps} and App.~\ref{sec:numerical_details}) to construct \kak decompositions by manipulating only $(2n\times 2n)$-dimensional matrices. \\

Our realization of the isomorphism will go via \textit{Majorana fermions}, i.e., a set $\{c_\mu\}_{1\leq\mu\leq 2n}$ of operators satisfying the canonical anticommutation relations
\begin{equation}\label{eq:cac}
    \{c_\mu, c_\nu\}=2\delta_{\mu,\nu}.
\end{equation}
There is some freedom in how exactly to pick the Majoranas, captured by the group of \emph{fermionic Gaussian Cliffords}~\cite{wan2022matchgate}; here we 
will make the choice (corresponding to a Jordan-Wigner transformation~\cite{diaz2023showcasing})
\begin{alignat}{5}\label{eq:majos}
c_1&=YIII\ldots,\quad&c_2&=ZYII\ldots,\quad&c_n&=ZZ\ldots ZY,    \\
c_{n+1}&=XIII\ldots,\quad&c_{n+2}&=ZXII\ldots,\quad&c_{2n}&=ZZ\ldots ZX.  
\end{alignat}
The connection to the \ac{DLA} comes from the observation that each of its Pauli words  can be written as a product $c_\mu c_\nu$ of two Majorana fermions. Given a Pauli words $iP=c_\mu c_\nu\in\mfg$ the isomorphism can then be taken to be (the linear extension of)
\begin{equation}
    \varphi:\mfg\to\mfso(2n);\ iP=c_\mu c_\nu\mapsto 2(E_{\mu,\nu}-E_{\nu,\mu}),
\end{equation}
where $E_{\mu,\nu}$ denotes an elementary matrix which is zero except for a $1$ at position $(\mu,\nu)$ (we omit the mechanical details of verifying that this mapping is indeed an isomorphism). 
For $n=4$, the isomorphism can be visualized as
\begin{align}
    \left(
    \begin{array}{cccc|cccc}
        0&iXYII&iXZYI&iXZZY&\hamcolor{-iZIII}&\hamcolor{iXXII}&iXZXI&iXZZX\\
        &0&iIXYI&iIXZY&\hamcolor{iYYII}&\hamcolor{-iIZII}&\hamcolor{iIXXI}&iIXZX\\
        &&0&iIIXY&iYZYI&\hamcolor{iIYYI}&\hamcolor{-iIIZI}&\hamcolor{iIIXX}\\
        &&&0&iYZZY&iIYZY&\hamcolor{iIIYY}&\hamcolor{-iIIIZ}\\
        \hline
        &&&&0&-iYXII&-iYZXI&-iYZZX\\
        &&&&&0&-iIYXI&-iIYZX\\
        &&&&&&0&-iIIYX \\
        &&&&&&&0
    \end{array}
    \right),
\end{align}
where a Pauli word $\pm iP$ denoted in position $(\mu, \nu)$ (with $\mu<\nu$) of this matrix implies that $iP$ is mapped to the element $\pm 2(E_{\mu,\nu}-E_{\nu,\mu})$ in the irreducible representation. The entries $(\mu, \nu)$ with $\mu>\nu$ are not mapped independently and we skip them in the visualization. The highlighted Pauli operators are the terms of the Hamiltonian of Eq.~(\ref{eq:ff_ham}); we see immediately that they are \textit{horizontal} with respect to the $p=q=4$ BDI decomposition of $\mfso(8)$ in its canonical form. The fact that our target Hamiltonian is canonically horizontal with respect to the isomorphism $\varphi$ clearly depends on both the choice and ordering of the Majoranas in Eq.~(\ref{eq:majos}); an algorithm for obtaining an ordering that makes a target Hamiltonian horizontal is presented in Sec.~\ref{sec:hgraph}

\subsection{Recursive BDI decomposition}\label{sec:fdhs_calculations:apply_decomp}
The recursive decomposition chosen to compile the time-evolution circuit of the transverse field XY model is a repeated BDI decomposition with its involution in canonical form.
That is, any occurring block matrix $\grso(p+q)$ is decomposed according to the involution $\theta=\Ad_{I_{p,q}}$, using the \ac{CSD} as described in Sec.~\ref{sec:numerical_decomps}.

At the $j$th recursion level, starting to count with $j=0$ for the initial decomposition step, $4^j$ special orthogonal matrices need to be decomposed. These matrices either have the shape $d_j\times d_j$ or $(d_j+1)\times(d_j+1)$, with 
\begin{align}
    d_j=\lfloor (2n)/2^j\rfloor.
\end{align}
The unequal splitting occurs for qubit counts $n$ that are not powers of $2$. At the $j$th recursion level, $(2n)\!\!\mod 2^j$ matrices are of shape $(d_j+1)\times(d_j+1)$, the others are of shape $d_j\times d_j$. Each matrix is split into five special orthogonal matrices, two block submatrices for each matrix $K$ in $\kak$, and one commuting matrix for the CSG element $A$. The block submatrices have shape $d_{j+1}\times d_{j+1}$ or $(d_{j+1}+1)\times(d_{j+1}+1)$, while $A$ stays at the previous shape.
The recursion is performed until the individual \acp{CD} yield $K$'s from $\grso(2)\times \grso(2)$ or $\grso(1)\times \grso(2)\cong \grso(2)$, i.e.,~until all non-abelian blocks of sizes larger than $2$ have been decomposed.
This takes $\lceil\log_2 n\rceil$ steps, leading to $\mathcal{O}\left(4^{\lceil\log_2 n\rceil}\right)=\mathcal{O}(n^2)$ $2\times 2$ matrices embedded in the irreducible representation on $\mbr^{2n}$. Each of these matrices performs a standard Givens rotation in this space.
In addition, the decomposition creates $\mathcal{O}(n^2)$ elements from the \acp{CSG} $\gru(1)^{\times d_{j+1}}\cong \grso(2)^{\times d_{j+1}}$, which may be pulled apart into $d_{j+1}$ Givens rotations again.
Note that because the BDI decomposition with $|p-q|\leq 1$ is parameter-optimal, we know that the total number of Givens rotations matches the dimension of the total group or algebra, $2n^2-n$.

\begin{figure}
    \centering
    \includegraphics[width=.3\linewidth]{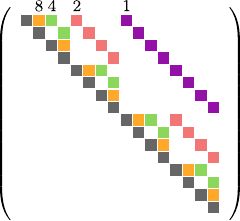}
    \caption{Visualization of the generators of Givens rotations encountered in the recursive BDI decomposition used to compile Hamiltonian simulation. The separate matrix elements on the secondary diagonal (orange) correspond to vertical generators of the $K$'s, whereas the other off-diagonal stripes (green, salmon, purple) mark \ac{CSA} elements (the main diagonal is indicated in grey for reference). The numbers at the top indicate how many layers of the respective generators occur in the recursive decomposition. For simplicity we here assume $n$ to be a power of $2$, leading to a particularly regular structure. Other $n$ lead to relative offsets by $1$ in the block sizes.}
    \label{fig:fdhs_decomp_generators}
\end{figure}

The subgroup elements $K$ exclusively are Givens rotations generated by (half of the) secondary diagonal elements, $E_{2i,2i+1}-E_{2i+1,2i}$. The \ac{CSG} elements at the $j$th recursion step are generated by elements from the $d_j$th diagonal. We visualize the occurring generators in the recursive decomposition in Fig.~\ref{fig:fdhs_decomp_generators}.

\subsection{Mapping back to qubit representation}\label{sec:fdhs_calculations:mapping_back}
The simple mapping from step three of the algorithm (see Sec.~\ref{sec:fdhs_calculations:mapping}) allows us to easily create the inverse map $\rho^{-1}$ back to the reducible representation on qubits.

As discussed in the previous section, the recursive decomposition produces $\grso(2)$ Givens rotations in the canonical basis $\{E_{i,j}\}$.
The rotation angle can simply be read out from the matrix, and after accounting for the prefactor $\pm2$ in the mapping $\rho$, we obtain the Pauli rotation angle for the time-evolution quantum circuit.

As an example, consider the \ac{CSG} element $A$ from the $j=0$th level of the recursion for $n=3$, given by
\begin{align}
    A&=\begin{pmatrix}
        c_1&0&0&s_1&0&0\\
        0&c_2&0&0&s_2&0\\
        0&0&c_3&0&0&s_3\\
        -s_1&0&0&c_1&0&0\\
        0&-s_2&0&0&c_2&0\\
        0&0&-s_3&0&0&c_3\\
    \end{pmatrix}
    =\prod_{i=1}^n \exp(\eta_i (E_{i,n+i}-E_{n+i,i}))
    \in \grso(2)^{\times n},
\end{align}
where we abbreviated $c_i=\cos(\eta_i)$ and $s_i=\sin(\eta_i)$.
We can easily read out the rotation angles $\eta_i$ from the matrix elements $c_i$ and $s_i$, and only need to take into account that $iZ_i$ was mapped to $-2(E_{i,n+i}-E_{n+i,i})$, so that the Pauli rotation angle becomes $-\eta_i/2$. We find the qubit representation of $A$ to be
\begin{align}
    \rho^{-1}(A)=\prod_{i=1}^n \exp(-\frac{i\eta_i}{2}Z_i).
\end{align}

The \ac{CSG} elements from subsequent recursion levels lead to more complex Pauli rotations. As mentioned in the previous section, they are generated by elements $E_{i, i+d_j}-E_{i+d_j,i}$ for $j>0$, which are mapped back to $XY$-strings of length $n/2^j$ by $\rho^{-1}$.
Consequentially, the most expensive Pauli rotation gates result from the \ac{CSG} elements at $j=1$ with size $n/2$.

The vertical subgroup elements $K$ are generated by matrices of the form $E_{i,i+1}-E_{i+1,i}$, which are mapped to two-qubit operators $X_iY_{i+1}$ for $i<n$ and $Y_{i-n}X_{i+1-n}$ for $i\geq n$. In addition, only every other of these matrices, of the form $E_{2i,2i+1}-E_{2i+1,2i}$, appears in the decomposition, so that all $K$ commute and with exception of the pairs $(X_iY_{i+1},Y_iX_{i+1})$, no two $K$'s act on overlapping qubits.

\begin{figure}
    \includegraphics[width=0.5\linewidth]{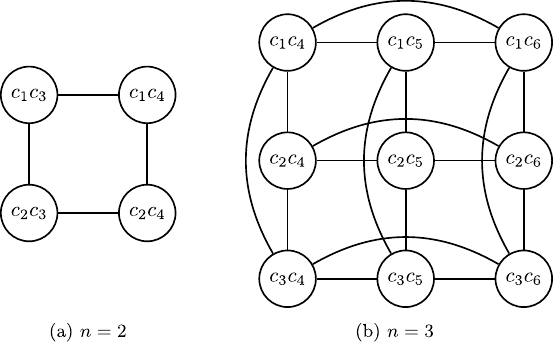}
    \caption{Frustration graphs of a generic element of the horizontal subspace of the canonical BDI decomposition for $n=2,3$; the canonical anticommutation relations Eq.~(\ref{eq:cac}) of the Majoranas ensure that they take a particularly nice form. The vertices of the graph sit on the integer-valued coordinates of a square lattice of length $n$, and are connected by an edge if and only if they are within either the same row or column (with no self-edges). }
    \label{fig:frust}
\end{figure}

\subsection{Finding a horizontal mapping}\label{sec:hgraph}
In Sec.~\ref{sec:fdhs_calculations:mapping} we produced a choice Eq.~(\ref{eq:majos}) of Majorana fermions with respect to which our target Hamiltonian Eq.~(\ref{eq:ff_ham}) was horizontal under a canonical BDI decomposition. A different ordering of the Majoranas would have produced a different set of horizontal operators; conversely, for a different target Hamiltonian to be horizontal one would need to choose a different ordering of the Majoranas. The goal of this section is to, for a target Hamiltonian, obtain such an ordering. 
To this end, suppose we have a Hamiltonian
\begin{equation}
    H = \sum_{(\mu,\nu)\in M} ih_{\mu,\nu}c_\mu c_\nu,
\end{equation}
for some subset $M$ of the pairs $(\mu,\nu)$ of integers between $1$ and $2n$ with, say, $\mu<\nu$. We need to produce a permutation $\pi:[2n]\to[2n]$ such that, for all $(\mu,\nu)\in M$, $(\pi(\mu), \pi(\nu))$ satisfies $\pi(\mu)<n$ and $ \pi(\nu)>n$. We begin by investigating the question of the existence of such a $\pi$.
Let us start by recalling the definition of the
\textit{frustration graph} $\graph{G}$ of a Hamiltonian $H$, which has vertices corresponding to each Pauli word that appears in $H$, and edges between vertices whose corresponding operators anticommute. As Pauli words either commute or anticommute, the existence of a Lie-algebraic structure-preserving bijection between two sets of Pauli words can be translated to a question of whether or not their frustration graphs are isomorphic. In particular, we see that  we can find a choice of Majoranas that place $H$ into the $p=q=n$ BDI induced horizontal space if and only if  the frustration graph of $H$ is isomorphic to a subgraph of the graph $\graph{F}$ that has vertices at every integral pair $(a,b)$ with $1\leq a,b\leq n$ and edges between a pair of vertices $(a_1,b_1),\ (a_2,b_2)$ if and only if $a_1=a_2$ or $b_1=b_2$ (this follows immediately from the canonical anticommutation relations Eq.~\ref{eq:cac} of the Majoranas; see Fig.~\ref{fig:frust}). Given such a subgraph, choose Majoranas such that the string at $(a,b)$ is (up to $\pm i$) given by $c_a c_{b+n}$. By construction, $H$ is horizontal. The situation for more general values of $p$ and $q$ is analogous,   but with the constraints on the ``vertex coordinates'' now reading $1\leq a\leq p,\ 1\leq b\leq q$.

Finally, we note that (conditioned on us having found an isomorphism) the above plan to ``choose Majoranas such that the string at $(a,b)$ is (up to $\pm i$) given by $c_a c_{b+n}$'' (that is, determine $\pi$) can be  implemented in time $\mathcal{O}(n^2)$. First, note that every row (column) of the graph Fig.~\ref{fig:frust}  within which our frustration graph is embedded is characterized by a single Majorana operator. We begin by looping through the rows (columns) of the graph. If for any row (column) we find that two (or more) vertices are ``occupied'' by the subgraph, then we assign to that row (column) the Majorana operator shared by the multiple Paulis corresponding to the multiple vertices (which, as these Paulis are mutually anticommuting products of two Majoranas, is guaranteed both to exist and be unique). Next we again loop through the rows (columns) of the graph. If a given row (column) contains only a single occupied vertex, and one of the corresponding Majorana operators has already been assigned elsewhere, the remaining Majorana must characterize the row (column). We loop through the rows (columns) one more time, assigning the Majorana corresponding to the first (second) factor of that Pauli to the row (column) whenever we find a singly-occupied row (column). The isomorphism assures us that this can be done without introducing contradictions. At this point we have an incomplete ordering of the Majoranas with respect to which our Hamiltonian is horizontal; the remaining slots can be filled arbitrarily. An implementation of this algorithm is available at~\cite{symmetrycompilationrepo}.

\section{K{\"a}hler structures}\label{sec:kahler}
The purpose of this appendix is to comment on the structure underlying some of the similarities between the various numerical algorithms of Section~\ref{sec:numerical_decomps}. The discussion is strongly based on Ref.~\cite{hackl2021bosonic}.
In the following, we will use an endomorphism $J$, which can, but does not have to, be $J_n$ from Eq.~(\ref{eq:def_J_n}).
To begin, we define the notion of a \textit{K{\"a}hler space}:
\begin{definition}
    We call a $2N$-dimensional real vector space $V$ a K{\"a}hler space if it is equipped with 
    \begin{enumerate}
        \item A symmetric, positive-definite bilinear form $G$ %with inverse $g$
        \item An antisymmetric, non-degenerate bilinear form $\Omega$ %with inverse $\omega$
        \item An endomorphism $J$ satisfying $J^2=-\id$
    \end{enumerate}
    that satisfy the consistency conditions\footnote{For brevity, we will suppress covariant and contravariant indices in the following; for example by $-G\Omega^{-1}=J$ we mean (Einstein-summing over $c$) $-G^{ac}(\Omega^{-1})_{cb} =J^a_{\hspace{2mm} b}$ (see Ref.~\cite{hackl2021bosonic} for details).}  $-G\Omega^{-1} =J = \Omega G^{-1}$. We call the triple $(G,\Omega,J)$ a \emph{K{\"a}hler structure}.
\end{definition}
Suppose we have a $2N$-dimensional real vector space $V$ equipped with  a K{\"a}hler structure $(G,\Omega,J)$. Three subgroups of $\grgl(2N,\mbr)$ are naturally picked out:
\begin{align}
    \gro(2N,\mbr) &= \{ M\in \grgl(2N,\mbr)\ |\ MGM^T=G\}, \label{eq:ko} \\
    \grsp(2N,\mbr) &= \{ M\in \grgl(2N,\mbr)\ |\ M\Omega M^T=\Omega\},\label{eq:ksp}\\
    \grgl(N,\mbc) &= \{ M\in \grgl(2N,\mbr)\ |\ MJ = JM\}.\label{eq:kj}
\end{align}
Note that $\grsp(2N,\mbr)$ is \emph{not} the symplectic group $\grsp(N)$ from the main text.
One can show that the consistency conditions force the intersection of these three groups to be isomorphic to $\gru(N)$; further, for any $M\in \groupH$, where $\groupH$ is either of the first two groups on the list, $M\in\gru(N)$ if and only if it commutes with $J$. At the Lie algebra level, this induces a \ac{CD} $\mfh=\mfu\oplus\mfu^\perp$ of the Lie algebra $\mfh$ of $\groupH$; from Tab.~\ref{tab:all_cartan_involutions} we identify it as of type DIII\footnote{Note that the Lie algebras of $\gro$ and $\grso$ are identical.} (CI) for $\groupH=\gro$ ($\grsp$). Conveniently, we have: 
\begin{lemma}\label{lem:ko}
 The orthogonal (with respect to the Killing form) complement $\mfu^\perp$ of the above-defined $\mfu$  consists exactly of the $X\in \mfh$ that anticommute with $J$~\cite[Prop.~1]{hackl2021bosonic}.
\end{lemma}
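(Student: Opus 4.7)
The plan is to exhibit the decomposition $\mfh=\mfu\oplus\mfu^\perp$ as the eigenspace decomposition of the involution $\theta\coloneqq\Ad_J$ acting on $\mfh$, and then read off the characterization of $\mfu^\perp$ from the $-1$-eigenspace.

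First, I would verify that $\theta=\Ad_J$ is a well-defined Cartan involution on $\mfh$. The identity $J^2=-\id$ immediately gives $\theta^2=\Ad_{J^2}=\Ad_{-\id}=\id$, and $\Ad_J$ is always a Lie algebra homomorphism, so it suffices to check $J\in\groupH$ in each case, i.e.,~$JGJ^T=G$ for $\groupH=\gro$ and $J\Omega J^T=\Omega$ for $\groupH=\grsp$. Both identities follow in a few lines from the K\"ahler consistency conditions $-G\Omega^{-1}=J=\Omega G^{-1}$ together with $G^T=G$ and $\Omega^T=-\Omega$; in particular, one obtains $J^T=\Omega^{-1}G=-J$, after which the checks reduce to $-J^2=\id$.

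Next, the definition of $\mfu$ as $\mfh\cap\mfgl(N,\mbc)$ together with the characterization of $\mfgl(N,\mbc)$ in Eq.~(\ref{eq:kj}) identifies $\mfu$ as the $+1$-eigenspace of $\theta$:
\begin{equation}
\mfu=\{X\in\mfh\mid [X,J]=0\}=\{X\in\mfh\mid\theta(X)=X\}.
\end{equation}
Since $\theta$ squares to the identity, $\mfh$ splits as a direct sum of its $\pm 1$-eigenspaces, and the $-1$-eigenspace is precisely $\{X\in\mfh\mid JX=-XJ\}$, the anticommuting set. The text has already identified this splitting as the Cartan decomposition of type DIII (resp.~CI), so it only remains to show that the $-1$-eigenspace coincides with the Killing-orthogonal complement $\mfu^\perp$.

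The last step uses only the invariance of the Killing form $B$ under Lie algebra automorphisms. For $X\in\mfu$ (so $\theta X=X$) and $Y$ in the $-1$-eigenspace (so $\theta Y=-Y$),
\begin{equation}
B(X,Y)=B(\theta X,\theta Y)=B(X,-Y)=-B(X,Y),
\end{equation}
so $B(X,Y)=0$. Combined with the eigenspace decomposition $\mfh=\mfu\oplus\{X\in\mfh\mid\{X,J\}=0\}$ from the previous step, this forces $\mfu^\perp$ to equal the $-1$-eigenspace, completing the proof. The only mildly delicate point is the verification that $J$ lies in $\groupH$; everything else is a routine application of the eigenspace decomposition of a Cartan involution.
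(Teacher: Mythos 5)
Your proof is correct but takes a genuinely different, more structural route than the paper's. The paper proves both inclusions by hand with trace manipulations: it inserts $J(-J)=\id$ and cycles to derive $\tr[XY]=-\tr[XY]$ for one direction, and for the converse observes that $K-JKJ\in\mfu$ whenever $K\perp\mfu$, computes $\tr[(K-JKJ)^2]=0$, and appeals to (negative-)definiteness of the trace form on $\mfu$ to conclude $K=JKJ$. You instead identify the whole splitting as the $\pm1$ eigenspace decomposition of the Cartan involution $\theta=\Ad_J$ (after checking $J\in\groupH$), recognize $\mfu$ as the $+1$-eigenspace via Eq.~(\ref{eq:kj}), and invoke invariance of the Killing form under Lie-algebra automorphisms to conclude the two eigenspaces are orthogonal. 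This is cleaner: you are reading off the standard fact that a Cartan involution's eigenspaces are Killing-orthogonal, rather than re-deriving it case by case. Two small points worth making explicit. First, the step ``this forces $\mfu^\perp$ to equal the $-1$-eigenspace'' silently uses non-degeneracy of the Killing form on $\mfh$ (semisimplicity of $\mfso(2N,\mbr)$ and $\mfsp(2N,\mbr)$): from $\mfh=\mfh_{+1}\oplus\mfh_{-1}$ and $\mfh_{-1}\subseteq\mfu^\perp$ one gets equality by the dimension count $\dim\mfu^\perp=\dim\mfh-\dim\mfu=\dim\mfh_{-1}$, which is exactly where non-degeneracy enters. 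Second, the intermediate claim $J^T=\Omega^{-1}G=-J$ should be read with the paper's index conventions in mind; what the verification of $J\in\groupH$ actually requires is $JGJ^T=G$ (for $\gro$) or $J\Omega J^T=\Omega$ (for $\grsp$), and both do follow in a couple of lines from $J=\Omega G^{-1}=-G\Omega^{-1}$, $G^T=G$, $\Omega^T=-\Omega$, and $J^2=-\id$, so the conclusion is sound even if the stated intermediate identity is convention-dependent.
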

\begin{proof}
First we show that any $X\in\mfh$ which anticommutes with $J$ is orthogonal to any $Y\in\mfu$. Noting that the Killing form is proportional to the trace form for $\mfso(2N,\mathbb{R})$ and $\mfsp(2N,\mathbb{R})$, we have 
\begin{equation*}
\tr[XY]=\tr[XJ(-J)Y]=\tr[(-J)X(-J)Y]=\tr[X(-J)Y(-J)]=\tr[X(-J)^2Y]=-\tr[XY],
\end{equation*}
where we have used $J^2=-\id$, the cyclicity of the trace, and the (anti)commutation of $J$ with $(X)\ Y$. From this, we can conclude $\tr[XY]=0$. In the other direction, following Ref.~\cite{hackl2021bosonic}, suppose $K\perp \mfu$. Then, as $[J,K-JKJ]=0$, we have $K-JKJ\in\mfu$ and
\begin{equation*}
    \frac{1}{2}\tr[(K-JKJ)^2]
    =\frac{1}{2}\tr[K^2-KJKJ-JKJK+JKJJKJ]
    =\tr[K(K-JKJ)]
    =0
\end{equation*}
where we expanded the square, applied the cyclicity of the trace, and then used $K\perp \mfu$ together with $K-JKJ\in\mfu$.
By the non-degeneracy of the trace form on $\mfu\ni K-JKJ$, we conclude $K-JKJ=0$, i.e., $\{K,J\}=0$.
\end{proof} 

The unified picture of the type-CI and type-DIII decompositions offered by the K{\"a}hler structure explains several of the similarities in the numerical algorithms of Section~\ref{sec:numerical_decomps}.  Suppose we wish to find a \ac{CD} of $M\in \groupH$, where $\groupH$ continues to denote either $\gro$ or $\grsp$, and in either case we quotient by $\gru$. Let us fix some K{\"a}hler structure $(G,\Omega,J)$.
$M$ defines a new  K{\"a}hler structure 
\begin{equation}
     (G_M,\Omega_M,J_M) =  (MGM^T,M\Omega M^T,MJM^{-1}).
\end{equation}
Right-multiplying $M$ by an element $e^u\in \gru$ does not change the obtained K{\"a}hler structure; we can therefore think of the \ac{CD} $M=e^{p}e^u$ (with $p\in\mfu^\perp$) as fixing a representative of $M$ in the equivalence class thus defined on $\groupH$ by multiplication by $\gru$\footnote{More technically, the \ac{CD} naturally induces a notion of $\groupH$ as a \textit{principal fiber bundle} with base manifold $\groupH/\gru$. The fiber of an element $M\in \groupH$ corresponds to the set $\{Mu\ |\ u\in\gru\}$ of elements which can be assigned the same ``polar component'' $P=\sqrt{\Delta}$ by the \ac{CD}. This perspective is explored in Ref.~\cite{guaita2024representation}.}.  \\

Now, given a \ac{CD} $M=PU=e^{p}e^{u}$, with $p\in\mfu^\perp,\ u\in\mfu$, the complex structure $J_M$ of $M$ satisfies
\begin{align}
    J_M&=MJM^{-1}\\
    &=e^{p}e^{u}Je^{-u}e^{-p}\\
    &=e^{p}e^{u}e^{-u}Je^{-p}\\
    &=e^{p}Je^{-p}\\
    &=e^{2p}J,
\end{align}
where we have used $[u,J]=0$ and $\{p,J\}=0$ (Lemma~\ref{lem:ko}). We conclude that $P^2=\Delta$, with $\Delta=-J_MJ$ the so-called \textit{relative complex structure}~\cite{hackl2021bosonic}. In the ``bosonic" case ($M\in\grsp$) we have 
\begin{align}
    \Delta_{\grsp} &= -J_{M}J\\ 
    &= -MJM^{-1}J\\
    &= MG\Omega^{-1} M^{-1}\Omega G^{-1}\\
    &= MG\Omega^{-1} \Omega M^{T}G^{-1}\\
    &= MGM^{T}G^{-1},
\end{align}
where we have used $M\Omega M^T=\Omega$.
Similarly, in the ``fermionic'' case ($M\in\gro$) we have 
\begin{align}
    \Delta_{\gro} &= -J_{M}J\\ 
    &= -MJM^{-1}J\\
    &= M\Omega G^{-1}M^{-1}G\Omega^{-1} \\
    &= M\Omega G^{-1} G M^{T}\Omega^{-1}\\
    &= M\Omega M^{T}\Omega^{-1},
\end{align}
where we have used that in this case $MG M^T=G$.
Up to taking square roots and logarithms (and noting $\Omega^{-1}=-\Omega$) we see the origin of the use of $\Delta$ in the proof of Thm.~\ref{thm:abstract_numerical_non_general} in the main text from the K{\"a}hler space point of view.

\end{document}